\documentclass[12pt,letterpaper]{article}
\pdfoutput=1 
\usepackage[top=3cm, bottom=3cm, left=2.5cm, right=2.5cm]{geometry}
\usepackage{amsmath, amssymb, amsthm, mathabx, mathrsfs, bbm}
\usepackage{graphicx, color}
\usepackage{array, multirow}
\usepackage[font=small,labelfont=bf]{caption} 
\usepackage{lipsum}
    \usepackage[T1]{fontenc}
    \usepackage[utf8]{inputenc} 
    \usepackage{lmodern}
\usepackage{amsfonts}
\usepackage{tabularx}
\usepackage{booktabs}
\usepackage{graphicx}
\usepackage{epstopdf}
\usepackage{algorithmic}
\usepackage{mathtools}
\usepackage{tikz}
\usepackage{todonotes}
\usepackage[normalem]{ulem}
\definecolor{darkgreen}{rgb}{0, .5, 0}
\usepackage[normalem]{ulem}
\ifpdf
\DeclareGraphicsExtensions{.eps,.pdf,.png,.jpg}
\else
\DeclareGraphicsExtensions{.eps}
\fi

\AtBeginDocument{%
\setlength{\oddsidemargin}{\dimexpr(\paperwidth-\textwidth)/2-1in}%
\setlength{\evensidemargin}{\oddsidemargin}%
\setlength{\topmargin}{%
\dimexpr(\paperheight-\textheight)/2-\headheight-\headsep-1in}%
}

\usepackage[shortlabels]{enumitem}

\usepackage{longtable} 

\usepackage[sort]{natbib}

\setcitestyle{numbers,square,comma}
 \usepackage{hyperref}
    \hypersetup{colorlinks=true,
    urlcolor=black,
    linkcolor=black,
    citecolor=black,
    bookmarksdepth=paragraph}
\usepackage[nameinlink]{cleveref}
\crefname{equation}{}{}
\crefname{section}{section}{sections}
\crefname{figure}{figure}{figures}
\crefname{table}{table}{tables}
\crefname{example}{example}{examples}
\crefname{proposition}{proposition}{propositions}
\Crefname{section}{Section}{Sections}
\Crefname{figure}{Figure}{Figures}
\Crefname{table}{Table}{Tables}
\Crefname{definition}{Definition}{Definitions}
\Crefname{theorem}{Theorem}{Theorems}
\Crefname{remark}{Remark}{Remarks}
\Crefname{example}{Example}{Examples}
\Crefname{proposition}{Proposition}{Propositions}
\numberwithin{equation}{section}

\newtheorem{theorem}{Theorem}[section]
\newtheorem{lemma}{Lemma}[section]
\newtheorem{corollary}{Corollary}[section]
\newtheorem{proposition}{Proposition}[section]
\theoremstyle{definition}
\newtheorem{definition}{Definition}[section]
\newtheorem{remark}{Remark}[section]
\newtheorem{assumption}{Assumption}[section]

    \newtheoremstyle{theoreminformal}
      {\topsep}   
      {\topsep}   
      {\itshape}  
      {}          
      {\bfseries} 
      {.}         
      {.5em}      
      {}          
    
    \theoremstyle{theoreminformal}

\newcommand{\ud}{\mathrm{d}}

\renewcommand{\ge}{	\geq }
\renewcommand{\le}{	\leq }
\renewcommand{\geq}{\geqslant}
\renewcommand{\leq}{\leqslant}

\renewcommand{\hat}{\widehat}

\renewcommand{\bar}{\overline}
\renewcommand{\epsilon}{\varepsilon}

\definecolor{amethyst}{rgb}{0.6, 0.4, 0.8}
\definecolor{huntergreen}{rgb}{0.21, 0.37, 0.23}
\definecolor{lavenderindigo}{rgb}{0.58, 0.34, 0.92}
\definecolor{lustred}{rgb}{0.9, 0.13, 0.13}
\definecolor{mediumpersianblack}{rgb}{0.0, 0.4, 0.65}
\definecolor{mediumseagreen}{rgb}{0.24, 0.7, 0.44}
\definecolor{mountainmeadow}{rgb}{0.19, 0.73, 0.56}
\definecolor{myrtle}{rgb}{0.13, 0.26, 0.12}
\definecolor{msugreen}{rgb}{0.09, 0.27, 0.23}

\usepackage{floatrow}
\newfloatcommand{capbtabbox}{table}[][\FBwidth]

\usepackage{blindtext}

\usepackage{multirow}
\usepackage[final]{changes}
\definechangesauthor[name={Anees Kazi}, color=black]{AK}

\usepackage{subcaption}
\usepackage{graphicx}
\title{When large trades are not (automatically) news: liquidity tail risk and price discovery}

\author{\fontsize{12}{12}\selectfont
  Umut \c{C}etin \thanks{The London School of Economics and Political Science.
  \texttt{u.cetin@lse.ac.uk}} \,\,
    Mingwei Lin\thanks{The London School of Economics and Political Science. \texttt{m.lin20@lse.ac.uk}} \,\,
  Giulia Livieri\thanks{The London School of Economics and Political Science. \texttt{g.livieri@lse.ac.uk}}
  \normalsize
}

\newcounter{termcounter}
\renewcommand{\thetermcounter}{\Roman{termcounter}}
\crefname{term}{term}{terms}
\creflabelformat{term}{#2\textup{(#1)}#3}

\makeatletter
\def\term{\@ifnextchar[\term@optarg\term@noarg}
\def\term@optarg[#1]#2{%
  \textup{#1}%
  \def\@currentlabel{#1}%
  \def\cref@currentlabel{[][2147483647][]#1}%
  \cref@label[term]{#2}}
\def\term@noarg#1{%
  \refstepcounter{termcounter}%
  \textup{(\thetermcounter)}%
  \cref@label[term]{#1}}
\makeatother

\RenewCommandCopy{\overbrace}{\LaTeXoverbrace}
\RenewCommandCopy{\underbrace}{\LaTeXunderbrace}

\usepackage{xparse}
\NewDocumentCommand{\Lout}{o}{%
  \mathtt{B}_{\text{out}}\IfValueT{#1}{^{#1}}%
}

\usepackage[most]{tcolorbox}
\usepackage{etoolbox} 
\usepackage{xcolor}
\definecolor{faintgray}{RGB}{245,245,245}     
\definecolor{faintborder}{RGB}{230,230,230}   
\definecolor{lightblack}{gray}{0.4}           
\newcounter{question}
\newtcolorbox[auto counter, use counter=question]{question}[1][]{
  enhanced,
  colback=faintgray,
  colframe=faintborder,
  boxrule=0.2pt,
  arc=2mm,
  title=\textcolor{lightblack}{\textbf{Question~\thequestion}},
  fonttitle=\bfseries,
  before upper={\centering\itshape},
  after title={\vspace{0.5ex}},
  boxsep=4pt,
  left=6pt,
  right=6pt,
  top=4pt,
  bottom=4pt,
  #1
}

\begin{document}

\footnotetext[1]{\textit{All authors contributed equally and are listed in alphabetical order.}}
\maketitle 
\begin{abstract}
We examine how heavy-tailed liquidity demand changes price discovery in a sequential limit order book with asymmetric information. In our setting, liquidity suppliers observe aggregate order flow, not its decomposition into informed demand and uninformed liquidity shocks. With heavy-tailed uninformed aggregated order flow, large trades remain plausibly uninformed over a wider range of depths, flattening price impact and slowing learning; sufficiently extreme trades can nevertheless become informative.  We characterize equilibrium through a non-linear fixed point equation for the marginal-cost schedule; heavy-tailed uninformed aggregated order flow invalidates the monotonicity and compactness arguments available under Gaussianity. Therefore, we establish fixed-point existence within a tail-controlled class, prove posterior consistency for liquidity suppliers in the presence of endogenous dependent order flow, and derive tail asymptotics for marginal costs, informed demand, and aggregate order flow. Additionally, we obtain eventual informed-demand dominance and eventual monotonicity of the book in the far tails. Empirically, using 10-level AAPL data, we document farther-out crossover diagnostics and persistent bid-ask spreads following large heavy-tailed trades.
\end{abstract}

\noindent \textbf{Keywords:} limit order books; market impact; heavy tails; asymmetric information; Bayesian learning; informed trading.



\section{Introduction}
\label{section::Introduction}
Large trades move prices, but a large trade is not automatically news. A large buy order may reflect private information about the fundamental value of the asset, but it may also reflect a rare liquidity shock (for instance, linked to fund flows, margin pressure, or dealer inventory adjustment). In particular -- and crucially -- liquidity suppliers observe only the aggregate order flow that reaches the book, not the traders' motives. Our main interest is in understanding how a competitive limit order book (LOB) prices large trades when uninformed liquidity demand is heavy-tailed. The central mechanism is inference under liquidity tail risk: the heavier the tail of the uninformed liquidity demand, the wider the range of order sizes over which large trades remain plausibly liquidity-driven rather than information-driven.

We develop an (equilibrium) model of trade-market impact in a sequential trade framework, in which privately informed investors (i.e., informed traders or, simply, insiders) execute trades through a trading desk. Consistent with our interest, we will show that (what we call) the \emph{liquidity tail risk} of uninformed (i.e., noisy) liquidity demand greatly influences the pricing problem's mathematics and economics. In essence, we prove that when liquidity demand is thin-tailed, a very large order is strong evidence of private information. When liquidity demand is heavy-tailed, the same order may instead reflect a rare liquidity shock. Hence, large trades are less immediately informative, price impact at depth is more concave, and learning the fundamental value of the asset from the order flow is slower; i.e., liquidity tail risk is a state variable for price discovery in a LOB, and not merely a volatility parameter.

In our model specification, we allow for heavy-tailed uninformed order flow. We work with a Student-$t$ distribution with $\nu>2$ degrees of freedom (Remark \ref{rem::noise_dist}), which provides a parsimonious way to capture rare liquidity regimes through one interpretable parameter (that is, $\nu$). Interestingly, because of the statistical properties of the Student-$t$, the uninformed order flow can be viewed as a Gaussian order flow with a random latent liquidity variance; the parameter $\nu$ controls how often (rare) states of large liquidity demand occur. Importantly, the assumed uninformed order flow represents a significant departure from the equilibrium model proposed in \cite{ccetin2023power}, where noise trading is Gaussian and therefore very large uninformed orders are exponentially unlikely. Instead, we share with \cite{collin2016insider} the view that liquidity conditions are central to information revelation; but, importantly, we shift our attention from the stochastic volatility of uninformed trading volume to the tail risk of liquidity demand.

Our model is discrete-time. A single risky asset has a fundamental value $V$ fixed by nature at time zero and revealed at the terminal trading date. In each trading round, liquidity suppliers face adverse selection from myopic (Remark \ref{rem::remark_myopic}), risk-neutral insiders who submit market orders based on their private information; noise traders submit uninformed market orders. Each trading period should be interpreted as a short execution window and not as an individual message-level order. Consistent with \cite{ccetin2023power}, insiders execute after noise traders in order to reduce the price impact of their trades (Remark \ref{remark::priority}). In contrast to the auction-style pricing in \cite{kyle1985continuous}, we adopt (as said) a LOB structure as in
\cite{glosten1994electronic}, where orders are then executed against the standing book. We assume that limit orders are competitively priced to yield zero expected profit conditional on execution (again, as in \cite{glosten1994electronic}); e.g., the ask-side marginal price at a positive depth is the conditional expectation of $V$ given that aggregate order flow exceeds that depth. Moreover, in our model, informed demand is determined by a one-period first-order condition equating the informed trader's valuation to the marginal cost of execution. Interestingly, combining the insider optimality condition with competitive execution-based pricing yields a fixed-point equation for the marginal cost schedule.

Below, we mathematically characterize the (conditional) equilibrium for our model yielding -- to the best of the authors' knowledge -- new  economic channels with respect to the existing Gaussian-noise LOB models.

First, we find that the informativeness of large trades is endogenous to the tail thickness of liquidity demand. When uninformed order flow is heavy-tailed, large trades are not automatically news; they may be rare liquidity shocks. Therefore, our model predicts a larger ``crossover-depth" (the order size at which a trade becomes (more likely) information-driven than liquidity-driven), in the sense that heavier-tailed liquidity demand pushes the ``large trades are news" threshold farther out.

Second, we find (under endpoint regularity conditions for the distribution of $V$) that the marginal-cost schedule exhibits power-law-type scaling at depth. Interestingly, the exponent is explicitly determined by the liquidity tail risk, the number of informed traders, and liquidity suppliers' beliefs about $V$. In particular, we find that, because of heavy-tailed liquidity shocks, price impact is flatter and more concave at depth. Moreover, competition among informed traders attenuates impact and compresses spreads, but it does not eliminate the heavy-tail signature in large-order regions; in other words, the observed LOB reflects the balance between informed competition and liquidity-tail risk.

Third, we find that heavy-tailed liquidity demand slows price discovery because liquidity suppliers are more cautious in updating their beliefs, treating extreme order-flow observations as potential outliers. This reduces the informational impact of order-flow surprises compared to Gaussian models, causing adverse selection and wider spreads to persist longer. Using Bayesian updating methods for dependent data (e.g., \cite{shalizi2009dynamics}), we establish that beliefs ultimately converge under plausible conditions. However, this learning is, as in, e.g., \cite{ozsoylev2010price}, only asymptotic; this clarifies what kind of information aggregation our model does and does not imply.

Last, but not least, we show that the Student-$t$ assumption is not merely a routine distributional extension, yielding mathematical subtleties reflecting the economic mechanism of the paper. The polynomial decay of the Student-$t$'s tails, which implies that rare liquidity states remain relevant for execution-conditional expectations, breaks the monotonicity and compactness arguments, as well as the continuity of the  fixed-point operator used in the Gaussian benchmark (cf. \cite{ccetin2023power}). We therefore prove fixed-point existence on a tail-controlled compact class of candidate marginal-cost schedules, thus showing that heavy-tailed liquidity shocks do not make the marginal-cost problem ill posed. Moreover, we formulate the equilibrium results conditionally on the monotonicity of the selected pricing schedule, and we study -- analytically and numerically -- when this monotonicity is recovered.

We note that our theoretical fixed-point analysis assumes that $V$ has bounded support, and we solve the model numerically for unbounded specifications that are consistent with the key qualitative predictions of the model.

This article primarily contributes to the literature on LOB models

First, our model builds on the adverse-selection framework and Bayesian pricing logic of \cite{glosten1985bid}, as well as execution-based pricing principles for open LOBs (e.g., \cite{glosten1994electronic}). The most direct theoretical antecedent is \cite{ccetin2023power}, who derives power-law results in a static, one-period competitive LOB setting. We extend this framework to a dynamic, multi-period environment in which liquidity suppliers update their beliefs from observed aggregate order flow. See, also, \cite{parlour1998price, foucault1999orderflow, foucault2005limitbook, goettler2005equilibrium, rosu2009dynamic, cont2010stochastic} for (multi-period) equilibrium models of LOBs. Second, our framework also complements the foundational Kyle model \citep{kyle1985continuous} and the extensive literature on multi-period information revelation and strategic competition among informed traders, such as \cite{HoldenSubrahmanyam1992,BackCaoWillard2000}. Our findings provide implications for market participants and regulators concerned with information diffusion, trading costs, and liquidity provision. Third, a substantial empirical literature documents the prevalence of heavy tails in trading volume and order flow, as well as pronounced nonlinearities in market impact (e.g., \cite{farmer2004origin, Hasbrouck1991, bouchaud2018trades, gopikrishnan2000statistical, lillo2005theory, vaglica2008scalinglaws, cont2001empirical, mertens2022liquidity}. Our model provides an equilibrium mechanism that connects these empirical facts; the resulting comparative statics generate testable predictions linking the tail index of order flow, the extent of informed competition, and the dynamic evolution of
spreads and impact. The paper is also related to the finance literature on liquidity risk, liquidity fragility, and liquidity crises. For instance, \cite{pastor2003liquidity} establish liquidity as an economically important risk dimension, while \cite{cespa2014illiquidity,
brunnermeier2009market,morris2004coordination} study threshold-like liquidity
deterioration through coordination, funding constraints, costly participation, or cross-asset learning. Our mechanism is different. Liquidity fragility arises from inference inside the LOB: the tail of liquidity demand governs the crossover from ambiguous to information-dominated order flow.  Finally, our belief-updating framework uses tools from Bayesian learning with dependent data (e.g., \cite{shalizi2009dynamics}), which are particularly relevant in financial markets where order flow is endogenous and temporally dependent.

We proceed as follows. Section~\ref{section::Model_Setup} lays out the model.
Section~\ref{section::Equilibrium} derives the fixed-point characterization, explains why the Gaussian proof strategy does not extend directly to Student-\(t\) noise, and proves Student-\(t\) fixed-point existence on a tail-controlled compact class. Section \ref{section::The Liquidity_Suppliers_Belief_posterior_consistency} analyzes belief updating and establishes posterior consistency as the number of trading rounds tends to infinity. Section~\ref{section::Asymptotic_Price_Impact} derives asymptotic market impact and links tail exponents to the model's primitives and the history of learning. Section \ref{section::Numerical_Studies} provides numerical solutions and documents the quantitative implications of the model. Section \ref{sec::empirics} contains empirical work. Section~\ref{section::Conclusion} concludes. The appendix contains the technical proofs and additional figures.

\section{Model setup}\label{section::Model_Setup}
Hereafter, all random quantities are defined on a complete probability space $(\Omega, \mathcal{F}, \mathbb{P})$ for concreteness. 

We examine a discrete-time dynamic model of trading involving a single risky financial asset, with time periods indexed by $t = 0, 1, \ldots, T$. Each period can be interpreted as a short trading window during which a LOB is posted, trades execute, and the liquidity suppliers update their beliefs. We allow the time horizon $T$ to be either finite, in which case the fundamental value $V$ of the traded asset becomes public at the end of period $T$, or infinite, in which case the analysis focuses on the sequence of posterior beliefs generated by the order flow (Section \ref{section::The Liquidity_Suppliers_Belief_posterior_consistency}). We assume that $V$ is an integrable random variable. For the analytical results, we assume that $V$ has bounded support $[m,M]$, with $-\infty<m<M<\infty$, and admits a density. Additional assumptions below specify the regularity and endpoint behavior of this density. The numerical section considers
unbounded fundamentals as robustness checks.

Following \cite{ccetin2023power}, in each time period $t \in \{0, 1, \ldots, T\}$, there are four types of agents: (1) competitive (infinitely many) \emph{liquidity suppliers}, (2) $ N_t > 1$ risk-neutral \emph{informed traders}, (3) \emph{noise traders}, and (4) a \emph{trading desk}. At the beginning of every period, liquidity suppliers submit limit orders at competitive prices and form the LOB. Noise trades and informed traders send their orders to the trading desk, which functions as a broker and does not hold inventory. The informed traders are myopic (Remark \ref{rem::remark_myopic}) in their trading decisions, and we assume their trades arrive after noise traders send their orders to the desk (Remark \ref{remark::priority}).

We shall denote by $Y_t$ the sum of informed and noise trades submitted to the trading desk in period $t$. We shall set $Y_0=0$, and  the LOB in period $t$ is characterised by a non-decreasing (pricing) function (we note that we index the LOB by cumulative depth, not by price)
\begin{equation}\label{eq::LimitOrderBook}
    h(\,\cdot\,, t, Y^{t-1}) : \mathbb{R} \mapsto \mathbb{R},
\end{equation}
where $Y^{t}:=(Y_1,\ldots,Y_{t})$ is the history of $Y$ up to and including the period $t$. Thus, a buy market order of size $x>0$ traded against the LOB incurs a cost of
\begin{equation}\label{eq::CostOfAMarketOrder}
    \int_{0}^{x}h(u, t, Y^{t-1})\,\ud u;
\end{equation}
we use the sign convention that the integral in \eqref{eq::CostOfAMarketOrder} is negative for a sell market order $x<0$. Given this LOB, noise traders arrive first and submit the aggregated market orders $Z_t$ in period $t$. We assume $Z_t$s are independent and identically distributed (i.i.d., henceforth) and that $Z_t \overset{d}{\sim} \texttt{T}_{\nu}(0,\sigma)$, where $\overset{d}{\sim}$ stands for ``distributed as", and $\texttt{T}_{\nu}(0,\sigma)$ denotes a location-scale Student's $t$ distribution with $\nu>2$ degrees of freedom, location equal to zero, and scale $\sigma>0$. In general, a continuous random variable $X$ has a location-scale \textit{Student's t} distribution with location $\mu$, scale $\sigma$, and degrees of freedom $\nu$, written $x \overset{d}{\sim} \mathtt{T}_{\nu}(\mu,\sigma)$, if the density function of $X$ is
\begin{equation*}
    \mathfrak{q}_{\nu}(x; \mu, \sigma) = \frac{\Gamma\left(\frac{\nu+1}{2}\right)}{\Gamma\left(\frac{\nu}{2}\right)\sqrt{\pi \nu} \sigma}\left\{1+\frac{1}{\nu}\left(\frac{x-\mu}{\sigma}\right)^2\right\}^{-\frac{(\nu+1)}{2}},\quad x \in \mathbb{R},
\end{equation*}
where $\Gamma(\cdot)$ denotes the gamma function.   We have $\mathbb{E}[x]=\mu$, if $\nu>1$, and $\mathbb{V}\text{ar}[x]=\sigma^2\frac{\nu}{\nu-2}$, if $\nu>2$. In addition, we assume that $(Z_t)_{t=1}^{T}$ are independent of $V$.

In each period $t$, informed traders, conditional on their private information about $V$, choose the order size to maximize their expected trading gain by taking the pricing function in \eqref{eq::LimitOrderBook} as given; $N_t$ is deterministic and known by all the agents in the market (Remark \ref{rmk::random}). If  $X_t$ is the aggregate informed order in period $t$, the total execution cost for  the informed traders is given by (cf. \eqref{eq::CostOfAMarketOrder})
\begin{equation*}
    \int_{0}^{X_t} h(Z_{t} + u , t, Y^{t-1})\,\ud u;
\end{equation*}
we assume the desk charges each insider a pro-rata share of the total execution cost of the aggregate informed block, as in \cite{ccetin2023power}.

Some remarks on our modelling choices are in order.

\begin{remark}[Noise traders' distribution]\label{rem::noise_dist}
We depart from \cite{ccetin2023power} by assuming aggregate noise trader orders follow a heavy-tailed distribution, not a Gaussian one. This leads to a richer equilibrium structure and broader economic implications (Sections \ref{section::Equilibrium}--\ref{section::Asymptotic_Price_Impact}). Large order imbalances are less indicative of private information since they may result from rare liquidity shocks. The tail index $\nu$, which we call \emph{liquidity tail risk} governs this ambiguity: heavier tails (i.e., smaller $\nu$) make extreme shocks more common. In equilibrium, this produces a flatter, more concave price schedule, persistent heavy-tail effects in large orders, and slower declines in spreads and impact slopes when noise trading is more heavy-tailed. Importantly, the Student-$t$ assumption is structural and consistent with recent theory and empirical work (e.g., \cite{gopikrishnan2000statistical, farmer2004origin, bouchaud2018trades, saddier2024bayesian}). In particular, in \cite[][Appendix D.2]{saddier2024bayesian} the uninformed trading volume has an even distribution where the corresponding probability density function (p.d.f.) has the tail decay given by the density $\mathfrak{p}(z) \sim C |z|^{-(1+\nu)}$ with $\nu>0$ being the tail index. The typical value therein for this index is shown to be $\nu \approx 5/2$. This modeling choice complicates fixed-point and equilibrium existence proofs, even with Gaussian scale-mixture representations, but our framework admits greater flexibility in noise specifications. In this respect, notice that our heavy-tailed noise is equivalent to a model in which, at each round of trading, the market draws an unobserved liquidity state that scales the variance of non-informational order flow. An upshot of this observation is that in our theory we can consider noise distributions admitting other  Gaussian  scale-mixture representations.
\end{remark}
\begin{remark}[Random participation of informed traders]\label{rmk::random}
In our model, the number of informed traders $N_t$ is deterministic and common knowledge. This assumption isolates the informational content of order flow from uncertainty about strategic participation.  We leave for future research the possibility of a random $N_t$ and refer the reader to \cite{mingwei2025} for a potential approach.
\end{remark}
\begin{remark}[Within trading round execution priority]\label{remark::priority}
We assume that uninformed orders reach the book before informed orders within each trading round, but insiders do not observe the contemporaneous realization of $Z_t$ when choosing their orders. The ordering affects execution while preserving the information structure:
insiders condition on $V$ and the public history, not on realized noise demand. Camouflage (as in \cite{kyle1985continuous}) is,
therefore, statistical and ex-post from the perspective of liquidity suppliers: conditional on aggregate flow, the latter continue to assign positive probability to a liquidity shock. The assumed protocol avoids within-period feedback and keeps the intertemporal learning channel transparent, as new information enters only via end-of-period aggregate order flow.
\end{remark}
\begin{remark}[Myopic informed traders and informed traders' distribution]\label{rem::remark_myopic}
    The model assumes that informed traders are myopic and may therefore submit large orders, as in \cite{glosten1985bid, easley1987price, ozsoylev2010price}. This behavior aligns also with Seppi's 1990 study \cite{seppi1990equilibrium}, which demonstrates that even when large orders can be divided into a series of smaller orders, informed traders may still choose to place large orders. We leave for future research the possibility of endogenous re-trading and its implications on the conclusion of our model. Regarding informed trades' distribution, in our model trade sizes can vary and we do not confine our analysis to two trade sizes only since our focus is more on the intertemporal equilibrium dynamics. In this respect, our model extends, similarly to \cite{ozsoylev2010price}, the framework in \cite{glosten1985bid}.
\end{remark}

\section{Equilibrium}\label{section::Equilibrium}
This section solves for a symmetric (conditional) equilibrium of the model and derives the first economic implications. The equilibrium is analyzed by first describing the optimal strategies of informed traders (Subsection \ref{subsection::The_Informed_Traders _Optimal_Market_Order}), then the liquidity suppliers' belief update about the risky asset fundamental value (Subsection \ref{subsection::Liquidity_Suppliers_Belief_Update}), and then liquidity suppliers’ pricing function in \eqref{eq::LimitOrderBook} (Subsection \ref{subsection::The_Limit_Prices_in_Equilibrium}), which leads to the definition of equilibrium in our model (Definition \ref{def::equilibrium}), whose existence is discussed in Subsection \ref{subsection::Equilibrium_existence}.\\
\noindent Before proceeding, we summarize strategies and beliefs. An insider strategy in period $t$ is a measurable function $x_{i,t}:\operatorname{supp}(V)\times\mathbb{R}^{t-1}\to\mathbb{R}$ mapping $(v,Y^{t-1})$ into an order size; $\operatorname{supp}(V)$ denotes the support of $V$. A liquidity-supplier (competitive) strategy is a measurable mapping from histories to pricing schedules $h(\cdot,t,Y^{t-1})$. Finally, a belief system is a sequence of posteriors $(\mathbb{P}_t)_{t\ge1}$ with $\mathbb{P}_t(\cdot)=\mathbb{P}(\,V\in\cdot\mid Y^{t-1}\,)$.
\subsection{The informed traders' optimal market order}\label{subsection::The_Informed_Traders _Optimal_Market_Order}
Informed traders arrive simultaneously in period $t$, and each of them is charged by the trading desk for an amount proportional to their order size. Arriving in period $t$ after noise traders (whose aggregate flow is $Z_t$), an insider, who knows the period-$(t-1)$ history and the realization $v_0 \in [m,M]$ of the fundamental value of the asset chosen by nature at time $0$, chooses the market order size $x_t$ to maximize their expected trading gain  
\begin{equation}\label{eq::expected_gain}
    \mathbb{E}^{v_0}\left[V x_t - \frac{x_t}{U_t+x_t}\int_{0}^{U_t+x_t} h(y+Z_t, t, Y^{t-1})\,\ud y\Big\vert Y^{t-1}\right],
\end{equation}
where $\mathbb{E}^{v_0}[\cdot]$ denotes the expectation of $``\cdot"$
under a regular conditional law given the event $\{V=v_0\}$, and $U_t$ is the aggregated demand of the remaining $(N_t-1)$ informed traders; whenever $U_t+x_t=0$, the cost share is interpreted by its continuous extension. The first-order condition associated with the previous optimization problem is given by 
\begin{equation*}
    v_{0} =     \mathbb{E}^{v_0}\left[\frac{x_t}{U_t+x_t} h(U_t+x_t+Z_t, t, Y^{t-1}) + \frac{U_t}{(U_t+x_t)^2} \int_{0}^{U_t+x_t} h(y+Z_t, t, Y^{t-1})\,\ud y\Big\vert Y^{t-1}\right]
\end{equation*}
Risk-neutrality and symmetry of informed traders simplify this condition since the equilibrium demand $x^{\star}_t$ for each insider must be the same. We have:  
\begin{equation*}
    v_{0} =     \mathbb{E}^{v_0}\left[\frac{ h(N_t x^{\star}_t+Z_t, t, Y^{t-1})}{N_t}+\frac{N_t-1}{N_t^2 x^{\star}_t} \int_{0}^{N_t x^{\star}_t} h(u + Z_t, t, Y^{t-1})\,\ud u\Big\vert Y^{t-1}\right],
\end{equation*}
Now, if we define $F(\cdot, t, Y^{t-1}):\mathbb{R}\mapsto\mathbb{R}$ the following marginal cost function 
\begin{equation}\label{eq::expected_margins}
    F(x, t, Y^{t-1}):=\mathbb{E}\left[\frac{ h(x+Z_t, t, Y^{t-1})}{N_t}+\frac{N_t-1}{N_t x} \int_{0}^{x} h(u + Z_t, t, Y^{t-1})\,\ud u\Big\vert Y^{t-1}\right],
\end{equation}
with $F(0, t, Y^{t-1}):=\mathbb{E}[ h(Z_t, t, Y^{t-1}) \mid Y^{t-1}]$ by continuity, the above expression can be written as $V=F(X^{\star}_t, t, Y^{t-1})$ where $X^{\star}_t$ is the total informed demand. Under the assumption that $h(\cdot,t,Y^{t-1})$ is non-decreasing and not constant, the function $F(\cdot, t, Y^{t-1})$ is strictly increasing. As a consequence, the first-order condition uniquely characterizes the equilibrium total informed demand as $X_t^{\star}=F^{-1}(v_0, t, Y^{t-1})$.
\subsection{The liquidity suppliers' belief update}\label{subsection::Liquidity_Suppliers_Belief_Update}
Liquidity suppliers are Bayesian, and update their belief about the fundamental value of the risky asset in each period $t$ after having observed the realized aggregate trades $Y^{t-1}$. Formally, the posterior measure assigned in period $t$ by liquidity suppliers to the risky payoff $V$ given  the realized history $Y^{t-1}$ is proportional (symbol $\propto$) to
\begin{equation}\label{eq::beliefsupdate}
    \mathbb{P}_{t}(V \in \ud v \vert Y^{t-1})\; \propto \; \mathbb{P}(V \in \ud v) \prod_{s=1}^{t-1}\left\{1+\frac{1}{\nu}\left[\frac{Y_{s}-F^{-1}(v, s, Y^{s-1})}{\sigma}\right]^2\right\}^{-\frac{(\nu+1)}{2}},
\end{equation}
where $F(\cdot, s, Y^{s-1})$ is the function defined in  \eqref{eq::expected_margins}. The fact that we do not need the precise normalization constant on the right-hand side of the previous equation will be clear and promptly remarked in Remark \ref{rem::constant}. In order to understand \eqref{eq::beliefsupdate}, it is sufficient to focus on the (conditional) equilibrium informed and uninformed total demand at time $t-1$ (cf. Subsection \ref{subsection::The_Informed_Traders _Optimal_Market_Order}): $X_{t-1}^{\star}=F^{-1}(v, t-1, Y^{t-2})$ and $Z_{t-1}$. The latter follows a location-scale Student's $t$ distribution $\mathtt{T}_{\nu}(0,\sigma)$, independent of $V$. Hence, $Y_{t-1}$ ($=X_{t-1}^{\star}+Z_{t-1}$), conditional on the pair $(Y^{t-2},V)$ follows a location-scale Student's $t$ distribution  $\mathtt{T}_{\nu}(F^{-1}(V, t-1, Y^{t-2}),\sigma)$, whose density appears as a term in the product \eqref{eq::beliefsupdate}. By iterating the previous reasoning from period $t-1$ back to the period $0$, one easily obtains the formula in  \eqref{eq::beliefsupdate}. Before proceeding, we introduce the following notation. We let $\mathfrak{p}_{V}(\cdot)$ be the probability density function (p.d.f., henceforth) of $V$ (i.e., $\mathbb{P}(V \in \ud v)=\mathfrak{p}_{V}(v)\,\ud v$), and $\mathfrak{p}_{t,V}(\cdot \vert Y^{t-1})$ be the conditional p.d.f of $V$ at time $t$ given the period-$(t-1)$ history $Y^{t-1}$ (i.e., $\mathbb{P}_t(V \in \ud v|Y^{t-1})=\mathfrak{p}_{t,V}(\cdot|Y^{t-1})\,\ud v$).
\subsection{The limit prices in equilibrium and definition of equilibrium}\label{subsection::The_Limit_Prices_in_Equilibrium}
Limit prices are given (as in \cite{glosten1985bid, ccetin2023power}) by ``tail expectations". Formally, let $Y_t=Z_t+X_t$ the total demand in period $t$. Then 
\begin{equation}\label{eq::Tail_Expectation}
    h(y, t, Y^{t-1}) = 
    \begin{cases}
        \mathbb{E}[V| Y^{t-1}, Y_t \geq y],\quad\text{if $y>0$}\\
        \mathbb{E}[V| Y^{t-1}, Y_t < y],\quad\text{if $y<0$}.\\
    \end{cases}
\end{equation}
The first case defines the \emph{limit ask price} for the $y$-th share ($y>0$), whereas the second  defines the \emph{limit bid price} ($y<0$). Consistent with the assumption of perfect competition among liquidity suppliers, the rule \eqref{eq::Tail_Expectation} guarantees zero expected profit for them in period $t$ (see Section 2.2 in \cite{ccetin2023power}). We note that we treat buy side $y>0$ and sell side $y<0$ separately. Therefore, the \emph{best ask} and \emph{best bid} correspond to the following two limits, respectively
\begin{equation*}
    h(0+, t, Y^{t-1}):=\lim_{y \downarrow 0} h(y, t, Y^{t-1}),\quad h(0-, t, Y^{t-1}):=\lim_{y \uparrow 0} h(y, t, Y^{t-1}).
\end{equation*}
The \emph{bid-ask spread} is given by the difference between the previous two quantities.

Now, we give the definition of equilibrium for our economy.
\begin{definition}[Equilibrium]\label{def::equilibrium}
The sequence of pairs $(h^{\star}(\cdot, t, Y^{t-1}), X_t^{\star})_{t \in \mathbb{N}}$ is an equilibrium if, in  every period $t$, $h^{\star}(\cdot, t, Y^{t-1})$ is non-decreasing and non-constant, $X_t^{\star} \in \mathbb{R}$ and
\begin{itemize}
    \item[(i)] $h^{\star}(\cdot, t, Y^{t-1})$ satisfies  \eqref{eq::Tail_Expectation} with $Y_t=X_t^{\star}+Z_t$;
    \item[(ii)] $X_t^{\star}$ is the profit maximizing order size for the insider(s) given $h^{\star}(\cdot, t, Y^{t-1})$; in other words, $V=F(X^{\star}_t, t, Y^{t-1})$, where $F(\cdot, t, Y^{t-1})$ is given by  \eqref{eq::expected_margins}.
\end{itemize}
\end{definition}

\noindent The monotonicity  assumption in the previous definition is due to the fact that deeper buys pay higher marginal prices and deeper sells receive lower marginal prices, which is consistent with a LOB. On the other hand, the non-constant assumption rules out a degenerate, i.e., flat, limit price with no adverse selection and no information aggregation.

 Henceforth, to further simplify the notation, we denote the posterior measure in  \eqref{eq::beliefsupdate} and the corresponding expectation by $\mathbb{P}_t(\cdot)$ and $\mathbb{E}_t[\cdot]$ and suppress the dependency on past history. Next, in order to characterize the equilibrium LOB $h^{\star}(\cdot, t, Y^{t-1})$, we define the following right-continuous functions in every period $t$  
\begin{equation}\label{eq::rightcontinuous_functions}
    \Phi_t^+(y):=\mathbb{E}_t[V\mathtt{1}_{\{V \geq y\}}],\,\,\Pi_t^+(y):=\mathbb{P}_t(V \geq y),\,\,\Phi_t^-(y):=\mathbb{E}_t[V\mathtt{1}_{\{V < y\}}],\,\,\Pi_t^-(y):=\mathbb{P}_t(V < y), 
\end{equation}
along with the corresponding tail expectations, when the denominator is strictly positive, 
\begin{equation}\label{eq::tail_expectations}
    \Psi_t^{\pm}(y):=\frac{\Phi_t^{\pm}(y)}{\Pi_t^{\pm}(y)}=
    \begin{cases}
    \mathbb{E}_t[V \vert V \geq y ]\\
    \mathbb{E}_t[V \vert V < y ].
    \end{cases}
\end{equation}
When $\Pi_t^+(M)=0$, we set $\Psi_t^+(M):=M$. When $\Pi_t^-(m)=0$, we set
$\Psi_t^-(m):=m$. These conventions are used only at endpoint limits. We can write $h^{\star}$ in Definition \ref{def::equilibrium} as a function of the previous quantities, for $y>0$
\begin{equation}\label{eq::equilibrium_as_function_of_Pi}
    \begin{split}
        h^{\star}(y, t, Y^{t-1}) &=\mathbb{E}_t[V \vert X_t^{\star}+Z_t \geq y]=\frac{\mathbb{E}_t[V\mathtt{1}_{\{ X_t^{\star}+Z_t \geq y\}}]}{\mathbb{P}_t(X_t^{\star}+Z_t \geq y)}\overset{\text{(ii)}}{=}\frac{\mathbb{E}_t[V\mathtt{1}_{\{ V \geq F(y-Z_t, t, Y^{t-1})\}}]}{\mathbb{P}_t[ V \geq F(y-Z_t, t, Y^{t-1})]}\\
                                 &=\frac{\int_{-\infty}^{\infty}\Phi_t^{+}(F(y-z, t, Y^{t-1})) \mathfrak{q}_{\nu}(z; 0, \sigma)\, \ud z}{\int_{-\infty}^{\infty}\Pi_t^{+}(F(y-z, t, Y^{t-1})) \mathfrak{q}_{\nu}(z; 0, \sigma)\, \ud z};
    \end{split}
\end{equation}
the case $y<0$ is similar. 
\begin{remark}[Normalizing constant in  \eqref{eq::beliefsupdate}]\label{rem::constant}
    Importantly, the normalizing constant of $\mathbb{P}_t(\cdot)$ appears, via $\Phi_t^{\pm}$ and $\Pi_t^{\pm}$, both at the numerator and denominator in the previous expressions, and therefore it cancels out. Consequently, liquidity suppliers' belief update rule \eqref{eq::beliefsupdate} is sufficient for the characterization of the equilibrium, and no further normalization is required.  
\end{remark}

It is also convenient to define, for every $t$ and every continuous function $g_t:\mathbb{R}\mapsto\mathbb{R}$, the following two mappings 
\begin{equation}\label{eq::g_mappings}
    \phi_{g_t}^{+}(x) =\frac{\int_{-\infty}^{\infty}\Phi_t^{+}(g_t(z)) \mathfrak{q}_{\nu}(x-z; 0, \sigma)\, \ud z}{\int_{-\infty}^{\infty}\Pi_t^{+}(g_t(z)) \mathfrak{q}_{\nu}(x-z; 0, \sigma)\, \ud z},\quad     \phi_{g_t}^{-}(x) =\frac{\int_{-\infty}^{\infty}\Phi_t^{-}(g_t(z)) \mathfrak{q}_{\nu}(x-z; 0, \sigma)\, \ud z}{\int_{-\infty}^{\infty}\Pi_t^{-}(g_t(z)) \mathfrak{q}_{\nu}(x-z; 0, \sigma)\, \ud z}.
\end{equation}
If the denominator in \(\phi_g^+\) is zero, which occurs only at the endpoint case \(g\equiv M\), we set \(\phi_g^+(x)=M\). If the denominator in \(\phi_g^-\) is zero, which occurs only at \(g\equiv m\), we set \(\phi_g^-(x)=m\). We also define the following mapping
\begin{equation}\label{eq::equation_for_g}
    \phi_{g_t}(x):=\phi_{g_t}^{+}(x)\mathtt{1}_{\{x \geq 0\}} + \phi_{g_t}^{-}(x)\mathtt{1}_{\{x < 0\}}.
\end{equation}

We now turn to the writing of a fixed-point equation for the function $F(\cdot, t, Y^{t-1})$, that will in turn provide the  characterization of equilibrium that we will use. Equation \eqref{eq::expected_margins} and the expression for $h$ via \eqref{eq::equilibrium_as_function_of_Pi} allow us to write the following equation 
\begin{equation}\label{eq::fixed_point_equation}
F(x, t, Y^{t-1}) = \int_{-\infty}^{+\infty}\left[\frac{1}{N_t}\mathfrak{q}_{\nu}(x-z; 0, \sigma)+\frac{N_t-1}{N_t}\bar{\mathfrak{q}}_{\nu}(x, z; 0, \sigma)\right]\phi_{F(\cdot, t, Y^{t-1})}(z)\,\ud z,
\end{equation}
provided one can interchange the order of integration (see the derivation in the beginning of Section 3 in \cite{ccetin2023power} for more details), where 
\begin{equation}\label{eq::equ_q}
    \bar{\mathfrak{q}}_{\nu}(x, z; 0, \sigma):=\mathtt{1}_{\{x \neq 0\}}\frac{1}{x}\int_{0}^{x}\mathfrak{q}_{\nu}(u-z; 0, \sigma)\,\ud u + \mathtt{1}_{\{x =0\}} \mathfrak{q}_{\nu}(z; 0, \sigma).
\end{equation}
\noindent In the previous equation, we use $\mathfrak{q}_{\nu}(z; 0, \sigma)$ at $x=0$; for symmetric $\mathfrak{q}_{\nu}(z; 0, \sigma)$ (as the Student-$t$ centered at $0$), this matches the limit $\lim_{x\rightarrow 0}\frac{1}{x}\int_{0}^{x}\mathfrak{q}_{\nu}(u-z; 0, \sigma)\,\ud u=\mathfrak{q}_{\nu}(-z; 0, \sigma)=\mathfrak{q}_{\nu}(z; 0, \sigma)$. Now, we define the \emph{Student-$t$ fixed-point operator} as 
\begin{equation}\label{eq::Student_fixed_point_operator}
    (\mathcal{T}_{t,\nu} g)(x):=  \int_{-\infty}^{+\infty}\left[\frac{1}{N_t}\mathfrak{q}_{\nu}(x-z; 0, \sigma)+\frac{N_t-1}{N_t}\bar{\mathfrak{q}}_{\nu}(x, z; 0, \sigma)\right]\phi_{g_t}(z)\,\ud z.
\end{equation}
It is important to note the following. If $(h^{\star}(\cdot, t, Y^{t-1}), X_t^{\star})_{t \in \mathbb{N}}$ is an equilibrium as in Definition \ref{def::equilibrium}, then the associated marginal-cost schedule $(F(\cdot, t, Y^{t-1}))_{t \in \mathbb{N}}$ defined in \eqref{eq::fixed_point_equation} satisfies, at every period $t$, the following three conditions: (1) $F(\cdot, t, Y^{t-1})=\mathcal{T}_{t,\nu}F(\cdot, t, Y^{t-1})$; (2) $h^{\star}(\cdot, t, Y^{t-1})=\phi_{F(\cdot, t, Y^{t-1})}$; and (3) $X_t^{\star}= F^{-1}(V, t, Y^{t-1})$. Conversely, let, for every trading period $t$, $F(\cdot, t, Y^{t-1})$ be continuous from $\mathbb{R}$ to $[m,M]$ and solve $F(\cdot, t, Y^{t-1})=\mathcal{T}_{t,\nu}F(\cdot, t, Y^{t-1})$. Then, if the function $F(\cdot, t, Y^{t-1})$ is strictly increasing, satisfies 
\begin{equation*}
    \lim_{x \rightarrow - \infty} F(x, t, Y^{t-1})=m,\quad\quad \lim_{x \rightarrow + \infty} F(x, t, Y^{t-1})=M,
\end{equation*}
and the associated pricing schedule $h(\cdot, t, Y^{t-1}):=\phi_{F(\cdot, t, Y^{t-1})}$ is non-decreasing and non-constant, then $(h(\cdot, t, Y^{t-1}), X_t^{\star}:=F^{-1}(V, t, Y^{t-1}))_{t \in \mathbb{N}}$ is an equilibrium as in Definition \ref{def::equilibrium}.

Equation \eqref{eq::Student_fixed_point_operator} is the main fixed-point equation of the paper. In the case of a \textit{Gaussian fixed-point operator} (i.e., $\nu \rightarrow \infty$) and in a one-period competitive LOB framework, \cite{ccetin2023power} prove that the map $g \mapsto \phi_{g}$ does preserve monotonicity, i.e., $g$ non-decreasing implies $\phi_{g}$ non-decreasing (cf. their Lemma 4.1 (iv)), and then apply Schauder’s fixed-point on the following class:
\begin{equation}\label{eq::monotone_class}
    D:=\{g \in \mathcal{X}\,|\,g=g_0,\,\mu_0-\text{a.e.}\,\text{for some}\,g_0 \in D_0\},
\end{equation}
where 
\begin{equation}\label{eq::monotone_class_one}
    D_0:=\{g | g:\mathbb{R}\rightarrow[m,M]\,\text{is such that}\,|g(x)-g(y)|\leq K_0 |x-y|,\,\forall x,y \in \mathbb{R}\},
\end{equation}
and $\mathcal{X}:=L^{2}(\mathbb{R},\mu_0)$ is the space of Borel measurable functions that are square integrable with respect
to $\mu_0$, where $\mu_0(\ud x):=\frac{1}{\sqrt{2 \pi}} e^{-\frac{x^2}{2}}\,\ud x$. The assumption of Gaussianity for the distribution of noise trades in \cite{ccetin2023power} is crucial for the proof of their Lemma 4.1 (iv) because it is the basis of the $h$-transform technique based on the Brownian motion. Therefore, in order to prove that $g_t \mapsto \phi_{g_t}$ does preserve monotonicity, it might be tempting to use the fact that  a location-scale Student's $t$-distribution arises as a Gaussian distribution whose variance is randomly scaled by an independent inverse-gamma random variable (cf., also, Lemma \ref{lemma::monotonicity}). However, we provide in Appendix \ref{app::monotonicity_failure}\footnote{In order to keep the writing of the main text concise, we have decided to move the subsequent example in the appendix; however, together with the example in Appendix \ref{app::continuity_failure}, it represents an important contribution of the current paper.} an example showing that, in our framework, the monotonicity preservation fails even on the class of bounded, non-decreasing, and Lipschitz functions. In addition, to prove the existence of fixed-points for the Student-$t$ operator in \eqref{eq::Student_fixed_point_operator}, it might be tempting to repeat the same compactness argument -- in a $\mathcal{X}$ topology with $\mu_0(\ud x)$ as a reference measure -- in \cite{ccetin2023power} by using, instead, the Student-$t$ distribution as a reference measure. However, we provide in Appendix \ref{app::continuity_failure} an example that shows that the Student-$t$ fixed-point operator in \eqref{eq::Student_fixed_point_operator} is not continuous on the closure of the previous class. This is due to the fact that the compact class used in the Gaussian case is too large under Student-$t$ noise. The main difficulty is the Student-$t$ kernel inside the pricing operator. Indeed, because its tails decay only polynomially, remote tail regions remain visible to the conditional-expectation ratio defining $\phi_g$. At endpoint schedules, the numerator and denominator of this ratio may both converge to zero while the ratio converges to a non-endpoint value. Hence the operator is not continuous on the broad class. Importantly, this mathematical obstruction is the same feature that makes the model economically interesting: rare liquidity states remain relevant for pricing at polynomial order.

We now make the following important remark. 
\begin{remark}\label{rmk::monotonicity_failure}
In the example in Appendix \ref{app::monotonicity_failure},  we do not use as $g$ a function solving \eqref{eq::fixed_point_equation}. In this remark, we investigate the monotonicity preservation problem if we consider \eqref{eq::fixed_point_equation} with $\mathcal{T}_{t,\nu}F(\cdot, t, Y^{t-1})=F(\cdot, t, Y^{t-1})$ and $\phi_{F(\cdot, t, Y^{t-1})}=h(\cdot, t, Y^{t-1})$ the associated pricing schedule, then it can be rewritten as
\begin{equation}\label{eq::rewriting_F_Section_3}
    \begin{split}
        F(x,t,Y^{t-1}) &= \frac{1}{N_t}\int_{-\infty}^{+\infty} \mathfrak{q}_{\nu}(x-z;0,\sigma)\phi_{F(\cdot,t,Y^{t-1})}(z)\,\ud z\\
                       &+ \frac{N_t-1}{N_t}\int_{-\infty}^{+\infty}\bar{\mathfrak{q}}_{\nu}(x,z;0,\sigma)\phi_{F(\cdot,t,Y^{t-1})}(z)\,\ud z\\
                       &:=\frac{1}{N_t} G(x,t,Y^{t-1}) + \frac{N_t-1}{N_t} H(x,t,Y^{t-1})\\
        \text{where\quad}H(x,t,Y^{t-1}) &=\frac{1}{x}\int_{0}^{x} G(u,t,Y^{t-1})\,\ud u\quad (x \neq 0),\,\,\text{and}\,\,G(x, t, Y^{t-1})=(\mathfrak{q}_{\nu}*h)(x),
    \end{split}
\end{equation}
where the symbol ``*" denotes, hereafter, the convolution. Now, we solve the identity in \eqref{eq::rewriting_F_Section_3} for $G$ and $H$ directly in terms of $F$; for the sake of notation, we drop the dependence on time and the order-flow history. We have $x H^{'}(x) + H(x) = G(x)$. Substituting the previous equality into $N F = G + (N-1) H$, we obtain $x H^{'}(x) + N H(x) = N F(x)$, and set $\Phi_{x}(s) := s^{N}H(s x)$; in particular $\Phi_{x}^{'}(s)=N s^{N-1} F(s x)$. By integrating from $0$ to $1$, we obtain $H_F(x)=N \int_{0}^{1} s^{N-1}F(s x)\ud s$, where $x \in \mathbb{R}$. Therefore:
\begin{equation*}
    G_F(x)=N F(x)-(N-1)H_F(x)=N F(x)-N(N-1)\int_{0}^{1} s^{N-1} F(s x)\,\ud s.
\end{equation*}
So, every fixed-point satisfies the following convolution identity: $\mathfrak{q}_{\nu} * h=G_F$ and $h = \phi_{F}$. From the previous equation we obtain:
\begin{equation*}
    G_F^{'}(x)=N F^{'}(x)-(N-1)H_F^{'}(x)=N F^{'}(x)-N(N-1)\int_{0}^{1} s^{N} F^{'}(s x)\,\ud s.
\end{equation*}
In Section \ref{section::Asymptotic_Price_Impact}, we will prove that, under some regularity assumptions on the p.d.f of $V$, $M-F(x)$ is regularly varying at $+\infty$ with index $\rho^{+} \in (-1,0)$; see Appendix \ref{app::Supporting_Results_Market_Impact}, Subsection \ref{subsec::regular_variation} for the definition of regularly varying functions and the corresponding results. In this case, $F^{'}(x) \sim - c \rho^{+} x^{\rho^{+}-1}$ is regularly varying with index $\rho^{+}-1$ if, in addition, $\frac{x F'(x)}{M-F(x)}\to -\rho^+$, and we can write that
\begin{equation*}
    G_F^{'}(x)\sim N \left(1-\frac{N-1}{N+\rho^{+}}\right)F^{'}(x)=N\frac{1+\rho^{+}}{N+\rho^{+}}F^{'}(x),
\end{equation*}
which is strictly positive for sufficiently large $x$. Therefore, we can conclude that the convolution $\mathfrak{q}_{\nu}*h$ is asymptotically increasing in the ask tail. Similar computations yield the exact analogue on the bid tail. However, this does not prove the monotonicity of $h$ in either tail. Nevertheless, the previous calculation is useful. It shows that along a regularly varying fixed-point branch, the smoothed price \(\mathfrak{q}_{\nu}*h\) is asymptotically increasing in the ask tail. This is the
first indication that monotonicity can be recovered in the tails even though the global monotonicity-preservation property fails, as we will confirm in Section \ref{section::Asymptotic_Price_Impact}. Nonetheless, we strongly believe, that also the global monotonicity-preservation property holds true in this case.
\end{remark}

In the next subsection, we thus replace the broad Gaussian class with a tail-controlled compact class adapted to the Student-$t$ kernel.

\subsection{Existence of Student-\texorpdfstring{$t$}{t} fixed-points}\label{subsection::fixed_point_existence}
We shall denote by $m$ (resp. $M$) the lower (resp. the upper) endpoint of the support of the random variable $V$. Note that, since $V$ is assumed to be non-degenerate, we have  $-\infty < m < M < \infty$. We impose the following condition on the function $F$ to ensure that the integral
equation \eqref{eq::fixed_point_equation} is well-defined and changing the order of integration  is justified.
\begin{assumption}[Integrability]\label{ass::integrability}
In every period $t$, the function  $F(\cdot,t,Y^{t-1})$ satisfies  the integrability condition
\begin{equation*}
    \int_{-\infty}^{0}\phi^-_{F(\cdot,t,Y^{t-1})}(z)\mathfrak{q}_{\nu}(z; 0, \sigma)\,\ud z >- \infty.
\end{equation*}
\end{assumption}
Observe that the above is automatically satisfied if $V$ is bounded from below. Effectively, the above  assumption is a mild ``finite expected price'' requirement. It ensures that, when liquidity suppliers average tail expectations over the distribution of noise demand, equilibrium marginal prices remain finite at all depths. Without such an integrability condition the fixed-point operator in \eqref{eq::fixed_point_equation} may be ill-defined.

We now state and prove the following lemma, which is analogous to Lemma 4.1  in \cite{ccetin2023power}. However, in our case, it is only useful for interpretation and for the origin-spread diagnostic. In particular, it does not contain an analogous statement to Lemma 4.1-(iv): the example in Appendix \ref{app::monotonicity_failure} shows that the map $g \mapsto \phi_{g}$ does not preserve monotonicity under Student-$t$ noise trades.\\ 
\begin{lemma}\label{lemma::monotonicity}
Let  $t\geq 0$ and $x\in \mathbb{R}$ be given  and consider a filtered probability space \hfil\\ $(\Omega, \mathcal{F}, (\mathcal{F}_s)_{s \geq t}, \mathbb{P}_t)$, which supports a standard Brownian motion $B=(B_{s})_{s \geq t}$  with $B_t=x$, and an independent   $T \overset{d}{\sim}\texttt{Inv-Gamma}\left(\frac{\nu}{2}, \frac{\sigma^2 \nu}{2}\right)$ with $\nu>2$ and $\sigma>0$; in general, $\texttt{Inv-Gamma}(\alpha,\beta)$ denotes an inverse Gamma distribution with shape (real) $\alpha>0$ and scale (real) $\beta>0$. Let $g:\mathbb{R}\to \mathbb{R}$ be a continuous function and define the functions 
\begin{equation*}
    u^{+}(s, y) = \mathbb{E}_t[\Pi^{+}_t(g(B_T))\mathtt{1}_{\{T>s\}}\vert B_s=y],\quad u^{-}(s, y) = \mathbb{E}_t[\Pi^{-}_t(g(B_T))\mathtt{1}_{\{T>s\}}\vert B_s=y],
\end{equation*}
where $\Pi^\pm$ is defined according to the formulae in \eqref{eq::rightcontinuous_functions}. Then, the following statements hold.
\begin{itemize}
    \item[(i)] Define the probability measures $(\mathbb{Q}^{+}_t, \mathbb{Q}^{-}_t)$ on $\mathcal{F}$ by
    \begin{equation*}
        \frac{\ud \mathbb{Q}_{t}^{\pm}}{\ud \mathbb{P}} = \frac{\Pi_t^{\pm}(g_t(B_T))}{\mathbb{E}_t[\Pi_t^{\pm}(g_t(B_T))]}.
    \end{equation*}
    Then, on  $(\Omega, \mathcal{F}, (\mathcal{F}_s)_{s \geq t}, \mathbb{Q}_t)$, $B$ is a solution of the following stochastic differential equation (SDE)
    \begin{equation}\label{eq::SDE}
        \ud B_s = \ud W_s+ \frac{u_y(s,B_s)}{u(s,B_s)}\ud s,\,\,s \in [t, T),\,\,B_t=x,
    \end{equation}
    where $W^=(W_s)_{s \geq t}$ is a Brownian motion under $\mathbb{Q}_t^{\pm}$ with $W_t^{\mathbb{Q}_t^{\pm}}=0$, and $u$ is either $u^+$ or $u^-$.
    \item[(ii)] Suppose that $\mathbb{P}_t$ corresponds to the posterior measure in \eqref{eq::beliefsupdate}. Then, we have that $\phi_{g}^{+}(x)=\mathbb{E}_t^{\mathbb{Q}_t^{+}}[\Psi_t^{+}(g(B_T))]$ and $\phi_{g}^{-}(x)=\mathbb{E}_t^{\mathbb{Q}_t^{-}}[\Psi_t^{-}(g(B_T))]$, where $\Psi^{\pm}$ is as in \eqref{eq::tail_expectations} and $(B,\mathbb{Q}_t^{+})$ (resp. $(B,\mathbb{Q}_t^-)$) corresponds to the solution of \eqref{eq::SDE} if $u=u^+$ (resp $u=u^-$) and $\mathbb{E}^{\mathbb{Q}}$ stands for the expectation under $\mathbb{Q}$. 
    \item[(iii)] $\phi_{g}^{+}(0)>\phi_{g}^{-}(0)$.
\end{itemize}
\end{lemma}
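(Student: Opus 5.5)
The plan is to combine the Gaussian scale-mixture representation of the Student-$t$ law with an $h$-transform (Doob) argument, conditioning first on the independent clock $T$. The starting point is the identity $B_T - x \overset{d}{\sim} \mathtt{T}_{\nu}(0,\sigma)$ when $B_t=x$. This holds because conditionally on $T=\tau$ the increment is $N(0,\tau-t)$, and $T\sim\texttt{Inv-Gamma}(\nu/2,\sigma^2\nu/2)$ mixes normal variances into a $t_\nu$ law with scale $\sigma$. The time shift by $t$ should be absorbed by reading $T$ as the elapsed time after $t$. With this identity, for any bounded measurable $f$ we have $\mathbb{E}_t[f(B_T)\mid B_t=x]=\int f(z)\mathfrak{q}_\nu(x-z;0,\sigma)\,\ud z$. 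Applying it to $f=\Phi_t^\pm\circ g$ and $f=\Pi_t^\pm\circ g$ rewrites the numerator and denominator of \eqref{eq::g_mappings} as $\mathbb{E}_t[\Phi^\pm_t(g(B_T))]$ and $\mathbb{E}_t[\Pi^\pm_t(g(B_T))]$.

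For (i), I would define the density process $D_s:=\mathbb{E}_t[\Pi^\pm_t(g(B_T))\mid\mathcal{F}_s]/\mathbb{E}_t[\Pi^\pm_t(g(B_T))]$, with $T$ independent of $B$. On $\{T>s\}$, the Markov property together with the memoryless-free but explicit conditional law of $T$ given $T>s$ expresses the numerator through $u(s,B_s)$, normalised by $\mathbb{P}(T>s)$; more simply, I would work with the filtration enlarged by $\sigma(T)$ from the start. Conditionally on $T=\tau$, the process $s\mapsto \mathbb{E}[\Pi(g(B_\tau))\mid B_s]$ is a space-time harmonic function $w(s,y)$, smooth in $y$ for $s<\tau$ by heat-kernel smoothing. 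Itô's formula gives $\ud w=w_y\,\ud B$, and Girsanov then yields $\ud B_s=\ud W_s+(w_y/w)(s,B_s)\,\ud s$. Integrating out $T$ on $\{T>s\}$ gives the drift $u_y/u$, where $u=u^\pm$. Positivity of $u$ follows from $\Pi^\pm_t>0$ on a set of positive Gaussian measure unless $g$ is an endpoint constant, which is excluded by the convention on $\phi_g^\pm$. Boundedness of $\Pi$ (values in $[0,1]$) justifies differentiation under the integral and the martingale property.

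Part (ii) is then a one-line change of measure:
\begin{equation*}
\mathbb{E}^{\mathbb{Q}^\pm_t}_t[\Psi^\pm_t(g(B_T))]=\frac{\mathbb{E}_t[\Pi^\pm_t(g(B_T))\Psi^\pm_t(g(B_T))]}{\mathbb{E}_t[\Pi^\pm_t(g(B_T))]}=\frac{\mathbb{E}_t[\Phi^\pm_t(g(B_T))]}{\mathbb{E}_t[\Pi^\pm_t(g(B_T))]}=\phi_g^\pm(x).
\end{equation*}
This uses $\Pi\Psi=\Phi$, including at the endpoint conventions where both sides vanish, together with the mixture identity above.

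For (iii), I would use $\Psi^+_t(y)=\mathbb{E}_t[V\mid V\ge y]\ge \mathbb{E}_t[V]\ge \mathbb{E}_t[V\mid V<y]=\Psi^-_t(y')$ for all $y,y'$, since $\Psi^+$ is nondecreasing with $\Psi^+(-\infty)=\mathbb{E}_t V$ and symmetrically for $\Psi^-$. Hence $\phi^+_g(0)\ge\mathbb{E}_tV\ge\phi^-_g(0)$. Strictness comes from $V$ having a density, so the posterior is non-degenerate: $\Psi^+_t(y)>\mathbb{E}_tV$ whenever $\Pi^-_t(y)>0$, and $\Psi^-_t(y)<\mathbb{E}_tV$ whenever $\Pi^+_t(y)>0$. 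Under $\mathbb{Q}^+_t$ the event $\{\Pi^-_t(g(B_T))>0\}$ has positive mass unless $g\ge M$ a.s. In that case the convention gives $\phi^+=M>\phi^-$ directly, and the case $g\le m$ is handled symmetrically.

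The main obstacle is the rigorous derivation in (i) with the random horizon $T$. One has to choose the right filtration (including $T$, or $\{T>s\}$) and verify regularity and positivity of $u$ near $s\uparrow T$. I would handle this by conditioning on $T$ and applying the classical fixed-horizon Brownian $h$-transform, as in Lemma 4.1 of \cite{ccetin2023power}, pathwise in $\tau$.
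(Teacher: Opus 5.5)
Your proof is correct, and for (i) and (ii) it is essentially the paper's. The paper writes $\mathtt{1}_{\{T>s\}}\mathbb{E}[\Pi_t^{+}(g(B_T))\mid\mathcal{F}_s]=\mathtt{1}_{\{T>s\}}u^{+}(s,B_s)/\mathbb{P}(T>s)$, using independence and the Markov property, and then invokes Girsanov; this is your first sentence in (i). It proves (ii) with the same Gaussian scale-mixture identity and $\Pi\Psi=\Phi$. Your reading of $T$ as elapsed time matches the paper's reduction to $[0,T]$.

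The real difference is (iii). The paper only refers to Lemma~4.1 of \cite{ccetin2023power}, after rewriting $\phi_g^\pm(x)$ with $R\overset{d}{\sim}\mathtt{T}_\nu(0,1)$. Your sandwich $\Psi_t^+(y)\ge\mathbb{E}_t[V]\ge\Psi_t^-(y')$ gets strictness from non-degeneracy of the posterior, full support of $B_T$, continuity of $g$, and the endpoint conventions. It is self-contained, uses neither the SDE nor the $h$-transform, and gives $\phi_g^+(x)>\phi_g^-(x)$ for every $x$, not only at $x=0$.

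One caution about the variant you call simpler in (i). In the filtration $\mathcal{F}^B_s\vee\sigma(T)$, the $\mathbb{Q}_t^\pm$-drift of $B$ is $w_y(s,B_s;T)/w(s,B_s;T)$, where $w(s,y;\tau):=\mathbb{E}_t[\Pi_t^\pm(g(B_\tau))\mid B_s=y]$. So \eqref{eq::SDE} is false there. The statement depends on the filtration: it holds in the progressive enlargement $\mathcal{F}^B_s\vee\sigma(T\wedge s)$, which is the filtration implicit in the paper's identity above. ``Including $T$'' is therefore not an admissible choice.

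Likewise, ``integrating out $T$'' is not a plain average over the law of $T$. It is an innovations (projection) step with two features:
\begin{itemize}
\item the $\mathbb{Q}_t^\pm$-conditional law of $T$ given $\mathcal{F}^B_s$ on $\{T>s\}$ is proportional to $w(s,B_s;\tau)\,\mathbb{P}(T\in\ud\tau)$ restricted to $\tau>s$, and this weighting is what produces $\int w_y/\int w=u_y/u$;
\item the driving Brownian motion is replaced by the innovation process.
\end{itemize}

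It is cleaner to stay with the density process. Normalize $t=0$ as the paper does, and let $c:=\mathbb{E}_t[\Pi_t^\pm(g(B_T))]$ and $f_T$ be the density of $T$. On $[0,T)$ the density process equals $u(s,B_s)/(c\,\mathbb{P}(T>s))$. Since $\partial_s u+\tfrac12u_{yy}=-\Pi_t^\pm(g(y))f_T(s)$, its continuous martingale part is $u_y(s,B_s)/(c\,\mathbb{P}(T>s))\,\ud B_s$. The remaining $\ud s$ terms are cancelled by the compensator of its jump at $T$, so Girsanov gives the drift $u_y/u$ directly. In this form nothing needs controlling as $s\uparrow T$, because $u$ does not depend on the realized value of $T$. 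You only need $u>0$, which you already have.
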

\begin{proof}
See Appendix \ref{app::proof_existence_fixed_point}, Subsection \ref{app::proof_Lemma_monotonicity}.
\end{proof}
Notice that the inequality $\phi_g^+(0)>\phi_g^-(0)$ is the local adverse-selection spread. At the origin, an infinitesimal buy execution is more likely in high-value states than in low-value states, while an infinitesimal sell execution is more likely in low-value states. The bid-ask
spread is therefore generated endogenously by the execution event itself.

\begin{lemma}\label{lem::keyproperties_F}
    Suppose Assumption \ref{ass::integrability} holds.  Fix $t$ and let $F(\cdot,t,Y^{t-1})$ be a continuous non-decreasing solution of \eqref{eq::fixed_point_equation}. Then  $\lim_{x\rightarrow+\infty}F(x,t,Y^{t-1})=M$ and $\lim_{x\rightarrow-\infty}F(x,t,Y^{t-1})=m$. Moreover, if $h(\cdot, t, Y^{t-1})=\phi_{F(\cdot, t, Y^{t-1})}$ is non-decreasing and non-constant, then $F(\cdot, t, Y^{t-1})$ is strictly increasing.
\end{lemma}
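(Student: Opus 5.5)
Since $F$ is continuous and non-decreasing with values in $[m,M]$, the limits $L^+:=\lim_{x\to+\infty}F(x)$ and $L^-:=\lim_{x\to-\infty}F(x)$ exist. (We drop $t,Y^{t-1}$ from the notation.) The plan is to pass to the limit in the fixed-point equation \eqref{eq::fixed_point_equation} and compare the two sides. Suppose, for contradiction, that $L^+<M$. Because $F$ is non-decreasing, $F(z)\le L^+$ for every $z$. Hence $\Pi_t^+(F(z))\ge \Pi_t^+(L^+)>0$, which is positive because the posterior has a density charging every neighbourhood of $M$. (The posterior density is the prior density times a strictly positive Student-$t$ likelihood, so it is positive wherever $\mathfrak{p}_V$ is; here I use that $M$ is the essential supremum of the support.)

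Next I bound $\phi_F^+(x)$ from below. The function $y\mapsto\Psi_t^+(y)=\mathbb{E}_t[V\mid V\ge y]$ is non-decreasing. Writing $\phi_F^+(x)$ as a $\mathfrak{q}_\nu$-weighted ratio, as in Lemma \ref{lemma::monotonicity}(ii), shows that $\phi_F^+(x)$ is a weighted average of the values $\Psi_t^+(F(z))$. This gives only $\phi_F^+\ge \Psi^+_t(\inf F)$, which is too weak.

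A better route is to let $x\to\infty$ inside $\phi_F^+(x)$. After the change of variables $z=x-w$, the integrands $\Phi_t^+(F(x-w))$ and $\Pi_t^+(F(x-w))$ converge pointwise to $\Phi_t^+(L^+)$ and $\Pi_t^+(L^+)$. (Take right-continuity care at jumps; since $V$ has a density, $\Phi_t^\pm$ and $\Pi_t^\pm$ are in fact continuous.) These integrands are bounded, so dominated convergence gives $\phi_F^+(x)\to \Psi_t^+(L^+)$ as $x\to+\infty$. Since $\Pi_t^+(L^+)>0$ and $L^+<M$, we have $\Psi_t^+(L^+)>L^+$ strictly, because $V$ has positive posterior mass in $(L^+,M]$.

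Now take $x\to+\infty$ in \eqref{eq::fixed_point_equation}. The first term, $\frac1N(\mathfrak{q}_\nu*\phi_F)(x)$, tends to $\Psi^+_t(L^+)$ by dominated convergence, since $\phi_F$ is bounded in $[m,M]$ and converges at $+\infty$. The averaged term $\frac{N-1}{N}\frac1x\int_0^x(\mathfrak{q}_\nu*\phi_F)(u)\,\ud u$ is a Cesàro mean of a function with limit $\Psi_t^+(L^+)$, so it has the same limit. Hence $L^+=\Psi_t^+(L^+)>L^+$, a contradiction, and therefore $L^+=M$. The case $x\to-\infty$ is symmetric: there one uses $\phi^-_F$, the inequality $\Psi_t^-(L^-)<L^-$, and Assumption \ref{ass::integrability} to justify dominated convergence on the bid side (this is automatic here since $V\ge m$). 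The main technical point is checking boundedness and positivity of the denominators so that the limits pass through the ratio $\phi^\pm_F$. I expect that to be the main obstacle, together with verifying that the posterior density is positive near the endpoints.

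For strict monotonicity, use the representation of Remark \ref{rmk::monotonicity_failure}: $F=\frac1N G+\frac{N-1}{N}H$ with $G=\mathfrak{q}_\nu*h$ and $H(x)=\frac1x\int_0^xG$. If $h$ is non-decreasing and non-constant, then for $x<y$ we have $G(y)-G(x)=\int \mathfrak{q}_\nu(w)\,[h(y-w)-h(x-w)]\,\ud w$. The integrand is non-negative, and it is positive on a set of positive measure: a non-constant monotone $h$ has some $a$ with $h(a+\delta)>h(a)$, and the set of $w$ with $x-w<a<y-w$, roughly, has positive length while $\mathfrak{q}_\nu>0$ everywhere. So $G$ is strictly increasing. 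Then $H(x)=\int_0^1G(sx)\,\ud s$ is non-decreasing, and $F$, as a positive combination containing a strictly increasing term, is strictly increasing.
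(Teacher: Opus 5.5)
Your proof is correct. For the endpoint limits you use the same contradiction as the paper: $L^+<M$ would force $L^+=\Psi_t^+(L^+)>L^+$. Your justification is cleaner, though. Since $F\le L^+<M$ gives $\Pi_t^+(F(\cdot))\ge\Pi_t^+(L^+)>0$, dominated convergence applies to the numerator and denominator of $\phi_F^+$ separately. The paper instead localizes the normalized weight $\Pi_t^+(F(u))\,\mathfrak q_\nu(x+z-u;0,\sigma)\,\ud u$ with Student-$t$ tail bounds. You also treat the $\bar{\mathfrak q}_\nu$ term explicitly as a Ces\`aro mean, which the paper leaves implicit.

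The strict-monotonicity part takes a genuinely different route. The paper uses the positive origin spread from Lemma~\ref{lemma::monotonicity}(iii), i.e.\ a jump of size $\Delta>0$ of $h$ at $0$. It integrates by parts to get $G'(x)=\int\mathfrak q_\nu(x-z;0,\sigma)\,h(\ud z)\ge\Delta\,\mathfrak q_\nu(x;0,\sigma)>0$, and then checks $H'>0$ from the averaging formula. Your argument needs neither derivatives nor Lemma~\ref{lemma::monotonicity}(iii), only non-constancy of $h$ and $\mathfrak q_\nu>0$. It is therefore more elementary and uses exactly the stated hypothesis. In exchange, the paper's version yields $F'>0$ pointwise with an explicit lower bound in terms of the spread. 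That bound is reused in Section~\ref{section::The Liquidity_Suppliers_Belief_posterior_consistency} (local Lipschitz inverse, bound \eqref{ass::bound_assumption_1}), and strict increase alone does not provide it.

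Your positive-measure step needs one detail made precise. Given $x<y$, pick $a<b$ with $b-a<y-x$ and $h(b)>h(a)$, which you can do by subdividing any interval on which $h$ increases. Then $h(u+y-x)\ge h(b)>h(a)\ge h(u)$ for every $u=x-w$ in $(b-(y-x),a)$, an interval of positive length.
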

\begin{proof}
See Appendix \ref{app::proof_existence_fixed_point}, Subsection \ref{app::proof_keyproperties_F}.
\end{proof}
\begin{remark}
The previous lemma links the tail of the marginal-cost branch to economic informativeness. Along a monotone branch, sufficiently large buy orders reveal the upper endpoint of the fundamental support, and sufficiently large sell orders reveal the lower endpoint. If the associated pricing
schedule is monotone, the marginal-cost branch is strictly increasing, so larger informed orders correspond to higher fundamentals and the informed demand is well defined.  
\end{remark}

We now prove fixed-point existence of the Student-$t$ operator in \eqref{eq::fixed_point_equation}. We prove fixed-point existence under Student-$t$ noise by constructing a compact invariant class directly from endpoint primitives of the posterior and by applying Schauder's theorem on that class. First, we impose the following condition on the posterior measure $\mathbb{P}_t(\cdot)$ in \eqref{eq::beliefsupdate}, which guarantees the regularity needed for the tail functionals $\Phi_t^{\pm}$, $\Pi_t^{\pm}$, and $\Psi_t^{\pm}$ in \eqref{eq::rightcontinuous_functions} and \eqref{eq::tail_expectations} to be continuous. From an economic point of view, it says that once the public history $Y^{t-1}$ is fixed, liquidity suppliers have a smooth posterior density to the fundamental on the whole support interval, so there are no atoms or holes in beliefs.
\begin{assumption}\label{ass:posterior-regularity}
For each trading period $t$ and a period-$(t-1)$ history $Y^{t-1}$, the posterior measure
$\mathbb P_t(\cdot)$ is atomless, supported on $[m,M]$, and admits a continuous p.d.f.
$\mathfrak p_{t,V}(\cdot)$ on $[m,M]$ which is strictly positive on $(m,M)$. Endpoint behavior is specified separately in Assumption~\ref{ass:endpoint-regularity} below.
\end{assumption}
In order to construct a concrete invariant class, we need one-sided endpoint asymptotics for the posterior tail operators in \eqref{eq::rightcontinuous_functions} and \eqref{eq::tail_expectations}. 
\begin{assumption}\label{ass:endpoint-regularity}
Fix a trading period $t$ and a period-$(t-1)$ history $Y^{t-1}$. Assume that there exist constants $\kappa_t^+,\kappa_t^- >0$, $C_{t,+},C_{t,-}>0$, and $\beta_t^+,\beta_t^-\in(0,1)$, with $\beta_t^\pm=\frac{\kappa_t^\pm}{\kappa_t^\pm+1}$, such that, as $s\downarrow 0$,
\begin{equation*}
\Pi_t^+(M-s)=C_{t,+}\,s^{\kappa_t^+}\bigl(1+o(1)\bigr),\qquad
M-\Psi_t^+(M-s)=\beta_t^+\,s\bigl(1+o(1)\bigr),  
\end{equation*}
and 
\begin{equation*}
\Pi_t^-(m+s)=C_{t,-}\,s^{\kappa_t^-}\bigl(1+o(1)\bigr),\qquad
\Psi_t^-(m+s)-m=\beta_t^-\,s\bigl(1+o(1)\bigr).
\end{equation*}
\end{assumption}

\begin{remark}
\label{rem:endpoint-density}
Assumption~\ref{ass:endpoint-regularity} is satisfied, for example, if the posterior density
has endpoint power behavior
\begin{equation*}
\mathfrak p_{t,V}(M-s\mid Y^{t-1})\sim c_{t,+} s^{\kappa_t^+-1},
\qquad
\mathfrak p_{t,V}(m+s\mid Y^{t-1})\sim c_{t,-} s^{\kappa_t^- -1},
\qquad s\downarrow 0.
\end{equation*}
In that case $\beta_t^\pm=\frac{\kappa_t^\pm}{\kappa_t^\pm+1}$. Therefore, the exponents introduced below are determined directly from endpoint primitives of the posterior. It is easy to show that the previous conditions are satisfied by a broad class of bounded-support distributions. For instance, the uniform distribution, the Beta distribution on $[m,M]$, smooth densities positive at both endpoints (e.g., the truncated normal on $[m,M]$, the truncated logistic on $[m,M]$, and the truncated Student-$t$ on $[m,M]$), the triangular distribution on $[m,M]$, and power densities near the endpoints. On the other hand, distributions that do not satisfy the previous conditions are unbounded distributions, densities with endpoint atoms, and densities with holes near endpoints. 
\end{remark}
\noindent We now state the following lemma, whose proof is immediate.
\begin{lemma}\label{lem:candidate-exponents}
Suppose Assumption~\ref{ass:endpoint-regularity} holds and $N_t > 1$. Define 
\begin{equation*}
\rho_t^+ := \frac{\beta_t^+-1}{1-\beta_t^+/N_t}\in(-1,0),\qquad
\rho_t^- := \frac{\beta_t^--1}{1-\beta_t^-/N_t}\in(-1,0).
\end{equation*}
Then, the following identities hold
\begin{enumerate}
    \item[(i)] $\beta_t^\pm \left(\frac1{N_t}+\frac{N_t-1}{N_t(1+\rho_t^\pm)}\right)=1$.
    \item[(ii)] $\kappa_t^\pm(-\rho_t^\pm)=\frac{\beta_t^\pm}{1-\beta_t^\pm/N_t}<\frac{N_t}{N_t-1}\le 2<\nu$.
    \item[(iii)] $(\kappa_t^\pm+1)(-\rho_t^\pm)=\frac{1}{1-\beta_t^\pm/N_t}
<\frac{N_t}{N_t-1}\le 2<\nu.$
\end{enumerate}
\end{lemma}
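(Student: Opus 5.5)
Everything here is algebra in $\beta:=\beta_t^\pm\in(0,1)$, $N:=N_t>1$ and $\kappa:=\kappa_t^\pm>0$, linked by $\beta=\kappa/(\kappa+1)$, equivalently $\kappa=\beta/(1-\beta)$ and $\kappa+1=1/(1-\beta)$. The plan is to get a closed form for $1+\rho$ and $-\rho$, then substitute into (i)–(iii). The ``$+$'' and ``$-$'' cases are identical, so I drop the index.

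\textbf{Range of $\rho$.} Since $\beta<1\le N$, we have $1-\beta/N\in(0,1)$. The numerator $\beta-1$ is negative, so $\rho<0$. Next,
\begin{equation*}
1+\rho=\frac{1-\beta/N+\beta-1}{1-\beta/N}=\frac{\beta(N-1)/N}{1-\beta/N}=\frac{\beta(N-1)}{N-\beta}>0,
\end{equation*}
using $\beta>0$ and $N>1$. This gives $\rho>-1$, so $\rho\in(-1,0)$.

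\textbf{Identity (i).} From the formula above, $\frac{N-1}{N(1+\rho)}=\frac{N-\beta}{N\beta}$. Therefore
\begin{equation*}
\beta\left(\frac1N+\frac{N-\beta}{N\beta}\right)=\frac{\beta+N-\beta}{N}=1.
\end{equation*}

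\textbf{Identities (ii) and (iii).} We have $-\rho=\frac{1-\beta}{1-\beta/N}$. Multiplying by $\kappa=\beta/(1-\beta)$ gives $\kappa(-\rho)=\frac{\beta}{1-\beta/N}$. Multiplying instead by $\kappa+1=1/(1-\beta)$ gives $(\kappa+1)(-\rho)=\frac{1}{1-\beta/N}$.

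For the inequalities, the map $\beta\mapsto 1/(1-\beta/N)$ is increasing on $(0,1)$. Hence $\frac{\beta}{1-\beta/N}<\frac{1}{1-\beta/N}<\frac{1}{1-1/N}=\frac{N}{N-1}$, where both steps use $\beta<1$. For $N\ge 2$, $\frac{N}{N-1}\le 2$, and $2<\nu$ holds by the standing assumption $\nu>2$.

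The only point needing care is this bound $N/(N-1)\le 2$. It requires $N_t\ge2$, i.e.\ the number of informed traders is an integer; the hypothesis $N_t>1$ alone would not give it. I will state explicitly that $N_t\in\mathbb{N}$ is used there. Everything else is routine substitution.
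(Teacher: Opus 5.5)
Your proof is correct and is the same direct algebra the paper has in mind when it calls the proof ``immediate'' (the paper writes out no argument): closed forms for $1+\rho$ and $-\rho$, then substitution using $\kappa=\beta/(1-\beta)$. Your remark that $N_t/(N_t-1)\leq 2$ requires $N_t\geq 2$, i.e.\ an integer number of informed traders rather than merely $N_t>1$, is accurate and worth stating explicitly.
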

The inequalities in the previous lemma imply that the powers generated by
the endpoint asymptotics are heavier-tailed than the Student-$t$ kernel itself. This is the key analytic fact behind the invariance of the asymptotic class we will define below: convolution with the Student-$t$ density preserves the relevant branch tail order.

We are now ready to define our tail-controlled compact class adapted to the Student-$t$ kernel. Notice that in the following definition, the quantities $\rho_t^{+}$ and $\rho_t^{-}$ are defined from the endpoint behaviour of the posterior tail-expectation operators, as in Lemma \ref{lem:candidate-exponents} above.
\begin{definition}\label{def:template-class}
Fix constants $R>1, L>0$ and $0<\underline c_t^\pm<\overline c_t^\pm<\infty$, and let $\varepsilon_t:[R,\infty)\to(0,1)$ be continuous, decreasing, and such that $\varepsilon_t(x)\downarrow 0$ as $x\to\infty$. We define $\mathcal K_{t,\nu}=\mathcal K_{t,\nu}(R,L,\varepsilon_t,\underline c_t^\pm,\overline c_t^\pm)$
as the set of all continuous functions $g:\mathbb R\to\mathbb R$ such that:
\begin{enumerate}
\item[(i)] \(m\le g(x)\le M\) for all \(x\in\mathbb R\);
\item[(ii)] \(g\) is globally Lipschitz with \(\operatorname{Lip}(g)\le L\);
\item[(iii)] there exist constants $c_g^+ \in[\underline c_t^+,\overline c_t^+]$, $c_g^-\in[\underline c_t^-,\overline c_t^-]$, such that
\begin{equation*}
\left|
\frac{M-g(x)}{c_g^+(1+x)^{\rho_t^+}}-1
\right|
\le \varepsilon_t(x),\,\forall x \geq R,\quad\text{and}\quad \left|
\frac{g(x)-m}{c_g^-(1+|x|)^{\rho_t^-}}-1
\right|
\le \varepsilon_t(|x|),\,\forall x \leq -R 
\end{equation*}
\end{enumerate}
\end{definition}

We now state the main theorem of this section which proves the existence of a fixed-point for the Student-$t$ operator in \eqref{eq::Student_fixed_point_operator}.
\begin{theorem}\label{th::existence_of_a_fixed_point}
Suppose Assumptions \ref{ass::integrability}, \ref{ass:posterior-regularity}, and \ref{ass:endpoint-regularity} hold, and $N_t>1$. Then, for every $\nu > 2$, each period $t$, and every $(t-1)$-history, the Student-$t$ operator in \eqref{eq::Student_fixed_point_operator} has at least one fixed-point $F(\cdot, t, Y^{t-1}) \in \mathcal{K}_{t,\nu}$, where $\mathcal{K}_{t,\nu}$ is defined in Definition \ref{def:template-class}. Moreover, every fixed-point is continuously differentiable and satisfies 
\begin{equation*}
\|F^{'}(\cdot, t, Y^{t-1})\|_\infty \le \overline M\,\frac{N_t+1}{2N_t}I_\nu,\quad
\overline M:=\max\{|m|,|M|\},\quad I_\nu:=\int_{-\infty}^{+\infty}
\left|\frac{\partial}{\partial u}\mathfrak q_\nu(u;0,\sigma)\right|\,\ud u<\infty
\end{equation*}
\end{theorem}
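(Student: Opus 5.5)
The plan is to apply Schauder's fixed-point theorem to $\mathcal T_{t,\nu}$ on $\mathcal K_{t,\nu}$, viewed as a subset of $C(\mathbb R)$ with the topology of uniform convergence on compacts (a Fréchet space). By Arzelà--Ascoli, properties (i)--(ii) make $\mathcal K_{t,\nu}$ relatively compact. Property (iii) is closed under locally uniform limits: we pass to a convergent subsequence of the constants $c_g^\pm$ in the compact intervals $[\underline c_t^\pm,\overline c_t^\pm]$, and the inequalities are pointwise. Convexity needs some care, since a convex combination $\lambda g_1+(1-\lambda)g_2$ has tail constant $\lambda c_{g_1}^++(1-\lambda)c_{g_2}^+$ and relative error at most $\varepsilon_t$. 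So (iii) is also convex, and $\mathcal K_{t,\nu}$ is a nonempty compact convex set. A nonempty example is $g$ that equals $M-c(1+x)^{\rho_t^+}$ for large $x$, interpolated smoothly, with $L$ large enough.

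Once $R,L,\varepsilon_t$ and the constants are chosen appropriately, the main work is invariance, $\mathcal T_{t,\nu}\mathcal K_{t,\nu}\subset\mathcal K_{t,\nu}$.

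\textbf{Bounds (i).} For every $g$, $\phi_g$ takes values in $[m,M]$ because it is a ratio defining a conditional expectation of $V$. The kernel $\frac1{N_t}\mathfrak q_\nu+\frac{N_t-1}{N_t}\bar{\mathfrak q}_\nu$ is a probability density in $z$, so $\mathcal T g\in[m,M]$.

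\textbf{Lipschitz bound (ii), and the derivative estimate.} Differentiate under the integral. For $G=\mathfrak q_\nu*\phi_g$ we get $|G'|\le \overline M I_\nu$. For $H(x)=\frac1x\int_0^xG$ we write $H(x)=\int_0^1G(sx)\,\ud s$, so $|H'|\le \int_0^1 s\,|G'(sx)|\,\ud s\le \overline M I_\nu/2$. Combining, $|(\mathcal Tg)'|\le \overline M I_\nu\bigl(\frac1{N_t}+\frac{N_t-1}{2N_t}\bigr)=\overline M\frac{N_t+1}{2N_t}I_\nu$. This gives both (ii), with $L$ taken to be this constant, and the stated $C^1$ bound for every fixed point. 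Continuity of $\mathcal T g$ (in fact $C^1$) follows from $\phi_g$ being bounded and $\mathfrak q_\nu$ smooth with integrable derivative.

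\textbf{Tail class (iii).} This is the main obstacle. For $x\ge R$ and $g$ in the class, the plan is as follows.

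First, show that $M-\phi_g^+(x)$ is asymptotically $\beta_t^+(M-g(x))$, up to a relative error controlled by $\varepsilon_t$ and by a decaying term. Write $\Pi^+_t(g(z))\approx C_{t,+}(M-g(z))^{\kappa_t^+}\approx C\,(1+z)^{\kappa_t^+\rho_t^+}$, and $\Phi^+_t=\Psi^+_t\Pi^+_t$ with $M-\Psi^+_t(M-s)\approx\beta_t^+ s$. The numerator and denominator of $M-\phi^+_g(x)$ are then Student-$t$ convolutions of regularly varying functions, with indices $(\kappa_t^++1)\rho_t^+$ and $\kappa_t^+\rho_t^+$. By Lemma \ref{lem:candidate-exponents}(ii)--(iii) both indices exceed $-\nu$, so the convolution is dominated by $z$ near $x$, and contributions from $z\le R$, including the bid side of $g$, are $O(x^{-\nu})$, which is negligible relative to $x^{\kappa\rho}$. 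Hence $M-\phi_g^+(x)\sim\beta_t^+c_g^+(1+x)^{\rho_t^+}$.

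Next, apply the $G,H$ averaging: $M-G\sim$ the same quantity, and $M-H(x)\sim\frac{\beta_t^+c_g^+}{1+\rho_t^+}(1+x)^{\rho_t^+}$. Lemma \ref{lem:candidate-exponents}(i) then gives $M-\mathcal Tg(x)\sim c_g^+(1+x)^{\rho_t^+}$. So the new constant is $c_{\mathcal Tg}^+=c_g^+$, which stays in $[\underline c,\overline c]$, and the bid side is symmetric.

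The delicate point is uniformity. We must show that the relative error of $\mathcal Tg$ is bounded by $\varepsilon_t(x)$ itself, not merely that it tends to $0$. For this I would make $\varepsilon_t$ part of the construction. Taking quantitative moduli $\omega$ from Assumption \ref{ass:endpoint-regularity}, the relative error is bounded by a weighted average $\int \varepsilon_t(z)K_x(z)\,\ud z$ plus $\omega(\cdot)$ terms, an $O(x^{-\delta})$ term from kernel spread and the $H$-average over $[0,x]$, and a bounded contribution from $sx<R$. I would choose $\varepsilon_t$ as a slowly decaying majorant of these terms, and $R$ large, so that the bound closes.

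The $H$ term uses values $G(sx)$ for small $s$. Since $\rho_t^+>-1$, the region $s\le R/x$ contributes $O(x^{-1})=o(x^{\rho_t^+})$. This, too, is absorbed once $\varepsilon_t$ decays slowly enough, for example like $x^{-\delta}$ with small $\delta$.

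\textbf{Continuity of $\mathcal T$.} Finally, continuity of $\mathcal T$ on $\mathcal K_{t,\nu}$ in the locally uniform topology is exactly what fails on the broad class. Here the uniform tail control (iii) keeps the denominators $\int\Pi^\pm_t(g(z))\mathfrak q_\nu(x-z)\,\ud z$ bounded below, uniformly in $g$, on compacts in $x$. Combined with dominated convergence and the uniform tail estimates, this gives $\phi_{g_n}\to\phi_g$ locally uniformly, and then $\mathcal Tg_n\to\mathcal Tg$. Schauder's theorem then yields the fixed point.
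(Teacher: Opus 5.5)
Your route is the paper's: Schauder on $\mathcal K_{t,\nu}$, the derivative bound through $I_\nu$, the tail action of $\phi_g$ through Lemma~\ref{lem:candidate-exponents} and Student-$t$ convolution, and continuity through denominators bounded below. Your formula $H(x)=\int_0^1G(sx)\,\ud s$ is a cleaner version of the paper's derivative computation. The bounds, the Lipschitz estimate, the closedness and convexity of (iii), and the continuity argument are fine.

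The gap is the step you flag as delicate. It does not close: no choice of $R$ and of a decreasing $\varepsilon_t$ makes (iii) invariant.

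Write $\rho=\rho_t^+$, $\beta=\beta_t^+$, $N=N_t$, and $M-g(x)=c(1+x)^{\rho}(1+e_g(x))$ for $x\ge R$. We have $M-\phi_g^+\approx\beta(M-g)$ and $\beta\bigl(\frac1N+\frac{N-1}{N(1+\rho)}\bigr)=1$. So, to leading order, the relative error of $\mathcal T_{t,\nu}g$ is a convex combination of the current error and a backward average of earlier errors:
\[
e_{\mathcal Tg}(x)\approx\frac{\beta}{N}\,e_g(x)+\Bigl(1-\frac{\beta}{N}\Bigr)\int_R^x w_x(u)\,e_g(u)\,\ud u+r_g(x),\qquad w_x(u):=\frac{(1+\rho)(1+u)^{\rho}}{x(1+x)^{\rho}}.
\]
Here $r_g$ collects three things: the $[0,R]$ part of the $H$-average, a $g$-independent correction of order $x^{-(1+\rho)}$, and the endpoint and kernel-spread errors.

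Now take $g_\pm\in\mathcal K_{t,\nu}$ that agree up to some $R_1$ and have $e_{g_\pm}=\pm\varepsilon_t$ from $R_1$ on. As you note, the constant of $\mathcal Tg_\pm$ is forced to be $c$. Hence both images lie in the class only if $e_{\mathcal Tg_+}-e_{\mathcal Tg_-}\le2\varepsilon_t(x)$.

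The terms in $r_{g_\pm}$ cancel in this difference up to $o(\varepsilon_t)$. This is clean for the uniform posterior, where the expansions in Assumption~\ref{ass:endpoint-regularity} are exact. The spread is therefore $2\bigl[\frac{\beta}{N}\varepsilon_t(x)+(1-\frac{\beta}{N})\int_{R_1}^x w_x\varepsilon_t\,\ud u\bigr]$, and
\begin{multline*}
x(1+x)^{\rho}\Bigl[\int_{R_1}^x w_x(u)\,\varepsilon_t(u)\,\ud u-\varepsilon_t(x)\Bigr]=(1+\rho)\int_{R_1}^x(1+u)^{\rho}\bigl(\varepsilon_t(u)-\varepsilon_t(x)\bigr)\,\ud u\\
-\varepsilon_t(x)\bigl[(1+R_1)^{1+\rho}-(1+x)^{\rho}\bigr].
\end{multline*}
The first term on the right is nondecreasing in $x$ and eventually bounded below by a positive constant, while the second tends to $0$. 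So the spread eventually exceeds $2\varepsilon_t(x)$.

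For your choice $\varepsilon_t(x)=x^{-\delta}$ with $0<\delta<1+\rho$, the spread equals $2\varepsilon_t(x)\bigl[1+(1-\beta/N)\tfrac{\delta}{1+\rho-\delta}\bigr](1+o(1))$. The mechanism is that averaging a decreasing envelope enlarges it. Slower decay shrinks the excess but never removes it. No further error term can compensate, because it would have to do so with opposite signs for $g_+$ and $g_-$.

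The paper's own invariance proof (Proposition~\ref{ass:compact-invariant-class}) ends with ``choosing $R$ large enough and $\varepsilon_t$ slowly decreasing'' and has exactly the same lacuna, so following it will not fill the gap. What is missing is a tail constraint that $\mathcal T_{t,\nu}$ maps strictly inside itself.

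One option is to replace (iii) by power envelopes $a(1+x)^{\rho''}\le M-g(x)\le b(1+x)^{\rho'}$ with $\rho''<\rho_t^+<\rho'$ close to $\rho_t^+$, and likewise on the bid side. The multiplier $\beta_t^+\bigl(\frac1{N_t}+\frac{N_t-1}{N_t(1+r)}\bigr)$ is $<1$ at $r=\rho'$ and $>1$ at $r=\rho''$. This gives a fixed multiplicative margin that absorbs the endpoint $o(1)$ errors and the $[0,R]$ contribution once $R$ is large. That class is still convex and compact with uniform tail envelopes, so Schauder applies. The sharper conclusion $F\in\mathcal K_{t,\nu}$ would then need a separate argument.
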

\begin{proof}
    See Appendix \ref{app::proof_existence_fixed_point}, Subsection \ref{subsec::proof_existence_fixed_point}.
\end{proof}

The previous Theorem~\ref{th::existence_of_a_fixed_point} is the constructive replacement for the Gaussian fixed-point argument. It shows that heavy-tailed liquidity shocks do not make the marginal-cost problem ill posed. What changes is the admissible class of candidate books: because Student-$t$ noise keeps remote liquidity states relevant at polynomial order, candidate marginal-cost schedules must control how they approach their endpoint values.

\subsection{From fixed points to monotone LOB equilibria}\label{subsection::Equilibrium_existence}
Theorem \ref{th::existence_of_a_fixed_point} is a fixed-point theorem, not an equilibrium one. It guarantees that the Student-$t$ marginal-cost equation is well behaved on the class in Definition \ref{def:template-class}. In order to obtain an equilibrium as in Definition \ref{def::equilibrium}, we need the global monotonicity of the associated pricing schedule. As discussed earlier, when the distribution of noise trades is Gaussian the existence of a fixed-point and the monotonicity problem collapse because of the global monotonicity preservation of $g \mapsto \phi_{g}$. Under Student-$t$ noise trade, instead, these two questions must be separated. Global
monotonicity is not automatic under Student-$t$ noise. Therefore, we state equilibrium results conditional on the monotone branch and support this selection by tail monotonicity and numerical verification. In particular, in Section \ref{section::Asymptotic_Price_Impact}, we are able to show that monotonicity can be recovered in the tails. Moreover, in our numerical section, we show that we have global monotonicity for a broad range of (economically meaningful) choices of the parameter $\nu$ and of the specification of the fundamental value's prior. Motivated by this analytical result and numerical evidence the remainder of the paper conditions on the following assumption, i.e., all subsequent references to equilibrium mean equilibrium on this selected branch.

\begin{assumption}[Selected monotone branch]
In each period $t$ and after each public history $Y^{t-1}$, the fixed point $F(\cdot,t,Y^{t-1})$ selected from $K_{t,\nu}$ (Definition \ref{def:template-class}) is strictly increasing, and the
associated pricing schedule $h(\cdot,t,Y^{t-1})=\phi_{F(\cdot,t,Y^{t-1})}$ is non-decreasing and non-constant. In particular,  $(h(\cdot,t,Y^{t-1}),X_t^{\star})_{t \in \mathbb{N}}$ with $X_t^{\star}=F^{-1}(V,t,Y^{t-1})$ is a conditional LOB equilibrium.
\end{assumption}

We now analyse the liquidity suppliers' belief posterior consistency.
\section{Bayesian learning from order flow}\label{section::The Liquidity_Suppliers_Belief_posterior_consistency}
The dynamics of informed strategies and the liquidity suppliers' learning process about the fundamental asset value $V$ are closely related (cf. \eqref{eq::beliefsupdate}). Because informed traders' trading horizon is limited to one period (they are myopic), the only link between consecutive periods is the liquidity suppliers' belief on the asset fundamental value. In this section, we investigate whether the liquidity suppliers' beliefs on the risky asset payoff converge to the realization chosen by nature at time zero as the number of trading periods tends to infinity. In particular, this is in line with the convergence result in \cite{glosten1985bid}, although in their model the result holds true if and only if the trade size is the unit trade size, and in \cite{ozsoylev2010price}, where instead multiple trade sizes are allowed. We interpret the time-zero realization of $V$, say $v_0$, as the parameter that governs the history of aggregated orders $Y_1, Y_2, \ldots$. Given this interpretation, to prove the just-mentioned convergence, we use the theory of ``dynamics of Bayesian updating with dependent data and misspecified model'' proposed in \cite{shalizi2009dynamics}; we direct the reader to the extensive coverage on this topic in the introduction of the previous paper. In our case, we have dependence between consecutive aggregated orders (i.e., we deal with non-i.i.d. data) but the model is specified, in the sense that the sequence $Y_1, Y_2, \ldots$ is governed/parameterized by the true value $v_0$. We therefore have to simplify the framework in \cite{shalizi2009dynamics}. In an effort to keep the current paper as self-contained, we report, in Appendix \ref{app::consistency_beliefs} any essential background material required in the derivations of our main convergence result. 
\subsection{Proof of the Bayesian learning from order}
In our model, in equilibrium, informed traders submit orders $X_t^{\star}$ in every round of trading $t$ according to the following rule: $X_t^{\star}=F^{-1}(v_0, t, Y^{t-1})$, where $v_0 \in \operatorname{supp}(V)=[m,M]$ is the realization of the fundamental value of the asset $V$ chosen by nature at time zero. In particular, $Y_t=Z_t+X_t^{\star}=Z_t+F^{-1}(v_0, t, Y^{t-1})$, where $Z_t$ is the aggregated market orders of the noise traders in the trading round $t$. Under the distributional assumptions on $Z_t$, we have that $\mathfrak{p}_{Y_t | Y^{t-1}, v_0}(Y_t) \overset{d}{\sim} \mathtt{T}_{\nu}(F^{-1}(v_0, t, Y^{t-1}),\sigma)$. Because $\lim_{|x|\rightarrow\infty} F^{'}(x, t, Y^{t-1})=0$, we have $\inf_{x \in \mathbb{R}} F^{'}(x, t, Y^{t-1})=0$, and therefore the slope of $F^{-1}(\cdot, t, Y^{t-1})$ is arbitrarily large near the boundaries $m$ and $M$; from this, $F^{-1}(\cdot, t, Y^{t-1})$ cannot be globally Lipschitz, a property that would be required in the proof of the liquidity suppliers' belief posterior consistency. We thus impose the following local regularity of the equilibrium inverse. We fix $\varepsilon\in(0,(M-m)/2)$ and let $\Theta_\varepsilon:=[m+\varepsilon,M-\varepsilon]$ be the compact interior parameter set. We assume $v_0 \in \operatorname{int}(\Theta_{\varepsilon})$, where $\mathrm{int}(\cdot)$ denotes the interior of a set, and that the prior $\Pi_0$ is supported on $\Theta_\varepsilon$, and has $v_0$ in its KL-support.

For all trading rounds $t$ and every public order flow history $Y^{t-1}$, we have that the function $x\mapsto F(x,t,Y^{t-1})$ is $C^{1}$ and strictly increasing on $\mathbb R$; hence, $F'(x,t,Y^{t-1})>0$ for all $x$.  Therefore, the inverse map $v\mapsto F^{-1}(v,t,Y^{t-1})$ is well-defined and Lipschitz on
$\Theta_\varepsilon$, with (random) Lipschitz constant
\begin{equation}\label{eq:Leps_t_def}
L_{\varepsilon,t}(Y^{t-1})
:=\sup_{\substack{v,v'\in\Theta_\varepsilon\\v\neq v'}}
\frac{\left|F^{-1}(v,t,Y^{t-1})-F^{-1}(v',t,Y^{t-1})\right|}{|v-v'|}
=\left(\inf_{x\in I_{\varepsilon,t}(Y^{t-1})}F'(x,t,Y^{t-1})\right)^{-1},
\end{equation}
where
\begin{equation*}
I_{\varepsilon,t}(Y^{t-1})
:=\Big[F^{-1}(m+\varepsilon,t,Y^{t-1}),\,F^{-1}(M-\varepsilon,t,Y^{t-1})\Big].
\end{equation*}
The quantity in \eqref{eq:Leps_t_def} leads to an economically-grounded interpretation. It measures the time-varying elasticity of informed demand with respect to fundamentals. As liquidity suppliers learn, adverse-selection premia typically decline and informed demand becomes more sensitive to the fundamental value. Therefore, it is natural that $L_{\varepsilon,t}$ is not uniformly bounded in $t$. At this point, we make the following
\begin{assumption}\label{ass:moderate_steepening}
Fix $\varepsilon\in(0,(M-m)/2)$ and let, as above, $\Theta_\varepsilon=[m+\varepsilon,M-\varepsilon]$. There exist constants $C_L>0$ and $\kappa>0$ and a deterministic sequence $(L_t)_{t\ge1}$
such that for all $t\ge1$,
\begin{equation*}
L_{\varepsilon,t}(Y^{t-1}) \le L_t \le C_L(1+t)^\kappa
\qquad\text{a.s.}
\end{equation*}
\end{assumption}
\noindent Hence, Assumption~\ref{ass:moderate_steepening} allows for
time-variation and growth in the above-mentioned elasticity but rules out explosive steepening by requiring that the local Lipschitz constant of $v\mapsto F^{-1}(v,t,Y^{t-1})$ grows at most polynomially.
Technically, this polynomial envelope is used in Appendix \ref{app::proof_Lemma_uniform_convergence_loglikelihood_ratio}, \emph{Step 3}, to control a martingale term uniformly over $v\in\Theta_\varepsilon$ via a covering argument and the Borel-Cantelli lemma. We now discuss how to link Assumption ~\ref{ass:moderate_steepening} to the equilibrium primitives of our model. From the proof of Lemma \ref{lemma::monotonicity}, we easily have the following bound\footnote{The bound based on \(\mathfrak q_\nu\) is conservative. The tail analysis in Section~\ref{section::Asymptotic_Price_Impact} typically yields
a slower polynomial decay for the branch derivative, but the \(q_\nu\)-bound is sufficient for the local inverse-regularity argument.}
\begin{equation}\label{ass::bound_assumption_1}
    F^{'}(x,t,Y^{t-1}) \geq \frac{N+1}{2 N} \Delta_t \mathfrak{q}_{\nu}(x;0,\sigma),\quad \forall x \in \mathbb{R},
\end{equation}
where $\Delta_t:=h(0+,t,Y^{t-1})-h(0-,t,Y^{t-1})$ is the equilibrium spread at the origin. Notice that in this section, we assume that $N_t\equiv N$ for all $t$ in order to isolate the learning mechanism and keep the likelihood stationary across periods. Extending the consistency argument to time-varying $(N_t)$ is feasible but requires additional bookkeeping and is left for future work. Now, we define $R_{\varepsilon,t}(Y^{t-1}):=\max\left(|F^{-1}(m+\varepsilon,t,Y^{t-1})|,|F^{-1}(M-\varepsilon,t,Y^{t-1})|\right)$. Then, $\forall x \in I_{\varepsilon,t}$ we have that $|x|\leq R_{\varepsilon,t}$, and $\mathfrak{q}_{\nu}(x;0,\sigma) \geq \mathfrak{q}_{\nu}(R_{\varepsilon,t}(Y^{t-1});0,\sigma)$. By inserting the lower bound of \eqref{ass::bound_assumption_1} into the definition of $L_{\varepsilon,t}$ in \eqref{eq:Leps_t_def}, we obtain the following implication:
\begin{align*}
    &\inf_{x \in I_{\varepsilon,t}} F^{'}(x,t,Y^{t-1}) \geq \frac{N+1}{2 N}\Delta_t \mathfrak{q}_{\nu}(R_{\varepsilon,t}(Y^{t-1});0,\sigma)\\
    &\Rightarrow L_{\varepsilon,t} \leq \frac{2 N}{(N+1)\Delta_t \mathfrak{q}_{\nu}(R_{\varepsilon,t}(Y^{t-1});0,\sigma)}\\
    &\leq C(\nu, \sigma, N) \frac{1}{\Delta_t}\left(1+\frac{R_{\varepsilon,t}(Y^{t-1})^2}{\nu \sigma^2}\right)^{\frac{(\nu+1)}{2}},
\end{align*}
where $C(\nu, \sigma, N)$ is a constant depending only on $(\nu, \sigma, N)$. In particular, all quantities on the right-hand side of the last expression in the previous equation are either a primitive $(\nu, \sigma, N)$ or an equilibrium observable. By imposing that the spread does not collapse too fast, e.g., $\Delta_t \geq c_{\Delta} (1+t)^{-\beta}$ a.s. and that the relevant inverse region does not drift too fast, e.g., $R_{\varepsilon,t}(Y^{t-1}) \leq c_R (1+t)^{\alpha}$ a.s., we obtain $L_{\varepsilon,t}=O((1+t)^{\beta+\alpha(\nu+1)})$, which is exactly Assumption \ref{ass:moderate_steepening} with $\kappa=\beta+\alpha(\nu+1)$.

We also assume the following
\begin{assumption}\label{ass:KL_stabilization}
Fix $\varepsilon$ and $\Theta_\varepsilon$ as in Assumption~\ref{ass:moderate_steepening}. For each $v\in\Theta_\varepsilon$, let
$K_t(v,v_0):=-\mathbb E^{v_0}_{t-1}[\ell_t(v)]=-\mathbb E^{v_0}_{t-1}\left[\log \frac{\mathfrak{f}_{Y_t|Y^{t-1},v}(Y_t)}{\mathfrak{f}_{Y_t|Y^{t-1},v_0}(Y_t)}\right]$.
Then, there exists a deterministic function $K(\cdot,v_0)$ such that
\begin{equation*}
\sup_{v\in\Theta_\varepsilon}\left|
\frac{1}{t}\sum\limits_{s=1}^{t} K_s(v,v_0) - K(v,v_0)
\right|\to 0,
\qquad \mathbb P_{v_0}^\infty\text{-a.s.}
\end{equation*}
\end{assumption}
\noindent The limit order book and, therefore, informed demand depend on liquidity suppliers' beliefs; in particular, the conditional
distribution of the order flow $Y_t\mid Y^{t-1}$ is not i.i.d. over time, even though $(\nu,\sigma,N)$ are fixed. The quantity $K_t(v,v_0)$ measures how informative period-$t$ order flow is for distinguishing the true fundamental $v_0$ from a candidate $v$, given the prevailing belief state. In particular, Assumption~\ref{ass:KL_stabilization} requires that the \emph{time-average} of this conditional informativeness stabilizes and
converges to a deterministic long-run information rate $K(v,v_0)$, uniformly over $v\in\Theta_\varepsilon$. Equivalently, it rules out persistent drift in the informational content of order flow in a stable market environment (no structural breaks in $(\nu,\sigma,N)$), even though the book evolves as learning progresses. From a technical point of view, it plays a role in Appendix \ref{app::consistency_beliefs}, where we show that 
\begin{equation*}
    \frac{1}{t}\log R_t(v)
= -\frac{1}{t}\sum\limits_{s=1}^t K_s(v,v_0)
+ \frac{1}{t}\sum\limits_{s=1}^t D_s(v)
\end{equation*}
with $(D_s(v))_{s\ge1}$ a martingale difference sequence. The martingale term is controlled by
concentration inequalities and vanishes uniformly over $v\in\Theta_\varepsilon$. Assumption~\ref{ass:KL_stabilization} is therefore the stability condition that upgrades this decomposition into the uniform almost sure convergence of $(1/t)\log R_t(v)$ to $-K(v,v_0)$ required by \cite{shalizi2009dynamics}'s framework.

Before proceeding, we state and prove the following lemma which guarantees that $K(v,v_0)>0$ for $v \neq v_0$.
\begin{lemma}\label{lem:KL_separation}
Fix $\varepsilon\in(0,(M-m)/2)$ and let $\Theta_\varepsilon=[m+\varepsilon,M-\varepsilon]$, as above. For $t\ge 1$, we  define $m_t(v):=F^{-1}(v,t,Y^{t-1})$ and $\Delta_t(v):=m_t(v)-m_t(v_0)$. Then, for every $v\in\Theta_\varepsilon$ with $v\neq v_0$, we have:
\begin{enumerate}[(i)]
    \item $\Delta_t(v)\neq 0$ for all $t$ almost surely under $\mathbb P_{v_0}^\infty$.
    \item Conditional on $Y^{t-1}$, the one-step KL increment satisfies
    \begin{equation*}
    K_t(v,v_0)
    =\mathrm{KL}\!\left(\mathtt{T}_\nu(0,\sigma)\,\big\|\,\mathtt{T}_\nu(\Delta_t(v),\sigma)\right)
    :=k(\Delta_t(v)),
    \end{equation*}
    where $k(\Delta)>0$ for $\Delta\neq 0$, $k(0)=0$, and $k$ is continuous.
    \item Let $K_0<\infty$ denote the uniform bound on $F'(x,t,Y^{t-1})$ established in Theorem \ref{th::existence_of_a_fixed_point}.
    Then the inverse schedule is \emph{co-Lipschitz}:
    \begin{equation*}
    |\Delta_t(v)| = |m_t(v)-m_t(v_0)| \ge \frac{|v-v_0|}{K_0},
    \end{equation*}
    and therefore for every $\eta>0$,
    \begin{equation*}
    \inf_{\substack{v\in\Theta_\varepsilon\\ |v-v_0|\ge \eta}} K_t(v,v_0)\ \ge\ \kappa(\eta)\ >\ 0,
    \qquad 
    \kappa(\eta):=\inf_{|\Delta|\ge \eta/K_0}k(\Delta).
    \end{equation*}
\end{enumerate}
\end{lemma}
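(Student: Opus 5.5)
The plan is to treat the three parts in order. Each part rests on the strict monotonicity of $F(\cdot,t,Y^{t-1})$ on the selected branch and on the derivative bound of Theorem \ref{th::existence_of_a_fixed_point}.

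For (i), fix $t$ and a history $Y^{t-1}$. On the selected monotone branch, $F(\cdot,t,Y^{t-1})$ is continuous and strictly increasing, and by Lemma \ref{lem::keyproperties_F} it maps $\mathbb{R}$ onto $(m,M)$. Hence $m_t=F^{-1}(\cdot,t,Y^{t-1})$ is a strictly increasing bijection from $(m,M)$ onto $\mathbb{R}$, and in particular it is injective on $\Theta_\varepsilon$. So $v\neq v_0$ forces $m_t(v)\neq m_t(v_0)$. This holds for every realized history on which the selected-branch assumption applies. Since that assumption holds after each public history, and there are countably many $t$, the statement holds for all $t$ at once $\mathbb P^\infty_{v_0}$-a.s.

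For (ii), conditional on $Y^{t-1}$ and the parameter $v$, the law of $Y_t$ is $\mathtt T_\nu(m_t(v),\sigma)$, as derived in Subsection \ref{subsection::Liquidity_Suppliers_Belief_Update}. Therefore
\[K_t(v,v_0)=\mathrm{KL}\bigl(\mathtt T_\nu(m_t(v_0),\sigma)\,\|\,\mathtt T_\nu(m_t(v),\sigma)\bigr).\]
By translation invariance of KL under the common shift $y\mapsto y-m_t(v_0)$, this equals $k(\Delta_t(v))$, where
\[k(\Delta)=\int \mathfrak q_\nu(y;0,\sigma)\log\frac{\mathfrak q_\nu(y;0,\sigma)}{\mathfrak q_\nu(y-\Delta;0,\sigma)}\,\ud y .\]
The key step is to show that $k$ is finite, which follows from
\[\log\frac{\mathfrak q_\nu(y)}{\mathfrak q_\nu(y-\Delta)}=\frac{\nu+1}{2}\log\frac{1+(y-\Delta)^2/(\nu\sigma^2)}{1+y^2/(\nu\sigma^2)}.\]
The right-hand side grows only like $O(\log(1+|\Delta|))$ uniformly in $y$, so it is bounded for fixed $\Delta$. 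Then:
\begin{itemize}
\item $k(0)=0$ is immediate.
\item $k(\Delta)>0$ for $\Delta\neq0$ follows from Gibbs' inequality, because the two densities differ on a set of positive measure; a nontrivial shift of a non-periodic density is not the same density.
\item Continuity of $k$ follows by dominated convergence. For $\Delta$ in a compact set the integrand is bounded by $C\,\mathfrak q_\nu(y)$, and $\mathfrak q_\nu$ is integrable.
\end{itemize}

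For (iii), Theorem \ref{th::existence_of_a_fixed_point} gives $0<F'\le K_0$ everywhere. By the mean value theorem, for the points $x=m_t(v)$ and $x_0=m_t(v_0)$ we have $|v-v_0|=|F(x)-F(x_0)|\le K_0|x-x_0|$, which is the co-Lipschitz bound. Then $|v-v_0|\ge\eta$ implies $|\Delta_t(v)|\ge\eta/K_0$, and so $K_t(v,v_0)=k(\Delta_t(v))\ge\inf_{|\Delta|\ge\eta/K_0}k(\Delta)=\kappa(\eta)$.

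The main obstacle is proving $\kappa(\eta)>0$. Continuity and positivity of $k$ only give a positive minimum on compact sets, and here the infimum runs over an unbounded set. I would handle this by showing that $k$ is bounded below as $|\Delta|\to\infty$. Apply Pinsker's inequality, $k(\Delta)\ge 2\,\mathrm{TV}^2$. The total variation between $\mathfrak q_\nu$ and its shift by $\Delta$ tends to $1$ as $|\Delta|\to\infty$, since the mass of $\mathfrak q_\nu$ escapes any window around the shifted center. So $\liminf_{|\Delta|\to\infty}k(\Delta)\ge 2$. Alternatively, $k$ is even and, by symmetry and unimodality of the Student-$t$, nondecreasing in $|\Delta|$; this reduces the infimum to $k(\pm\eta/K_0)>0$. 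Combining either route with the positive minimum of $k$ on the compact set $\eta/K_0\le|\Delta|\le A$, for $A$ large, gives $\kappa(\eta)>0$.
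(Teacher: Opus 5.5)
Your proposal is correct and follows the paper's proof in all three parts: strict monotonicity of $F$ for (i); reduction to a Student-$t$ location shift, identifiability and dominated convergence for (ii); and the mean value theorem with $F'\le K_0$ for (iii). You also add the finiteness of $k$ via the logarithmic envelope (cf.\ Lemma \ref{lem:log_envelope}), which the paper's proof leaves implicit.

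The one place you go beyond the paper is the final claim $\kappa(\eta)>0$. The paper justifies it only by saying that $k$ is continuous and strictly positive away from $0$, and this does not suffice on the unbounded set $\{|\Delta|\ge\eta/K_0\}$. Your control at infinity supplies exactly that missing step, by either of your two routes. The first is Pinsker's inequality with $\mathrm{TV}\to 1$. The second is monotonicity of $k$ in $|\Delta|$, which does hold: $k(\Delta)=\tfrac{\nu+1}{2}\bigl(\mathbb E[\log(\nu\sigma^2+(Y-\Delta)^2)]-\mathbb E[\log(\nu\sigma^2+Y^2)]\bigr)$ with $Y\sim\mathtt T_\nu(0,\sigma)$ symmetric and unimodal.
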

\begin{proof}
See Appendix \ref{app::consistency_beliefs}, Subsection \ref{app::proof_KL_separation}
\end{proof}
The previous lemma formalizes the fact that distinct candidate fundamentals imply distinct predicted informed demand, hence distinct conditional order-flow distributions. Even with heavy-tailed noise, this creates a strictly positive information distance each period, uniformly away from the truth on the interior set.

The posterior probability measure after observing $Y^{t}$ is given by (cf. \eqref{eq::update_priors})
\begin{equation*}
    \Pi_{t}(\ud v)=\frac{\mathfrak{f}_{v}(Y^t)\Pi_0(\ud v)}{\int_{\Theta_{\varepsilon}}\mathfrak{f}_{v}(Y^t)\Pi_0(\ud v)}.
\end{equation*}
 First, we observe that the likelihood ratio is given by $R_t(v):=\frac{\mathfrak{f}_{v}(Y^{t})}{\mathfrak{p}_{v_0}(Y^{t})}$, which is $\sigma(Y^{t}) \times \mathcal{T}$ measurable, with $\mathcal{T}:=\mathcal{B}(\Theta_{\varepsilon})$ (see Assumption \ref{ass::assumption_consistency_zero}). Then, we state and prove the following important proposition.
\begin{proposition}\label{lemm::Lemma_uniform_convergence_loglikelihood_ratio} Suppose Assumption \ref{ass:KL_stabilization} and \ref{ass:moderate_steepening} hold. Let $\nu>2$, $v \in \Theta_{\varepsilon}$ such that $0<K(v,v_0)<+\infty$, where $K(v,v_0)>0$, with $v \neq v_0$  is defined in \eqref{eq::KL_divergence}.  Then, 
    \begin{equation*}
        \lim_{t \rightarrow \infty} \frac{1}{t}\log\left(R_t(v)\right)=-K(v,v_0),\quad\mathbb{P}_{v_0}^{\infty}-a.s.
    \end{equation*}
    Moreover, the convergence is uniform on $\Theta_{\varepsilon}$
    \begin{equation*}
       \lim_{t \rightarrow \infty}\sup_{v \in \Theta_{\varepsilon}} \Big\vert  \frac{1}{t}\log\left(R_t(v)\right) + K(v,v_0)\Big\vert = 0,\quad\mathbb{P}_{v_0}^{\infty}-a.s.
    \end{equation*}    
\end{proposition}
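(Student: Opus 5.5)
Our approach is the decomposition already indicated before the statement. Write $\ell_s(v):=\log\bigl(\mathfrak f_{Y_s|Y^{s-1},v}(Y_s)/\mathfrak f_{Y_s|Y^{s-1},v_0}(Y_s)\bigr)$, so that $\log R_t(v)=\sum_{s=1}^t \ell_s(v)$. Set $K_s(v,v_0)=-\mathbb E^{v_0}_{s-1}[\ell_s(v)]$ and $D_s(v):=\ell_s(v)+K_s(v,v_0)$. Then
\begin{equation*}
\frac1t\log R_t(v)=-\frac1t\sum_{s=1}^t K_s(v,v_0)+\frac1t\sum_{s=1}^t D_s(v),
\end{equation*}
and $(D_s(v))_s$ is a martingale difference sequence under $\mathbb P^\infty_{v_0}$ with respect to $\sigma(Y^s)$. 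By Assumption \ref{ass:KL_stabilization}, the first term converges to $-K(v,v_0)$ uniformly on $\Theta_\varepsilon$ almost surely. It therefore suffices to prove
\begin{equation*}
\sup_{v\in\Theta_\varepsilon}\Bigl|\frac1t\sum_{s\le t}D_s(v)\Bigr|\to0 \quad \text{a.s.}
\end{equation*}
This also gives the pointwise statement.

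\emph{Step 1 (moment bounds, pointwise SLLN).} Conditionally on $Y^{s-1}$, we have $Y_s-m_s(v_0)=Z_s\sim\mathtt T_\nu(0,\sigma)$. Hence $\ell_s(v)=\lambda(Z_s,\Delta_s(v))$, where
\begin{equation*}
\lambda(z,\Delta):=-\tfrac{\nu+1}{2}\log\frac{1+(z-\Delta)^2/(\nu\sigma^2)}{1+z^2/(\nu\sigma^2)}.
\end{equation*}
The key elementary fact is that $\lambda$ is bounded in $z$ for fixed $\Delta$: $|\lambda(z,\Delta)|\le C\log(1+|\Delta|)$ plus a constant, and $|\partial_\Delta\lambda|\le C_\nu$ uniformly in $(z,\Delta)$, since $\partial_\Delta\lambda=(\nu+1)(z-\Delta)/(\nu\sigma^2+(z-\Delta)^2)$ is bounded by $(\nu+1)/(2\sqrt\nu\sigma)$. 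Thus each $D_s(v)$ is conditionally bounded by $C\,|\Delta_s(v)|\le C L_s|v-v_0|\le C' (1+s)^\kappa$, using Assumption \ref{ass:moderate_steepening}. A sharper route uses $|D_s(v)|\le 2C_\nu|\Delta_s(v)|$, bounded by the oscillation of $\lambda$. For fixed $v$, Azuma–Hoeffding gives
\begin{equation*}
\mathbb P\Bigl(\bigl|\textstyle\sum_{s\le t}D_s\bigr|>\eta t\Bigr)\le 2\exp\Bigl(-\eta^2t^2/\bigl(2\textstyle\sum_{s\le t} c_s^2\bigr)\Bigr),
\end{equation*}
with $c_s\lesssim(1+s)^\kappa$. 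This yields summable tails only if $\kappa<1/2$. Otherwise I would instead use the $z$-oscillation bound. For fixed $\Delta$, $\sup_z\lambda-\inf_z\lambda\le (\nu+1)\log(1+c|\Delta|)$, so $c_s\lesssim\log(1+s)$ and the exponent is $\eta^2 t/\log^2 t$, which is summable for every $\kappa$. Borel–Cantelli then gives the pointwise result.

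\emph{Step 2 (uniformity via covering).} Cover $\Theta_\varepsilon$ by a grid $\{v_j\}$ of mesh $\delta_t=(1+t)^{-\kappa-1}$, with $O(t^{\kappa+1})$ points. On each grid point apply the Step 1 bound; the union bound gives $t^{\kappa+1}e^{-c\eta^2t/\log^2t}$, which is still summable. Between grid points, use Lipschitz continuity in $v$: $|\ell_s(v)-\ell_s(v')|\le C_\nu|\Delta_s(v)-\Delta_s(v')|\le C_\nu L_s|v-v'|$, and likewise for $K_s$ after taking conditional expectation. Hence $|D_s(v)-D_s(v')|\le 2C_\nu C_L(1+s)^\kappa\delta_t$, and the average over $s\le t$ is $O(t^\kappa\delta_t)\to0$ deterministically. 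Borel–Cantelli then gives almost sure uniform convergence of the martingale average. Combined with Assumption \ref{ass:KL_stabilization}, this proves both displays.

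The main obstacle is Step 2's interplay between the polynomial growth of $L_t$ and the concentration rate. I expect the bounded-oscillation-in-$z$ property of the Student-$t$ log-likelihood ratio, which is a genuinely heavy-tail-friendly feature, to be the decisive ingredient: it keeps the martingale increments only logarithmically large even though $\Delta_s(v)$ may grow polynomially. The covering mesh must then be chosen polynomially fine to absorb $L_t$.
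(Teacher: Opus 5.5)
Your proposal is correct and follows essentially the same route as the paper: the same martingale decomposition $\frac1t\log R_t(v)=-\frac1t\sum_{s\le t}K_s(v,v_0)+\frac1t\sum_{s\le t}D_s(v)$, a logarithmic envelope on the Student-$t$ log-likelihood ratio giving $O(\log s)$ increments, Azuma--Hoeffding on a polynomially fine deterministic net plus a union bound and Borel--Cantelli, Lipschitz interpolation through $L_s$, and Assumption~\ref{ass:KL_stabilization} for the compensator. One small slip: the oscillation of $\lambda(\cdot,\Delta)$ is of order $(\nu+1)\log\bigl(1+\Delta^2/(\nu\sigma^2)\bigr)$, roughly $2(\nu+1)\log|\Delta|$, not $(\nu+1)\log(1+c|\Delta|)$; this changes only constants, and the increments remain $O(\log s)$.
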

\begin{proof}
    See Appendix \ref{app::consistency_beliefs}, Subsection \ref{app::proof_Lemma_uniform_convergence_loglikelihood_ratio}.
\end{proof}
Finally, we need to check the verification of Assumption \ref{ass::assumption_consistency_four} and posterior consistency. Because the prior $\Pi_0$ is supported on the compact set $\Theta_\varepsilon$, we may take
$G_t=\Theta_\varepsilon$ for all $t$, so that $\Pi_0(G_t)=1$ and the first bullet of
Assumption~\ref{ass::assumption_consistency_four} holds trivially for any fixed $\alpha>0$ and $\beta>0$. Moreover, Proposition~\ref{lemm::Lemma_uniform_convergence_loglikelihood_ratio} establishes the uniform convergence in Assumption~\ref{ass::assumption_consistency_four} on $G_t=\Theta_\varepsilon$. Finally, since $\Theta_\varepsilon$ is compact and $K(\cdot,v_0)$ is lower semicontinuous with $K(v,v_0)=0$ iff $v=v_0$ (Assumption~\ref{ass::assumption_consistency_one}), for every $\eta>0$ we have the separation $\delta_\eta := \inf\{K(v,v_0): v\in\Theta_\varepsilon,\ |v-v_0|\ge \eta\} > 0$. Applying Theorem~\ref{th:theorem41} with $A_\eta := \{v\in\Theta_\varepsilon:\ |v-v_0|\ge \eta\}$ yields $\Pi_t(A_\eta)\to 0$ almost surely. Hence liquidity suppliers' posteriors concentrate at the true fundamental value $v_0$.

\section{Asymptotics of price impact}\label{section::Asymptotic_Price_Impact}
In this section we use the fixed-point equation \eqref{eq::fixed_point_equation} to determine the shape of price impact for large orders. To this end,  we shall establish the tail behaviour of the marginal cost function $F(\cdot, t, Y^{t-1})$ in every period $t$ by using tools from the theory of regular variation (see, again, Appendix \ref{app::Supporting_Results_Market_Impact}, Subsection \ref{subsec::regular_variation}). We recall the definitions of $\Psi_t^{+}(y)=\mathbb{E}_t[V|V\geq y]$ and $\Psi_t^{-}(y)=\mathbb{E}_t[V|V < y]$. Next, we denote
\begin{equation}\label{eq::derivative}
    \partial_x \Psi_t^{+}(M) = \lim_{x\rightarrow M}\frac{M-\Psi_t^{+}(x)}{M-x}\,\,\,\text{and}\,\,\,\partial_x \Psi_t^{-}(m)=\lim_{x\rightarrow m}\frac{\Psi_t^{-}(x)-m}{x-m}
\end{equation}
In the case where $\operatorname{supp}(V)=[m,M]$, a direct application of integration by parts and L'H\^opital rule implies that  
\begin{equation}\label{eq::derivative_rewriting}
 \partial_x \Psi_t^{+}(M)=\lim_{x\rightarrow M}\frac{1}{-1+(M-x)\frac{\partial_x\Pi_t^{+}(x)}{\Pi_t^{+}(x)}}+1,\,\,\text{and}\,\,\partial_x \Psi_t^{-}(m)=\lim_{x\rightarrow m}-\frac{1}{1+(x-m)\frac{\partial_x\Pi_t^{-}(x)}{\Pi_t^{-}(x)}}+1  
\end{equation}
Now, we remind the following notation. We let $\mathfrak{p}_{V}(\cdot)$ be the p.d.f. of $V$ (i.e., $\mathbb{P}(V \in \ud v)=\mathfrak{p}_{V}(v)\,\ud v$), and $\mathfrak{p}_{t,V}(\cdot \vert Y^{t-1})$ be the conditional p.d.f of $V$ at time $t$ given the period-$(t-1)$ history $Y^{t-1}$ (i.e., $\mathbb{P}_t(V \in \ud v|Y^{t-1})=\mathfrak{p}_{t,V}(\cdot|Y^{t-1})\,\ud v$). In our model, under mild conditions on the fundamental value of the asset $V$, the price impact obeys a power law in any period $t$, as evidenced by the theorem below. Moreover, as expected (cf. Section \ref{section::The Liquidity_Suppliers_Belief_posterior_consistency}), the marginal informativeness of each trade declines as $t$ increases. Before proceeding, we make the following observation.

The existence proof in Section \ref{section::Equilibrium} assumes that $\operatorname{supp}(V)=[m,M]$, which implies that very large equilibrium orders correspond to fundamentals arbitrarily close to the boundary; see Lemma \ref{lem::keyproperties_F}. While this is mathematically consistent, it can make large-order asymptotics sensitive to endpoint behavior of beliefs. We therefore provide an economically natural robustness check to mitigate the previous boundary interpretation. Precisely, in Section \ref{section::Numerical_Studies}, we numerically solve the equilibrium for unbounded fundamentals and confirm that the qualitative predictions of our model remain valid.

\begin{theorem}\label{th::power_law_price_impact}
In any period $t$, let \(F(\cdot,t,Y^{t-1})\) be the fixed-point obtained in Theorem~\ref{th::existence_of_a_fixed_point}, so that it belongs
to the tail-controlled class \(\mathcal K_{t,\nu}\), and assume that $\partial_{x}\Psi_t^{+}(M)$ and $\partial_{x}\Psi_t^{-}(m)$ exist, $N_t>1$, and that $\operatorname{supp}(V)=[m,M]$ with $-\infty<m<M<+\infty$. Then,
\begin{itemize}
    \item If there exists a constant $L \in \mathbb{R}$ such that 
    \begin{equation}\label{eq::tail_behaviour}
        \lim_{x \rightarrow M^{-}}(M-x)\frac{\mathfrak{p}_{V}^{'}(x)}{\mathfrak{p}_{V}(x)}=L,
    \end{equation}
    then, $M-F(\cdot, t, Y^{t-1})$ is regularly varying at $+\infty$ with index $\rho_t^{+}=\frac{\partial_x\Psi_t^{+}(M)-1}{1-\frac{\partial_x\Psi_t^{+}(M)}{N_t}}$, where 
    \begin{equation}\label{eq::Psi_t_plus}
       \partial_x\Psi_t^{+}(M) = \frac{1-L-(\nu+1)\sum\limits_{s=1}^{t-1}\frac{1}{\rho_s^{+}}}{2-L-(\nu+1)\sum\limits_{s=1}^{t-1}\frac{1}{\rho_s^{+}}} \in [0,1).
    \end{equation}
    \item If there exists a constant $L^{'} \in \mathbb{R}$ such that 
    \begin{equation}\label{eq::tail_behaviour2}
        \lim_{x \rightarrow m^{+}}(x-m)\frac{\mathfrak{p}_{V}^{'}(x)}{\mathfrak{p}_{V}(x)}=L^{'},
    \end{equation}
    then, $F(\cdot, t, Y^{t-1})-m$ is regularly varying at $-\infty$ with index $\rho_t^{-}=\frac{\partial_x\Psi_t^{-}(m)-1}{1-\frac{\partial_x\Psi_t^{-}(m)}{N_t}}$, where
    \begin{equation*}
       \partial_x\Psi_t^{-}(m) = \frac{1+L^{'}-(\nu+1)\sum\limits_{s=1}^{t-1}\frac{1}{\rho_s^{-}}}{2+L^{'}-(\nu+1)\sum\limits_{s=1}^{t-1}\frac{1}{\rho_s^{-}}} \in [0,1).
    \end{equation*}
    \item Moreover, 
    \begin{equation*}
        \lim_{t \rightarrow \infty} \partial_x \Psi_t^{+}(M) = 1,\quad  \lim_{t \rightarrow \infty}\partial_x \Psi_t^{-}(m) = 1, \quad,\quad \lim_{t \rightarrow \infty}|\rho_t^{+}| = 0, \quad  \lim_{t \rightarrow \infty}  |\rho_t^{-}| = 0.
    \end{equation*}
\end{itemize}
\end{theorem}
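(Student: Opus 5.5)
Our plan is to reduce the theorem to two ingredients: (a) a one-period Abelian/Tauberian step that turns the endpoint slope $\beta:=\partial_x\Psi_t^{+}(M)$ of the posterior tail expectation into the regular-variation index of $M-F$; and (b) an induction over periods that computes $\beta$ from the prior exponent $L$ and the past indices $\rho_s^+$. We treat the ask side; the bid side follows by the reflection $V\mapsto -V$, $x\mapsto -x$.

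\emph{Step 1 (one-period index).} Since $F\in\mathcal K_{t,\nu}$, $M-F(x)$ is comparable to $x^{\rho}$ for some exponent $\rho\in(-1,0)$ at $+\infty$. We insert this into $\phi_F^+(x)$ from \eqref{eq::g_mappings}. By Lemma~\ref{lem:candidate-exponents}(ii)--(iii), the numerator and denominator integrals $\int\Pi_t^+(F(z))\mathfrak q_\nu(x-z)\,\ud z$ and $\int (M\Pi_t^+-\Phi_t^+)(F(z))\mathfrak q_\nu(x-z)\,\ud z$ have integrands that are regularly varying with index strictly above $-\nu$. Hence the polynomial Student-$t$ kernel acts as an approximate identity at scale $x$: the contribution of the region $|z-x|\gtrsim x$ is $O(x^{-\nu})$, which is negligible. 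This yields $M-h(x)\sim \beta\,(M-F(x))$. Next we use the representation \eqref{eq::rewriting_F_Section_3}, $F=\frac1N G+\frac{N-1}{N}H$ with $G=\mathfrak q_\nu*h$ and $H(x)=x^{-1}\int_0^x G$. Karamata's theorem gives $M-H\sim (M-G)/(1+\rho)$, and the kernel argument again gives $M-G\sim M-h$. Matching leading terms gives $\beta\bigl(\frac1N+\frac{N-1}{N(1+\rho)}\bigr)=1$, i.e. $\rho=\rho_t^+=\frac{\beta-1}{1-\beta/N}$, which is Lemma~\ref{lem:candidate-exponents}(i). To upgrade the class bound in (iii) of Definition~\ref{def:template-class} to genuine regular variation, we apply the fixed-point identity once more: the right-hand side is a smoothing of a function known up to a $(1+o(1))$ factor. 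Taking $\limsup/\liminf$ of $(M-F(\lambda x))/(M-F(x))$ and using the uniform convergence theorem, both limits are forced to equal $\lambda^{\rho}$.

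\emph{Step 2 (posterior endpoint exponent).} By \eqref{eq::beliefsupdate}, $\mathfrak p_{t,V}(v)\propto \mathfrak p_V(v)\prod_{s<t}\ell_s(v)$, where $\ell_s(v)=\{1+\nu^{-1}((Y_s-F^{-1}(v,s))/\sigma)^2\}^{-(\nu+1)/2}$. As $v\uparrow M$, $F^{-1}(v,s)\to\infty$. By Step 1 applied to period $s$ (induction hypothesis), $F^{-1}(v,s)$ is regularly varying at $M$ with index $1/\rho_s^+$ in $M-v$. Hence $\log\ell_s(v)\approx -(\nu+1)\log F^{-1}(v,s)$, and $(M-v)\,\partial_v\log\ell_s(v)\to (\nu+1)/\rho_s^+$; this requires differentiating $F^{-1}$, which is justified by the $C^1$ statement of Theorem~\ref{th::existence_of_a_fixed_point} together with a monotone density argument. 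Therefore $(M-x)\mathfrak p_{t,V}'/\mathfrak p_{t,V}\to L+(\nu+1)\sum_{s<t}1/\rho_s^+=:L_t$. So $\mathfrak p_{t,V}(M-s)$ is regularly varying with index $-L_t$, and $\Pi_t^+(M-s)$ is regularly varying with index $\kappa=1-L_t$. Substituting into \eqref{eq::derivative_rewriting} gives $\partial_x\Psi_t^+(M)=1-\frac1{2-L_t}=\frac{1-L_t}{2-L_t}$, which is \eqref{eq::Psi_t_plus}. It lies in $[0,1)$ because $\rho_s^+<0$ makes $-L_t$ grow with $t$, and integrability forces $1-L\ge 0$.

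\emph{Step 3 (limits).} Set $a_t:=1-L_t$, so that $\beta_t=a_t/(1+a_t)$ and $a_{t+1}=a_t+(\nu+1)/|\rho_t^+|$. Since $|\rho_t^+|<1$, $a_t\ge (\nu+1)(t-1)\to\infty$. Hence $\beta_t\to1$, and then $\rho_t^+=(\beta_t-1)/(1-\beta_t/N_t)\to0$, because $1-\beta_t/N_t\ge 1-1/N_t\ge\tfrac12$.

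\emph{Main obstacle.} The hard part is Step 1: proving that convolution against $\mathfrak q_\nu$ preserves the precise asymptotics inside the ratio $\phi_F^+$ and that the class bound upgrades to exact regular variation. The numerator and denominator both decay, so we must show their ratio is governed by the local region $z\approx x$ and not by remote liquidity states. We would split the integrals into $|z-x|\le\delta x$, handled by Potter bounds, and the complement, handled by the tail gap in Lemma~\ref{lem:candidate-exponents}(ii)--(iii).
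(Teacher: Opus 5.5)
Your architecture is the paper's. You induct over periods. You identify the one-period index by localizing the Student-$t$ kernel through the exponent gaps of Lemma~\ref{lem:candidate-exponents}(ii)--(iii) and matching via Lemma~\ref{lem:candidate-exponents}(i); the paper phrases this as a rescaled Volterra equation for $\gamma(t,\cdot)$, which is equivalent. You feed the log-posterior computation into \eqref{eq::derivative_rewriting}. Your recursion for the limits is more explicit than the paper's.

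The final ``upgrade'' in Step~1 is unnecessary. Condition (iii) of Definition~\ref{def:template-class} with $\varepsilon_t\downarrow 0$ already gives $M-F(x,t,Y^{t-1})\sim c_F^+x^{\rho_t^+}$.

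The step that fails as you justify it is in Step~2. The limit $(M-v)\partial_v\log\ell_s(v)\to(\nu+1)/\rho_s^+$ requires $xF'(x,s,Y^{s-1})/(M-F(x,s,Y^{s-1}))\to-\rho_s^+$. Neither the $C^1$ bound of Theorem~\ref{th::existence_of_a_fixed_point} nor the monotone density theorem delivers this. The monotone density theorem needs $F'(\cdot,s,Y^{s-1})$ to be eventually monotone, which is nowhere available. For example, a strictly increasing $C^1$ function with $M-F(x)=x^{\rho}\bigl(1+\tfrac{c}{x}\sin x\bigr)$, $0<|c|<-\rho$, has exactly the required tail asymptotics, yet $xF'(x)/(M-F(x))$ oscillates. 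The paper does not prove this property either; it adds smooth variation as an explicit extra induction hypothesis.

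You can avoid derivatives altogether:
\begin{itemize}
\item Suppose $M-F(\cdot,s,Y^{s-1})$ is regularly varying with index $\rho_s^+<0$. Then $u\mapsto F^{-1}(M-u,s,Y^{s-1})$ is regularly varying at $0^+$ with index $1/\rho_s^+$.
\item Hence $\ell_s(M-u)\sim(\nu\sigma^2)^{(\nu+1)/2}F^{-1}(M-u,s,Y^{s-1})^{-(\nu+1)}$ is regularly varying with index $-(\nu+1)/\rho_s^+$.
\item Condition \eqref{eq::tail_behaviour} makes $u\mapsto\mathfrak p_V(M-u)$ regularly varying with index $-L$.
\item So $u\mapsto\mathfrak p_{t,V}(M-u\mid Y^{t-1})$ is regularly varying with index $-L_t>-1$. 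This holds by your own argument for $t\ge2$, and by $\kappa_1^+>0$ for $t=1$.
\item Write $M-\Psi_t^+(M-s)=\Pi_t^+(M-s)^{-1}\int_0^s u\,\mathfrak p_{t,V}(M-u\mid Y^{t-1})\,\ud u$ and apply Karamata's theorem to numerator and denominator. This gives $M-\Psi_t^+(M-s)\sim s\,(1-L_t)/(2-L_t)$, which is \eqref{eq::Psi_t_plus}.
\end{itemize}
This route needs no L'H\^opital step and no derivative of $F$. It uses only regular variation of the earlier branches, so it removes the smooth-variation hypothesis on which the paper's computation rests. With this replacement, the rest of your argument goes through.
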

\begin{proof}
See Appendix \ref{app::Supporting_Results_Market_Impact}, Subsection \ref{app::price_impact_proof}.
\end{proof}
Theorem~\ref{th::power_law_price_impact} is stated recursively because the posterior endpoint behavior at date $t$ depends on earlier equilibrium branches. Theorem~\ref{th::power_law_price_impact} identifies the tail exponents selected by the fixed-point equation and shows that they coincide with the exponents used in the tail-controlled existence class. Before discussing the economic implications of Theorem \ref{th::power_law_price_impact}, we notice that the condition in \eqref{eq::tail_behaviour} -- and analogously the left-boundary condition in \eqref{eq::tail_behaviour2} -- is a smoothness assumption on extreme fundamentals and it rules out irregular densities near the valuation cap. Economically, it says that the upper tail of beliefs about $V$ behaves regularly, so extreme order-flow events do not get dominated by idiosyncratic boundary artifacts.

\indent We now turn to the economic implication of Theorem \ref{th::power_law_price_impact}. First, observe that the asymptotic shape of marginal costs (i.e., marginal prices) is independent of the scale parameter $\sigma$ of noise traders, which controls the distribution's width, but it depends on the shape parameter $\nu$, which determines the ``heaviness" of the tails. This finding is interpreted as being symptomatic of the tail expectation condition that defines the LOB and the fact that noise demand $Z_t \overset{d}{\sim} \mathtt{T}_{\nu}(0,\sigma)$ with $\nu>2$. In particular, as $\nu$ increases, the distribution is concentrated mostly around two standard deviations in a neighborhood of zero, and so the informed-demand tail dominates the noise tail under the stated exponent inequality. Instead, for small values of $\nu$ large trades can come also from noise traders, in addition to informed traders. In other words, the scale parameter $\sigma$ merely rescales the magnitude of noise shocks, while $\nu$ determines whether moments (e.g., variance) are finite and thus whether extreme noise realizations affect equilibrium prices asymptotically. Extreme realizations of noise traders obscure the signal from insiders, reducing the sensitivity of the expected fundamental value to aggregate order flow. As a consequence, the regular variation indices are larger, reflecting fatter tails in the price impact function.  Informed traders can optimally hide their trades within these large noise realizations, extracting higher expected profits without immediately revealing information about $V$. Second, by plugging \eqref{eq::Psi_t_plus} into $\rho_t^{+}=\frac{\partial_x\Psi_t^{+}(M)-1}{1-\frac{\partial_x\Psi_t^{+}(M)}{N_t}}$, we obtain that
\begin{equation*}
    \rho_t^{+} = -\frac{1}{1+\frac{N_t-1}{N_t}\alpha_t}\,\,\text{with}\,\,\alpha_t:=1-L-(\nu+1)\sum\limits_{s=1}^{t-1}\frac{1}{\rho_{s}^{+}}\,\,\,(\alpha_0:=1-L),
\end{equation*}
a writing that evidences the dependence of the market impact exponent on the upper tail behaviour of the fundamental asset value $V$ (via $L$), the distribution of the aggregated market orders of noise traders (via $\nu$), and the ``realized" market impact exponents up until $t$; notice that $\alpha_t=\alpha_{t-1}-\frac{(\nu+1)}{\rho_{t-1}^{+}}$, so that $\alpha_t>\alpha_{t-1}$ since $\rho_{t-1}^{+} \in (-1,0)$. Furthermore, for a fixed $t$, the price impact decreases with an increase in the number of insiders $N_t$ due to increased competition among them.

\indent We comment now the fact that the price impact of a large buy market order depends on the distribution of the fundamental asset value $V$ via its upper tail behaviour, through the constant $L$, and not on the shape of the distribution elsewhere. The condition in \eqref{eq::tail_behaviour} requires that the density of $V$ decays smoothly near the right endpoint of the support by ensuring that extreme realizations do not dominate the asymptotic behaviour of the marginal price. From an economic point of view, condition \eqref{eq::tail_behaviour} guarantees that while large values of $V$ are possible, their probability decays at a controlled rate. As a result, the equilibrium pricing mechanism is not unduly affected by rare, extreme payoffs, allowing for a well-defined asymptotic marginal price. For example, both power-law distribution (i.e., Pareto-type tails) and sub-Gaussian distribution (e.g., truncated normal) satisfy the required regularity conditions. In the first case, near the upper bound $M$, $\mathfrak{p}_{V}(x) \propto (M-x)^{\alpha-1}$ for some $\alpha>0$ and $L=-(\alpha-1)$. In the second case,  $\mathfrak{p}_{V}(x) \propto \exp[-c (M-x)^{\beta}]$ for some $\beta>0$ and $L=0$. This shows that our modeling approach is quite flexible and our equilibrium construction applies to a broad class of fundamental value distributions commonly used in asset pricing and market microstructure models. 
\begin{remark}
    If $\lim_{x\rightarrow M}(M-x)\frac{\partial_x\Pi_t^+(x)}{\Pi_t^+(x)}=\pm\infty$, according to \eqref{eq::derivative_rewriting}, $\partial_x\Psi_t^+(M)=1$. As Theorem B.2 in \cite{ccetin2023power}, in any period $t$, if there exists an integer $n\geq1$ and a real constant $k\in(0,\infty)$ such that $\lim_{x\rightarrow M}\frac{\Psi_t^+(x)-x}{(M-x)^{n+1}}=\frac{1}{k}$, then the following logarithmic asymptotics price impact holds:
    \begin{equation*}
        M-F_t(x,t,Y^{t-1})\sim \left(\frac{N_t}{N_t-1}\frac{n}{k}\right)^{-\frac{1}{n}}(\log x)^{-\frac{1}{n}},\quad\text{as $x\rightarrow\infty$.}
    \end{equation*}
    From \eqref{eq::first_equation_updated}, a sufficient condition is
    $\lim_{x \rightarrow M }(M-x)\frac{\mathfrak{p}_{V}^{'}(x)}{\mathfrak{p}_{V}(x)}=\pm\infty$. This condition is satisfied, for example, when $\mathfrak{p}_{V}(x) \propto \exp[-c/(M-x)^{\beta}]$ for some $\beta>0$.
    
\end{remark}
Under the assumptions of Theorem \ref{th::power_law_price_impact}, marginal costs $F(\cdot, t, Y^{t-1})$ and LOB $h(\cdot, t, Y^{t-1})$ behave similarly for large values, in the sense that $M-h(\cdot, t, Y^{t-1})$ is regularly varying at $+\infty$ with an index equal to the one of $M-F(\cdot, t, Y^{t-1})$. This follows from the same argument as in the proof of Theorem \ref{th::power_law_price_impact}, \eqref{eq::trick_proof}. Using the same definitions, we have
\begin{equation*}
    \phi_{F(\cdot, t, Y^{t-1})}(x)=\int_{-\infty}^{+\infty} \Psi_t^{+}(F(x y,t,Y^{t-1}))\Lambda_t(x,1,\ud y),
\end{equation*}
where $\Lambda_t(x,1,\ud y)$ converges to the point mass at $1$ as $x \rightarrow \infty$. Then, the mean value theorem, the continuity of $\partial_x \Psi_t^{+}(\cdot)$, and the fact that $F(\cdot, t, Y^{t-1})$ is regularly varying at $+\infty$ with index $\rho_t^{+}$ allows us to conclude that 
\begin{equation*}
    \lim_{x \rightarrow \infty}\frac{M-h(x, t, Y^{t-1})}{M-F(x, t, Y^{t-1})}=\lim_{x \rightarrow \infty}\frac{M-\phi_{F(\cdot, t, Y^{t-1})}(x)}{M-F(x, t, Y^{t-1})}=\partial_x \Psi_t^{+}(M).
\end{equation*}
The previous identity shows that, in the far tails, the price is asymptotically a fixed multiple of the marginal cost. Hence, the asymptotic geometry of the book is already encoded at the level of the marginal-cost fixed-point. Economically,
this means that large-order prices inherit their tail shape directly from the strategic informed demand schedule, rather than generating an independent asymptotic regime of their own.

Theorem \ref{th::power_law_price_impact} provides more precise information about the distribution of the traded volume in equilibrium, via the following corollary.

\begin{corollary}\label{coroll::volume_varying}
In any period $t$, Assume $\partial_x \Psi_t^{+}(M)$ and $\partial_x \Psi_t^{-}(m)$ exists, $N_t>1$, and that $\operatorname{supp}(V)=[m,M]$ with $-\infty<m<M<+\infty$. Then
\begin{itemize}
    \item If $M-F\left(x,t,Y^{t-1}\right)$ is regularly varying of index $\rho^{+}_t$ at $\infty$, then $\Pi^{+}_t\left(F\left(x,t,Y^{t-1}\right)\right)$ is regularly varying of index $\frac{\partial_x\Psi_{t}^{+}\left(M\right)}{1-\partial_x\Psi_{t}^{+}\left(M\right)}\rho^{+}_{t}$ at $\infty$.
    \item If $F\left(x,t,Y^{t-1}\right)-m$ is regularly varying of index $\rho^{-}_t$ at $-\infty$, then $\Pi_t^{-}\left(F\left(x,t,Y^{t-1}\right)\right)$ is regularly varying of index $\frac{\partial_x\Psi_{t}^{-}\left(m\right)}{1-\partial_x\Psi_{t}^{-}\left(m\right)}\rho^{-}_{t}$ at $-\infty$.
\end{itemize}
\end{corollary}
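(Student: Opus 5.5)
The plan is to write $\Pi_t^+(F(x))$ as the composition of a function that is regularly varying at the endpoint $M$ with a function that tends to $M$ in a regularly varying way, and then use the composition rule for regular variation. Set $d:=\partial_x\Psi_t^{+}(M)\in[0,1)$; I treat only the ask side, since the bid side follows by the reflection $V\mapsto -V$, $x\mapsto -x$.

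\emph{Step 1: $s\mapsto \Pi_t^+(M-s)$ is regularly varying at $0^+$ with index $d/(1-d)$.} By \eqref{eq::derivative_rewriting}, existence of $\partial_x\Psi_t^+(M)=d$ forces
\[
\lim_{x\to M}(M-x)\frac{\partial_x\Pi_t^+(x)}{\Pi_t^+(x)}=\frac{1}{1-d}-1=\frac{d}{1-d}=:\kappa .
\]
(Indeed $d=1+1/(-1+\ell)$ gives $\ell=d/(d-1)$, and $\partial_x\Pi_t^+=-\mathfrak p_{t,V}$, so the sign works out.) Writing $\ell(s):=\Pi_t^+(M-s)$, this reads $s\,\ell'(s)/\ell(s)\to\kappa$ as $s\downarrow0$. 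Integrating,
\[
\log\ell(s)=\log\ell(s_0)-\int_s^{s_0}\frac{\kappa+o(1)}{u}\,\ud u ,
\]
so $\ell(s)=s^{\kappa}L_0(s)$ with $L_0$ slowly varying at $0$ in Karamata representation form. Hence $\ell$ is regularly varying at $0^+$ with index $\kappa$.

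One case needs care. If $d=0$, then $\kappa=0$ and the claimed index of $\Pi^+_t(F)$ is $0$. This is consistent: $\ell$ is then slowly varying, and the argument below still applies.

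\emph{Step 2: composition.} By hypothesis $g(x):=M-F(x,t,Y^{t-1})$ is regularly varying at $+\infty$ with index $\rho_t^+<0$, so $g(x)\downarrow0$. Then $\Pi_t^+(F(x))=\ell(g(x))$. I will use the standard composition result (Bingham–Goldie–Teugels, Prop.~1.5.7, in the version for functions regularly varying at $0$): if $\ell$ is $RV_\kappa$ at $0^+$ and $g$ is $RV_{\rho}$ at $\infty$ with $g\to0$, then $\ell\circ g$ is $RV_{\kappa\rho}$ at $\infty$.

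To verify this directly, fix $\lambda>0$ and write
\[
\frac{\ell(g(\lambda x))}{\ell(g(x))}=\frac{\ell\bigl(g(x)\,r_x\bigr)}{\ell(g(x))},\qquad r_x:=\frac{g(\lambda x)}{g(x)}\to\lambda^{\rho_t^+}.
\]
By the uniform convergence theorem for regularly varying functions (uniform in $r$ on compacts of $(0,\infty)$), this ratio converges to $(\lambda^{\rho_t^+})^{\kappa}=\lambda^{\kappa\rho_t^+}$. That gives the index $\frac{d}{1-d}\rho_t^+$ as stated.

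The main obstacle is Step 1. One has to extract the logarithmic-derivative limit of $\Pi_t^+$ cleanly from mere existence of $\partial_x\Psi_t^+(M)$. Relation \eqref{eq::derivative_rewriting} was obtained through L'H\^opital, and that route presupposes the limit on its right-hand side exists. If only the difference-quotient limit \eqref{eq::derivative} is assumed, I would instead argue directly. Write $M-\Psi_t^+(M-s)=\int_0^s \ell(u)\,\ud u/\ell(s)$, which follows by integrating by parts in $\Phi_t^+$. The hypothesis then says $\int_0^s\ell/(s\,\ell(s))\to d$. With $I(s):=\int_0^s\ell$, this means $sI'(s)/I(s)\to1/d$, so $I$ is $RV_{1/d}$ by the representation above. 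It follows that $\ell=I'$ is $RV_{1/d-1}$. The exponent $1/d-1$ is the index of $\ell$ read from this route, and it must be reconciled with the $\kappa$ used in Step 2, since the two computations are supposed to describe the same $\ell$. For $d=0$ the $I$-route breaks down; that degenerate case would be handled separately via the slow-variation remark above.
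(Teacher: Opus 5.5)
Your strategy is sound: show that $\ell(s):=\Pi_t^+(M-s)$ is regularly varying at $0^+$ with index $\kappa=d/(1-d)$, then compose with $g=M-F(\cdot,t,Y^{t-1})$. Step~2 is correct as written, since $\ell$ is monotone and the uniform convergence theorem applies.

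The weak point is Step~1, and the difficulty you left open there is self-inflicted. Your primary route needs the limit of $s\ell'(s)/\ell(s)=(M-x)\mathfrak p_{t,V}(x)/\Pi_t^+(x)$ with $x=M-s$. As you note, \eqref{eq::derivative_rewriting} only yields $d$ from that limit, not conversely, although the paper asserts it outright. A small sign slip as well: $(M-x)\partial_x\Pi_t^+/\Pi_t^+$ tends to $-d/(1-d)$; it is $s\ell'(s)/\ell(s)$ that tends to $+\kappa$.

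Your fallback, however, uses a wrong identity. Integration by parts gives the mean-excess formula, which is exactly the identity in Step~1 of the paper's proof:
\[
\Psi_t^+(M-s)-(M-s)=\frac{I(s)}{\ell(s)},\qquad I(s):=\int_0^s\ell(u)\,\ud u .
\]
Hence $M-\Psi_t^+(M-s)=s-I(s)/\ell(s)$, not $I(s)/\ell(s)$. That slip is the only source of the exponent $1/d-1$ and of the apparent breakdown at $d=0$.

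With the correct identity, the hypothesis \eqref{eq::derivative} reads $I(s)/(s\ell(s))\to 1-d$, i.e.\ $sI'(s)/I(s)\to 1/(1-d)$. Integrating this shows that $I$ is regularly varying at $0^+$ with index $1/(1-d)$. Then
\[
\ell(s)=\frac{I(s)}{s}\cdot\frac{sI'(s)}{I(s)}
\]
is regularly varying with index $\frac{1}{1-d}-1=\kappa$, for every $d\in[0,1)$ including $d=0$. There is nothing to reconcile, and this argument should replace your log-derivative route rather than back it up. As submitted, the proposal stops at an unresolved contradiction in the exponent and is therefore incomplete, though the repair is a single line.

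With that repair, your proof takes a genuinely different route from the paper's. The paper works with the composite directly:
\begin{itemize}
\item it computes $\lim_{x\to\infty}\Pi_t^+(F(x))\big/\int_x^\infty \Pi_t^+(F(u))u^{-1}\,\ud u$ by L'H\^opital;
\item it rewrites the result using $-\partial_x\Pi_t^+/\Pi_t^+=\partial_x\Psi_t^+/(\Psi_t^+-x)$;
\item it uses that $\Psi_t^+(F)-F$ is regularly varying with index $\rho_t^+$ and that $xF'(x)/(M-F(x))\to-\rho_t^+$;
\item it concludes with the converse Karamata theorem (Theorem~\ref{thm:KaramataConverse}) at $\sigma=-1$.
\end{itemize}
That route needs $F$ differentiable and smoothly varying, and it uses the pointwise derivative of $\Psi_t^+$ near $M$; none of this is among the corollary's hypotheses. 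Within the paper this costs little, because smooth variation of the branches is already imposed in the proof of Theorem~\ref{th::power_law_price_impact}. Your composition argument, by contrast, needs only plain regular variation of $M-F$ and the existence of the difference-quotient limit $d<1$. It is therefore self-contained and proves the statement under exactly its stated assumptions.
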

\begin{proof}
    See Appendix \ref{app::price_impact_proof}, Subsection \ref{app::volume_varying}.
\end{proof}

\noindent In addition, notice that we can write (a similar argument holds true also for the bid-side)
\begin{equation*}
    \mathbb{P}_t(X_{t}^{*}>x) = P_t(F^{-1}(V,t,Y^{t-1})>x) = P_t\left(V>F(x,t,Y^{t-1})\right)
= \Pi^{+}_t\left(F(x,t,Y^{t-1})\right),
\end{equation*}
which is (Corollary \ref{coroll::volume_varying}) regularly varying at $+\infty$ of index 
\begin{equation*}
    \frac{\partial_x\Psi_{t}^{+}\left(M\right)}{1-\partial_x\Psi_{t}^{+}\left(M\right)}\rho^{+}_{t}.
\end{equation*}
\noindent  Moreover, the aggregate order flow $Y^{*}_{t} = X^{*}_{t} + Z_{t}$ is also regularly varying at $+\infty$ of index $\frac{\partial_x\Psi_{t}^{+}\left(M\right)}{1-\partial_x\Psi_{t}^{+}\left(M\right)}\rho^{+}_{t}$. In fact, we have 
\begin{equation*}
\begin{split}
    \lim_{\alpha \rightarrow\infty} \frac{\mathbb{P}_t(Y^{*}_t > \alpha y)}{\mathbb{P}_t(Y^{*}_t > \alpha)}
&= \lim_{\alpha \rightarrow\infty} \frac{\int_{-\infty}^{\infty} \Pi^{+}_t(F(z,t,Y^{t-1}))\, \mathfrak{q}_{\nu}\left(\alpha y-z;0, \sigma\right)\,\ud z}{
\int_{-\infty}^{\infty} \Pi^{+}_t(F(z,t,Y^{t-1}))\,\mathfrak{q}_{\nu}\left(\alpha-z;0, \sigma\right)\ud z} \\
&= \lim_{\alpha \rightarrow\infty}
\frac{
\int_{-\infty}^{\infty} \Pi^{+}_t(F(\alpha z,t,Y^{t-1}))\, \mathfrak{q}_{\nu}\left(y-z;0, \frac{\sigma}{\alpha},\right)\, \ud z
}{\int_{-\infty}^{\infty} \Pi^{+}_t(F(\alpha z,t,Y^{t-1}))\, \mathfrak{q}_{\nu}\left(1-z; 0, \frac{\sigma}{\alpha}\right)\, \ud z} \\
&= \lim_{\alpha \rightarrow\infty}\frac{\Pi_t^+(F(\alpha y, t, Y^{t-1}))}{\Pi_t^+(F(\alpha, t, Y^{t-1}))},
\end{split} 
\end{equation*}
where we used the scaling property of the Student-$t$ density in the second equality. Hence, $Y_t^{\star}$ inherits the regular variation from the insider demand $X_t^{\star}$ with the same exponent $\frac{\partial_x\Psi_{t}^{+}\left(M\right)}{1-\partial_x\Psi_{t}^{+}\left(M\right)}\rho^{+}_{t} < \nu$. In what follows, we set $\alpha_t^{+}:=\frac{\partial_x\Psi_{t}^{+}(M)}{1 - \frac{\partial_x\Psi_{t}^{+}(M)}{N_t}}$ and write that $Y_t^{*}$ and $X_t^{\star}$ are regularly varying at $+\infty$ with index $-\alpha_t^{+}$. Notice that since $0\le \partial_x\Psi_t^+(M)<1$ and $N_t>1$, one has $\alpha_t^+<\frac{N_t}{N_t-1}\le 2<\nu$. 

\begin{remark}
\label{rem:rare-shocks}
Student-$t$ noise makes large order imbalances substantially more ambiguous than
in the Gaussian benchmark over the economically relevant range of depths. What the previous discussion shows is that, in the bounded-support branch regime, this ambiguity is not asymptotically permanent: because the induced informed-demand tail is heavier than the Student-$t$ noise tail, sufficiently deep ask or bid executions are eventually dominated by informed trading. Therefore, rare liquidity shocks determine the so-called pre-asymptotic range and the crossover scale to the asymptotic informed regime, and it is precisely this extended ambiguity region that generates flatter impact and slower learning under heavy-tailed noise.
\end{remark}

The following proposition formalizes the previous remark; again, a similar argument holds true also for the bid-side.
\begin{proposition}
\label{prop:tilted-posterior}
Suppose the hypotheses of Theorem \ref{th::power_law_price_impact} hold. For $y>0$, we define the ask-side tilted posterior of branch-induced informed demand by
\begin{equation*}
    \pi_{y,t}(A):=
\mathbb P_t\!\big(X_t^\ast\in A \,\big|\, X_t^\ast+Z_t\ge y\big), \qquad
A\in\mathcal B(\mathbb R).
\end{equation*}
Then $\pi_{y,t}([y,\infty))\rightarrow 1$ as $y\to+\infty$. Equivalently, sufficiently deep ask executions are asymptotically generated by large informed
demand rather than by extreme noise.
\end{proposition}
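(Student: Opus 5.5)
The plan is to bound the complementary mass $\pi_{y,t}((-\infty,y))$ by a ratio of tail probabilities and to exploit the fact, established after Corollary~\ref{coroll::volume_varying}, that $X_t^\ast$ has a regularly varying tail with index $-\alpha_t^+$ and $\alpha_t^+<2<\nu$. Under $\mathbb P_t$, $X_t^\ast=F^{-1}(V,t,Y^{t-1})$ is independent of $Z_t$, so
\begin{equation*}
1-\pi_{y,t}([y,\infty))=\frac{\mathbb P_t(X_t^\ast<y,\;X_t^\ast+Z_t\ge y)}{\mathbb P_t(X_t^\ast+Z_t\ge y)}.
\end{equation*}
For the denominator I will use the lower bound $\mathbb P_t(X_t^\ast+Z_t\ge y)\ge \mathbb P_t(X_t^\ast\ge y)\,\mathbb P(Z_t\ge 0)=\tfrac12\,\bar G(y)$, where $\bar G(y):=\Pi_t^+(F(y,t,Y^{t-1}))$ is regularly varying with index $-\alpha_t^+$.

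For the numerator I fix $\delta\in(0,1)$ and split on the size of $X_t^\ast$. On the event $\{X_t^\ast<(1-\delta)y\}$ the execution requires $Z_t>\delta y$, so this part is at most $\mathbb P(Z_t>\delta y)\sim c\,(\delta y)^{-\nu}$. Because $\nu>\alpha_t^+$, this term is $o(\bar G(y))$, and this is exactly where the inequality of Lemma~\ref{lem:candidate-exponents} and the remark following Corollary~\ref{coroll::volume_varying} enter. On the event $\{(1-\delta)y\le X_t^\ast<y\}$ I bound the probability simply by $\bar G((1-\delta)y)-\bar G(y)$. By regular variation,
\begin{equation*}
\frac{\bar G((1-\delta)y)-\bar G(y)}{\bar G(y)}\longrightarrow (1-\delta)^{-\alpha_t^+}-1 .
\end{equation*}
Combining the two pieces gives $\limsup_{y\to\infty}\bigl(1-\pi_{y,t}([y,\infty))\bigr)\le 2\bigl((1-\delta)^{-\alpha_t^+}-1\bigr)$, and letting $\delta\downarrow0$ yields the claim.

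The main obstacle is the middle band, where informed demand is just below $y$ and only a modest positive noise is needed to cross. The crude bound above gives only an $O(\delta)$ contribution for fixed $\delta$, but that is harmless because $\delta$ is arbitrary. If more precision is wanted, I can sharpen the denominator: for any fixed $K$, $\mathbb P_t(X_t^\ast+Z_t\ge y)\ge \bar G(y+K)\,\mathbb P(Z_t\ge -K)$, and $\bar G(y+K)\sim\bar G(y)$ by slow-variation-type uniform convergence. This gives $\mathbb P_t(X_t^\ast+Z_t\ge y)\sim \bar G(y)$, which is consistent with the displayed computation that $Y_t^\ast$ inherits the tail of $X_t^\ast$.

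I also need to check that the regular variation of $\bar G$ is legitimate under the hypotheses. It follows from Theorem~\ref{th::power_law_price_impact} together with Corollary~\ref{coroll::volume_varying}. Note that $\partial_x\Psi_t^+(M)=0$ would give index $0$, in which case $\bar G$ is slowly varying. The argument above still applies in that case, since it uses only $\bar G(\lambda y)/\bar G(y)\to\lambda^{-\alpha_t^+}$ and the decay $y^{-\nu}$ with $\nu>0\ge-\alpha_t^+$.
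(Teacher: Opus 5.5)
Your proof is correct and follows the same route as the paper. You split the event $\{X_t^\ast<y,\ X_t^\ast+Z_t\ge y\}$ into two pieces: a band just below $y$, controlled by regular variation of $\overline H_t(y)=\mathbb P_t(X_t^\ast\ge y)$ with index $-\alpha_t^+$, and a large-noise event whose probability is $O(y^{-\nu})=o(\overline H_t(y))$ because $\alpha_t^+<\nu$.

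The differences are minor. You use a proportional band of width $\delta y$, bound the denominator crudely by $\tfrac12\overline H_t(y)$, and let $\delta\downarrow 0$ at the end. The paper instead uses a sublinear band $b(y)=y^{\gamma}$ with $\gamma\in(\alpha_t^+/\nu,1)$ and proves the sharper $\mathbb P_t(X_t^\ast+Z_t\ge y)\sim\overline H_t(y)$, which it reuses in the proof of Proposition~\ref{prop:tail-gap}.
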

\begin{proof}
    See Appendix \ref{app::price_impact_proof}, Subsection \ref{app::tilted-posterior}.
\end{proof}

We now state the following corollary, whose proof follows directly from Proposition \ref{prop:tilted-posterior} and the fact that we are working with a strictly increasing $F(\cdot, t, Y^{t-1})$.

\begin{corollary}
\label{cor:tail-price-reduction}
Under the assumptions of Proposition~\ref{prop:tilted-posterior}, we have
\begin{equation*}
h(y,t,Y^{t-1})
=
\mathbb E_t[V\mid X_t^\ast+Z_t\ge y]
=
\mathbb E_t[V\mid X_t^\ast\ge y]+o(1),
\quad y\to+\infty
\end{equation*}
Since $X_t^\ast\ge y \Longleftrightarrow  V\ge F(y,t,Y^{t-1})$,it follows that
\begin{equation*}
h(y,t,Y^{t-1})=\Psi_t^+ \big(F(y,t,Y^{t-1})\big)+o(1),
\quad y\to+\infty.
\end{equation*}
\end{corollary}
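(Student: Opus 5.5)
The proof compares the two conditional expectations through the tilted posterior of Proposition~\ref{prop:tilted-posterior}. Fix $t$, suppress the history, and write $E_y:=\{X_t^\ast+Z_t\ge y\}$ and $A_y:=\{X_t^\ast\ge y\}$ for $y>0$.

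\textbf{Step 1 (first identity).} Since $V$ and $X_t^\ast=F^{-1}(V)$ are in bijection and $F$ is strictly increasing, $A_y=\{V\ge F(y)\}$. Hence $\mathbb E_t[V\mid A_y]=\Psi_t^+(F(y))$, which gives the second display once the first is shown. For the first display, split
\begin{equation*}
h(y)=\mathbb E_t[V\mathtt 1_{A_y}\mid E_y]+\mathbb E_t[V\mathtt 1_{A_y^c}\mid E_y].
\end{equation*}
Because $|V|\le \overline M$, the second term is bounded by $\overline M\,\pi_{y,t}((-\infty,y))$, which tends to $0$ by Proposition~\ref{prop:tilted-posterior}. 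The first term equals $\pi_{y,t}([y,\infty))\,\mathbb E_t[V\mid A_y\cap E_y]$.

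\textbf{Step 2 (the main obstacle).} It remains to show
\begin{equation*}
\mathbb E_t[V\mid A_y\cap E_y]-\mathbb E_t[V\mid A_y]\to 0 .
\end{equation*}
Both events force $V\ge F(y)$, and $F(y)\to M$ by Lemma~\ref{lem::keyproperties_F}. On either event, therefore, $V\in[F(y),M]$, an interval whose length tends to $0$. Both conditional expectations lie in this interval, so their difference is at most $M-F(y)\to0$.

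This also requires the conditioning events to have positive probability. For $A_y$ this holds by Assumption~\ref{ass:posterior-regularity}, since $F(y)<M$. For $A_y\cap E_y$, note that $Z_t$ is independent of $V$ and has a positive density, so the event has probability at least $\mathbb P_t(A_y)\,\mathbb P(Z_t\ge0)>0$.

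\textbf{Step 3 (conclusion).} Combining the two steps,
\begin{equation*}
h(y)=\pi_{y,t}([y,\infty))\bigl(\Psi_t^+(F(y))+o(1)\bigr)+o(1)=\Psi_t^+(F(y))+o(1),
\end{equation*}
using $\pi_{y,t}([y,\infty))\to1$ and the boundedness of $\Psi_t^+$. Both quantities in fact converge to $M$, so the statement reduces to the bounded-support squeeze.

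The only delicate point is verifying that the events in Step 2 are nonnull; everything else is the squeeze together with Proposition~\ref{prop:tilted-posterior}.
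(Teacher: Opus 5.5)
Your proof is correct and takes the route the paper intends. The paper only asserts that the corollary follows from Proposition~\ref{prop:tilted-posterior} and the strict monotonicity of $F(\cdot,t,Y^{t-1})$; your split on $\{X_t^\ast\ge y\}$, the positivity check on the conditioning events, and the squeeze $V\in[F(y,t,Y^{t-1}),M]$ with $F(y,t,Y^{t-1})\to M$ supply exactly the details it omits, and your remark that the additive $o(1)$ statement reduces to both sides tending to $M$ is accurate.
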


In the remaining part of the present section, we show that branch monotonicity can be recovered in the tails even though global monotonicity preservation fails. We start with the following lemma.

\begin{lemma}\label{lem:ask-derivative}
For $y>0$, we define $r_\nu(u):=\frac{\mathfrak{q}_\nu(u;0,\sigma)}{\mathbb P(Z_t\ge u)}$. Then
\begin{equation*}
 \partial_y h(y,t,Y^{t-1})=-\operatorname{Cov}_{\pi_{y,t}}\!\big(V,r_\nu(y-X_t^\ast)\big)   
\end{equation*}
\end{lemma}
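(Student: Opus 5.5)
We write $h(y)=h(y,t,Y^{t-1})$ as a ratio and differentiate it directly in $y$. Since $Z_t$ is independent of $(V,Y^{t-1})$ and $X_t^\ast=F^{-1}(V,t,Y^{t-1})$ is a function of $V$ given the history, we condition on $V$ to obtain
\begin{equation*}
h(y)=\frac{\mathbb E_t\big[V\,\bar Q(y-X_t^\ast)\big]}{\mathbb E_t\big[\bar Q(y-X_t^\ast)\big]},\qquad \bar Q(u):=\mathbb P(Z_t\ge u)=\int_u^\infty \mathfrak q_\nu(z;0,\sigma)\,\ud z .
\end{equation*}
Here $\bar Q'(u)=-\mathfrak q_\nu(u;0,\sigma)$. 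Because $0\le \mathfrak q_\nu\le \mathfrak q_\nu(0;0,\sigma)$ and $V\in[m,M]$ is bounded, dominated convergence lets us differentiate under $\mathbb E_t$ in both numerator and denominator. Writing $A(y):=\mathbb E_t[V\bar Q(y-X^\ast_t)]$ and $B(y):=\mathbb E_t[\bar Q(y-X^\ast_t)]$, we get $A'=-\mathbb E_t[V\mathfrak q_\nu(y-X_t^\ast)]$ and $B'=-\mathbb E_t[\mathfrak q_\nu(y-X_t^\ast)]$.

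By the quotient rule, $h'=(A'B-AB')/B^2=A'/B-h\,B'/B$. Next we identify these terms with expectations under $\pi_{y,t}$. By definition (and again by conditioning on $V$), $\pi_{y,t}$ has density $\bar Q(y-X_t^\ast)/B(y)$ with respect to $\mathbb P_t$. Hence for any bounded $G$ we have $\mathbb E_{\pi_{y,t}}[G]=\mathbb E_t[G\,\bar Q(y-X_t^\ast)]/B(y)$, with $\pi_{y,t}$ read on the joint law of $(V,X_t^\ast)$, which is legitimate since $X_t^\ast$ is a bijective function of $V$. Writing $\mathfrak q_\nu(u)=r_\nu(u)\bar Q(u)$, we obtain
\begin{equation*}
\frac{A'(y)}{B(y)}=-\mathbb E_{\pi_{y,t}}\big[V\,r_\nu(y-X_t^\ast)\big],\qquad \frac{B'(y)}{B(y)}=-\mathbb E_{\pi_{y,t}}\big[r_\nu(y-X_t^\ast)\big],
\end{equation*}
and $h(y)=\mathbb E_{\pi_{y,t}}[V]$. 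Substituting these three expressions gives $h'(y)=-\big(\mathbb E_\pi[Vr_\nu]-\mathbb E_\pi[V]\,\mathbb E_\pi[r_\nu]\big)=-\operatorname{Cov}_{\pi_{y,t}}(V,r_\nu(y-X_t^\ast))$, which is the claim.

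The only delicate step is justifying the differentiation and the finiteness of the covariance, because $r_\nu$ is unbounded as $u\to+\infty$: the Student-$t$ hazard grows like $(\nu+1)/u$ rather than staying bounded. This does no harm, since we never need $r_\nu$ on its own, only the product $r_\nu\bar Q=\mathfrak q_\nu$, which is bounded. The expectations of $V r_\nu$ and $r_\nu$ under $\pi_{y,t}$ are exactly $\mathbb E_t[V\mathfrak q_\nu]/B$ and $\mathbb E_t[\mathfrak q_\nu]/B$, and both are finite because $B(y)>0$ (as $\bar Q>0$ everywhere). Continuity of $\mathfrak q_\nu$ together with the uniform bound then gives the differentiability of $A$ and $B$ via a difference-quotient and dominated-convergence argument.
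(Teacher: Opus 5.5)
Your proof is correct and follows the paper's own argument: write $h=A/B$ with $A(y)=\mathbb E_t[V\,\bar Q(y-X_t^\ast)]$ and $B(y)=\mathbb E_t[\bar Q(y-X_t^\ast)]$, differentiate under the expectation, apply the quotient rule, and use $\mathfrak q_\nu=r_\nu\bar Q$ to read the result as a covariance under $\pi_{y,t}$. Your observation that $\pi_{y,t}$ has density $\bar Q(y-X_t^\ast)/B(y)$ with respect to $\mathbb P_t$ usefully makes explicit a step the paper leaves implicit.

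One correction to your final paragraph: the Student-$t$ hazard is bounded, not unbounded. As $u\to+\infty$ one has $r_\nu(u)\sim\nu/u\to0$, and as $u\to-\infty$ one has $r_\nu(u)\to0$; it is the Gaussian hazard that grows linearly. So the integrability concern you raise does not arise, although your workaround through $r_\nu\bar Q=\mathfrak q_\nu$ remains valid.
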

\begin{proof}
    See Appendix \ref{app::price_impact_proof}, Subsection \ref{app::ask-derivative}.
\end{proof}

We then state and prove the following proposition.

\begin{proposition}
\label{prop:tail-gap}
Suppose the hypothesis of Theorem~\ref{th::power_law_price_impact} hold. Assume in addition that $X_t^\ast$ admits an eventually monotone density $\mathfrak p_{t,X^\ast}(\cdot\mid Y^{t-1})$, and that $M-F(\cdot,t,Y^{t-1})$ is regularly varying at $+\infty$ with index $\rho_t^+\in(-1,0)$. Then, as $y\to+\infty$,
\begin{equation*}
-\operatorname{Cov}_{\pi_{y,t}}\!\big(V,r_\nu(y-X_t^\ast)\big)
\sim
\frac{\alpha_t^+(-\rho_t^+)}{\alpha_t^+-\rho_t^+}
\frac{M-F(y,t,Y^{t-1})}{y}
\end{equation*}
In particular, $\partial_y h(y,t,Y^{t-1})>0$ for all sufficiently large $y$. The analogous left-tail statement also holds, with $m$, $\rho_t^-$, $\alpha_t^-$, and the lower-tail hazard in place of the right-tail quantities.
\end{proposition}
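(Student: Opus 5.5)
The plan is to turn the covariance into a single explicit integral, and then evaluate that integral by splitting it into three regions around the point $x=y$.

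\emph{Reduction.} Write $p:=\mathfrak p_{t,X^\ast}(\cdot\mid Y^{t-1})$, $S(u):=\mathbb P(Z_t\ge u)$ and $F(\cdot):=F(\cdot,t,Y^{t-1})$. Since $V=F(X_t^\ast)$ on the selected branch, $\pi_{y,t}$ has density $p(x)S(y-x)/D(y)$, where $D(y):=\int p(x)S(y-x)\,\ud x$. Then $r_\nu(y-x)\,S(y-x)=\mathfrak q_\nu(y-x;0,\sigma)$, and the computation behind Lemma~\ref{lem:ask-derivative} gives
\begin{equation*}
-\operatorname{Cov}_{\pi_{y,t}}\big(V,r_\nu(y-X_t^\ast)\big)=\frac{1}{D(y)}\int_{-\infty}^{+\infty}\big(h(y)-F(x)\big)\,p(x)\,\mathfrak q_\nu(y-x;0,\sigma)\,\ud x .
\end{equation*}
I would then record the three asymptotic inputs, each with $\ell$ slowly varying.
\begin{enumerate}[(a)]
\item $D(y)\sim \mathbb P_t(X_t^\ast\ge y)=\Pi_t^+(F(y))=y^{-\alpha_t^+}\ell(y)$. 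This uses Proposition~\ref{prop:tilted-posterior} and Corollary~\ref{coroll::volume_varying}, or equivalently the scaling computation for $Y_t^\ast$ given above.
\item $p(y)\sim \alpha_t^+\,\mathbb P_t(X_t^\ast\ge y)/y$. This follows from the monotone density theorem, since $p$ is eventually monotone.
\item $h(y)-F(y)\sim \frac{-\rho_t^+}{\alpha_t^+-\rho_t^+}\,(M-F(y))$. To get (c), use Corollary~\ref{cor:tail-price-reduction} in its sharper form: conditioning on $X^\ast+Z\ge y$ versus on $X^\ast\ge y$ changes $\mathbb E_t[M-F(X^\ast)\mid\cdot\,]$ only by $o(M-F(y))$. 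Combined with regular variation of $M-F$ and of the tail of $X^\ast$ (Karamata), this gives
\begin{equation*}
M-h(y)\sim (M-F(y))\int_1^\infty s^{\rho_t^+}\,\alpha_t^+ s^{-\alpha_t^+-1}\,\ud s=\frac{\alpha_t^+}{\alpha_t^+-\rho_t^+}\,(M-F(y)),
\end{equation*}
and subtracting from $M-F(y)$ yields (c).
\end{enumerate}
Multiplying (b) by (c) and dividing by (a) produces exactly the claimed constant $\frac{\alpha_t^+(-\rho_t^+)}{\alpha_t^+-\rho_t^+}\frac{M-F(y)}{y}$. The remaining task is to justify replacing $p(x)$ and $F(x)$ in the integral by their values at $x=y$.

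\emph{Splitting.} Fix $\delta\in(0,1)$ and split the integral into three regions.
\begin{enumerate}[(1)]
\item The central region $|x-y|\le\delta y$. By the uniform convergence theorem for regularly varying functions, together with eventual monotonicity, $p(x)/p(y)\in[(1+\delta)^{-\alpha-1}-o(1),(1-\delta)^{-\alpha-1}+o(1)]$ uniformly on this region. Likewise $|F(x)-F(y)|\le \eta(\delta)(M-F(y))$ with $\eta(\delta)\to0$ as $\delta\to0$. Moreover $\int_{|x-y|\le\delta y}\mathfrak q_\nu(y-x)\,\ud x\to1$. Since $h(y)-F(y)$ is a positive fixed multiple of $M-F(y)$ by (c), the central contribution is $p(y)(h(y)-F(y))(1+O(\eta(\delta))+O(\delta)+o(1))$.
\item The left region $x<(1-\delta)y$. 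Here $|h-F|\le M-m$, $\int p\le 1$ and $\mathfrak q_\nu(y-x)\le C(\delta y)^{-\nu-1}$, so the contribution is $O(y^{-\nu-1})$.
\item The right region $x>(1+\delta)y$. Here $|h(y)-F(x)|\le (M-F(x))+(M-h(y))=O(M-F(y))$ and the $p$-mass is $O(D(y))$, so the contribution is $O(D(y)(M-F(y))y^{-\nu-1})$.
\end{enumerate}
Let $y\to\infty$ first and then $\delta\to0$.

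\emph{Main obstacle.} The delicate point is showing that the left region, where noise rather than informed demand drives execution, is negligible compared with the main term. The main term has order $D(y)(M-F(y))/y=y^{-\alpha_t^+ +\rho_t^+-1}\ell(y)$, while the left-region bound is of order $y^{-\nu-1}$. So I need $\nu>\alpha_t^+-\rho_t^+$. Since $\alpha_t^+=\kappa_t^+(-\rho_t^+)$ (compare Lemma~\ref{lem:candidate-exponents}(ii) with the definition of $\alpha_t^+$), this reads $(\kappa_t^++1)(-\rho_t^+)<\nu$, which is Lemma~\ref{lem:candidate-exponents}(iii). This is the point where the heavy Student-$t$ tail must still be lighter than the induced informed-demand tail. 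I would also check carefully that the $o(1)$ in Corollary~\ref{cor:tail-price-reduction} can indeed be upgraded to $o(M-F(y))$ for input (c). I would do this by the same left-/right-region estimate applied to the numerator and denominator of $h$, using the same exponent inequality.

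Positivity of $\partial_y h$ then follows from Lemma~\ref{lem:ask-derivative}, because the constant is strictly positive when $\rho_t^+<0<\alpha_t^+$. The bid side is symmetric.
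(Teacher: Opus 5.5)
Your proposal is correct and follows essentially the paper's route: your single integral $\frac{1}{D(y)}\int (h(y)-F(x))\,p(x)\,\mathfrak q_\nu(y-x;0,\sigma)\,\ud x$ is exactly the paper's combination $\mathbb E_{\pi_{y,t}}[(M-V)r_\nu(y-X_t^\ast)]-(M-\mathbb E_{\pi_{y,t}}[V])\,\mathbb E_{\pi_{y,t}}[r_\nu(y-X_t^\ast)]$. Your inputs (a)--(c) are the paper's $D(y)\sim\overline H_t(y)$, the monotone density theorem, and $M-h(y)\sim\frac{\alpha_t^+}{\alpha_t^+-\rho_t^+}(M-F(y))$ via heavy-tail dominance and Karamata, with the same inequality $\alpha_t^+-\rho_t^+<\nu$ from Lemma~\ref{lem:candidate-exponents}(iii) controlling the noise-driven region; the only difference is that you spell out the three-region localization that the paper simply invokes as a ``standard convolution equivalence.''
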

\begin{proof}
See Appendix~\ref{app::Supporting_Results_Market_Impact},
Subsection~\ref{subsec:tail-gap-proof}. 
\end{proof}
The previous proposition identifies why branch monotonicity can be
recovered in the tails, even though global monotonicity preservation fails. Indeed, in the far ask tail, the execution event $\{X_t^\ast+Z_t\ge y\}$ is asymptotically generated by large informed demand, so the tilted posterior (cf. Proposition \ref{prop:tilted-posterior}) concentrates on states in which $X_t^\ast$ already lies above the threshold. In that regime, the ask-side hazard of the Student-$t$ noise is sampled on the monotone part
of its support, and the covariance formula of Lemma~\ref{lem:ask-derivative} acquires a strictly negative sign. Economically, the heavy-tailed noise continues to matter -- it pushes the
crossover farther out and enlarges the ambiguity region -- but it no longer dominates the ultimate branch tail. We conclude this section with the following corollary whose proof is immediate.
\begin{corollary}
\label{cor:eventual-tail-monotonicity}
Suppose the hypothesis of Proposition~\ref{prop:tail-gap} hold. Then, there exists $Y_t^\ast>0$ such that $\partial_y h(y,t,Y^{t-1})>0$ for every $y\ge Y_t^\ast$. Similarly, there exists $\underline Y_t^\ast>0$ such that $\partial_y h(y,t,Y^{t-1})>0$ for every $y\le -\underline Y_t^\ast$.
\end{corollary}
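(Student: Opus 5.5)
The plan is to read the ask-side claim off Proposition~\ref{prop:tail-gap} together with Lemma~\ref{lem:ask-derivative}, and then obtain the bid-side claim by the mirror-image argument. For the ask side, Lemma~\ref{lem:ask-derivative} gives
\[
\partial_y h(y,t,Y^{t-1})=-\operatorname{Cov}_{\pi_{y,t}}\!\big(V,r_\nu(y-X_t^\ast)\big)
\]
for every $y>0$. Proposition~\ref{prop:tail-gap} says this covariance is asymptotically equivalent to
\[
\frac{\alpha_t^+(-\rho_t^+)}{\alpha_t^+-\rho_t^+}\,\frac{M-F(y,t,Y^{t-1})}{y}.
\]
This expression is strictly positive, for three reasons:
\begin{itemize}
\item the constant is positive, since $\alpha_t^+>0$ and $\rho_t^+\in(-1,0)$ make both numerator and denominator positive;
\item $M-F(y,t,Y^{t-1})>0$, because $F(\cdot,t,Y^{t-1})$ takes values in $[m,M]$, is strictly increasing on the selected branch, and tends to $M$ (Lemma~\ref{lem::keyproperties_F});
\item $y>0$.
\end{itemize}

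Asymptotic equivalence to a strictly positive function gives a ratio tending to $1$. So there is $Y_t^\ast>0$ with the ratio at least $1/2$ for all $y\ge Y_t^\ast$. On that range, $\partial_y h(y,t,Y^{t-1})$ is at least half of a positive quantity, hence strictly positive.

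For the bid side, I would apply the left-tail version of Proposition~\ref{prop:tail-gap}, which the statement already asserts. One way is the reflection $V\mapsto -V$, $y\mapsto -y$, using the symmetry of $\mathfrak q_\nu$. This swaps $m$ and $M$, and turns $\Pi_t^-$ and $\Psi_t^-$ into their ask analogues. The sell-side analogue of Lemma~\ref{lem:ask-derivative} uses the lower-tail hazard $\mathfrak q_\nu(u;0,\sigma)/\mathbb P(Z_t<u)$. It expresses $\partial_y h$ for $y<0$ as a covariance under $\mathbb P_t(\,\cdot\mid X_t^\ast+Z_t<y)$, with leading term proportional to $(F(y,t,Y^{t-1})-m)/|y|>0$. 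The same eventual-sign argument then yields $\underline Y_t^\ast$.

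The only delicate point is the sign bookkeeping under this reflection. One must check that the left-tail equivalence is stated for $\partial_y h$ itself, not for $-\partial_y h$, so that a non-decreasing bid schedule corresponds to a positive derivative for $y\le-\underline Y_t^\ast$. I expect this to be routine rather than a real obstacle, since all the substance lies in Proposition~\ref{prop:tail-gap}.
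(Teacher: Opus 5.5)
Your proposal is correct and is essentially the paper's argument: the paper calls the corollary immediate from Proposition~\ref{prop:tail-gap}, whose conclusion already contains the eventual sign of $\partial_y h$ via Lemma~\ref{lem:ask-derivative}, and your route (asymptotic equivalence with a strictly positive function gives eventual positivity, with the bid side taken from the stated left-tail analogue) is exactly that. One small point: $\alpha_t^+>0$ does not follow from $\partial_x\Psi_t^+(M)\in[0,1)$ alone, but the hypothesis $\rho_t^+\in(-1,0)$ forces $\partial_x\Psi_t^+(M)>0$, hence $\alpha_t^+>0$.
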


\section{Numerical Studies}\label{section::Numerical_Studies}
This section is devoted to the quantitative implications of the model obtained in the previous sections. More precisely, we investigate the market equilibrium (cf. Definition~\ref{def::equilibrium}) implied by the fixed-point characterisation in ~\eqref{eq::fixed_point_equation} and illustrate the liquidity suppliers' learning in Section~\ref{section::The Liquidity_Suppliers_Belief_posterior_consistency}, and the market dynamics and market impact asymptotics in Section~\ref{section::Asymptotic_Price_Impact}. 

The proofs of the statements in the main text are based on a boundedness assumption on the fundamental asset value $V$. Hence, in this section we shall also present results for unbounded signals, after a truncation on a sufficiently wide grid. Therefore, we consider both bounded- and unbounded-support distributions. Bounded-support experiments include the following distributions: (i) the uniform distribution on $[m,M]$ with $0<m<M<+\infty$, $\mathtt{Unif}([m,M])$; (ii) the Beta distribution with real parameters $a$ and $b$ on $[0,1]$, $\mathtt{Beta}(a,b)[0,1]$; (iii) the truncated Gaussian distribution on $[m,M]$ with mean $\mu \in \mathbb{R}$ and standard deviation $\sigma > 0$, $\mathtt{Trunc-Gaussian}(\mu, \sigma^2, [m,M])$. Instead, unbounded-support experiments include: (a) the Gaussian distribution with mean $\mu \in \mathbb{R}$ and standard deviation $\sigma > 0$, $\mathtt{Gaussian}(\mu, \sigma^2)$; (b) The Student-$t$ distribution with mean $\mu \in \mathbb{R}$, shape $\sigma \in \mathbb{R}$, and degrees of freedom $\nu>0$, $\mathtt{Student}-t(\nu, \mu, \sigma)$ (or $\mathtt{Student}-t(\nu, \sigma)$ if $\mu=0$); (c) the Pareto distribution with shape parameter $\alpha>0$, $\mathtt{Pareto}(\alpha)$.

Our numerical experiments cover the following four aspects. First, we verify the uniqueness and monotonicity of the solution of the Student-$t$ fixed-point in \eqref{eq::fixed_point_equation} across the range of the above prior distribution, different initializations for the numerical fixed-point algorithm, and for different choices of the number of informed traders $N_t$. Second, we illustrate finite-horizon posterior concentration. Third, we investigate the equilibrium dynamics of the marginal cost function $F(\cdot, t, Y^{t-1})$, the corresponding marginal price function $h^{\star}(\cdot, t, Y^{t-1})$, and of the bid-ask spread; for the sake of notation, in what follows, we use simply $F^{\star}_t$ and $h^{\star}_t$. Finally, we verify the regular-variation tail prediction of the price impact predicted by Theorem~\ref{th::power_law_price_impact}; also, we study the sensitivity of these results to the parameter $\nu$, the number of insiders $N_t$, and the number of trading periods. Before presenting the four analyses, we describe our numerical setup.\\
\noindent\textbf{Numerical setup.} We fix a grid for the order size, $x_i=-X_{\max} + (i-1)\Delta x$, $i=1,\dots,n_x$, with $n_x$ odd so that $x_{(n_x+1)/2}=0$.  We represent the marginal cost function $F(\cdot, t, Y^{t-1})$ and the marginal price function $h(\cdot, t, Y^{t-1})$ on this grid. For beliefs, we use a grid for fundamentals $v_j = v_{\min} + (j-1)\Delta v$, $j=1,\dots,n_v$, where $[v_{\min},v_{\max}]$ is chosen so that the prior mass outside this interval is negligible and posterior mass remains negligible throughout the simulation. All integrals in $x$ are approximated by the trapezoidal rule on the $x$-grid. At the beginning of period $t$, liquidity suppliers hold a posterior density $\mathfrak{p}_{t-1,V}(\cdot)$. On the $v$-grid we represent it by weights $\pi_{t-1,j}\approx \mathfrak{p}_{t-1,V}(v_j)\Delta v$ with $\sum\limits_{\substack{j}} \pi_{t-1,j}=1$. We then pre-compute the tail probability and tail first-moment functional needed in \eqref{eq::g_mappings}; for values $y$ not on the $v$-grid we use linear interpolation of these arrays in $y$. 
Let $\mathfrak{q}_\nu(\cdot;\,0,\sigma)$ be the Student-$t$ density. Integrals of the form $x \mapsto \int_{-\infty}^{\infty} \mathfrak{q}_\nu(x-z;\,0,\sigma)\,g(z)\,\mathrm{d}z$ are approximated by discrete convolution on the $z$-grid. Given beliefs at the start of period $t$, we solve \eqref{eq::fixed_point_equation} by fixed-point iteration on the $x$-grid, evaluating the map $\phi_F$ in \eqref{eq::g_mappings} and \eqref{eq::equation_for_g} by numerical quadrature at each step.  The kernel $\bar{\mathfrak{q}}(x,z; 0, \sigma) = \frac{1}{x}\!\int_0^x \mathfrak{q}_\nu(u-z;\,0,\sigma)\,\mathrm{d}u$ is precomputed once via the closed form $\bar{\mathfrak{q}}(x,z; 0, \sigma) = \bigl[F_\nu(x-z;\,0,\sigma) - F_\nu(-z;\,0,\sigma)\bigr]/x$ for $x\neq 0$, where $F_\nu(\cdot;\,0,\sigma)$ denotes the Student-$t$ cumulative distribution function.  We iterate until $\|F^{(k+1)}_t-F^{(k)}_t\|_\infty < \texttt{tol}$, and set $h(\cdot,t,Y^{t-1}) \equiv \phi_{F_t}(\cdot)$ upon convergence. Finally, given the fixed-point solution $F(\cdot,t,Y^{t-1})$, we compute $F^{-1}(\cdot,t,Y^{t-1})$ by linear interpolation of $\{(x_i,F(x_i))\}$. 
Under value $v$, the model implies $Y_t \mid (Y^{t-1},V=v)\sim\mathtt{T}_{\nu}(F^{-1}(v,t,Y^{t-1}),\sigma)$, so the Bayes update is $\pi_{t,j} \propto \pi_{t-1,j}\cdot \mathfrak{q}_\nu(Y_t - F^{-1}(v_j,t,Y^{t-1});\,0,\sigma)$, $j=1,\dots,n_v$, normalised so that $\sum_j \pi_{t,j}=1$. In summary, at each period $t$, we
(a) solve for $F(\cdot,t,Y^{t-1})$ given current beliefs,
(b) compute $X_t^\star=F^{-1}(v_0,t,Y^{t-1})$,
(c) draw $Z_t\sim \mathtt{T}_\nu(0,\sigma)$ and set $Y_t=X_t^\star+Z_t$, and
(d) update beliefs.

{\color{black} In the following implementation, we set $X_{\text{max}}=10,\,\Delta x=0.1,$ and we use a $z$-grid on $[-30,30]$ with step size $\Delta z=0.1$ for convolution. The fundamental value grid uses $\Delta v=0.001$, with bounds chosen for each prior (e.g. $[0,1]$ for uniform and beta priors, $[-3,3]$ for the truncated Gaussian, $[-8,8]$ for the Gaussian, and $[-40,40]$ for the Student-$t$ prior). The noise scale is fixed at $\sigma=0.1$, and the marginal cost $F_t^\star$ is computed as the fixed point of~\eqref{eq::fixed_point_equation} via iterative updates $F_t^{(k+1)}=\mathcal{T}(F_t^{(k)})$, terminated when the sup-norm criterion $\|F_t^{(k+1)}-F_t^{(k)}\|_\infty \le 10^{-10}$ is satisfied. Unless otherwise stated, these choices are used throughout all numerical experiments.}

\vspace{0.35cm}
We start with the uniqueness and monotonicity of the solution of the Student-$t$ fixed-point in \eqref{eq::fixed_point_equation}, and the corresponding $\phi_F(\cdot)$ in \eqref{eq::equation_for_g}. It is sufficient to examine the problems at a fixed trading period $t$ because the fixed-point operator has the same structural form at each trading period, with the posterior belief replacing the prior. Starting with an $F^{(0)}$, Fig. 1, \emph{Left panel}, displays the iterates $F_1^{(k+1)}=\mathcal{T}_{1,\nu}F_1^{(k)}$ for $k\in\{1,\ldots,8\}$ when $V \overset{d}{\sim} \mathtt{Unif}([0,1])$, $N_1=2$, and $Z_1 \overset{d}{\sim} \mathtt{T}_{3}(0,0.1)$; 
we also display the final $F_1$, denoted by $F_1^{*}$. Despite the fact the initial iterates do not produce a monotone $F_1$, the final $F_1^{*}$ is strictly increasing. Fig. 1, \emph{Central panel}, confirms the same pattern for $h_1^{(k)}=\phi_{F_1^{(k)}}$; the \emph{Right panel} displays the quantity $\|F^{(k+1)}-F^{(k)}\|_\infty$ as a function of the iterations. The figures are clear in indicating that $(h_1^*, X_1^*)$ can be considered an equilibrium as in Definition \ref{def::equilibrium}.

\begin{figure}[htbp]
    \centering
    \includegraphics[width=\textwidth]{Numerics/uniform_nonmonotone_monotone.png}
    \caption{\emph{Left panel}: Fixed-point iterates $F_1^{(k+1)}=\mathcal{T}_{1,\nu}F_1^{(k)}$ for $k\in\{0,\ldots,8\}$ together with $F_1^{*}$ when $V \overset{d}{\sim} \mathtt{Unif}([0,1])$, $N_1=2$, and $Z_1 \overset{d}{\sim} \mathtt{T}_{3}(0,0.1)$. \emph{Central panel}: Corresponding iterates of $h_1^{(k)}=\phi_{F_1^{(k)}}$; \emph{Right panel}: $\|F^{(k+1)}-F^{(k)}\|_\infty$ as a function of the iterations $k$.}
    \label{fig::nonmono_to_mono}
\end{figure}
Fig. \ref{fig::nonmono_to_mono} is, however, conservative. Therefore, we provide numerical evidence of a common
equilibrium branch as in Definition \ref{def::equilibrium} by varying the initialization $F^{(0)}$ of the fixed-point numerical algorithm; we use eight different initializations, four monotone and four non-monotone. The monotone ones are hyperbolic tangent functions at three scales {\color{black}, \(F^{(0)}(x;\ell)
=m + (M - m)\,\frac{1 + \tanh(x/\ell)}{2},
\,
\ell \in \{\sigma,\, 2\sigma,\, 4\sigma\}\), where $\sigma=0.1$ is the noise trade scale in $Z_t\sim \mathtt{T}_\nu(0,\sigma)$,} and a linear function scaled on the fundamental value's support $[m,M]${\color{black}, \(F^{(0)}(x)=\min\bigl\{\,\max\{\tfrac{m+M}{2} + 0.1\,x,\, m\},\, M\bigr\}.\)} Instead, the non-monotone ones are oscillatory perturbations of hyperbolic tangent functions {\color{black} around the monotone initialisation $F^{(0)}(x;
2\sigma)$, that is, \(F^{(0)}(x;\sigma,\varepsilon,\omega)
=
\min\bigl\{\,\max\{F^{(0)}(x;2\sigma) + \varepsilon \sin(\omega x),\, m\},\, M\bigr\},\) with the choice \((\varepsilon,\omega)\in\{(0.02,0.7),\, (0.04,1.4),\, (0.04,2.8),\, (0.08,0.9)\}\)} . In addition, we use seven different prior distributions for the fundamental asset value ($\mathtt{Unif}([0,1])$, $\mathtt{Beta}(2,2)[0,1]$, $\mathtt{Beta}(1/2,1/2)[0,1]$, $\mathtt{Trunc-Gaussian}(0,1,[-3,3])$, $\mathtt{Gaussian}(0,1)$, $\mathtt{Student}-t(5,1)$, $\mathtt{Pareto}(3)$), and four different choices for the number of insider trades ($N\in\{2,3,5,10\}$). Fig.~\ref{fig::uniqueness_summary} shows the maximum pairwise sup-norm distance $\max_{a\neq b}\|F_{1}^{a}-F_{1}^{b}\|_\infty$ across all seven prior distributions and $N\in\{2,3,5,10\}$, where for each specification the maximum is taken over the $\binom{8}{2}=28$ pairs of initialisations; the tolerance threshold is $10^{-6}$. The figure provides numerical evidence for branch uniqueness for the Student-$t$ fixed-point operator.  Importantly, Fig. \ref{fig::diag_monotonicity} clearly indicates that the fixed point solution $F$ and the associated $h=\phi_F$ are also monotone. It shows the minimum finite-difference slopes $\min_i \Delta F_1^\star / \Delta x$ and $\min_i \Delta h_1^\star / \Delta x$ of the final iterates. All values are strictly positive, confirming that both $F_1^\star$ and $h_1^\star$ are strictly increasing, thereby showing that the selected fixed point has the economic interpretation of a monotone LOB equilibrium in these specifications $(h_{1}^\star,X_{1}^\star)$.

\begin{figure}[htbp]
    \centering
    \includegraphics[width=0.7\textwidth]{Numerics/init_summary_panel.png}
    \caption{Maximum pairwise sup-norm distance $\max_{a\neq b}\|F_{1}^{a}-F_{1}^{b}\|_\infty$ 
    across all $\binom{8}{2}=28$ pairs of converged solutions, computed separately for each prior distribution 
    and $N\in\{2,3,5,10\}$.}
    \label{fig::uniqueness_summary}
\end{figure}
\begin{figure}[htbp]
    \centering
    \includegraphics[width=1.0\linewidth]{Numerics/diag_monotonicity.png}
    \caption{Minimum finite-difference slope of the $F_1^\star$ (\emph{Left panel}) and $h_1^\star$ (\emph{Right panel}), across 
    seven prior distributions and four choices for the number of insider traders.}
    \label{fig::diag_monotonicity}
\end{figure}
 
We now illustrate finite-horizon posterior learning.
{\color{black} As in the previous experiment, we employ the same seven priors for fundamental value $V$ and numbers of informed traders \(N\in\{2,3,10\}\), yielding a total of $7\times3=21$ specifications. For each specification, we simulate ten independent realizations of the uninformed aggregate order-flow distributed as a $\mathtt{Student}-t(3,0.1)$. Within each realization, the posterior distribution is updated via~\eqref{eq::beliefsupdate}, using the equilibrium marginal cost $F_t^\star$ at each trading period. 
For each specification, the posterior mean and standard deviation at each trading period are then computed by averaging across the ten independent realizations.
}
Fig.~\ref{fig::posterior_consistency} shows the posterior mean $\mathbb{E}[V \mid Y^{t-1}]$ and the posterior standard deviation $\sqrt{\mathrm{Var}[V \mid Y^{t-1}]}$ across the trading periods $t \in \{1,\ldots,5\}$. 
The figures are clear in indicating that the posterior mean converges to the realization $v_0$ of the fundamental value chosen by nature at time zero, and that posterior dispersion declines over the five simulated periods, hence consistent with the posterior-consistency theorem. In addition, a larger number of informed traders accelerates learning across all prior specifications, as competition among insiders increases the informativeness of aggregate order flow.
\begin{figure}[htbp]
    \centering
     \includegraphics[width=\textwidth]{Numerics/posterior_consistency.png}
    \caption{Posterior mean $\mathbb{E}[V \mid Y^{t-1}]$ and standard 
    deviation $\sqrt{\mathrm{Var}[V \mid Y^{t-1}]}$ across the trading periods $t \in \{1,\ldots,5\}$. We employ seven prior distributions and four choices for the number of insider traders, averaged over ten realisations of the noise trader order flow. \emph{Bold lines}: cross-seed average posterior mean; \emph{Bands}: average posterior standard deviation; \emph{Dashed line}: true value $v_0$.}
    \label{fig::posterior_consistency}
\end{figure}

We now investigate the equilibrium dynamics of the marginal cost function $F^{\star}_t$ and the corresponding marginal price function $h^{\star}_t$, and of the bid-ask spread across the trading periods $t \in \{1,\ldots,5\}$ as liquidity suppliers update their beliefs about the fundamental value $V$; we also vary the number of informed traders.  
{\color{black} Numerically, for each specification of the prior and each $N$, we simulate trading periods $t\in\{1,\dots,5\}$. At each period $t$, given the current posterior belief of liquidity suppliers, we compute $F_t^\star$ as the fixed point in~\eqref{eq::fixed_point_equation} via iterative updates, and obtain $h_t^\star=\phi_{F_t^\star}$ from~\eqref{eq::g_mappings} and~\eqref{eq::equation_for_g}. The insider's optimal trade $X_t^\star=F^{-1}(v_0,t,Y^{t-1})$ is computed by linear interpolation of the discretized pairs \(\{x_i,F_t^\star(x_i)\}\) on the $x$-grid, exploiting the monotonicity of $F_t^\star$, where the realization $v_0$ of the fundamental value is chosen at time zero and fixed over trading periods. The observed aggregate order flow is generated as $Y_t=X_t^\star+Z_t$, with $Z_t\sim\mathtt{T}_{\nu}(0,\sigma)$, and the posterior belief is updated accordingly. We track the evolution of $F_t^\star$ and $h_t^\star$ across periods, and compute the bid-ask spread as $h_t^\star(0^+)-h_t^\star(0^-)$, where $h_t^\star(0^-)$ and $h_t^\star(0^+)$ are evaluated from the discretized function on the $x$-grid as the values at the grid points immediately to the left and right of zero, respectively.}
In the main text, we provide such a dynamic for two unbounded distributions of the fundamental asset price: the Gaussian and the Pareto. Results for the other distributions are reported in Appendix \ref{app::numerics_dynamics}. Fig.~\ref{fig::Fh_pareto} and Fig.~\ref{fig::Fh_gaussian} report the equilibrium $(F_t^\star, h_t^\star)$ for the $\mathtt{Pareto}(3)$ and 
$\mathtt{Gaussian}(0,1)$, respectively. Patterns are consistent across both specifications. Specifically, both $F^\star_t$ and $h^\star_t$ are concave at depth, and are monotone in the number of informed traders $N$. Indeed, a larger  $N$ flattens $F_t^\star$ and $h_t^\star$, as increasing competition among insiders induces more aggressive trading, consistent with the decay of $|\rho_t^+|$ with $N$ as in Theorem~\ref{th::power_law_price_impact}. In addition, 
as $t$ increases and liquidity suppliers learn, both $F^\star_t$ and $h^\star_t$ flatten and converge, reflecting the shrinkage of posterior uncertainty and the reduction in adverse selection. 
Both effects are further reflected in the bid-ask spread $h_t^\star(0^+) - h_t^\star(0^-)$, reported in Fig.~\ref{fig::spread_all_priors} across, this time, all seven prior distributions: the spread narrows monotonically with $t$ and with $N$, confirming the joint effects of belief updating and insider competition.

\begin{figure}[htbp]
    \centering
    \includegraphics[width=\linewidth]{Numerics/Fh_plots/Fh_pareto.png}
    \caption{Equilibrium marginal cost$F_t^\star$ (\emph{Top panel}) and limit prices $h_t^\star$ (\emph{Bottom panel}) across the trading periods $t\in \{1,\ldots,5\}$, for the $\mathtt{Pareto}(3)$ asset distribution with $v_0=2.5$ and $N\in\{2,3,5,10\}$.}
    \label{fig::Fh_pareto}
\end{figure}
\begin{figure}[htbp]
    \centering
    \includegraphics[width=\linewidth]{Numerics/Fh_plots/Fh_gaussian.png}
    \caption{Equilibrium marginal cost$F_t^\star$ (\emph{Top panel}) and limit prices $h_t^\star$ (\emph{Bottom panel}) across the trading periods $t\in \{1,\ldots,5\}$, for the $\mathtt{Gaussian}(0,1)$  asset distribution with $v_0=0.2$ and $N\in\{2,3,5,10\}$.}
    \label{fig::Fh_gaussian}
\end{figure}

\begin{figure}[htbp]
    \centering
    \includegraphics[width=\linewidth]{Numerics/Fh_plots/spread_all_priors.png}
    \caption{Bid-ask spread $h_t^\star(0^+) - h_t^\star(0^-)$ across trading periods $t=1,\ldots,5$, for the considered seven prior distributions and $N\in\{2,3,5,10\}$.}
    \label{fig::spread_all_priors}
\end{figure}
 
Finally, we assess the regular-variation tail prediction of the price impact predicted by Theorem~\ref{th::power_law_price_impact}; also, we study the sensitivity of these results to the parameter $\nu$, the number of insiders $N$, and the number of trading periods. 
Theorem~\ref{th::power_law_price_impact} shows that, for bounded-support fundamentals, the right tail of the marginal-cost schedule $M - F_t^\star(x)$ is regularly varying at infinity with index $\rho_t^+$,
where \(\rho_t^+<0\). The exponent \(\rho_t^+\) depends on the endpoint behavior of the fundamental distribution through \(L\), on the tail thickness of noise trading through \(\nu\), on the number of informed traders \(N\), and on the previously realized price-impact
exponents \(\rho_1^+,\ldots,\rho_{t-1}^+\). In particular, \(|\rho_t^+|\) decreases with \(N\) and converges to zero as \(t\to\infty\). For \(t\ge2\), \(|\rho_t^+|\) is also decreasing
in \(\nu\), while \(\rho_1^+\) is independent of \(\nu\). Hence, as \(\nu\) increases and noise-trader tails become thinner, the marginal-cost schedule approaches its upper endpoint more slowly at large order sizes. Equivalently, the book is less aggressive at depth and price
impact is lower. Economically, thinner noise tails make order flow more informative across periods, accelerating learning and reducing residual adverse-selection risk in later books. We examine these theoretical predictions numerically. 
{\color{black} For each prior distribution, $N\in\{2,3,10\}$, and $\nu\in\{3,5,10,30\}$, we solve the fixed-point in~\eqref{eq::fixed_point_equation} over trading periods $t\in\{1,\dots,5\}$. 
The resulting limit prices are computed through~\eqref{eq::g_mappings} and~\eqref{eq::equation_for_g}, and liquidity suppliers' belief is updated via~\eqref{eq::beliefsupdate}.}
For bounded-support priors, we estimate the tail exponent by fitting $\log\bigl(M-F_t^\star(x)\bigr)
\simeq \hat\rho_t^+\log x$ over the region \(x>\underline x\). For unbounded priors, we instead estimate $\log F_t^\star(x) \simeq \hat\gamma_t\log x$ over the region \(x>\underline x\). 
{\color{black}The fitted exponents \(\hat\rho_t^+\) and \(\hat\gamma_t\) are obtained by ordinary least squares on the corresponding log-log regression, using observations \(x>\underline{x}\).
}
The cut-off \(\underline x\) is held fixed across periods to make the estimated exponents
comparable over time. We vary $\nu\in\{3,5,10,30\}$ to study how the tail thickness of noise trading affects price impact. The prediction is that larger \(\nu\), corresponding to thinner liquidity-shock tails, lowers the price-impact curve in later periods. In the main text, we present results for the truncated Gaussian among the bounded-support priors, and for the Pareto and Gaussian distributions among the unbounded-support priors; results for the remaining distributions are collected in Appendix~\ref{app::price_figures}. Fig.~\ref{fig::power_law_loglog_trunc_gaussian} confirms the power law behaviour of $M - F_t^\star(x)$ for the Truncated Gaussian distribution: the log-log plot is approximately linear in the fit region, with slope $\hat{\rho}_t^+$, confirming that $M - F_t^\star(x)$ exhibits power law decay $M - F_t^\star(x) \sim x^{\rho_t^+}$ as $x \to \infty$. Fig.~\ref{fig::power_law_loglog_pareto} and Fig.~\ref{fig::power_law_loglog_gaussian} show the analogous power law behaviour $F_t^\star(x) \sim x^{\gamma_t}$ for the Pareto and Gaussian distributions.

\begin{figure}[htbp]
    \centering
    \includegraphics[width=1\linewidth]{Numerics/power_law/power_law_loglog_nu_trunc_gaussian.png}
    \caption{Log-log plot of $\log(M - F_t^\star(x))$ against $\log x$ for the $\mathtt{Trunc-Gaussian}(0,1, [-3,3])$  asset distribution, across trading periods $t=1,\ldots,5$, for $\nu\in\{3,5,10,30\}$ and $N\in\{2,3,10\}$. Solid lines: $\log(M-F_t^\star)$ from fixed point solutions; dashed lines with markers: fitted slope $\hat{\rho}_t^+$. Fit region $x > 1.5$.}
    \label{fig::power_law_loglog_trunc_gaussian}
\end{figure}
\begin{figure}[htbp]
    \centering
    \includegraphics[width=1\linewidth]{Numerics/power_law/power_law_loglog_nu_pareto.png}
    \caption{Log-log plot of $\log F_t^\star(x)$ against $\log x$ for the $\mathtt{Pareto}(3)$ asset distribution, across trading periods $t=1,\ldots,5$, for $\nu\in\{3,5,10,30\}$ and $N\in\{2,3,10\}$. Solid lines: $\log F_t^\star$ from fixed point solutions; dashed lines with markers: fitted slope $\hat{\gamma}_t$. Fit region $x > 2.0$.}
    \label{fig::power_law_loglog_pareto}
\end{figure}
\begin{figure}[htbp]
    \centering
    \includegraphics[width=1\linewidth]{Numerics/power_law/power_law_loglog_nu_gaussian.png}
    \caption{Log-log plot of $\log F_t^\star(x)$ against $\log x$ for the $\mathtt{Gaussian}(0, 1)$ asset distribution, across trading periods $t=1,\ldots,5$, for $\nu\in\{3,5,10,30\}$ and $N\in\{2,3,10\}$. Solid lines: $\log F_t^\star$ from fixed point solutions; dashed lines with markers: fitted 
    slope $\hat{\gamma}_t$. Fit region $x > 1.5$.}
    \label{fig::power_law_loglog_gaussian}
\end{figure}
Fig.~\ref{fig::power_law_nu_N2} reports the fitted power-law exponents across periods for \(N=2\). Both \(|\hat\rho_t^+|\) in the bounded-support case and \(\hat\gamma_t\) in the unbounded-support case decline monotonically over time. This pattern is consistent with the
learning mechanism of the model: as liquidity suppliers observe more order flow, residual adverse-selection risk falls and the price-impact schedule becomes flatter. The comparative statics with respect to \(\nu\) are also in line with Theorem~\ref{th::power_law_price_impact}. Larger values of \(\nu\) correspond to thinner noise-trader tails. In this case, large order imbalances are less easily attributed to rare
liquidity shocks and therefore reveal information more quickly. As a result, liquidity suppliers learn faster, residual adverse selection declines more rapidly, and price impact is lower in later periods. This is reflected in the faster decline of the fitted exponents as \(\nu\) increases. At \(t=1\), the curves for different values of \(\nu\) coincide, because the initial exponent \(\rho_1^+\) is independent of the noise-trading tail parameter. From \(t\ge2\) onward, the curves separate, reflecting the recursive dependence of \(\rho_t^+\) on \(\nu\) and on the previous-period exponents in Theorem~\ref{th::power_law_price_impact}. The bounded- and
unbounded-support specifications display qualitatively similar patterns. Results for \(N=3\) and \(N=10\) are reported in Appendix~\ref{app::price_figures}.

\begin{figure}
    \centering
    \includegraphics[width=\linewidth]{Numerics/power_law/power_law_nu_N2.png}
    \caption{Fitted power law exponents $|\hat{\rho}_t^+|$ (bounded priors, from 
    $M - F_t^\star$) and $\hat{\gamma}_t$ (unbounded priors, from $F_t^\star$) 
    across trading periods $t=1,\ldots,5$, for noise trades $\nu\in\{3,5,10,30\}$ and $N=2$.}
    \label{fig::power_law_nu_N2}
\end{figure}

\section{Liquidity tail risk and price discovery in market returns}\label{sec::empirics}
Our goal in this section is to perform a descriptive, rather than fully structural (indeed, we do not estimate the model primitives) empirical exercise. In agreement with the theory developed in Sections \ref{section::Equilibrium}--\ref{section::Asymptotic_Price_Impact}, our first objective is to document that over a pre-asymptotic region of large but not  overwhelming order flow, large trades are more ambiguous in heavy-tail regimes. Second, and separately, we want to test whether, once order flow is sufficiently extreme, the model's far-tail logic allows trades to become informative again. In order to do so, we use ten levels of Apple Inc. (ticker AAPL) high-resolution LOB data, provided by LOBSTER, from April 1, 2026, to May 1, 2026, for a total of 22 trading days; we will briefly discuss -- mainly for robustness and heterogeneity checks -- results for 
Microsoft Corporation (ticker MSFT) and Intel Corporation (ticker INTC) in Appendix \ref{app::additional_empirical_results}. Each day consists of $3,960$ non-overlapping five-second windows from 10:00am to 3:30pm Eastern time; windows with no messages inherit the most recent valid quote state but have zero signed execution flow (see below). We apply minimal (and standard) filters to the data (e.g., invalid quote states removal or nonpositive or crossed quoted spread.). 

The main analyses that we will perform are the following. First, we test whether (intraday) blocks with lower Hill tail-index estimates (\cite{hill1975simple}) contain more extreme signed order flow realizations. Second, we discuss the price response to signed order flow, across tail-risk regimes and trade-size categories. Finally, we check whether spreads remain wider after large trades in heavy-tail regimes. Toward this aim, we use execution records (i.e., types $e \in \{4,5\}$ in LOBSTER) and use as direction, denoted by $\iota_{e}$, the opposite direction with respect to the one used by the data provider; i.e., buy-initiated executions are positive. Then, for every five-second window $(d,\tau)$, we define -- in accordance with the fact that liquidity suppliers observe aggregate order flow but not the motive behind it -- the absolute signed order flow ($|Q_{d,\tau}|$) and the normalized order flow size ($X_{d, \tau}$)
\begin{equation}\label{eq::definition_reg_1}
|Q_{d,\tau}|=\Big|\sum_{\substack{e \in \{4,5\} \\ e \in (d,\tau)}}\iota_{e}{\rm Size}_{e}\Big|,\quad X_{d, \tau}=\frac{|Q_{d,\tau}|}{\operatorname{median}\{|Q_{d,\tau}|\,:\,|Q_{d,\tau}|>0\}}.
\end{equation}
Also, we define the forward mid-quote return (in basis points) as
\begin{equation*}
R_{d,\tau}^{(h)}=10^{4} (\log(m_{d,\tau+h})-\log(m_{d,\tau})),
\end{equation*}
where $m_{d,\tau}$ is the end-of-window mid-quote, $h \in \{10, 30, 60, 300\}$ seconds, and $\tau+h$ denotes the window $h/5$ five second intervals ahead. Analogously, the change in relative bid-ask spread is denoted by $\Delta S_{d,\tau}^{(h)}$; the relative, with respect to the mid-price, bid-ask spread, is denoted by $S_{d,t}$ and reported in basis points (bps).

\subsection{Liquidity tail risk proxy, tail regime classification, and trade-size split}
We now construct -- our proposed -- liquidity tail-risk proxy. First, we divide each trading day into six equal $55$-minute blocks (i.e., $132$ blocks for the $22$ trading days). Within each day-block, say $g$, we estimate, via the Hill estimator, the right-tail index $\widehat{\alpha}_{g}$, of non-zero absolute signed execution flow. We use the top $10\%$ of observations of $|Q_{d,\tau}|$, subject to the requirement that the block has at least $80$ nonzero observations. Importantly, we note that empirical tail estimates should not be interpreted as direct estimates of the Student-$t$ parameter $\nu$ since the data at our disposal contain (observed) aggregate signed execution flow, which combines informed and uninformed liquidity demand. In particular, Lemma \ref{lem:candidate-exponents} shows that equilibrium aggregate flow can be heavier-tailed than the primitive liquidity shock because the induced informed-demand tail may dominate in the far tail; empirical Hill indices below $2$ do not contradict the (theoretical) assumption that $\nu>2$.\\
\noindent Tail-regime classification of blocks is then given by the following terciles: ${\rm HeavyTail}_{g}=\mathsf{1}_{\{\widehat{\alpha}_{g} \leq q_{1/3}\}}$, ${\rm LightTail}_{g}=\mathsf{1}_{\{\widehat{\alpha}_{g} \geq q_{2/3}\}}$, and the remaining blocks are classified as ${\rm MiddleTail}_g$; $\mathsf{1}_{\{\cdot\}}$ denotes the indicator function of the event $``\{\cdot\}"$, and $q_{1/3}$ and $q_{2/3}$ are the empirical terciles of block-level Hill estimates for the stock. These block-level labels are attached to each five-second window in the corresponding day-block.\\
\noindent To separate the two far-tail regimes mentioned above, we split non-zero absolute signed flow into three size categories. Precisely, let $c_{0.90}$ and $c_{0.975}$ denote the $90^{th}$ and $97.5^{th}$ percentiles of $|Q_{d,\tau}|$ among non-zero flow windows. We define: ${\rm Normal}_{d,\tau}=\mathsf{1}_{\{0<|Q_{d,\tau}|<c_{0.90}\}}$, ${\rm ModerateLarge}_{d,\tau}=\mathsf{1}_{\{c_{0.90} \leq |Q_{d,\tau}| < c_{0.975}\}}$, and ${\rm Extreme}_{d,\tau}=\mathsf{1}_{\{|Q_{d,\tau}| \geq c_{0.975}\}}$; for AAPL -- thresholds are stock specific -- the moderate-large threshold is approximately $1,115$ shares, whereas the extreme
threshold is approximately $2,623$ shares.\\
\noindent Table \ref{tab::aaple} summarizes sample construction, tail regime classification, and size split. Instead, Table \ref{tab:aapl_tail} reports liquidity tail risk classification. As intended, the {\rm Heavy-tail} blocks have substantially lower Hill tail-index estimates than middle and light-tail blocks. Importantly, the median quoted spread is similar across regimes; this indicates that the tail-risk proxy is not simply a spread proxy. Moreover, the $99^{th}$ percentile of normalized absolute signed flow (cf. last column of Table \ref{tab:aapl_tail}) shows that heavy-tail blocks also contain substantially more extreme realizations of the signed flow. Also, in AAPL, heavy-tail blocks contain about $8.54\%$ moderate-large windows and $3.92\%$ extreme windows. The corresponding shares are about $6.52\%$ and $2.02\%$ in middle blocks, and about $6.78\%$ and $1.32\%$ in light-tail blocks. In particular, heavy-tail blocks are (also) blocks where large signed execution-flow realizations occur most frequently.

\begin{table}[!ht]
\centering
\caption{\emph{AAPL sample construction and trade-size classification.} The table reports the construction of the AAPL sample used in the empirical analysis. The sample uses LOBSTER message and order-book files with the first ten displayed book levels. We remove the first and last 30 minutes of the continuous trading session and aggregate the remaining interval, into non-overlapping five-second windows. Each day therefore contributes $3,960$ windows. Tail-risk regimes are estimated over six equal $55$-minute intraday blocks per day. ``Moderate-large" and ``extreme" trades are defined using the $90^{th}$ and $97.5^{th}$ percentiles, respectively, of nonzero absolute signed execution flow in the final AAPL sample. Execution flow is signed from the market-order initiator side, with buy-initiated executions positive and sell-initiated executions negative. ``Rows dropped by core filters" refers to windows removed because of, e.g., invalid quotes}
\label{tab::aaple}
\begin{tabular}{lr}
\toprule
\textsc{Quantity} & \textsc{Value} \\
\midrule
Trading days & 22 \\
Date range & 01-Apr-2026 to 01-May-2026 \\
LOB depth used & 10 levels \\
Window length & 5 seconds \\
Usable minutes per day & 330 \\
Windows per day & 3,960 \\
Total windows & 87,120 \\
Tail blocks per day & 6 \\
Tail-block length & 55 minutes \\
Total tail blocks & 132 \\
Heavy-tail blocks & 44 \\
Middle blocks & 44 \\
Light-tail blocks & 44 \\
Moderate-large threshold & 1,115 shares \\
Extreme threshold & 2,622.6 shares \\
Moderate-large windows & 6,340 \\
Extreme windows & 2,109 \\
Rows dropped by core filters & 0 \\
\bottomrule
\end{tabular}
\end{table}

\begin{table}[!ht]
\centering
\caption{\emph{AAPL tail-regime diagnostics} This table summarizes the empirical liquidity-tail regimes used in the baseline AAPL analysis. For each stock-day-block, we estimate a Hill tail index from the upper $10\%$ of nonzero absolute signed execution-flow observations. Lower Hill estimates correspond to heavier-tailed signed flow. Blocks are classified into stock-specific terciles: the lowest-tercile Hill estimates are labeled HeavyTail (\emph{first row}), the highest-tercile estimates are labeled LightTail (\emph{last row}), and the remaining blocks form the Middle regime (\emph{second row}). Median spread ($5^{th}$ \emph{column})) is the median relative quoted spread in bps within each regime. The final column reports the $99^{th}$ percentile of normalized absolute signed flow, where normalized flow is absolute signed execution flow divided by the sample median nonzero absolute signed flow.}
\label{tab:aapl_tail}
\begin{tabular}{lrrrrr}
\toprule
Regime & Blocks & Mean Hill & Median Hill & Median spread (bps) & $p_{99}$ of $X$ (Eq. \eqref{eq::definition_reg_1}) \\
\midrule
HeavyTail & 44 & 1.46 & 1.45 & 0.767 & 29.10 \\
Middle    & 44 & 1.79 & 1.79 & 0.756 & 15.02 \\
LightTail & 44 & 2.17 & 2.09 & 0.740 & 11.61 \\
\bottomrule
\end{tabular}
\end{table}

\subsection{Empirical strategy}
We adopt the following empirical strategy. In the case of price response to signed flow $q_{d,\tau}:=\operatorname{sign}(Q_{d,\tau})\log(1+X_{d,\tau})$ (see \eqref{eq::definition_reg_1} for the definition of the two quantities), for every horizon $h \in \{10,30,60,300\}$ (seconds), we run a linear regression of forward mid-quote returns ($R_{d,\tau}^{(h)}$) on (a subset of) the estimated quantities described in the previous subsection:
\begin{equation*}
 \begin{split}
R_{d,\tau}^{(h)} ={}& \beta_0^{(h)}+
\beta_1^{(h)}q_{d,\tau}+
\beta_2^{(h)}q_{d,\tau}\mathrm{HeavyTail}_{d,\tau}+
\beta_3^{(h)}q_{d,\tau}\mathrm{LightTail}_{d,\tau} \\
&+\beta_4^{(h)}q_{d,\tau}\mathrm{ModerateLarge}_{d,\tau}
+\beta_5^{(h)}q_{d,\tau}\mathrm{Extreme}_{d,\tau} \\
&+\beta_6^{(h)}q_{d,\tau}\mathrm{ModerateLarge}_{d,\tau}\mathrm{HeavyTail}_{d,\tau}
+\beta_7^{(h)}q_{d,\tau}\mathrm{Extreme}_{d,\tau}\mathrm{HeavyTail}_{d,\tau} \\
&+\beta_8^{(h)}q_{d,\tau}\mathrm{ModerateLarge}_{d,\tau}\mathrm{LightTail}_{d,\tau}
+\beta_9^{(h)}q_{d,\tau}\mathrm{Extreme}_{d,\tau}\mathrm{LightTail}_{d,\tau} \\
&+\Gamma^{(h)\prime}W_{d,\tau}+\delta_d^{(h)}+\varepsilon_{d,\tau}^{(h)},
 \end{split}   
\end{equation*}
where , $W_{d,\tau}$ contains $10$-level imbalance\footnote{For $10$-levels depth, we define, at event time $t$ on date $t$, 
\begin{equation*}
    {\rm D}_{d,t}^{10}:={\rm D}_{d,t}^{A,10}+{\rm D}_{d,t}^{B,10}:=\sum_{\ell=1}^{10} {\rm B}_{\ell, d, t}+\sum_{\ell=1}^{10} {\rm A}_{\ell, d, t},
\end{equation*}
where ${\rm B}_{\ell, d, t}$ (resp. ${\rm A}_{\ell, d, t}$) denotes the (displayed) ask (resp. bid) size at book level $\ell$.  Weighted ask ($\sum_{\ell=1}^{10} w_{\ell} {\rm A}_{\ell, d, t}$) and bid ($\sum_{\ell=1}^{10} w_{\ell} {\rm B}_{\ell, d, t}$) depth are denoted by $\widetilde{{\rm D}}_{d,t}^{A,10}$ and $\widetilde{{\rm D}}_{d,t}^{B,10}$, respectively, where the weights are defined by $w_{\ell}=\frac{e^{-\lambda(\ell-1)}}{\sum_{j=1}^{10}e^{-\lambda(j-1)}}$, $\lambda=0.5$. The $10$-level order flow imbalance is then given by
\begin{equation*}
I_{d,t}^{10}=\frac{\widetilde{{\rm D}}_{d,t}^{B,10}-\widetilde{{\rm D}}_{d,t}^{A,10}}{\widetilde{{\rm D}}_{d,t}^{B,10}+\widetilde{{\rm D}}_{d,t}^{A,10}}
\end{equation*}}, relative spread, log $10$-level total depth, two time-of-day controls ($\operatorname{sin}(2 \pi \tau)$ and $\operatorname{cos}(2 \pi \tau)$, where tau is normalized
time over the $10:00$am to $3:30$pm Eastern time sample; $\delta_d^{(h)}$ denotes day fixed effects, which are daily intercepts (identification comes from within-day variation in order-flow size and
tail-regime labels). All regressions are estimated by ordinary least squares with Newey-West/HAC-corrected $t$-statistics with a Bartlett kernel; the length of the lag is horizon-specific, specifically $L_h=\max\{\lceil \frac{60}{5}, \frac{h}{5}\rceil\}$. We note that this correction is important because forward returns and future spread changes overlap strongly at longer horizons. Forward returns and relative-spread changes are winsorized at the $0.1$ and $99.9$ percentiles. We are particularly interested in the coefficients $\beta_6^{(h)}$, relative to $q_{d,\tau}\mathrm{ModerateLarge}_{d,\tau}
\mathrm{HeavyTail}_{d,\tau}$, and $\beta_7^{(h)}$, relative to $q_{d,\tau}\times \mathrm{Extreme}_{d,\tau}\times \mathrm{HeavyTail}_{d,\tau}$, which we report in Table \ref{tab:aapl_impact}.

\noindent The estimates show that the informativeness of heavy-tail order flow is concentrated in the upper part of the signed-flow distribution. The moderate-large heavy-tail interaction is small and statistically insignificant at the 10-second horizon, but becomes positive and significant from $30$ seconds onward. The extreme heavy-tail interaction is also positive at all horizons and becomes especially large at longer horizons. This pattern is consistent with a far-tail price-discovery interpretation. In heavy-tail regimes, ordinary signed flow has little price-discovery content, while sufficiently large signed flow realizations become informative, especially at longer horizons. In particular, results in Table \ref{tab:aapl_impact} support the model's more nuanced prediction: heavy-tailed liquidity demand expands the range over which large trades remain ambiguous, but the far tail eventually contains trades that are informative about fundamentals. The increasing magnitude of the extreme-trade coefficient from $30$ to $300$ seconds is also consistent with delayed incorporation of information rather than purely instantaneous price pressure. Importantly, estimates should be interpreted as evidence on price discovery, not as a structural estimate of the theoretical crossover depth.

\begin{table}[!ht]
\centering
\caption{\emph{Refined AAPL price-impact interactions.} The dependent variable is the $h$-second forward midquote return in bps. The table reports the interaction of signed log flow with moderate-large or extreme trade status and the HeavyTail regime. Moderate-large trades are nonzero-flow windows between the $90^{th}$ and $97.5^{th}$ percentiles of $|Q|$; extreme trades are above the $97.5^{th}$ percentile. HeavyTail blocks are the lowest-tercile Hill-index blocks. The omitted categories are the middle-tail regime and normal nonzero-flow windows. Controls include $10$-level imbalance, relative spread, log 10-level depth, time-of-day controls, and date fixed effects. Standard errors are Newey--West/HAC with horizon-specific lags.}
\label{tab:aapl_impact}
\begin{tabular}{rrrrr}
\toprule
Horizon & $q_{d,\tau}\mathrm{ModerateLarge}_{d,\tau}
\mathrm{HeavyTail}_{d,\tau}$ & $t$-stat & $q_{d,\tau}\mathrm{Extreme}_{d,\tau}\mathrm{HeavyTail}_{d,\tau}$ & $t$-stat \\
\midrule
10 sec  & 0.068 & 1.49 & 0.024 & 0.49 \\
30 sec  & 0.253 & 3.10 & 0.184 & 2.00 \\
60 sec  & 0.297 & 2.59 & 0.271 & 2.02 \\
300 sec & 0.559 & 2.12 & 1.198 & 3.34 \\
\bottomrule
\end{tabular}
\end{table}

To complement the previous regression result, we compute binned signed-impact estimates. For every horizon, tail regime and normalized-flow bin, we estimate the mean signed return $\operatorname{sign}(Q_{d,t})R_{d,t}^{(h)}$; positive values indicate that prices move in the direction of signed execution flow. Table \ref{tab::price_impact} reports the results. The largest-bin estimates are consistent with the regression evidence. At $30$, $60$, and $300$ seconds, the largest HeavyTail bin has positive signed returns. At $300$ seconds, the largest HeavyTail bin has a mean signed return of $0.738$ bp with a $t$-statistic of $2.87$; the largest LightTail bin is negative, with a mean signed return of $-0.712$ bp and a $t$-statistic of $-2.80$. 

\begin{table}[!ht]
\centering
\caption{\emph{Largest-bin nonparametric signed impact.} For each stock, horizon, and liquidity-tail regime, this table reports the mean signed forward midquote return in the largest valid normalized-flow bin. Positive values indicate price movement in the direction of signed flow. HeavyTail, Middle, and LightTail regimes are based on terciles of the $55$-minute block-level Hill tail-index estimate. The table is a descriptive far-tail price-impact diagnostic; it is not a structural crossover-depth estimate.}
\label{tab::price_impact}
\begin{tabular}{rlrrrr}
\toprule
Horizon & Regime & Bin midpoint & $N$ & Mean signed return & $t$-stat. \\
 & & & & (bp) & \\
\midrule
30 & HeavyTail & 7.35 & 3,615 & 0.195 & 2.38 \\
30 & Middle & 6.66 & 2,476 & -0.077 & -1.03 \\
30 & LightTail & 6.21 & 2,342 & -0.077 & -0.90 \\
60 & HeavyTail & 7.35 & 3,615 & 0.238 & 2.10 \\
60 & Middle & 6.66 & 2,475 & -0.164 & -1.55 \\
60 & LightTail & 6.19 & 2,330 & -0.124 & -1.01 \\
300 & HeavyTail & 7.36 & 3,608 & 0.738 & 2.87 \\
300 & Middle & 6.66 & 2,459 & -0.156 & -0.62 \\
300 & LightTail & 6.20 & 2,255 & -0.712 & -2.80 \\
\bottomrule
\end{tabular}
\end{table}

Finally, in the case of the spread (again, cf. above) we run a linear regression of the change in the relative spread $\Delta S_{d,\tau}^{(h)}$ on (a subset of) the estimated quantities described in the previous subsection: 
\begin{equation*}
    \begin{split}
        \Delta S_{d,\tau}^{(h)} ={}& \theta_0^{(h)}+
\theta_1^{(h)}\mathrm{ModerateLarge}_{d,\tau}
+\theta_2^{(h)}\mathrm{Extreme}_{d,\tau}
+\theta_3^{(h)}\mathrm{HeavyTail}_{d,\tau}
+\theta_4^{(h)}\mathrm{LightTail}_{d,\tau} \\
&+\theta_5^{(h)}\mathrm{ModerateLarge}_{d,\tau}\mathrm{HeavyTail}_{d,\tau}
+\theta_6^{(h)}\mathrm{Extreme}_{d,\tau}\mathrm{HeavyTail}_{d,\tau} \\
&+\theta_7^{(h)}\mathrm{ModerateLarge}_{d,\tau}\mathrm{LightTail}_{d,\tau}
+\theta_8^{(h)}\mathrm{Extreme}_{d,\tau}\mathrm{LightTail}_{d,\tau} \\
&+\Lambda^{(h)\prime}W_{d,\tau}+\eta_d^{(h)}+u_{d,\tau}^{(h)},
    \end{split}
\end{equation*}
where $W_{d,\tau}$ is defined as before, and $\eta_{d}^{(h)}$ denotes (again) day fixed effects. We are particularly interested in the (incremental spread resilience) coefficients $\theta_5^{(h)}$, relative to the term $\mathrm{ModerateLarge}_{d,\tau}\mathrm{HeavyTail}_{d,\tau}$, and $\theta_6^{(h)}$, relative to the term $\mathrm{Extreme}_{d,\tau}\mathrm{HeavyTail}_{d,\tau}$. Table \ref{tab:aapl_spread} displays the results; names are abbreviated for the sake of visualization. The spread-resilience evidence is strong. The $\mathrm{ModerateLarge}_{d,\tau}\mathrm{HeavyTail}_{d,\tau}$ coefficient is positive and statistically significant at all horizons. The other coefficient is larger and statistically significant at $10$, $30$, and $60$ seconds. In economic terms, the extreme HeavyTail interaction is $0.085$ bp at $10$ seconds and $0.073$ bp at $60$ seconds. Relative to the sample median spread of $0.755$ bp, these incremental effects are economically meaningful.

\begin{table}[h!]
\centering
\caption{\emph{AAPL spread-resilience interactions}. The dependent variable is the $h$-second change in relative quoted spread. Coefficients are shown in units of $10^{-6}$. The table reports incremental spread changes after moderate-large and extreme trades in HeavyTail blocks, relative to middle-tail normal-flow windows. Positive coefficients indicate that spreads remain wider after large trades in heavy-tail regimes. Controls include $10$-level imbalance, current relative spread, log 10-level depth, time-of-day controls, and date fixed effects. Standard errors are Newey--West/HAC with horizon-specific lags.}
\label{tab:aapl_spread}
\begin{tabular}{rrrrr}
\toprule
Horizon & $\mathrm{ML}_{d,\tau}\mathrm{HT}_{d,\tau}$ $(10^{-6})$ &  $t$-stat & $\mathrm{E}_{d,\tau}\mathrm{HT}_{d,\tau}$ $(10^{-6})$ & $t$-stat \\
\midrule
10 sec  & 2.62 & 2.39 & 8.53 & 3.06 \\
30 sec  & 3.41 & 3.19 & 7.09 & 2.57 \\
60 sec  & 2.93  & 2.73 & 7.26 & 2.83 \\
300 sec & 2.71  & 2.26 & 4.39 & 1.56 \\
\bottomrule
\end{tabular}
\end{table}

\section{Conclusion}
\label{section::Conclusion}
We develop a dynamic competitive limit order book model in which uninformed liquidity demand is heavy-tailed. The main economic message is that large order imbalances are not automatically informative when liquidity shocks have Student-$t$ tails. In particular, the Student-$t$ specification is structural: it changes the liquidity supplier's inference problem, as well as . Under Gaussian noise, extreme uninformed orders are exponentially unlikely, so large order imbalances are quickly interpreted as informative. Under Student-$t$ noise, the market must continue to assign meaningful probability to rare liquidity shocks. This changes both the economics of price impact and the mathematics of the equilibrium problem.\\
\noindent  

On the theoretical side, we show that the Gaussian proof strategy cannot be imported directly. The Student-$t$ pricing map does not preserve monotonicity on the broad class of bounded increasing schedules, and the fixed-point operator is not continuous on the broad Gaussian compact class. These failures reflect the same economic force that motivates the model: remote liquidity states remain relevant at polynomial order. To handle this difficulty, we separate the analysis into distinct steps. 

First, we prove existence of Student-$t$ fixed points on a tail-controlled compact class. Second, we distinguish fixed-point existence from the monotone limit order book interpretation.\\
\noindent Then, w A fixed point gives a candidate marginal-cost schedule consistent with informed-trader optimality and competitive zero-profit pricing; monotonicity of the associated pricing schedule is the additional condition that makes the solution an admissible LOB. Third, we study learning and 
tail behavior along the selected solution. The learning result shows that, despite heavy-tailed liquidity shocks, repeated order flow 
continues to reveal the fundamental value under suitable stability conditions. Heavy-tailed noise slows learning, but it does not eliminate information aggregation.\\
\noindent  

The tail analysis then identifies explicit regular-variation exponents for the marginal-cost schedule, the price schedule, and the induced informed demand. These exponents show how the number of informed traders, the endpoint behavior of beliefs, and the tail thickness of liquidity shocks determine the shape of price impact. The model also clarifies the role of rare liquidity shocks in the far tail. In the bounded-support environment studied here, sufficiently extreme executions are ultimately dominated by the informed side of the market. This does not contradict the liquidity-shock interpretation. Rather, it identifies the economic role of heavy-tailed liquidity demand.: it determines the crossover scale at which large order flow stops being plausibly attributable to liquidity demand and becomes overwhelmingly informative. The economically relevant region is precisely the pre-asymptotic range in which large trades are substantial but still ambiguous. 

We also use high-frequency LOB data to examine how intraday liquidity-tail regimes, identified by the Hill tail index, affect order flow and price impact in AAPL shares; we (briefly) discussed results also for MSFT and INTC. The analysis shows that heavy-tail periods are marked by more extreme order-flow events, but not by wider quoted spreads. Price-impact regressions reveal that while normal trades in heavy-tail regimes have limited informativeness, moderate-large and extreme trades become increasingly informative over longer horizons. The the estimates are consistent with the model’s far-tail interpretation that heavy-tailed liquidity demand broadens the ambiguous region for order informativeness, but that very large trades eventually provide clear information. Additionally, future spreads widen significantly after large trades in heavy-tail regimes, indicating persistent adverse-selection risk. 
Overall, the findings provide descriptive support for the fact that tail-risk measures capture the distributional shape of order flow and have important implications for understanding market liquidity and information transmission.

In summary, the current paper provides a theory of how heavy-tailed liquidity shocks reshape the limit order book. They flatten impact, slow learning, and require a fixed-point approach adapted to
tail behavior. The results suggest that empirical models of price impact and market learning should treat the tail thickness of liquidity demand as a structural feature of the trading environment, not merely as a volatility parameter.

\bibliographystyle{plain}
\bibliography{refs}

\appendix
\section{On the failure of the monotonicity preservation of the map \texorpdfstring{$g \mapsto \phi_g$}{g -> phi\_g}}\label{app::monotonicity_failure}
We consider, in the trading period $t=1$, $V$ uniformly distributed over the interval $[0,1]$; in symbols, $V \overset{d}{\sim} \texttt{Unif}([0,1])$. Because $t=1$, in what follows, we do not make explicit the dependence on $t$. Besides, the symbol $``\sim"$ stands for asymptotically equivalent.

For any $a>0$, we let $g_{a}(z):=\min\{1,\max\{0, a z\}\}$, where $z \in \mathbb{R}$. Then $g_a:\mathbb{R} \mapsto [0,1]$ is bounded, continuous, non-decreasing, and $a$-Lipschitz. Moreover, $\phi^{+}_{g_a}(x)>\frac{1}{2}$ for every finite $x \in \mathbb{R}$, while $\lim_{x \rightarrow \pm \infty}\phi_{g_a}^{+}(x)=\frac{1}{2}$. As a consequence, $x \mapsto \phi^{+}_{g_a}(x)$ is not non-decreasing. In order to prove the previous statement, notice that for $V \overset{d}{\sim} \texttt{Unif}([0,1])$, we have: 
\begin{equation*}
    \Pi_1^+(y)=
\begin{cases}
1,&y\le 0,\\
1-y,&0<y<1,\\
0,&y\ge 1,
\end{cases}
\qquad
\Phi_1^+(y)=
\begin{cases}
\dfrac12,&y\le 0,\\[0.6em]
\dfrac{1-y^2}{2},&0<y<1,\\[0.6em]
0,&y\ge 1.
\end{cases}
\end{equation*}
and $g_a(z)=0\ (z\le 0)$, $g_a(z)=az\ (0<z<1/a)$ and $g_a(z)=1\ (z\ge 1/a)$. Therefore, 
\begin{equation*}
    \phi^{+}_{g_a}(x) = \frac{\frac{1}{2}\int_{-\infty}^{0}\mathfrak{q}_{\nu}(x-z;0, \sigma)\ud z + \frac{1}{2}\int_{0}^{1/a}(1-a^2 z^2)\mathfrak{q}_{\nu}(x-z;0, \sigma)\ud z}{\int_{-\infty}^{0}\mathfrak{q}_{\nu}(x-z;0, \sigma)\ud z + \int_{0}^{1/a}(1-a z)\mathfrak{q}_{\nu}(x-z;0, \sigma)\ud z}.
\end{equation*}
We note that the previous quantity can be rewritten as $\phi^{+}_{g_a}(x) = \frac{1}{2}+\frac{1}{2}\frac{C_a(x)}{(A_a(x)+B_a(x))}$, where $A_a(x):=\int_{-\infty}^0 \mathfrak{q}_\nu(x-z;0, \sigma)\ud z$, $B_a(x):=\int_0^{1/a}(1-az)\mathfrak{q}_\nu(x-z;0, \sigma)\ud z$, and $C_a(x):=\int_0^{1/a}az(1-az)\mathfrak{q}_\nu(x-z;0, \sigma)\ud z$.  Because $\mathfrak{q}_{\nu}(\cdot;0,\sigma)>0$ on $\mathbb{R}$ and $a z (1-a z)>0$ on $(0,1/a)$, one has $C_a(x)>0$ for every finite $x$; therefore, $\phi_{g_a}^{+}(x)>1/2$ for every finite $x$. In addition, $A_{a}(x)\rightarrow 1$, $B_{a}(x)\rightarrow 0$, and $C_{a}(x)\rightarrow 0$ as $x \rightarrow -\infty$; therefore $\phi^{+}_{g_a}(x) \rightarrow 1/2$. And $A_{a}(x) \sim x^{-\nu}$, $B_{a}(x) \sim O(x^{-(\nu+1)})$, and $C_{a}(x) \sim O(x^{-(\nu+1)})$ as $x \rightarrow +\infty$; therefore, $\phi^{+}_{g_a}(x) \rightarrow 1/2$. Whence, $\phi^{+}_{g_a}(x)$ cannot be non-decreasing. 

\section{On the failure of the continuity of the Student-t fixed-point operator}\label{app::continuity_failure}
Again, we consider a very simple posterior: $V \overset{d}{\sim} \texttt{Unif}([0,1])$, hence $m=0$ and $M=1$. Under this assumption, we have that $\Pi^+(y)=1-y$ and $\Phi^+(y)=\frac{1-y^2}{2}$ when $0<y<1$; cf. also the previous section. We now define
\begin{equation*}
 r(u):=
\begin{cases}
0,&u\le0,\\
u,&0<u<1,\\
1,&u\ge1,
\end{cases}   
\end{equation*}
and $g_n(x):=r(x+n)$. Notice that each $g_n$ is continuous, non-decreasing, $1$-Lipschitz, and takes values in $[0,1]$. Moreover, $g_n(x)=0$ for $x \leq -n$ and $g_n(x)=1$ for $x \ge -n+1$. We let $g_{\infty} \equiv 1$. Because $g_n \ne 1$ only on the set $(-\infty,-n+1)$, we have that $\|g_n-g_\infty\|_{L^2(\mathbb{R}, \mu_0)}^2 \le \mu_0((-\infty,-n+1))$, which converges to zero as $n \rightarrow \infty$; in this section -- notice the abuse of notation -- $\mu_0(\ud x)=\mathfrak{q}_{\bar{\nu}}(x;0,\bar{\sigma})\,\ud x$. Therefore, $g_n \to g_\infty$ in $L^2(\mathbb{R},\mu_0)$.

We now consider the ask-side map $\phi_{g_n}^{+}$ as in \eqref{eq::g_mappings} and fix $x \geq 0$. Because \(g_n(z)=1\)
for \(z\ge -n+1\), then $\Pi^+(1)=0$ and $\Phi^+(1)=0$, and only the region \(z<-n+1\) contributes to the numerator and denominator. Instead, on the flat region $z\le -n$, we have $g_n(z)=0$, hence $\Pi^+(g_n(z))=1$ and  $\Phi^+(g_n(z))=1/2$. Therefore, the flat part of the denominator is 
\begin{equation*}
\int_{-\infty}^{-n}\mathfrak{q}_{\nu}(x-z;0,\sigma)\ud z
= \int_{x+n}^{\infty}\mathfrak{q}_{\nu}(u;0,\sigma)\ud u \sim n^{-\nu},
\end{equation*}
where in the last step we have used the fact that the Student-$t$ survival tail is regularly varying with index $-\nu$. On the other hand, the transition region $(-n, -n+1)$ has length one, and on that region $x-z \sim n$. hence, its
contribution is only $O(n^{-(\nu+1)})$. As a consequence, the transition region is negligible relative to the far-left flat region. We thus have
\begin{equation*}
    \int_{-\infty}^{+\infty} \Pi^+(g_n(z))\mathfrak{q}_{\nu}(x-z;0,\sigma)\ud z
\sim
\int_{x+n}^{\infty}\mathfrak{q}_{\nu}(u;0,\sigma)\ud u,\quad 
\end{equation*}
and 
\begin{equation*}
\int_{-\infty}^{+\infty} \Phi^+(g_n(z))\mathfrak{q}_{\nu}(x-z;0,\sigma)\ud z
\sim
1/2\int_{x+n}^{\infty}\mathfrak{q}_{\nu}(u;0,\sigma)\ud u.
\end{equation*}
The previous two identities imply, in particular, that $\phi_{g_n}^+(x)\rightarrow 1/2$. On the other hand, for the \(L^2(\mathbb{R},\mu_0)\)-limit \(g_\infty\equiv 1\), the ask-side denominator is identically zero:
\begin{equation*}
\int_{-\infty}^{+\infty} \Pi^+(g_\infty(z))\mathfrak{q}_{\nu}(x-z;0,\sigma)\ud z=0.
\end{equation*}
Notice that the natural endpoint convention is $\phi_{g_\infty}^+(x)=M=1$. Therefore $\phi_{g_n}^+(x)\rightarrow 1/2 \neq 1=\phi_{g_\infty}^+(x)$. In addition, notice that the problem is not the lack of dominated convergence for the numerator and denominator separately, but for the ratio, which converges to a value that depends on the way the endpoint function is approached.

The main issue is that the previous discontinuity propagates to the full Student-$t$ fixed-point operator in \eqref{eq::Student_fixed_point_operator}. For every fixed \(z\ge0\), \(\varphi^+_{g_n}(z)\to1/2\). For every fixed \(z<0\), \(\varphi^-_{g_n}(z)\to1/2\). Hence \(\varphi_{g_n}(z)\to1/2\) for every fixed \(z\). However, for the limit function \(g_\infty\equiv1\) we have 
\begin{equation*}
\phi_{g_\infty}(z)
=
\begin{cases}
1,&z\ge0,\\[0.4ex]
1/2,&z<0,
\end{cases}
\end{equation*}
and therefore 
\begin{equation*}
(\mathcal{T}_{t,\nu}g_\infty)(x)=
\frac{1}{2}+
\frac{1}{2}\int_0^\infty
\left[
\frac1{N_t}\mathfrak{q}_{\nu}(x-z;0,\sigma)
+
\frac{N_t-1}{N_t}\mathfrak{\bar{q}}_{\nu}(x,z;0,\sigma)
\right]\ud z.
\end{equation*}
Now, for $x \in [0,1]$
\begin{equation*}
    \int_0^\infty
\left[
\frac1{N_t}\mathfrak{q}_{\nu}(x-z;0,\sigma)
+
\frac{N_t-1}{N_t}\mathfrak{\bar{q}}_{\nu}(x,z;0,\sigma)
\right]\ud z \geq \frac{1}{2}.
\end{equation*}
Hence
\begin{equation*}
    (\mathcal{T}_{t,\nu}g_\infty)(x)-\frac{1}{2} \geq \frac{1}{4},
\qquad x\in[0,1].
\end{equation*}
Because the Student-$t$ reference measure $\mu_0$ has strictly positive density on $[0,1]$, we have
\begin{equation*}
\liminf_{n\to\infty}
\|\mathcal{T}_{t,\nu}g_n-\mathcal{T}_{t,\nu}g_\infty\|_{L^2(\mu_0)}^2
\ge
\left(\frac{1}{4}\right)^2\mu_0([0,1])>0.
\end{equation*}
As a consequence, $\mathcal{T}_{t,\nu}$ is not continuous in the broad $L^2(\mathbb{R}, \mu_0)$ topology, even when $\mu_0$ is Student-$t$.

Importantly, the key object in the continuity calculation is the noise kernel inside the pricing operator. With Gaussian noise, a far-away transition is exponentially localized at its closest edge, where \(g_n\) is already close to the endpoint value. With Student-$t$ noise, the polynomial survival tail keeps the remote flat region visible at order \(n^{-\nu}\), while the transition region contributes only order \(n^{-(\nu+1)}\). Therefore, the conditional-expectation ratio remembers the remote tail region and fails to converge to the endpoint value.

\section{Proofs of the results in Section \ref{section::Equilibrium}}\label{app::proof_existence_fixed_point}
\subsection{Proof Lemma \ref{lemma::monotonicity}.}\label{app::proof_Lemma_monotonicity}
We prove the lemma for the function \( u^{+} \); the proof for \( u^{-} \) is similar. Additionally, the lemma holds in every time period \( t \); therefore, we simplify the notation by omitting the time dependency on the period \( t \), and we prove it over a time interval $[0,T]$.\\
\indent\textit{Proof of (i)} Define $M_t:=\mathbb{E}[\Pi^{+}(g(B_T)) \vert \mathcal{F}_t]$. It is a uniformly integrable $(\mathbb{P},\mathcal{F}_t)$-martingale. Moreover, by using the independence of $T$ and $B$ under $\mathbb{P}$ (second equality), and the Markov property of $B$ together, again, with the independence property (last equality), we have
\begin{equation*}
    \mathtt{1}_{\{T>t\}} M_t = \mathtt{1}_{\{T>t\}} \mathbb{E}[\Pi^{+}(g(B_T)) \vert \mathcal{F}_t]= \mathtt{1}_{\{T>t\}} \frac{\mathbb{E}[\Pi^{+}(g(B_T)) \mathtt{1}_{\{T>t\}} \vert \mathcal{F}_t^{B}]}{\mathbb{P}(T>t)}=\mathtt{1}_{\{T>t\}} \frac{u^{+}(t,B_t)}{\mathbb{P}(T>t)},
\end{equation*}
where $(\mathcal{F}_t^{B})_{t \in [0,T]}$ is the natural filtration of the Brownian motion $B$. The claim follows from an application of Girsanov theorem.\\
\indent\textit{Proof of (ii)} The proof of $(ii)$ begins with the following observation. Let $Y \overset{d}{\sim}\mathtt{T}_{\nu}(x,\sigma)$. Then, it can be written as a Gaussian scale mixture with an inverse-gamma distributed variance, namely
\begin{equation*}
    Y | T \overset{d}{\sim} \mathtt{Normal}(x,T),\quad T \overset{d}{\sim}\texttt{Inv-Gamma}\left(\frac{\nu}{2}, \frac{\sigma^2 \nu}{2}\right).
\end{equation*}
As a consequence, $Y:=B_T$ with $B_0=x$ can be interpreted as a Brownian motion starting from $x$ and observed at an independent inverse-gamma distributed random time $T$. Hence, 
\begin{equation*}
    \phi_g^{+}(x) = \frac{\mathbb{E}[\Phi^{+}(g(Y))]}{\mathbb{E}[\Pi^{+}(g(Y))]} = \frac{\mathbb{E}[\Phi^{+}(g(B_T))]}{\mathbb{E}[\Pi^{+}(g(B_T))]} = \frac{\mathbb{E}[\Psi^{+}(g(B_T))\Pi^{+}(g(B_T))]}{\mathbb{E}[\Pi^{+}(g(B_T))]}=\mathbb{E}^{\mathbb{Q}^{+}}[\Psi^{+}(g(B_T)].
\end{equation*}
Analogously, $\phi_g^{-}(x)=\mathbb{E}^{\mathbb{Q}^{-}}[\Psi^{-}(g(B_T)].$\\
\indent \indent\textit{Proof of (iii)} The proof adheres to the same approach as outlined in Lemma 4.1 of \cite{ccetin2023power}, after the observation that $\phi^{\pm}_{g}(x)$ can be written as $ \phi^{\pm}_{g}(x) = \frac{\mathbb{E}[\Phi^{+}(g(x+R))]}{\mathbb{E}[\Pi^{+}(g(x+R))]}$ where $R\overset{d}{\sim}\mathtt{T}_{\nu}(0,1)$.

\subsection{Proof of Lemma \ref{lem::keyproperties_F}}\label{app::proof_keyproperties_F}
We start by proving that $\lim_{x\rightarrow+\infty}F(x,t,Y^{t-1})=M$. Because $\operatorname{supp}(V)=[m,M]$ and $F(\cdot,t,Y^{t-1})$ is a continuous non-decreasing solution of \eqref{eq::fixed_point_equation}, we have $F(x,t,Y^{t-1}) \leq M\,\forall x \in \mathbb{R}$. By contradiction, we assume that $\lim_{x\rightarrow+\infty}F(x,t,Y^{t-1}):=L<M$.   Since \(\varphi_F\) is bounded by \(m\) and \(M\), dominated convergence applies. 
\begin{equation}
    \lim_{x\rightarrow +\infty}\int_{-\infty}^{+\infty}\phi_{F(\cdot,t,Y^{t-1})}(x+z)\mathfrak{q}_{\nu}(z; 0, \sigma)\,\ud z = \int_{-\infty}^{+\infty}\mathfrak{q}_{\nu}(z; 0, \sigma)\lim_{x\rightarrow +\infty}\phi_{F(\cdot,t,Y^{t-1})}(x+z)\,\ud z.
\end{equation}
Moreover, $\lim_{x\rightarrow+\infty}\phi_{F(\cdot,t,Y^{t-1})}(x+z)=\Psi^{+}(F(\infty,t,Y^{t-1})-)$. It is enough to prove the previous result for $x+z \geq 0$; the case $x+z<0$ is analogous. To simplify notation, in the rest of the proof we set $F_t:=F_t(\cdot):=F(\cdot,t,Y^{t-1})$. By definition, we have
\begin{equation*}
\begin{split}
    \phi_{F_t}^{+}(x+z)=\frac{\int_{-\infty}^{+\infty}\Psi^{+}_{t}(F_t(u))\Pi^{+}_{t}(F_t(u))\mathfrak{q}_{\nu}(x+z-u; 0, \sigma)\,\ud u}{\int_{-\infty}^{+\infty}\Pi^{+}_{t}(F_t(u))\mathfrak{q}_{\nu}(x+z-u; 0, \sigma)\,\ud u}
\end{split}
\end{equation*}
We now show that the following normalized weighted measure converges as $x\rightarrow+\infty$ to the point mass at $\infty$. 
\begin{equation}\label{eq::normalized_weighted}
    \frac{\Pi^{+}_{t}(F_t(u))\mathfrak{q}_{\nu}(x+z-u; 0, \sigma)\,\ud u}{\int_{-\infty}^{+\infty}\Pi^{+}_{t}(F_t(u))\mathfrak{q}_{\nu}(x+z-u; 0, \sigma)\,\ud u},
\end{equation}
i.e., it places arbitrarily small mass on any fixed interval $(-\infty, a]$ as $x\rightarrow+\infty$ with $0<a<\infty$.  We first bound the numerator; $C(\nu,\sigma)>0$ is a constant that may vary from line to line. For $\nu>2$ we have
\begin{equation*}
    \begin{split}
        \int_{-\infty}^{a}\Pi^{+}_{t}(F_t(u))\mathfrak{q}_{\nu}(x+z-u; 0, \sigma)\,\ud u &\leq \int_{-\infty}^{a}\mathfrak{q}_{\nu}(x+z-u; 0, \sigma)\,\ud u\\
        &\leq C(\nu,\sigma) \int_{-\infty}^{a} \left[\frac{(x+z-u)^2}{\nu \sigma^2}\right]^{-\frac{(\nu+1)}{2}}\,\ud u\\
        &=C(\nu,\sigma) (x+z-a)^{-\nu},
    \end{split}
\end{equation*}
which converges to $0$ as $x \rightarrow +\infty$. Now, we find a strictly positive lower bound for the denominator, after a change of variable $k:=u-(x+z)$. Fix $\delta>0$. Because $F_t(x) \uparrow L < M$ as $x \rightarrow +\infty$, there exists a $\bar{x}$ such that for all $x \geq \bar{x}$ and $k \in [-\delta,\delta]$, $F_t(x+z+k) \leq L + \varepsilon_0 < M$ for some small $\varepsilon_0>0$. Since $\Pi_t^{+}(v)=\mathbb{P}_{t}(V \geq v)$ is strictly positive for every $v < M$, there exists $\varepsilon:=\inf_{v \in [m,L+\varepsilon_0]}\Pi_t^{+}(v)>0$. Therefore, for all large $x$ and $k \in [-\delta,\delta]$ we have $\Pi_t^{+}(F_t(x+z+k))\geq\varepsilon$; hence the denominator is uniformly bounded away from zero for large $x$. We have shown that the normalized weighted measure in \eqref{eq::normalized_weighted} converges to zero as $x \rightarrow \infty$, and therefore 
\begin{equation*}
    \lim_{x\rightarrow\infty}\phi_{F_t}(x+z)=\Psi_t^{+}(F_t(\infty)-).
\end{equation*}
Putting, together the previous equation and \eqref{eq::fixed_point_equation} we obtain that $F_t(\infty)=\Psi_t^{+}(F_t(\infty)-)$. However, $\Psi_t^{+}(x-)>x$ for every $x<M$: the previous identity contrasts the assumption $\lim_{x \rightarrow \infty}F_t(x)<M$. Therefore, we must have  $\lim_{x \rightarrow \infty}F_t(x)=M$. Similarly, $\lim_{x \rightarrow -\infty}F_t(x)=m$.

Finally, under the assumption that $h(\cdot, t, Y^{t-1})$ in non-decreasing and non-constant, we need to prove that $F^{'}(x,t,Y^{t-1})$ is strictly positive; it will turn out that Lemma \ref{lemma::monotonicity} is a key result. First, we write down $F(x,t,Y^{t-1})$ in the following way
\begin{equation}\label{eq::rewriting_F}
    \begin{split}
        F(x,t,Y^{t-1}) &= \frac{1}{N_t}\int_{-\infty}^{+\infty} \mathfrak{q}_{\nu}(x-z;0,\sigma)\phi_{F(\cdot,t,Y^{t-1})}(z)\,\ud z + \frac{N_t-1}{N_t}\int_{-\infty}^{+\infty}\bar{\mathfrak{q}}_{\nu}(x,z;0,\sigma)\phi_{F(\cdot,t,Y^{t-1})}(z)\,\ud z\\
                       &:=\frac{1}{N_t} G(x,t,Y^{t-1}) + \frac{N_t-1}{N_t} H(x,t,Y^{t-1})
    \end{split}
\end{equation}
Lemma \ref{lemma::monotonicity}-\emph{(iii)} implies that $\Delta_{F(\cdot,t,Y^{t-1})}:=\phi^{+}_{F(\cdot,t,Y^{t-1})}(0)-\phi^{-}_{F(\cdot,t,Y^{t-1})}(0)>0$; Stieltjes' derivatives is therefore a positive measure that assigns a strictly positive measure $\Delta$ to the point zero. We now compute
\begin{equation*}
\begin{split}
    G^{'}(x,t,Y^{t-1})&=\int_{-\infty}^{+\infty}\partial_x \mathfrak{q}_{\nu}(x-z;0,\sigma)\phi_{F(\cdot,t,Y^{t-1})}(z)\,\ud z = \int_{-\infty}^{+\infty}\mathfrak{q}_{\nu}(x-z;0,\sigma)\,  \phi_{F(\cdot,t,Y^{t-1})}(\ud z)\\
                      &\geq \Delta_{F(\cdot,t,Y^{t-1})} \mathfrak{q}_{\nu}(x ;0,\sigma) > 0,\quad \forall x \in \mathbb{R},
\end{split}
\end{equation*}
where we have used Assumption \ref{ass::integrability} to interchange the integral with the partial derivative. Regarding $H(x,t,Y^{t-1})$, for $x \neq0$ we have
\begin{equation*}
\begin{split}
    H(x,t,Y^{t-1})&=\frac{1}{x}\int_{0}^{x} G(u, t,Y^{t-1})\,\ud u.
\end{split}
\end{equation*}
So
\begin{equation*}
\begin{split}
    H^{'}(x,t,Y^{t-1})&=\frac{1}{x} G(x, t,Y^{t-1}) - \frac{1}{x^2}\int_{0}^{x} G(u, t,Y^{t-1})\,\ud u.
\end{split}
\end{equation*}
From the previous step, $G(\cdot, t,Y^{t-1})$ is strictly increasing. If $x>0$, then $\frac{1}{x}\int_{0}^{x} G(u, t,Y^{t-1})\,\ud u$ is strictly less than $G(x, t,Y^{t-1})$, which implies that $H^{'}(x, t,Y^{t-1})>0$. Instead, if $x<0$, the previous average over $[x,0]$ is strictly greater than $G(x, t,Y^{t-1})$, and since $\frac{1}{x}<0$, we have $H^{'}(x, t,Y^{t-1})>0$. Finally, for $x=0$, we have  
\begin{equation*}
    H^{'}(0, t,Y^{t-1}) = \frac{1}{2}\,G^{'}(0, t,Y^{t-1})
       = \frac{1}{2}\int_{-\infty}^{+\infty} \mathfrak{q}_{\nu}(-z;0,\sigma)\phi_{F(\cdot,t,Y^{t-1})}(\ud z) > 0.
\end{equation*}
Putting together the previous observations, we obtain $F^{'}(x, t,Y^{t-1})>0,\,\forall x \in \mathbb{R}$.
\subsection{Proof of the existence of a fixed-point for the Student-\texorpdfstring{$t$}{t} operator}\label{subsec::proof_existence_fixed_point}
The proof of Theorem \ref{th::existence_of_a_fixed_point} involves several steps given by the propositions and lemmas that follow.
\begin{proposition}\label{prop:compact-class}
For every admissible choice of \(R,L,\varepsilon_t,\underline c_t^\pm,\overline c_t^\pm\), the class \(\mathcal K_{t,\nu}\) in Definition~\ref{def:template-class} is non-empty, convex, and
compact in \(C_b(\mathbb R)\) under the sup norm.
\end{proposition}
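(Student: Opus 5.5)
The plan is to verify the three properties separately, using only the explicit structure of Definition~\ref{def:template-class}. The one genuinely non-local point is compactness on the non-compact line $\mathbb R$. The tail condition (iii) is exactly what will supply it.

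\emph{Non-emptiness.} I would build an explicit member of the class. Fix $c^\pm\in(\underline c_t^\pm,\overline c_t^\pm)$ and set
$g(x):=M-c^+(1+x)^{\rho_t^+}$ for $x\ge R$ and $g(x):=m+c^-(1+|x|)^{\rho_t^-}$ for $x\le -R$. On $[-R,R]$ I would join the two values by linear interpolation. Condition (iii) then holds with $c_g^\pm=c^\pm$ and zero error, so $\varepsilon_t>0$ is not even needed. Continuity is immediate. On the tails the slope is at most $c^\pm|\rho_t^\pm|(1+R)^{\rho_t^\pm-1}$, and the interpolating segment has slope at most $(M-m)/(2R)$.

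Here I must use what ``admissible'' means. The tail values must lie in $[m,M]$, i.e.\ $c^\pm(1+R)^{\rho_t^\pm}\le M-m$. The largest of these slopes must also be at most $L$. I would state this as the standing compatibility condition among $(R,L,\underline c_t^\pm)$ that is built into admissibility. It can always be met by taking $R$ large, or $L$ large, relative to $\underline c_t^\pm$.

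\emph{Convexity.} Conditions (i), (ii) and continuity are clearly preserved by convex combinations $g=\lambda g_1+(1-\lambda)g_2$. For (iii), put $c:=\lambda c_{g_1}^++(1-\lambda)c_{g_2}^+$, which lies in $[\underline c_t^+,\overline c_t^+]$. Writing $M-g_i(x)=c_{g_i}^+(1+x)^{\rho_t^+}r_i(x)$ with $|r_i(x)-1|\le\varepsilon_t(x)$, I get
\begin{equation*}
\frac{M-g(x)}{c(1+x)^{\rho_t^+}}=\frac{\lambda c_{g_1}^+r_1(x)+(1-\lambda)c_{g_2}^+r_2(x)}{\lambda c_{g_1}^++(1-\lambda)c_{g_2}^+}.
\end{equation*}
This is a convex combination of $r_1(x)$ and $r_2(x)$, so it lies in $[1-\varepsilon_t(x),1+\varepsilon_t(x)]$. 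The left tail is handled the same way.

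\emph{Compactness.} I would prove the class is totally bounded and closed in $C_b(\mathbb R)$.

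For total boundedness, condition (iii) gives a bound that is uniform over the class:
\begin{equation*}
M-g(x)\le\overline c_t^+(1+\varepsilon_t(R))(1+x)^{\rho_t^+}\quad\text{for }x\ge R,
\end{equation*}
and the same on the left. Since $\rho_t^\pm<0$, every $g\in\mathcal K_{t,\nu}$ is uniformly within $\eta$ of $M$ on $[x_\eta,\infty)$ and of $m$ on $(-\infty,-x_\eta]$, for some $x_\eta$ independent of $g$. Equivalently, each $g$ extends continuously to the two-point compactification $[-\infty,+\infty]$ with $g(\pm\infty)$ equal to $M$ and $m$. Under this identification the family is uniformly bounded. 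It is also equicontinuous: it is $L$-Lipschitz on compacts, and the uniform tail bound gives equicontinuity at $\pm\infty$. Arzel\`a--Ascoli on the compact space $[-\infty,+\infty]$ then yields relative compactness in the sup norm, which is the same as relative compactness in $C_b(\mathbb R)$.

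For closedness, let $g_n\to g$ uniformly. Then (i), (ii) and continuity pass to the limit. The constants $c_{g_n}^\pm$ lie in compact intervals, so along a subsequence they converge to some $c^\pm$ in those same intervals. The inequality in (iii) is pointwise and non-strict, so letting $n\to\infty$ at each fixed $x$ gives (iii) for $g$ with $c_g^\pm=c^\pm$.

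I expect the main obstacle to be the existential constant $c_g^\pm$ in (iii), in both convexity and closedness. It is handled by the averaging of constants and by extracting convergent subsequences, as above. The only other delicate point is making the admissibility condition for non-emptiness explicit.
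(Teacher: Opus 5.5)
Your proposal is correct and follows essentially the same route as the paper. Like the paper, you use an explicit interpolated member for non-emptiness, you average the tail constants $c_g^\pm$ for convexity, and you combine Arzel\`a--Ascoli with the uniform tail envelopes for compactness (you package this via the two-point compactification). Your argument is in fact more complete than the paper's, which only yields relative compactness and never checks that $\mathcal K_{t,\nu}$ is closed (your subsequence extraction of $c_{g_n}^\pm$ supplies exactly this step). The paper also tacitly assumes the compatibility among $(R,L,\underline c_t^\pm)$ that you make explicit for non-emptiness.
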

\begin{proof}
We start by proving that the class $\mathcal K_{t,\nu}$ is non empty. We choose any $c^+ \in (\underline c_t^+, \overline c_t^+)$ and $c^- \in (\underline c_t^-, \overline c_t^-)$, and define a function $g_0$ by $g_0(x)=M-c^+(1+x)^{\rho_t^+}$ for $x \geq 2 R$ and $g_0(x)=m+c^-(1+|x|)^{\rho_t^-}$ for $x \leq - 2R$ with a smooth interpolation on $[-2R, 2R]$ so that $m \leq g_0 \leq M$ and $\operatorname{Lip}(g_0)\leq L$. Then $g_0 \in \mathcal K_{t,\nu}$. Second, we prove convexity of $\mathcal K_{t,\nu}$. We let $g_1, g_2 \in \mathcal K_{t,\nu}$ and $\lambda \in [0,1]$, and we set $g:=\lambda g_1 + (1-\lambda) g_2$. Then $m \leq g \leq M$ and $\operatorname{Lip}(g)\leq L$. We now define $c^\pm:=\lambda c^\pm_{g_1}+(1-\lambda)c^\pm_{g_2}$, which belongs to $[\underline c_t^\pm, \overline c_t^\pm]$. For $x \geq R$,
\begin{equation*}
    M-g(x)=\lambda c^+_{g_1}(1+x)^{\rho_t^+}(1+e_1(x))+(1-\lambda)c^+_{g_2}(1+x)^{\rho_t^+}(1+e_2(x)),
\end{equation*}
where $|e_i(x)| \leq \varepsilon_t(x)$. Hence 
\begin{equation*}
\left| \frac{M-g(x)}{c^{+}(1+x)^{\rho_t^{+}}}-1 \right| \leq \varepsilon_t(x).
\end{equation*}
The left tail is similar. Therefore, $g \in \mathcal K_{t,\nu}$. Finally, we prove compactness. The class is uniformly bounded and equi-continuous. In addition, for $x \geq R_0 \geq R$, we have
\begin{equation*}
    \sup_{g \in \mathcal K_{t,\nu}} |M-g(x)| \leq \overline c_t^+(1+\varepsilon_t(R_0))(1+R_0)^{\rho_t^+},
\end{equation*}
which converges to zero as $x\rightarrow+\infty$ since $\rho_t^{+}<0$. Similarly, $\sup_{g \in \mathcal K_{t,\nu}} |g(x)-m|$ converges to zero as $x\rightarrow -\infty$. The conclusion follows by an application of Ascoli-Arzelà on compacts set, together with the common tail envelopes. 
\end{proof}

\begin{lemma}\label{lem:student-convolution}
Let \(f:\mathbb{R}\to(0,\infty)\) be locally bounded, ultimately monotone, and regularly
varying at \(+\infty\) with index \(\eta<0\).
\begin{enumerate}
\item[(i)] If \(-\eta<\nu\), then $\int_{-\infty}^{+\infty}\mathfrak q_\nu(x-z;0,\sigma)\,f(z)\,\ud z \sim f(x)$, as $x\to+\infty$. 
\item[(ii)] If, in addition, \(\eta\in(-1,0)\), then $\int_{-\infty}^{+\infty}\bar{\mathfrak q}_\nu(x,z;0,\sigma)\,f(z)\,\ud z
\sim
\frac{1}{1+\eta}\,f(x)$ as $ x\to+\infty$.
\end{enumerate}
\end{lemma}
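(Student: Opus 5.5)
For (i), we write the convolution as $\int \mathfrak q_\nu(u;0,\sigma)\,f(x-u)\,\ud u$ and split the $u$-line into three regions: the central region $|u|\le \delta x$, the left region $u<-\delta x$ (where $z=x-u>(1+\delta)x$), and the right region $u>\delta x$ (where $z<(1-\delta)x$, which includes all $z\le 0$ and the region where $f$ is merely locally bounded).

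On the central region, the uniform convergence theorem for regularly varying functions gives $f(x-u)/f(x)\to 1$ uniformly in $|u|\le\delta x$ as $\delta\downarrow0$. Since $\int\mathfrak q_\nu=1$, this region contributes $f(x)(1+o(1))$ once we let $\delta\downarrow0$ after $x\to\infty$.

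On the left region, $f$ is ultimately monotone and decreasing in the mean, because $\eta<0$. Potter bounds give $f(z)\le C f(x)$ for $z\ge x$, so this piece is at most $C f(x)\,\mathbb P(|Z|>\delta x)=o(f(x))$.

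The right region is the main obstacle. We split it at $z=x/2$. For $x/2\le z\le(1-\delta)x$ we have $u\ge\delta x$, so $\mathfrak q_\nu(u)\le C(\delta x)^{-\nu-1}$, and Potter gives $f(z)\le Cf(x)$; the contribution is $O(x^{-\nu}f(x))=o(f(x))$. For $z<x/2$ we have $\mathfrak q_\nu(x-z)\le C(x-z)^{-\nu-1}$. The remaining integral $\int_{-\infty}^{x/2}(x-z)^{-\nu-1}f(z)\,\ud z$ splits into three pieces:
\begin{enumerate}
\item[(a)] on $z\le 0$, the integral is $O(x^{-\nu})$ provided $f$ is bounded on $(-\infty,0]$ (in the applications $f=M-g$, which is bounded; otherwise we impose this as a convention);
\item[(b)] on $[0,A]$, it is $O(x^{-\nu-1})$ by local boundedness;
\item[(c)] on $[A,x/2]$, Potter bounds give $f(z)\le C f(x)(x/z)^{-\eta+\epsilon}$, and since $x-z\ge x/2$ the contribution is at most $C f(x) x^{-\nu-1}\int_A^{x/2}(x/z)^{-\eta+\epsilon}\,\ud z=O\big(f(x)x^{-\nu}\big)$ when $-\eta+\epsilon<1$, and $O\big(f(x)x^{-\nu-1+(-\eta+\epsilon)}\big)$ otherwise.
\end{enumerate}
In every case the bound is $o(f(x))$: pieces (a) and (b) are $o(f(x))$ because $f(x)\ge x^{\eta-\epsilon}$ and $-\eta<\nu$, and piece (c) is $o(f(x))$ because $-\eta+\epsilon<\nu+1$. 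This is exactly where the hypothesis $-\eta<\nu$, equivalently the inequalities in Lemma \ref{lem:candidate-exponents}, enters.

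For (ii), Fubini gives
\[
\int\bar{\mathfrak q}_\nu(x,z;0,\sigma)f(z)\,\ud z=\frac1x\int_0^x (\mathfrak q_\nu*f)(u)\,\ud u .
\]
By (i), $(\mathfrak q_\nu*f)(u)\sim f(u)$, and $\mathfrak q_\nu*f$ is locally bounded. Karamata's theorem with index $\eta>-1$ then gives $\int_0^x f(u)\,\ud u\sim x f(x)/(1+\eta)$; the integral diverges, so the initial segment on which the asymptotic equivalence in (i) fails is negligible. Dividing by $x$ gives $\frac{1}{1+\eta}f(x)$.
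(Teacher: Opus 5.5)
Your proof is correct and follows essentially the paper's route. Both split the convolution into three regions: a window where regular variation gives $f(x-u)/f(x)\to 1$, an intermediate region controlled by the Student-$t$ tail mass, and a far region $z<x/2$ handled with $\mathfrak q_\nu(x-z)\le C x^{-\nu-1}$ and Karamata/Potter bounds; part (ii) is then Fubini, part (i) and Karamata's theorem. Your separate treatment of $z\le 0$ is a genuine improvement. The paper's proof only bounds $\int_0^{x/2} f$, yet the negative-half-line piece is exactly where $-\eta<\nu$ (rather than merely $-\eta<\nu+1$) is used. It is also where boundedness of $f$ on $(-\infty,0]$ is needed, a hypothesis missing from the statement but satisfied in the paper's applications.
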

\begin{proof}
We start from point $(i)$. By assumption, there exists a constant $A>0$ such that for all large $x$ and all $|u| \leq A$, 
\begin{equation*}
    \left|\frac{f(x-u)}{f(x)}-1\right|<\varepsilon.
\end{equation*}
We now split the integral into the sum of $I_1(x)$ and $I_2(x)$, where 
\begin{equation*}
    I_1(x):=\int_{|x-z|\leq A}\mathfrak q_{\nu}(x-z;0,\sigma)f(z)\ud z,\,\,\text{and}\,\,I_2(x):=\int_{|x-z| > A}\mathfrak q_{\nu}(x-z; 0,\sigma)f(z)\ud z
\end{equation*}
We have that 
\begin{equation*}
    I_1(x)=f(x)\left(\int_{|u| \leq A} \mathfrak q_{\nu}(u; 0,\sigma)\,\ud u + o(1) \right).
\end{equation*}
Instead, as regards as $I_2(x)$, we further split it into $|x-z|>A, z \geq x/2$ and $|x-z|>A, z < x/2$. On the first region, one can choose $A$ sufficiently large so that the integral is $<\varepsilon$. On the other hand, on the region $z<x/2$, one has $|x-z|>x/2$, and therefore $\mathfrak q_{\nu}(x-z; 0,\sigma) \leq C x^{-(\nu+1)}$. Because of the assumptions on $f$, this part is at most $C x^{-(\nu+1)}\int_{0}^{x/2}f(z)\,\ud z$, and, by using the fact that $f$ is regularly varying at $+\infty$ with index $\eta<0$, Karamata's theorem gives
\begin{equation*}
    \int_{0}^{x/2}f(z)\,\ud z=
    \begin{cases}
        O(x^{1+\eta}),\quad \eta > -1\\
        O(1),\quad\quad\,\,\, \eta < -1.
    \end{cases}
\end{equation*}
The conclusion follows from the fact that $-\eta<\nu$. We note that of $\eta=-1$, the integral is slowly
varying/logarithmic, and the contribution remains $o(f(x))$ under
$-\eta < \nu$.

We now consider point $(ii)$. By using the definition of $\bar{\mathfrak q}$ we write, for $x>0$,
\begin{equation*}
    \int_{-\infty}^{+\infty}\mathfrak{\bar{q}}(x,z;0,\sigma)f(z)\,\ud z = \frac{1}{x}\int_{0}^{x}\left(\int_{-\infty}^{+\infty}\mathfrak{q}(x-z;0,\sigma)f(z)\,\ud z\right)\,\ud u \sim \frac{1}{1+\eta}f(x),
\end{equation*}
where in the last step we used part $(i)$ together with the fact that $f$ is regularly varyng at $+\infty$ with index $\eta \in (-1,0)$. 
\end{proof}
\begin{lemma}\label{lem:tail-action-phi}
Let \(g\in \mathcal K_{t,\nu}\), where $\mathcal K_{t,\nu}$ is defined in Definition \ref{def:template-class}. Then there exists a continuous decreasing function
\(\widetilde\varepsilon_t(x)\downarrow 0\) such that, uniformly over \(g\in\mathcal K_{t,\nu}\),
\begin{equation*}
M-\phi_g^+(x)
=
\beta_t^+\,c_g^+(1+x)^{\rho_t^+}\bigl(1+\theta_g^+(x)\bigr),
\qquad
|\theta_g^+(x)|\le \widetilde\varepsilon_t(x),
\qquad x\ge R,
\end{equation*}
and
\begin{equation*}
\phi_g^-(x)-m
=
\beta_t^-\,c_g^-(1+|x|)^{\rho_t^-}\bigl(1+\theta_g^-(|x|)\bigr),
\qquad
|\theta_g^-(|x|)|\le \widetilde\varepsilon_t(|x|),
\qquad x\le -R,
\end{equation*}
where $\phi_g^{+}$ and $\phi_g^{-}$ are defined in \eqref{eq::g_mappings}. 
\end{lemma}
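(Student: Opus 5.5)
We prove only the ask-side statement; the bid side follows by the symmetric argument with $m$, $\Pi_t^-$, $\Psi_t^-$ and $\rho_t^-$. Throughout we use the representation
\begin{equation*}
M-\phi_g^+(x)=\frac{\int_{-\infty}^{+\infty}\bigl(M-\Psi_t^+(g(z))\bigr)\Pi_t^+(g(z))\,\mathfrak q_\nu(x-z;0,\sigma)\,\ud z}{\int_{-\infty}^{+\infty}\Pi_t^+(g(z))\,\mathfrak q_\nu(x-z;0,\sigma)\,\ud z},
\end{equation*}
which follows from $\Phi_t^+=\Psi_t^+\Pi_t^+$. The plan is to compute the asymptotics of the numerator and of the denominator separately, with error terms that are uniform over $\mathcal K_{t,\nu}$, and then take the ratio.

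First, insert the template from Definition~\ref{def:template-class}(iii). For $z\ge R$ set $s=M-g(z)=c_g^+(1+z)^{\rho_t^+}(1+e_g(z))$ with $|e_g|\le\varepsilon_t$. Assumption~\ref{ass:endpoint-regularity} then gives
\begin{equation*}
\Pi_t^+(g(z))=C_{t,+}(c_g^+)^{\kappa_t^+}(1+z)^{\kappa_t^+\rho_t^+}\bigl(1+\delta_1(z)\bigr),
\end{equation*}
\begin{equation*}
\bigl(M-\Psi_t^+(g(z))\bigr)\Pi_t^+(g(z))=\beta_t^+C_{t,+}(c_g^+)^{\kappa_t^++1}(1+z)^{(\kappa_t^++1)\rho_t^+}\bigl(1+\delta_2(z)\bigr).
\end{equation*}
The errors $\delta_i$ are controlled by $\varepsilon_t(z)$ together with the $o(1)$ moduli in Assumption~\ref{ass:endpoint-regularity}, evaluated at $s\le \overline c_t^+(1+\varepsilon_t(z))(1+z)^{\rho_t^+}\to0$. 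This control is uniform in $g$ because $c_g^+\in[\underline c_t^+,\overline c_t^+]$. Call the two right-hand profiles $f_1$ and $f_2$. They are regularly varying with indices $\eta_1=\kappa_t^+\rho_t^+$ and $\eta_2=(\kappa_t^++1)\rho_t^+$, and Lemma~\ref{lem:candidate-exponents}(ii)--(iii) gives $-\eta_i<\nu$.

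Second, show that each convolution integral is asymptotic to the corresponding profile at $x$, uniformly in $g$. We split as in the proof of Lemma~\ref{lem:student-convolution}(i), writing the integrand as the template part on $z\ge R$ plus a remainder on $z<R$.
\begin{itemize}
\item On the near window $|x-z|\le A$, both the uniform regular variation of the pure power profiles and the smallness of $\delta_i$ give the factor $1+o(1)$ uniformly.
\item On the region $z\ge x/2$ with $|x-z|>A$, the integrand is at most a constant times $f_i(x)$ times the kernel mass outside $[-A,A]$, which is small.
\item On the region $z<x/2$ (including $z<R$, where $\Pi_t^+\le1$ and $(M-\Psi_t^+)\Pi_t^+\le (M-m)$), the kernel is $O(x^{-(\nu+1)})$. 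Karamata's theorem then bounds this contribution by $O(x^{-\nu})+O(x^{-(\nu+1)}\int_0^{x/2}f_i)$, which is $o(f_i(x))$ because $-\eta_i<\nu$ and $-\eta_i<2$.
\end{itemize}
This is exactly where the tail control matters: the remote flat region, which caused the failure in Appendix~\ref{app::continuity_failure}, contributes only $O(x^{-\nu})$, while the template forces the denominator to be of the heavier order $x^{\kappa_t^+\rho_t^+}$.

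Third, take the ratio. The constants $C_{t,+}$ cancel and one factor $(c_g^+)^{\kappa_t^+}$ cancels, which leaves
\begin{equation*}
M-\phi_g^+(x)=\beta_t^+c_g^+(1+x)^{\rho_t^+}\bigl(1+\theta_g^+(x)\bigr).
\end{equation*}
We define $\widetilde\varepsilon_t(x)$ as the supremum over $g\in\mathcal K_{t,\nu}$ of the combined relative errors, and then replace it by a continuous decreasing majorant, for instance $\sup_{y\ge x}$ followed by smoothing. The requirement that this holds for all $x\ge R$, not only for large $x$, is handled by enlarging $\widetilde\varepsilon_t$ on compact sets. On $[R,R_1]$ both sides are bounded above and below uniformly, since the denominator is bounded below by the kernel mass on a region where $\Pi_t^+(g)\ge\Pi_t^+(M-\underline c_t^+(1+R_1)^{\rho_t^+}/2)>0$. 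Hence a finite bound on $|\theta_g^+|$ there suffices. We must be careful that the statement places no requirement like $\widetilde\varepsilon_t<1$.

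The main obstacle is the uniformity in $g$ in the second step. The $o(1)$ terms come from Assumption~\ref{ass:endpoint-regularity} at a point $s$ that depends on $g$. This is handled because $s$ is sandwiched between two deterministic envelopes tending to zero, so the modulus $\sup_{s'\le s}|o(1)|$ is a deterministic function of $z$. With that in place, the Karamata tail bounds can be applied to the deterministic envelopes $\overline c\,(1+z)^{\eta_i}$ rather than to $g$-dependent functions.
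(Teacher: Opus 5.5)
Your proposal is correct and follows the paper's own route. You use the ratio representation via $\Phi_t^+=\Psi_t^+\Pi_t^+$, insert the endpoint expansions of Assumption~\ref{ass:endpoint-regularity} into the template, obtain the Student-$t$ convolution asymptotics for numerator and denominator from the exponent inequalities of Lemma~\ref{lem:candidate-exponents}, and get uniformity from the compact range of $c_g^+$ and $\varepsilon_t\downarrow 0$, exactly as the paper does.

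Two of your details only spell out what the paper asserts briefly. You re-derive the convolution step inline against deterministic envelopes instead of citing Lemma~\ref{lem:student-convolution}; that lemma assumes ``ultimately monotone'' integrands, which $z\mapsto\Pi_t^+(g(z))$ need not be. You also treat the compact range $x\in[R,R_1]$ explicitly.
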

\begin{proof}
We provide the proof for the ask-side since for the bid-side the proof is similar. By Definition \ref{def:template-class}, for $x \geq R$, we have
\begin{equation*}
    \delta_{g}^{+}(x):=M-g(x)=c_g^+(1+x)^{\rho_t^+}(1+e_g(x)),\quad |e_g(x)| \leq \varepsilon_t(x).
\end{equation*}
By Assumption \ref{ass:endpoint-regularity}, 
\begin{equation*}
\Pi_t^+(g(z))=C_{t,+}\delta_g^{+}(z)^{\kappa_t^+}(1+o(1))\,\,\text{and}\,\,M-\Psi_t^+(g(z))\Pi_t^+(g(z))=\beta_t^+ C_{t,+} \delta_g^+(z)^{\kappa_t^+ +1}(1+o(1)),
\end{equation*}
uniformly over $g \in \mathcal{K}_{t,\nu}$. Now, 
\begin{equation}\label{eq::equation_error}
    M-\phi_g^+(x)=\frac{\int_{-\infty}^{+\infty}(M-\Psi_t^+(g(z)))\Pi_t^+(g(z))\mathfrak q_{\nu}(x-z; 0, \sigma)\,\ud z}{\int_{-\infty}^{+\infty}\Pi_t^+(g(z))\mathfrak q_{\nu}(x-z; 0, \sigma)\,\ud z} \sim \beta_t^+ c_g^+ (1+x)^{\rho_t^+},
\end{equation}
where in the last step we used the expansion above together with Lemmas \ref{lem:candidate-exponents} and \ref{lem:student-convolution} applied to the powers $\kappa_t^+(-\rho_t^+)<\nu$ and $(\kappa_t^+ +1)(-\rho_t^+)<\nu$. In particular, the error in \eqref{eq::equation_error} can be made uniform over $g \in \mathcal K_{t,\nu}$ because $c_g^+$ ranges in a compact interval and $\varepsilon_t(x)$ converges to zero as $x \rightarrow \infty$. This concludes the proof. 
\end{proof}
\begin{lemma}\label{lemma::finiteness_integral}
    Assume that $\nu > 0$ and $\sigma>0$. Then the integral 
    \begin{equation*}
        I_{\nu} := \int_{-\infty}^\infty \left|\frac{\partial}{\partial u}\mathfrak{q}_{\nu}(u; 0, \sigma)\right|\,\ud u =\int_{-\infty}^{\infty} \frac{(\nu+1)|u|}{\nu\sigma^2 + u^2} \, \mathfrak{q}_{\nu}(u; 0, \sigma)\, du
    \end{equation*}
    is finite and strictly positive.
\end{lemma}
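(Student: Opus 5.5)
The plan is to compute the derivative explicitly and then reduce the statement to an elementary integral. Differentiating the location-scale density
\[
\mathfrak q_\nu(u;0,\sigma)=c_{\nu,\sigma}\Bigl(1+\tfrac{u^2}{\nu\sigma^2}\Bigr)^{-(\nu+1)/2},\qquad c_{\nu,\sigma}:=\frac{\Gamma(\frac{\nu+1}{2})}{\Gamma(\frac{\nu}{2})\sqrt{\pi\nu}\,\sigma},
\]
gives
\[
\partial_u\mathfrak q_\nu(u;0,\sigma)=-\frac{\nu+1}{2}\cdot\frac{2u}{\nu\sigma^2}\Bigl(1+\tfrac{u^2}{\nu\sigma^2}\Bigr)^{-1}\mathfrak q_\nu(u;0,\sigma)=-\frac{(\nu+1)u}{\nu\sigma^2+u^2}\,\mathfrak q_\nu(u;0,\sigma).
\]
Taking absolute values yields the second expression in the statement, so the identity is immediate.

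For finiteness, the cleanest route is to use the sign structure rather than tail estimates. The density $\mathfrak q_\nu(\cdot;0,\sigma)$ is even, strictly increasing on $(-\infty,0]$ and strictly decreasing on $[0,\infty)$, and it vanishes at $\pm\infty$. Hence
\[
I_\nu=\int_{-\infty}^0\partial_u\mathfrak q_\nu\,\ud u-\int_0^\infty\partial_u\mathfrak q_\nu\,\ud u=2\,\mathfrak q_\nu(0;0,\sigma)=2c_{\nu,\sigma}.
\]
This value is finite and strictly positive. The argument needs only $\nu>0$ and $\sigma>0$, since the Gamma factors are finite and positive there.

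As a cross-check, the integrand is bounded near $0$. At infinity it behaves like $C|u|^{-1}\cdot|u|^{-(\nu+1)}=C|u|^{-(\nu+2)}$, which is integrable for every $\nu>0$.

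Positivity also follows from the integrand being continuous, nonnegative, and nonzero away from $u=0$. The only step needing care is justifying $\int_0^\infty\partial_u\mathfrak q_\nu\,\ud u=-\mathfrak q_\nu(0)$ via the fundamental theorem of calculus on $[0,R]$ with $R\to\infty$. This is routine because $\mathfrak q_\nu(R)\to0$ and the derivative has constant sign.
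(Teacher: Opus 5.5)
Your proof is correct, but it takes a different route from the paper.

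The paper proves only finiteness. It splits the integral at $|u|=1$. On $|u|\le 1$ the integrand is continuous and bounded. On $|u|>1$ it bounds $\frac{(\nu+1)|u|}{\nu\sigma^2+u^2}\le(\nu+1)/|u|$ and $\mathfrak q_\nu(u;0,\sigma)\le C|u|^{-(\nu+1)}$, which gives an integrable $O(|u|^{-(\nu+2)})$ tail; this is exactly your cross-check. Positivity follows because the integrand is strictly positive for $u\neq 0$.

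You instead use that $\mathfrak q_\nu(\cdot;0,\sigma)$ is symmetric and unimodal. On each half-line $\partial_u\mathfrak q_\nu$ has constant sign, so the fundamental theorem of calculus on $[0,R]$, with monotone convergence as $R\to\infty$, evaluates the integral exactly. The total variation equals twice the peak:
\begin{equation*}
I_\nu=2\,\mathfrak q_\nu(0;0,\sigma)=\frac{2\Gamma(\frac{\nu+1}{2})}{\Gamma(\frac{\nu}{2})\sqrt{\pi\nu}\,\sigma}.
\end{equation*}

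Your route gives an exact constant. This makes the derivative bound in Theorem~\ref{th::existence_of_a_fixed_point} explicit,
\begin{equation*}
\|F'\|_\infty\le\overline M\,\frac{N_t+1}{N_t}\,\mathfrak q_\nu(0;0,\sigma),
\end{equation*}
with visible $1/\sigma$ scaling. It also carries over verbatim to any symmetric unimodal noise density with a finite peak, such as the other Gaussian scale mixtures mentioned in Remark~\ref{rem::noise_dist}.

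The paper's estimate does not use the sign structure of the derivative at all. It needs only local boundedness and polynomial decay of $\partial_u\mathfrak q_\nu$, so it would still work for non-unimodal kernels, but it yields no numerical value.
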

\begin{proof}
    We split the integral into two parts, $|u| \leq 1$ and $|u|>1$. In  the region $|u|>1$, we have
    \begin{equation*}
         \frac{(\nu+1)|u|}{\nu\sigma^2 + u^2} \le (\nu+1) \cdot \frac{1}{|u|}\,\,\,\,\text{and}\,\,\,\,\mathfrak{q}_{\nu}(u;0,\sigma) \leq \frac{C(\nu)}{|u|^{\nu+1}}\,\,\text{for large $|u|$},
    \end{equation*}
    where $C(\nu)$ is a positive constant. Their product is $O\big(|u|^{-(\nu+2)}\big)$ as $|u|\to\infty$, which is integrable since $\nu+2>1$. On the other hand, the integrability in the region $|u|\leq 1$ is implied by the continuity and boundedness of the integrand. Regarding positivity, it is sufficient to observe that the integrand  $\left|\frac{\partial}{\partial u}\mathfrak{q}_{\nu}(u; 0, \sigma)\right|$ is strictly positive for every $u \neq 0$ and, in particular, on any interval $[-1,-\delta] \cup [\delta, 1]$ with $\delta>0$.
\end{proof}
\begin{proposition}\label{ass:compact-invariant-class}
Under Assumptions~\ref{ass:posterior-regularity} and~\ref{ass:endpoint-regularity}, there exist parameters $R>1, L>0$, $0<\underline c_t^\pm<\overline c_t^\pm<\infty$, and a decreasing function $\varepsilon_t(x)\downarrow 0$ such that the class $\mathcal K_{t,\nu}$ from Definition~\ref{def:template-class} satisfies $\mathcal T_{t,\nu}(\mathcal K_{t,\nu})\subset \mathcal K_{t,\nu}$.
\end{proposition}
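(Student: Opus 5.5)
The plan is to check the three defining properties of Definition~\ref{def:template-class} one at a time for $\mathcal T_{t,\nu}g$, where $g\in\mathcal K_{t,\nu}$. The parameters $(R,L,\varepsilon_t,\underline c_t^\pm,\overline c_t^\pm)$ are fixed only at the end, so that every requirement holds at once.

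\emph{Bounds (i).} The kernel $\frac{1}{N_t}\mathfrak q_\nu(x-z;0,\sigma)+\frac{N_t-1}{N_t}\bar{\mathfrak q}_\nu(x,z;0,\sigma)$ is a probability density in $z$ for each fixed $x$. This holds because $\bar{\mathfrak q}_\nu(x,\cdot)$ is an average over $u\in[0,x]$ of the densities $\mathfrak q_\nu(u-\cdot)$. Since $\phi_g$ takes values in $[m,M]$, the same is true of $\mathcal T_{t,\nu}g$.

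\emph{Lipschitz bound (ii).} Differentiate under the integral in \eqref{eq::Student_fixed_point_operator}.
\begin{itemize}
\item The $\mathfrak q_\nu$-part has derivative bounded by $\overline M I_\nu$, and Lemma~\ref{lemma::finiteness_integral} gives $I_\nu<\infty$.
\item The $\bar{\mathfrak q}_\nu$-part equals $H(x)=\frac1x\int_0^x G(u)\,\ud u$ with $G=\mathfrak q_\nu*\phi_g$. Writing $H(x)=\int_0^1 G(sx)\,\ud s$ gives $|H'|\le \tfrac12\|G'\|_\infty$.
\end{itemize}
Hence $\operatorname{Lip}(\mathcal T_{t,\nu}g)\le \overline M I_\nu\big(\tfrac1{N_t}+\tfrac{N_t-1}{2N_t}\big)=\overline M\frac{N_t+1}{2N_t}I_\nu$. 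I choose $L$ at least this large. Continuity of $\mathcal T_{t,\nu}g$ is then immediate.

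\emph{Tail template (iii), leading order.} This is the substantive step; I treat the ask side, and the bid side is symmetric. By Lemma~\ref{lem:tail-action-phi}, $M-\phi_g^+(z)=\beta_t^+c_g^+(1+z)^{\rho_t^+}(1+\theta_g^+(z))$ for $z\ge R$, with $|\theta_g^+|\le\widetilde\varepsilon_t$ uniformly in $g$. I write
\[
M-\mathcal T_{t,\nu}g(x)=\int\Big[\tfrac1{N_t}\mathfrak q_\nu(x-z)+\tfrac{N_t-1}{N_t}\bar{\mathfrak q}_\nu(x,z)\Big](M-\phi_g(z))\,\ud z ,
\]
and split the $z$-integral into $z\ge R$ and $z<R$.
\begin{itemize}
\item On $z\ge R$, apply Lemma~\ref{lem:student-convolution} with $f(z)=(1+z)^{\rho_t^+}$ and $\eta=\rho_t^+\in(-1,0)$. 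Since $-\rho_t^+<1<\nu$, part (i) gives $\mathfrak q_\nu*f\sim f(x)$ and part (ii) gives $\bar{\mathfrak q}_\nu$-integral $\sim f(x)/(1+\rho_t^+)$. The leading term is therefore
\[
\beta_t^+c_g^+\Big(\tfrac1{N_t}+\tfrac{N_t-1}{N_t(1+\rho_t^+)}\Big)(1+x)^{\rho_t^+}=c_g^+(1+x)^{\rho_t^+},
\]
where the equality is exactly Lemma~\ref{lem:candidate-exponents}(i).
\item On $z<R$ the integrand is bounded by $M-m$. Its $\mathfrak q_\nu$-weight is $O(x^{-\nu})$. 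Its $\bar{\mathfrak q}_\nu$-weight is at most $\frac1x\int_0^x\mathbb P(Z\ge u-R)\,\ud u=O(R/x)$. Both are $o\big((1+x)^{\rho_t^+}\big)$ because $\rho_t^+>-1$.
\end{itemize}
So I take $c_{\mathcal T g}^+:=c_g^+$, which automatically stays in $[\underline c_t^+,\overline c_t^+]$. This shows the constant interval in the definition is invariant for free.

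\emph{Main obstacle: closing the error envelope.} One must show the relative error of $\mathcal T_{t,\nu}g$ is bounded by the \emph{same} $\varepsilon_t(x)$, uniformly in $g$. The $\bar{\mathfrak q}_\nu$-average mixes errors $\theta_g^+(z)$ over all $z\in[R,x]$, including values near $R$ where $\varepsilon_t$ is not small. Moreover $\widetilde\varepsilon_t$ from Lemma~\ref{lem:tail-action-phi} itself depends on $\varepsilon_t$ and on the endpoint $o(1)$ terms of Assumption~\ref{ass:endpoint-regularity}. My first attempt would take a slowly decaying envelope, $\varepsilon_t(x)=\min\{\varepsilon_0,\,A(1+x)^{-\gamma}\}$ with $0<\gamma<1+\rho_t^+$, and use three facts:
\begin{itemize}
\item Karamata averaging of $(1+z)^{\rho_t^+-\gamma}$ keeps the $\theta$-contribution of order $(1+x)^{\rho_t^+-\gamma}$, with a constant of the form $\beta_t^+\big(\frac1{N_t}+\frac{N_t-1}{N_t(1+\rho_t^+-\gamma)}\big)$. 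This constant is strictly less than $1$ for small $\gamma$? If not, I would absorb it by enlarging $A$ relative to the fixed-size terms.
\item The endpoint $o(1)$ terms become small once $R$ is large, since $\delta_g^+(z)\le \overline c_t^+(1+\varepsilon_0)(1+R)^{\rho_t^+}$ for $z\ge R$.
\item The $z<R$ remainders are of order $(1+x)^{-1-\rho_t^+}$, which is $\le \tfrac12\varepsilon_t(x)$ after taking $A$ and $R$ large.
\end{itemize}
If the contraction constant fails to be below $1$, the fallback is to allow the template constant to move, i.e.\ to redefine $c_{\mathcal T g}^+$ as a weighted tail average of $c_g^+(1+\theta_g^+)$. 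In that case the interval $[\underline c_t^+,\overline c_t^+]$ must be chosen wide enough to absorb this drift.
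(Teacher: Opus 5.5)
Up to the last step your argument is the paper's proof. It uses probability kernels for (i) and the bound $\overline M\frac{N_t+1}{2N_t}I_\nu$ for (ii). For (iii) it uses Lemma~\ref{lem:tail-action-phi}, Lemma~\ref{lem:student-convolution} and the identity of Lemma~\ref{lem:candidate-exponents}(i), giving $M-\mathcal T_{t,\nu}g(x)\sim c_g^+(1+x)^{\rho_t^+}$ uniformly in $g$. The paper then finishes with one sentence (``choosing $R$ large enough and $\varepsilon_t$ slowly decreasing''). You rightly flag that the real content of (iii) is the bound by the \emph{same} envelope $\varepsilon_t(x)$ for \emph{every} $x\ge R$. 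Your sketch does not establish this, and each of its ingredients fails.

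(1) The constant you ask about is $\lambda_\gamma=\beta_t^+\bigl(\frac1{N_t}+\frac{N_t-1}{N_t(1+\rho_t^+-\gamma)}\bigr)$. It equals $1$ at $\gamma=0$ by Lemma~\ref{lem:candidate-exponents}(i) and increases with $\gamma$, so $\lambda_\gamma>1$ for every $\gamma\in(0,1+\rho_t^+)$. To leading order the envelope evolves as $A\mapsto\lambda_\gamma A+B$ with $B\ge0$, and no choice of $A$ makes this self-reproducing.

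(2) The fallback of letting the constant drift is unavailable. Since $\varepsilon_t(x)\to0$, the constant of $\mathcal T_{t,\nu}g$ is forced to equal $\lim_{x\to\infty}(M-\mathcal T_{t,\nu}g(x))(1+x)^{-\rho_t^+}=c_g^+$.

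(3) The $z<R$ remainder is not $\le\frac12\varepsilon_t(x)$ near $x=R$. On $(-R,R)$ the class imposes only $m\le g\le M$ and $\operatorname{Lip}(g)\le L$, so $g$ may stay near $m$ on most of $[0,R]$. There $\phi_g\approx\Psi_t^+(m)=\mathbb E_t[V]$, and the $\bar{\mathfrak q}_\nu$-term alone gives $M-\mathcal T_{t,\nu}g(R)\gtrsim\frac{N_t-1}{N_t}(M-\mathbb E_t[V])$. That is a relative error at least of order $(1+R)^{-\rho_t^+}/\overline c_t^+$, which grows with $R$ and exceeds $\varepsilon_t(R)<1$. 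Also, the $\mathfrak q_\nu$-weight of $\{z<R\}$ is $\mathbb P(Z_t>x-R)$, which is $\frac12$ at $x=R$, not $O(x^{-\nu})$. So taking $R$ large works against you.

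The obstruction appears structural rather than a matter of tuning $\varepsilon_t$. The constant is pinned to $c_g^+$. The $[0,R]$ portion of the $\bar{\mathfrak q}_\nu$-average, $\frac{N_t-1}{N_t}\frac1x\int_0^R(M-G_g(u))\,\ud u$, depends on $g|_{[0,R]}$, which the class leaves free, and it is of exact order $1/x$. Hence two members of $\mathcal K_{t,\nu}$ with the same tail on $[R,\infty)$ generally have images whose relative errors differ by a term of order $(1+x)^{-(1+\rho_t^+)}$. An invariant envelope must therefore satisfy $\varepsilon_t(x)\gtrsim(1+x)^{-(1+\rho_t^+)}$, so $\int^\infty(1+u)^{\rho_t^+}\varepsilon_t(u)\,\ud u=\infty$.

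In that regime, take $g$ whose relative error equals $\varepsilon_t$ on $[R,\infty)$. Up to lower-order terms, the $\bar{\mathfrak q}_\nu$-term overshoots $\varepsilon_t(x)$ by $\bigl(1-\frac{\beta_t^+}{N_t}\bigr)(1+\rho_t^+)(1+x)^{-(1+\rho_t^+)}\int_R^x(1+u)^{\rho_t^+}\bigl(\varepsilon_t(u)-\varepsilon_t(x)\bigr)\,\ud u$. After multiplication by $(1+x)^{1+\rho_t^+}$ this diverges, while the compensating deficit term stays bounded under the same scaling. This is the general form of $\lambda_\gamma>1$: an average of a decreasing envelope over $u\le x$ is never below its value at $x$. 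With the uniform prior, where Assumption~\ref{ass:endpoint-regularity} holds with no error terms, none of this can be absorbed by endpoint corrections.

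So what is missing is not a cleverer $\varepsilon_t$ but a differently designed invariant class, with continuity of $\mathcal T_{t,\nu}$ re-checked on it. Such a class might also control $g$ on $[-R,R]$, or replace the pointwise relative template by looser tail bounds. The paper's proof asserts the final step without addressing these points, so your proposal reproduces its argument, gap included, rather than closing it.
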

\begin{proof}
Let $g \in \mathcal{K}_{t,\nu}$. Since $\phi_g(z) \in [m,M]$ and the kernels in \ref{eq::Student_fixed_point_operator} are probability kernels in $z$, we have that
$m \leq \mathcal{T}_{t,\nu}g(x) \leq M$ for every $x$. As regards the Lipschitz bound, we first prove that $(\mathcal{T}_{t,\nu})$ has a derivative that is bounded by a finite constant. 
We first compute
\begin{align}
     \frac{\partial}{\partial x}\mathfrak{q}_\nu(z-x; 0, \sigma)
     &=-\mathfrak{q}_\nu(z-x; 0, \sigma)\frac{(\nu+1)(x-z)}{\nu\sigma^2+(x-z)^2},\label{eq:derivatives_t_densities1}\\
    \frac{\partial \bar{\mathfrak{q}}_\nu(x, z; 0, \sigma)}{\partial x} 
    &= \frac{\mathfrak{q}_\nu(x-z;0,\sigma) - \bar{\mathfrak{q}}_\nu(x,z; 0, \sigma)}{x}\\
    &= \frac{1}{x^2} \int_0^x \left\{\mathfrak{q}_\nu(x - z;0,\sigma)-\mathfrak{q}_\nu(y - z;0,\sigma)\right\}\, \ud y\nonumber \\
    &= \frac{1}{x^2} \int_0^x u\, \frac{\partial}{\partial u}\mathfrak{q}_\nu(u - z;0,\sigma)\,\ud u.\label{eq:derivatives_t_densities2}\nonumber
\end{align}
Then, we have
\begin{align*}
    &\left| \frac{d}{dx}(\mathcal{T}_{t,\nu}g)(x)\right| 
    \leq \int_{-\infty}^{\infty} |\phi_{g}(z)| \left| \frac{1}{N_t} \frac{\partial}{\partial x}\mathfrak{q}_\nu(x-z; 0, \sigma) + \frac{N_t-1}{N_t} \frac{\partial}{\partial x} \bar{\mathfrak{q}}_\nu(x,z;0,\sigma) \right| \ud z \\
    &\leq \bar{M} \left\{ \frac{1}{N_t} \int_{-\infty}^{\infty} \left|\frac{\partial}{\partial x}\mathfrak{q}_\nu(z-x; 0, \sigma) \right| \ud z 
    + \frac{N_t-1}{N_t x^2} \int_0^x |u|\, \ud \int_{-\infty}^{\infty} \left| \frac{\partial}{\partial u}\mathfrak{q}_\nu(u-z; 0, \sigma)\, \ud z \right|\right\} \\
    &= \bar{M} \left\{ \frac{1}{N_t} I_{\nu} + \frac{N_t-1}{N_t x^2} \int_0^x |u| \cdot I_{\nu}\, du \right\} \\
    &= \bar{M} \left\{ \frac{1}{N_t} I_{\nu} + \frac{N_t-1}{2N_t} I_{\nu} \right\} \\
    &= \bar{M} \cdot \frac{N_t+1}{2N_t} \cdot I_{\nu} \leq \bar{M}  \cdot \frac{N_t+1}{2N_t} \cdot I_{\nu},
\end{align*}
with $\bar{M}:=\max(|m|,|M|)$, and where in the last step we have used Lemma \ref{lemma::finiteness_integral}. Therefore, if $L$ is chosen above the previous bound $\mathcal{T}_{t,\nu}g$ is $L$-Lipschitz. Finally, by Lemma \ref{lem:tail-action-phi}, $M-\phi_g^+(x)\sim \beta_t^+ c_g^+ (1+x)^{\rho_t^+}$. Hence, by Lemma \ref{lem:student-convolution}-$(i)$, $M-G_g(x) \sim \beta_t^+ c_g^+ (1+x)^{\rho_t^+}$, where $G_g(x)=\int_{-\infty}^{+\infty}\mathfrak q_{\nu}(x-z;0,\sigma)\phi_g(z)\,\ud z$.  By Lemma \ref{lem:student-convolution}-$(ii)$, instead,  $M-H_g(x) \sim \frac{\beta_t^+}{1+\rho_t^+} c_g^+ (1+x)^{\rho_t^+}$, where $H_g(x):=\frac{1}{x}\int_{0}^{x} G_g(u)\,\ud u$. We have
\begin{equation*}
    M-T_{t,\nu}g(x) \sim \beta_t^{+}\left(\frac{1}{N_t} + \frac{N_t-1}{N_t (1+\rho_t^+)}\right)c_g^+(1+x)^{\rho_t^+} = c_g^+(1+x)^{\rho_t^+},
\end{equation*}
where in the last step we used Lemma \ref{lem:candidate-exponents}. The conclusion follows by choosing $R$ large enough and $\varepsilon_t$ slowly decreasing; the computations for the left-tail are similar. 
\subsection{Proof of Theorem \ref{th::existence_of_a_fixed_point}}
By Proposition \ref{prop:compact-class}, $\mathcal{K}_{t,\nu}$ is non-empty, convex, and compact in the sup norm. By Proposition \ref{ass:compact-invariant-class}, $\mathcal{T}_{t,\nu}(\mathcal{K}_{t,\nu}) \subset \mathcal{K}_{t,\nu}$.  It remains to show the continuity of $\mathcal{T}_{t,\nu}$ on $\mathcal{K}_{t,\nu}$. To this end, let $g_n \rightarrow g$ uniformly in $\mathcal{K}_{t,\nu}$. Since $\Phi_t^{\pm}$ and $\Phi_t^{\pm}$ are continuous on $[m,M]$, $\Phi_t^{\pm}(g_n(z))\rightarrow \Phi_t^{\pm}(g(z))$ and $\Pi_t^{\pm}(g_n(z))\rightarrow \Pi_t^{\pm}(g(z))$ for every $z$. They are uniformly bounded, so dominated convergence gives $\phi_{g_n}(x)\rightarrow\phi_{g}(x)$  for every $x$, and another application of the dominated-convergence step then gives $\mathcal T_{t,\nu}g_n(x) \rightarrow T_{t,\nu}g(x)$ for every $x$. Because $\mathcal K_{t,\nu}$ is compact in the sup norm and $T_{t,\nu} g_n \in \mathcal K_{t,\nu}$, every subsequence has a uniformly convergent further subsequence. Pointwise convergence forces the uniform limit to be $\mathcal{T}_{t,\nu}g$, from which the continuity follows. Schauder’s fixed-point theorem now yields a fixed-point in $\mathcal K_{t,\nu}$. As regards the derivative bound, we refer the reader to the proof of Proposition \ref{ass:compact-invariant-class}.
\end{proof}

\section{Proofs of the Bayesian learning from order flow and related supporting  results}\label{app::consistency_beliefs}
\subsection{Supporting results for the dynamics of Bayesian updating with dependent data \texorpdfstring{ (\cite{shalizi2009dynamics})}{ (Shalizi, 2009)}}\label{subsec::results_posterior}
Let $\Theta$ be a compact subset of $\mathbb{R}$ be the parameter space, $(\Theta, \mathcal{T})$ a measurable space,  $Y_1, Y_2, \ldots,$ for short $Y^{\infty}$ be a sequence of random variables taking values in the measurable space $(\mathcal{Y},\mathcal{A})$; for every $t \in \mathbb{N}$, $(\mathcal{Y}^{t},\mathcal{A}^{t})$ denotes the corresponding product space, where $\mathcal{A}^{t}$ is the product $\sigma$-algebra. The natural filtration of the process is $\sigma(Y^{t})$. It will be clear from the context when $Y_t$ refers to a random variable or the corresponding realization. The sequence $Y^{\infty}$ is generated under a parameter $\theta_0 \in \Theta$, and the true joint probability and probability density function are denoted by $\mathbb{P}_{\theta_0}^{\infty}$  and $\mathfrak{p}_{\theta_0}(Y^{\infty})$, respectively. Instead, the model joint probability and probability density function are denoted by $\mathbb{P}_{\theta}^{\infty}$ and $\mathfrak{f}_{\theta}(Y^{\infty})$, respectively. Analogous notations hold for a fixed $t$. Let $\Pi_0$ be the prior probability measure on $\Theta$, and $\Pi_t(\cdot)$ the posterior probability measure (on $(\Theta, \mathcal{T})$) after observing $Y^{t}$. Then, the Bayesian updating can be written, for a set $A$ as
\begin{equation}\label{eq::update_priors}
    \Pi_t(A) = \frac{\int_{A} \mathfrak{f}_{\theta}(Y^{t})\Pi_0(\ud \theta)}{\int_{\Theta} \mathfrak{f}_{\theta}(Y^{t})\Pi_0(\ud \theta)}=\frac{\int_{A} \frac{\mathfrak{f}_{\theta}(Y^{t})}{\mathfrak{p}_{\theta_0}(Y^{t})}\Pi_0(\ud \theta)}{\int_{\Theta} \frac{\mathfrak{f}_{\theta}(Y^{t})}{\mathfrak{p}_{\theta_0}(Y^{t})}\Pi_0(\ud \theta)} :=\frac{\Pi_0\left(\frac{\mathfrak{f}_{\theta}(Y^{t})}{\mathfrak{p}_{\theta_0}(Y^{t})}A\right)}{\Pi_0\left(\frac{\mathfrak{f}_{\theta}(Y^{t})}{\mathfrak{p}_{\theta_0}(Y^{t})}\right)},
\end{equation}
where $R_t(\theta):=\frac{\mathfrak{f}_{\theta}(Y^{t})}{\mathfrak{p}_{\theta_0}(Y^{t})}$ is the ratio of the model likelihood to true likelihood. Notice that in Section \ref{section::Equilibrium} we denote by $\mathbb{P}_{t}(\cdot)$ the liquidity suppliers' belief \emph{at the beginning of trading round} $t$, i.e., conditional on the public history $Y^{t-1}$; see \eqref{eq::beliefsupdate}. In this section, we follow \cite{shalizi2009dynamics} and index posteriors by the number of observations, writing $\Pi_t(\cdot)=\mathbb{P}(V \in \cdot \mid Y^{t})$. As a consequence, $\Pi_t$ corresponds to $\mathbb{P}_{t+1}(\cdot)$ in the notation of 
Section \ref{section::Equilibrium}.

We give the following useful definition(s).
\begin{definition}\label{def::KL}
    Let $\theta_0 \in \Theta \subset \mathbb{R}$ with $\Theta$ compact be the true parameter. The Kullback-Leibler (KL) divergence rate between any $\theta \in \Theta$ and $\theta_0$ is given by
    \begin{equation}\label{eq::KL_divergence}
        K(\theta, \theta_0)=\lim_{t\rightarrow\infty}\frac{1}{t}\mathbb{E}^{\theta_0}\left[\log\left(\frac{\mathfrak{p}_{\theta_0}(Y^{t})}{\mathfrak{f}_{\theta}(Y^{t})}  \right)\right].
    \end{equation}
For any $\epsilon>0$, the KL-neighbourhood of $\theta_0$ is defined as
    \begin{equation}\label{eq::KL_neighbourhood}
        K_{\epsilon}(\theta_0):=\{\theta\in\Theta\,:\,K(\theta, \theta_0)\leq\epsilon\}.
    \end{equation}
Finally, $\theta_0$ is said to lie in the KL-support of a prior probability measure $\Pi_0$ if for every $\varepsilon>0$ we have that $\Pi_0(K_{\epsilon}(\theta_0))>0$.
\end{definition}
\noindent We now give the following set of assumptions (cf. \cite{shalizi2009dynamics}).
\begin{assumption}\label{ass::assumption_consistency_zero}
    The likelihood ratio $R_t(\theta)$ is $\sigma(Y^{t})\times\mathcal{T}$-measurable for all $t$. 
\end{assumption}
\noindent Assumption \ref{ass::assumption_consistency_zero} implies that prices and learning use only public information; said differently, the likelihood ratio is a function of the publicly observed history.
\begin{assumption}\label{ass::assumption_consistency_one}
    The map $\theta \rightarrow K(\theta, \theta_0)$ is lower semi-continuous, $K(\theta,\theta_0)=0$ if and only if $\theta=\theta_0$, and the limit in \eqref{eq::KL_divergence} exists and it is finite for every $\theta \in \Theta$, and is $\mathcal{T}$ measurable.
\end{assumption}
\noindent Economically, the previous assumption has the following two motivations. First, different fundamentals produce observably different order-flow distributions in the long run; otherwise no learning is possible. Second, small parameter changes do not create discontinuous jumps in distance from truth, which is a stability property of the inference problem.
\begin{assumption}\label{ass::assumption_consistency_two}
    For every $\varepsilon>0$ we have that $\Pi_0(K_{\varepsilon}(\theta_0))>0$ (cf. Definition \ref{def::KL}).  
\end{assumption}
\noindent In our framework, the previous assumption has a precise economic motivation: liquidity suppliers are not dogmatically excluding the true value. The following assumption, instead, is a law-of-large-numbers statement, in the sense that in the long run the data reveals how well each candidate value explains the aggregate order flow.

\begin{assumption}\label{ass::assumption_consistency_three}
    For every $\theta \in \Theta$, the empirical log-likelihood ratio converges to KL-divergence rate $\mathbb{P}_{\theta_0}^{\infty}$-\emph{a.s.}, i.e., 
    \begin{equation}\label{eq::convergence_rate}
        \lim_{t\rightarrow\infty}\frac{1}{t}\log\left(\frac{\mathfrak{f}_{\theta}(Y^{t})}{\mathfrak{p}_{\theta_0}(Y^{t})}  \right)=-K(\theta,\theta_0).
    \end{equation} 
\end{assumption}
\begin{assumption}\label{ass::assumption_consistency_four}
    There exists a sequence of sets $G_t \rightarrow \Theta$, such that
    \begin{itemize}
        \item $\Pi_0(G_t) \geq 1-\alpha \exp(-\beta t)$ for some $\alpha, \beta>0$.
        \item The convergence in \eqref{eq::convergence_rate} is uniform in $\theta$ over $G_t$.
    \end{itemize}
\end{assumption}
\noindent The previous assumption rules out extreme candidate values that the prior essentially doesn’t entertain and requires stable learning on the economically relevant part of the parameter space.\\

We have the following theorem.
\begin{theorem}\label{th:theorem41}
    Let Assumptions \ref{ass::assumption_consistency_zero}--\ref{ass::assumption_consistency_four} hold. Let $\Theta \subset \mathbb{R}$ be compact, and $A \subset \Theta$ be measurable with $\Pi_0(A)>0$ and $\inf_{\theta \in A} K(\theta,\theta_0)>\delta$ for some $\delta>0$. Then $\Pi_t(A) \rightarrow 0$ as $t \rightarrow \infty$, $\mathfrak{p}_{\theta_0}^{\infty}$-$a.s$.
\end{theorem}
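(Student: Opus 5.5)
We will follow the standard posterior-consistency argument of \cite{shalizi2009dynamics}, specialised to our compact parameter space. The idea is to write the posterior mass of $A$ as a ratio of integrated likelihood ratios,
\begin{equation*}
\Pi_t(A)=\frac{\int_A R_t(\theta)\,\Pi_0(\ud\theta)}{\int_\Theta R_t(\theta)\,\Pi_0(\ud\theta)},
\end{equation*}
using \eqref{eq::update_priors}. Assumption~\ref{ass::assumption_consistency_zero} makes both integrals well-defined random variables. It then suffices to bound the numerator above by $e^{-t(\delta-\eta)}$ and the denominator below by $e^{-t\eta'}$, with $\eta+\eta'<\delta$, eventually $\mathfrak p^\infty_{\theta_0}$-a.s.

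\emph{Denominator.} Fix a small $\epsilon>0$ and restrict the integral to the KL-neighbourhood $K_\epsilon(\theta_0)\cap G_t$. On $G_t$, Assumption~\ref{ass::assumption_consistency_four} gives uniform convergence in \eqref{eq::convergence_rate}. Hence, for $t$ large and all $\theta\in K_\epsilon(\theta_0)\cap G_t$,
\begin{equation*}
R_t(\theta)\ge e^{-t(K(\theta,\theta_0)+\epsilon)}\ge e^{-2t\epsilon}.
\end{equation*}
Since $\Pi_0(G_t^c)\le\alpha e^{-\beta t}\to0$, Assumption~\ref{ass::assumption_consistency_two} yields $\Pi_0(K_\epsilon(\theta_0)\cap G_t)\ge \tfrac12\Pi_0(K_\epsilon(\theta_0))>0$ for large $t$. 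Therefore the denominator is at least $c\,e^{-2t\epsilon}$ eventually.

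\emph{Numerator.} Split $A$ into $A\cap G_t$ and $A\cap G_t^c$. On $A\cap G_t$, uniform convergence gives
\begin{equation*}
R_t(\theta)\le e^{-t(K(\theta,\theta_0)-\epsilon)}\le e^{-t(\delta-\epsilon)},
\end{equation*}
because $\inf_A K>\delta$. So this piece is at most $e^{-t(\delta-\epsilon)}$.

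The piece on $A\cap G_t^c$ is the main obstacle, since there is no pointwise control of $R_t$ there. We handle it in expectation. By Fubini, $\mathbb E^{\theta_0}[R_t(\theta)]\le1$ for each $\theta$, because $R_t(\theta)$ is a density ratio. Hence
\begin{equation*}
\mathbb E^{\theta_0}\Big[\int_{G_t^c}R_t\,\ud\Pi_0\Big]\le \alpha e^{-\beta t}.
\end{equation*}
Markov's inequality then gives
\begin{equation*}
\mathbb P\Big(\int_{G_t^c}R_t\,\ud\Pi_0>e^{-t\beta/2}\Big)\le \alpha e^{-t\beta/2}.
\end{equation*}
This is summable in $t$, so by Borel--Cantelli the bad piece is at most $e^{-t\beta/2}$ eventually a.s.

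\emph{Combining.} The ratio is bounded by
\begin{equation*}
\frac{e^{-t(\delta-\epsilon)}+e^{-t\beta/2}}{c\,e^{-2t\epsilon}}.
\end{equation*}
Choose $\epsilon<\min(\delta/3,\beta/8)$. Both exponents are then negative, so $\Pi_t(A)\to0$ a.s. Lower semicontinuity of $K$ (Assumption~\ref{ass::assumption_consistency_one}) only serves to guarantee measurability of $K_\epsilon(\theta_0)$ and of the integrands. Throughout, we only intersect countably many probability-one events, indexed by rational $\epsilon$.
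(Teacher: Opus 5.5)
Your proof is correct. The paper does not prove this theorem; it imports it as background from \cite{shalizi2009dynamics}. Your argument is the standard one behind that result, specialized to the well-specified case: you bound the denominator below on $K_\epsilon(\theta_0)\cap G_t$, control the $A\cap G_t$ piece by the uniform convergence, and handle the $G_t^c$ piece via $\mathbb E^{\theta_0}[R_t(\theta)]\le 1$, Markov and Borel--Cantelli at level $e^{-t\beta/2}$. Since the denominator then decays only subexponentially, any $\beta>0$ suffices.
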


\subsection{Proof of Lemma \ref{lem:KL_separation}}\label{app::proof_KL_separation}
\begin{proof}
(i) Since $F(\cdot,t,Y^{t-1})$ is strictly increasing, its inverse $m_t(\cdot)$ is strictly increasing, hence $v\neq v_0$ implies $m_t(v)\neq m_t(v_0)$.
(ii) Under $v_0$ we have $Y_t=m_t(v_0)+Z_t$ with $Z_t\sim \mathtt{T}_\nu(0,\sigma)$ independent of $Y^{t-1}$, so the conditional density of $Y_t$ given $Y^{t-1}$ is the Student-$t$ location family with location $m_t(\cdot)$. Identifiability of the location family implies $k(\Delta)>0$ for $\Delta\neq 0$, and continuity follows by dominated convergence.
(iii) The bound $F'\le K_0$ yields
$|v-v_0|=|F(m_t(v),t,Y^{t-1})-F(m_t(v_0),t,Y^{t-1})|\le K_0|m_t(v)-m_t(v_0)|$.
The conclusion follows since $k$ is continuous and strictly positive away from $0$.
\end{proof}
\subsection{Proof of Proposition \ref{lemm::Lemma_uniform_convergence_loglikelihood_ratio}}\label{app::proof_Lemma_uniform_convergence_loglikelihood_ratio}
This section proves Proposition \ref{lemm::Lemma_uniform_convergence_loglikelihood_ratio}; the proof is divided into several steps. First, we start with the following lemma. 
\begin{lemma}\label{lem:log_envelope}
Fix $\nu>2$ and $\sigma>0$, and set $a:=\nu\sigma^2$. Let $\mathfrak{q}_{\nu}(\cdot; \sigma,\mu)$ be the density function of a location-scale Student's $t$ distribution. Then, for all $y\in\mathbb R$ and all $\mu,\mu_0\in\mathbb R$ we have
\begin{equation*}
\left|\log\left(\frac{\mathfrak{q}_{\nu}(y; \mu,\sigma)}{\mathfrak{q}_{\nu}(y; \mu_0, \sigma)}\right)\right|
\le C_1 + C_2 \log\left(1+\frac{(\mu-\mu_0)^2}{a}\right),
\end{equation*}
where $C_1$ and $C_2$ are positive constants depending only on $(\nu,\sigma)$.
\end{lemma}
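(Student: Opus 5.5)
Write the log-ratio explicitly. With $a=\nu\sigma^2$, the Student-$t$ density gives
\begin{equation*}
\log\frac{\mathfrak q_\nu(y;\mu,\sigma)}{\mathfrak q_\nu(y;\mu_0,\sigma)}
=\frac{\nu+1}{2}\left[\log\bigl(a+(y-\mu_0)^2\bigr)-\log\bigl(a+(y-\mu)^2\bigr)\right].
\end{equation*}
The normalizing constants cancel, so the problem reduces to bounding the absolute difference of $\log(a+u^2)$ at the two points $u=y-\mu_0$ and $u'=y-\mu$. Note that $u-u'=\mu-\mu_0=:d$. By symmetry in the roles of $\mu$ and $\mu_0$, it suffices to bound the ratio $\frac{a+(u'+d)^2}{a+u'^2}$ from above, uniformly in $u'$.

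The key elementary step is the inequality $(u'+d)^2\le 2u'^2+2d^2$. It gives
\begin{equation*}
\frac{a+(u'+d)^2}{a+u'^2}\le \frac{a+2u'^2+2d^2}{a+u'^2}\le 2+\frac{2d^2}{a}\le 2\Bigl(1+\frac{d^2}{a}\Bigr).
\end{equation*}
Taking logarithms yields
\begin{equation*}
\log\bigl(a+(y-\mu_0)^2\bigr)-\log\bigl(a+(y-\mu)^2\bigr)\le \log 2+\log\bigl(1+d^2/a\bigr).
\end{equation*}
Exchanging $\mu$ and $\mu_0$ (which leaves $d^2$ unchanged) gives the same bound for the negative of this difference. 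Hence the absolute value is at most $\frac{\nu+1}{2}\bigl[\log 2+\log(1+(\mu-\mu_0)^2/a)\bigr]$. The claim follows with $C_1=\frac{\nu+1}{2}\log 2$ and $C_2=\frac{\nu+1}{2}$, and both constants depend only on $(\nu,\sigma)$; in fact only on $\nu$, since $a$ has been absorbed into the argument of the logarithm.

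There is no real obstacle here. The only point needing care is that the bound must be uniform in $y$. That uniformity is exactly what the Peetre-type inequality $a+(u'+d)^2\le 2(1+d^2/a)(a+u'^2)$ provides, so no case split on $|y|$ versus $|d|$ is required. Also, $\nu>2$ is not actually needed for this step; any $\nu>0$ works. The bound will later be used in Step 3, together with Assumption~\ref{ass:moderate_steepening}, to control the martingale increments $D_s(v)$, since $|\mu-\mu_0|=|m_s(v)-m_s(v_0)|\le L_s|v-v_0|$.
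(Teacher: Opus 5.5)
Your proof is correct and follows the paper's argument essentially verbatim: cancel the normalizing constants, bound the ratio by $2(1+(\mu-\mu_0)^2/a)$ via $(z-\Delta)^2\le 2z^2+2\Delta^2$ in both directions, and read off $C_1=\frac{\nu+1}{2}\log 2$ and $C_2=\frac{\nu+1}{2}$. Your chain of inequalities is slightly cleaner than the paper's, which writes $2(a+z^2)+2\Delta^2=2(a+z^2)(1+\Delta^2/a)$ where only $\le$ holds.
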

\begin{proof}
We write $y=\mu_0+z$ where $z\in\mathbb R$. Up to an additive constant that cancels in the ratio the density function of a location-scale Student's $t$ distribution satisfies 
\begin{equation*}
    \log(\mathfrak{q}_{\nu}(y; \mu_0+z,\sigma))=-\frac{(\nu+1)}{2}\log\left(a+(z-(\mu-\mu_0))^2\right)+C(\nu,\sigma),
\end{equation*}
for some constant $C(\nu,\sigma)$. By setting $\Delta:=\mu-\mu_0$, we obtain 
\begin{equation*}
    \log\left(\frac{\mathfrak{q}_{\nu}(y; \mu,\sigma)}{\mathfrak{q}_{\nu}(y; \mu_0,\sigma)}\right)=-\frac{\nu+1}{2}\log\!\left(\frac{a+(z-\Delta)^2}{a+z^2}\right)
\end{equation*}
Now, we use the following elementary bound $a+(z-\Delta)^2 \le 2(a+z^2)+2\Delta^2 = 2(a+z^2)\left(1+\Delta^2/a\right)$ and write
\begin{equation*}
\frac{a+(z-\Delta)^2}{a+z^2}\le 2\left(1+\Delta^2/a\right).
\end{equation*}
Additionally, by symmetry it also holds $\frac{a+z^2}{a+(z-\Delta)^2}\le 2\left(1+\Delta^2/a\right)$. Therefore
\begin{equation*}
    \left|\log\!\left(\frac{a+(z-\Delta)^2}{a+z^2}\right)\right|
\le \log 2 + \log\!\left(1+\Delta^2/a\right).
\end{equation*}
The conclusion follows by setting $C_1:=\frac{\nu+1}{2}\log 2$ and $C_2:=\frac{\nu+1}{2}$.
\end{proof}

\noindent At this point, we remind that $Y_t | (Y^{t-1}, V=v) \overset{d}{\sim} \texttt{T}_{\nu}(F^{-1}(v,t,Y^{t-1}), \sigma)$, and, in order to ease the notation, we denote the corresponding density function $\mathfrak{q}_{\nu}(\cdot; F^{-1}(v,t,Y^{t-1}),\sigma)$ by $\mathfrak{f}_{Y_t | Y^{t-1}, v}(\cdot)$, to highlight the role of $v \in \Theta_{\varepsilon}$. We define 
\begin{equation*}
    \ell_t(v):=\log\left(
\frac{\mathfrak{f}_{Y_t | Y^{t-1}, v}(Y_t)}{\mathfrak{f}_{Y_t | Y^{t-1}, v_0}(Y_t)}\right),\quad\text{and}\quad D_t(v):=\ell_t(v)-\mathbb E_{t-1}^{v_0}[\ell_t(v)].
\end{equation*}
In particular, $(D_t(v))_{t\ge1}$ is a martingale difference sequence with respect to $\sigma(Y^t)$. We now prove that
\begin{equation}\label{eq::suptendtozero}  
    \sup_{v \in \Theta_\varepsilon}\Big|\frac{1}{t}\sum\limits_{s=1}^{t} D_s(v)\Big| \rightarrow 0,\quad \mathbb{P}^{v_0}-a.s.
\end{equation}
The proof of the result in the previous equation is divided in three steps.

\indent\emph{Step 1} We set $m_t(v):=F^{-1}(v,t,Y^{t-1})$ and $\Delta_t(v):=m_t(v)-m_t(v_0)$. Lemma \ref{lem:log_envelope} implies that there exist positive constants $C_1$ and $C_2$ such that   
\begin{equation*}
    |\ell_t(v)| \le C_1 + C_2\log\left(1+\frac{\Delta_t(v)^2}{\nu\sigma^2}\right)
\end{equation*}
In particular, on $\Theta_\varepsilon$ we have $|\Delta_t(v)|\le L_{\varepsilon,t}(Y^{t-1})\,|v-v_0|
\le L_t\,\mathrm{diam}(\Theta_\varepsilon)$. Hence:
\begin{equation*}
\sup_{v\in\Theta_\varepsilon}|\ell_t(v)|
\le \bar b_t
:=C_1 + C_2\log\!\left(1+\frac{\mathrm{diam}(\Theta_\varepsilon)^2 L_t^2}{\nu\sigma^2}\right).
\end{equation*}
Therefore, for all $v\in\Theta_\varepsilon$, $|D_t(v)| \le 2\bar b_t$. Assumption \ref{ass:moderate_steepening} implies that $\bar b_t = O(\log t)$.

\indent\emph{Step 2} We observe that for the location-scale Student-$t$ family, we have 
\begin{equation*}
\sup_{y,\mu}\left|\partial_\mu\log \mathfrak{q}_{\nu}(y;  \mu,\sigma)\right|=\frac{\nu+1}{2\sigma\sqrt{\nu}}:=C_0<\infty.
\end{equation*}
Therefore, for all $v,v'\in\Theta_\varepsilon$, $|\ell_t(v)-\ell_t(v')|
\le C_0\,|m_t(v)-m_t(v')|
\le C_0\,L_t\,|v-v'|$, and similarly $|D_t(v)-D_t(v')|\le 2C_0\,L_t\,|v-v'|$.
\indent\emph{Step 3} We fix $\eta>0$ and prove the uniform martingale bound in \eqref{eq::suptendtozero} by discretization of $\Theta_\varepsilon$. In this step, we use Assumption \ref{ass:moderate_steepening} and we assume that
there exist constants $C_L>0$ and $\kappa>0$ such that $L_s\le C_L(1+s)^\kappa$ for all $s\ge1$. We set $\delta_t := t^{-(\kappa+2)}$ and let $\mathcal N_t\subset \Theta_\varepsilon$ be a deterministic $\delta_t$-net of $\Theta_\varepsilon$: for every $v\in\Theta_\varepsilon$ there exists $u\in\mathcal N_t$ such that $|v-u|\le \delta_t$.
Since $\Theta_\varepsilon$ is an interval, we can choose $\mathcal N_t$ so that
\begin{equation*}
|\mathcal N_t|\le 1+\frac{\operatorname{diam}(\Theta_\varepsilon)}{\delta_t}
\le C_\varepsilon\, t^{\kappa+2}
\end{equation*}
for some constant $C_\varepsilon>0$ depending only on $\Theta_\varepsilon$. Now, we fix $u\in\mathcal N_t$. The sequence $(D_s(u))_{s\ge1}$ is a martingale difference sequence and,
by \emph{Step 1}, satisfies $|D_s(u)|\le 2b_s$ for all $s$. Therefore, Azuma-Hoeffding's inequality yields
\begin{equation*}
\mathbb P_{v_0}\!\left(\left|\sum\limits_{s=1}^t D_s(u)\right|\ge t\eta\right)
\le 2\exp\!\left(-\frac{t^2\eta^2}{2\sum\limits_{s=1}^t (2b_s)^2}\right)
=2\exp\!\left(-\frac{\eta^2 t^2}{8\sum\limits_{s=1}^t b_s^2}\right).
\end{equation*}
Under $L_s\le C_L(1+s)^\kappa$, the bound in \emph{Step 1} implies $b_s\le C(1+\log(1+s))$ for some
constant $C>0$, hence $\sum\limits_{s=1}^t b_s^2 \le C' t(\log t)^2$ for some $C'>0$ and all large $t$.
Consequently, for a constant $c>0$,
\begin{equation*}
\mathbb P_{v_0}\!\left(\left|\sum\limits_{s=1}^t D_s(u)\right|\ge t\eta\right)
\le 2\exp\!\left(-c\,\frac{\eta^2 t}{(\log t)^2}\right).
\end{equation*}
A union bound over $\mathcal N_t$ gives
\begin{equation*}
\mathbb P_{v_0}\!\left(\max_{u\in\mathcal N_t}\left|\frac{1}{t}\sum\limits_{s=1}^t D_s(u)\right|\ge \eta\right)
\le 2\,|\mathcal N_t|\exp\!\left(-c\,\frac{\eta^2 t}{(\log t)^2}\right)
\le 2C_\varepsilon\, t^{\kappa+2}\exp\!\left(-c\,\frac{\eta^2 t}{(\log t)^2}\right),
\end{equation*}
which is summable in $t$. By the Borel--Cantelli lemma,
\begin{equation*}
\max_{u\in\mathcal N_t}\left|\frac{1}{t}\sum\limits_{s=1}^t D_s(u)\right|\rightarrow 0,
\qquad \mathbb P_{v_0}\text{-a.s.}
\end{equation*}
Finally, we fix an arbitrary $v\in\Theta_\varepsilon$ and choose $v_t\in\mathcal N_t$ such that
$|v-v_t|\le\delta_t$. By \emph{Step 2},
\begin{equation*}
\left|\frac{1}{t}\sum\limits_{s=1}^t D_s(v)-\frac{1}{t}\sum\limits_{s=1}^t D_s(v_t)\right|
\le \frac{1}{t}\sum\limits_{s=1}^t 2C_0L_s\,|v-v_t|
\le 2C_0\delta_t \frac{1}{t}\sum\limits_{s=1}^t L_s.
\end{equation*}
Using $L_s\le C_L(1+s)^\kappa$, we have $\frac{1}{t}\sum\limits_{s=1}^t L_s \le C'' t^\kappa$ for some
$C''>0$, hence
\begin{equation*}
2C_0\delta_t  \frac{1}{t}\sum\limits_{s=1}^t L_s
\le C\, t^{-(\kappa+2)}  t^\kappa
= C\,t^{-2}\rightarrow 0.
\end{equation*}
Combining with the net bound above yields
\begin{equation*}
\sup_{v\in\Theta_\varepsilon}\left|\frac{1}{t}\sum\limits_{s=1}^t D_s(v)\right|\rightarrow 0,
\qquad \mathbb P_{v_0}\text{-a.s.},
\end{equation*}
which is exactly \eqref{eq::suptendtozero}.

At this point, to prove Proposition \ref{lemm::Lemma_uniform_convergence_loglikelihood_ratio}, we write $\log R_t(v)=\sum\limits_{s=1}^t \ell_s(v)$ and decompose
\begin{equation*}
\ell_s(v)=\mathbb E_{s-1}^{v_0}[\ell_s(v)] + D_s(v).
\end{equation*}
Then, we define the one-step conditional KL increment
\begin{equation*}
K_s(v,v_0):=-\mathbb E_{s-1}^{v_0}[\ell_s(v)]\ge 0.
\end{equation*}
Then
\begin{equation*}
\frac{1}{t}\log R_t(v) + \frac{1}{t}\sum\limits_{s=1}^t K_s(v,v_0)
=
\frac{1}{t}\sum\limits_{s=1}^t D_s(v).
\end{equation*}
By \emph{Step 3} above the right-hand side converges to $0$ uniformly on $\Theta_\varepsilon$.
Under Assumption~\ref{ass:KL_stabilization} we have
$\sup_{v\in\Theta_\varepsilon}\left|\frac{1}{t}\sum\limits_{s=1}^t K_s(v,v_0)-K(v,v_0)\right|\to 0$,
and therefore
\begin{equation*}
\sup_{v\in\Theta_\varepsilon}\left|\frac{1}{t}\log R_t(v)+K(v,v_0)\right|\to 0,
\qquad \mathbb P_{v_0}^\infty\text{-a.s.}
\end{equation*}

\section{Proofs of the asymptotic price impact and related supporting  results}\label{app::Supporting_Results_Market_Impact}
\subsection{Regularly varying functions and corresponding results}\label{subsec::regular_variation}
\begin{definition}
	A function $g: (0,+\infty) \mapsto (0,+\infty)$ is said to be \textit{regularly varying of index} $\rho$ at $\infty$ if
    \begin{equation*}
        \lim_{\lambda \rightarrow \infty} \frac{g(\lambda x)}{g(\lambda)} = x^{\rho}, \quad \forall x > 0.
    \end{equation*}
	Analogously, a function $g: (-\infty,0)\mapsto (0,+\infty)$ is said to be regularly varying of index $\rho$ at $-\infty$ if $g(-x)$ is regularly varying of index $\rho$ at $+\infty$.\\
	For $\rho=0$, that is, if
	\begin{equation*}
	    	\lim_{\lambda \rightarrow +\infty} \frac{g(\lambda x)}{g(\lambda)} = 1, \quad \forall x > 0,
	\end{equation*}
	$g$ is said to be \textit{slowly varying}.
\end{definition}

We now recall the following results due to Karamata (e.g., \cite{nikolic2018karamata}).
\begin{theorem}\label{thm:KaramataDirect}
Let $g$ be a regularly varying function with index $\rho$ at $\infty$, and be locally bounded in $[X,\infty)$ for a certain $X \in \mathbb{R}$. Then
		\begin{enumerate}
			\item for any $\sigma > -(\rho+1)$,
			\begin{equation}
			\frac{x^{\sigma+1}g(x)}{\int_{X}^{x}t^{\sigma}g(t)\ud t}\rightarrow \sigma+\rho+1,\quad x\rightarrow \infty;
			\end{equation}
			\item for any $\sigma<-(\rho+1)$,
			\begin{equation}
			\frac{x^{\sigma+1}g(x)}{\int_{x}^{\infty}t^{\sigma}g(t)\ud t}\rightarrow -(\sigma+\rho+1),\quad x\rightarrow \infty.
			\end{equation}
		\end{enumerate}
    The equality case requires separate slowly varying/logarithmic treatment.
\end{theorem}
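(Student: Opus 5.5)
We prove both parts by rescaling the integral by $x$, so that the ratio becomes an integral of $g(xs)/g(x)$ over $s$. The pointwise limit $s^\rho$ of this integrand is supplied by regular variation. Uniform control near $s=0$ (part 1) or $s=\infty$ (part 2) comes from the Uniform Convergence Theorem and Potter's bounds. Both are standard consequences of regular variation and do not rely on anything specific to the paper.

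For part 1, fix $\sigma>-(\rho+1)$ and substitute $t=xs$:
\begin{equation*}
\frac{\int_X^x t^\sigma g(t)\,\ud t}{x^{\sigma+1}g(x)}=\int_{X/x}^{1} s^{\sigma}\frac{g(xs)}{g(x)}\,\ud s .
\end{equation*}
We must show that this converges to $\int_0^1 s^{\sigma+\rho}\,\ud s=(\sigma+\rho+1)^{-1}$; the reciprocal then gives the claim. Fix $\delta\in(0,1)$ and split the integral at $\delta$.
\begin{itemize}
\item On $[\delta,1]$, the Uniform Convergence Theorem gives $g(xs)/g(x)\to s^\rho$ uniformly in $s$. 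The piece therefore tends to $\int_\delta^1 s^{\sigma+\rho}\,\ud s$.
\item On $[X/x,\delta]$, choose $\varepsilon>0$ with $\sigma+\rho-\varepsilon>-1$. Potter's bound gives $X_0\ge X$ and $A>0$ with $g(xs)/g(x)\le A s^{\rho-\varepsilon}$ whenever $x\ge X_0$, $xs\ge X_0$ and $s\le1$. The part with $xs\ge X_0$ is therefore at most $A\int_0^\delta s^{\sigma+\rho-\varepsilon}\,\ud s=O(\delta^{\sigma+\rho-\varepsilon+1})$.
\item The part with $t=xs\in[X,X_0]$ equals $\int_X^{X_0}t^\sigma g(t)\,\ud t\,/\,(x^{\sigma+1}g(x))$. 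Its numerator is a finite constant because $g$ is locally bounded. Its denominator is regularly varying with index $\sigma+\rho+1>0$, hence tends to $\infty$, so this part vanishes.
\end{itemize}
Letting $x\to\infty$ and then $\delta\to0$ yields the limit.

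For part 2, fix $\sigma<-(\rho+1)$ and write
\begin{equation*}
\frac{\int_x^\infty t^\sigma g(t)\,\ud t}{x^{\sigma+1}g(x)}=\int_1^\infty s^\sigma\frac{g(xs)}{g(x)}\,\ud s .
\end{equation*}
Choose $\varepsilon>0$ with $\sigma+\rho+\varepsilon<-1$. Potter's bound gives $g(xs)/g(x)\le A s^{\rho+\varepsilon}$ for all $s\ge1$ and all large $x$. This shows that the tail integral is finite. It also supplies an integrable dominating function, so dominated convergence with the pointwise limit $s^\rho$ gives
\begin{equation*}
\int_1^\infty s^{\sigma+\rho}\,\ud s=-(\sigma+\rho+1)^{-1},
\end{equation*}
and inverting gives the claim. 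In the boundary case $\sigma=-(\rho+1)$ both limiting integrals diverge logarithmically, which is why the statement excludes it.

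The main obstacle is the uniform control away from the compact range of $s$, that is, near $s=0$ in part 1 and as $s\to\infty$ in part 2. This is exactly where Potter's bounds are needed, since pointwise regular variation alone does not justify passing the limit inside the integral. In part 1 the local-boundedness hypothesis handles the residual piece on $[X,X_0]$, where the Potter bound does not apply.
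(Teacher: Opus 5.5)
Your proof is correct and follows the standard textbook argument for Karamata's theorem: the rescaling $t=xs$, the Uniform Convergence Theorem on $[\delta,1]$, Potter's bounds for the two ends of the range of $s$, and, for the residual piece on $[X,X_0]$, local boundedness together with the fact that the denominator is regularly varying of index $\sigma+\rho+1>0$ and so diverges. The paper gives no proof of its own and simply quotes the result from the literature \cite{nikolic2018karamata}; the only tacit hypotheses you use, measurability of $g$ (needed for the Uniform Convergence Theorem and Potter's bounds, and already implicit in the integrals) and $X>0$ (so that $t^\sigma$ is defined and bounded on $[X,X_0]$), are equally implicit in the statement.
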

\begin{theorem}\label{thm:KaramataConverse}.
		Let $g$ be positive and locally integrable in $[X,\infty)$ for a certain $X \in \mathbb{R}$
		\begin{enumerate}
			\item If for some $\sigma>-(\rho+1)$,
			\begin{equation}
			\frac{x^{\sigma+1}g(x)}{\int_{X}^{x}t^{\sigma}g(t)\ud t}\rightarrow \sigma+\rho+1,\quad x\rightarrow\infty,
			\end{equation}
			then $g$ is regularly varying of index $\rho$ at $\infty$.
			\item If for some $\sigma<-(\rho+1)$,
			\begin{equation}
			\frac{x^{\sigma+1}g(x)}{\int_{x}^{\infty}t^{\sigma}g(t)\ud t}\rightarrow -(\sigma+\rho+1),\quad x\rightarrow\infty,
			\end{equation}
			then again $g$ is regularly varying of index $\rho$ at $\infty$.
		\end{enumerate}
\end{theorem}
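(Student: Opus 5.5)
The plan is to reduce the statement to the representation theorem for slowly varying functions by working with the integral rather than with $g$ itself. For part 1, set $c:=\sigma+\rho+1>0$ and $G(x):=\int_X^x t^{\sigma}g(t)\,\ud t$. Since $g$ is positive and locally integrable, $G$ is finite, strictly positive for $x>X$, and absolutely continuous, with $G'(x)=x^{\sigma}g(x)$ for almost every $x$. Define
\begin{equation*}
b(x):=\frac{x^{\sigma+1}g(x)}{G(x)}.
\end{equation*}
By hypothesis, $b(x)\to c$ as $x\to\infty$. With this notation,
\begin{equation*}
\frac{\ud}{\ud x}\log G(x)=\frac{b(x)}{x}\quad\text{a.e.}
\end{equation*}

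Because $\log G$ is absolutely continuous on $[x_0,\infty)$ for any fixed $x_0>X$, integrating gives
\begin{equation*}
G(x)=G(x_0)\exp\Big(\int_{x_0}^x\frac{b(t)}{t}\,\ud t\Big)=x^{c}\,\ell(x),
\qquad
\ell(x):=G(x_0)\,x_0^{-c}\exp\Big(\int_{x_0}^x\frac{b(t)-c}{t}\,\ud t\Big).
\end{equation*}
Since $b(t)-c\to0$, $\ell$ is slowly varying. To see this directly, take $\lambda\to\infty$ with $x>0$ fixed. Then
\begin{equation*}
\log\frac{\ell(\lambda x)}{\ell(\lambda)}=\int_{\lambda}^{\lambda x}\frac{b(t)-c}{t}\,\ud t,
\end{equation*}
which is bounded in absolute value by $\sup_{t\ge\min(\lambda,\lambda x)}|b(t)-c|\,|\log x|\to0$. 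This is the Karamata representation, and it requires only measurability and local integrability of $b$, not continuity.

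Next, return to $g$ via the pointwise identity $g(x)=b(x)\,x^{-\sigma-1}G(x)$, which holds for every $x$ where $g$ is evaluated, by the definition of $b$. For each fixed $x>0$,
\begin{equation*}
\frac{g(\lambda x)}{g(\lambda)}=\frac{b(\lambda x)}{b(\lambda)}\;x^{-\sigma-1}\;\frac{G(\lambda x)}{G(\lambda)}\longrightarrow 1\cdot x^{-\sigma-1}\cdot x^{c}=x^{\rho},
\end{equation*}
where we use $b\to c\neq0$ and the regular variation of $G$ with index $c$. Hence $g$ is regularly varying of index $\rho$.

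Part 2 follows the same scheme with the tail integral $H(x):=\int_x^\infty t^{\sigma}g(t)\,\ud t$. Here $H$ is finite and positive; finiteness is implicit in the hypothesis, since the ratio is assumed to be well defined. Then $H'=-x^{\sigma}g$ almost everywhere. With $b(x):=x^{\sigma+1}g(x)/H(x)\to c':=-(\sigma+\rho+1)>0$, we get $(\log H)'=-b/x$, and the same integration yields $H(x)=x^{-c'}\ell(x)$ with $\ell$ slowly varying. Finally, $g=b\,x^{-\sigma-1}H$ has index $-c'-\sigma-1=\rho$.

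The main obstacle is the lack of regularity: $g$ is only locally integrable, so $G$ and $H$ are differentiable only almost everywhere. I would handle this by working throughout with absolute continuity of $\log G$ and $\log H$, and by never differentiating $g$. The convergence of $g(\lambda x)/g(\lambda)$ is then obtained pointwise from the algebraic identity linking $g$ to $b$ and $G$ (or $H$), rather than from any smoothness of $g$.
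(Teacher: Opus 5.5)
Your proof is correct. The paper itself gives no proof of this statement. It only recalls it as a classical result of Karamata, with a pointer to \cite{nikolic2018karamata}, so there is no in-paper argument to compare against. What you wrote is essentially the standard textbook proof. You put $G$ (resp.\ $H$) into Karamata representation form by integrating $(\log G)'=b/x$ (resp.\ $(\log H)'=-b/x$). You then recover $g=b\,x^{-\sigma-1}G$ (resp.\ $b\,x^{-\sigma-1}H$) algebraically, which gives index $c-\sigma-1=\rho$ in both parts.

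A few details are worth stating explicitly so the argument is airtight.

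\begin{itemize}
\item The statement allows $X\in\mathbb{R}$, but it tacitly needs $X>0$, so that $t^{\sigma}$ is defined and $t^{\sigma}g(t)$ is locally integrable on $[X,\infty)$. You use this when you assert that $G$ is finite.
\item Local absolute continuity of $\log G$, and local integrability of $b(t)/t=t^{\sigma}g(t)/G(t)$, both follow from $G\ge G(x_0)>0$ on $[x_0,\infty)$. In part 2 they follow from $H\ge H(x)>0$ on $[x_0,x]$.
\item In part 2, finiteness of $H$ is forced by the hypothesis. If $H(x)=\infty$ for one $x$, then $H\equiv\infty$ by local integrability, and the ratio would be identically $0\neq-(\sigma+\rho+1)>0$.
\end{itemize}

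Your handling of the a.e.\ differentiability is sound. You only integrate $b/t$ and use the pointwise identity defining $b$, and you never differentiate $g$.
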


\subsection{Proof of Theorem \ref{th::power_law_price_impact}}\label{app::price_impact_proof}
We prove the statement for $M-F(\cdot,t,Y^{t-1})$; the one for $F(\cdot,t,Y^{t-1})-m$ is analogous. The result is proved by induction over trading periods. For \(t=1\), the sums over previous periods are empty. For \(t>1\), we assume that the previous-period branches have the regular-variation exponents \(\rho_s^\pm\), \(s<t\), established by the induction hypothesis. Moreover, we assume the previously constructed branch tails are smoothly varying, in the sense that
\begin{equation*}
    \frac{xF'(x,s,Y^{s-1})}{M-F(x,s,Y^{s-1})}\to -\rho_s^+
\end{equation*}
Recall that $\Pi_t^{+}(x)=\mathbb{P}_t(V \geq x|Y^{t-1})=\int_{x}^{M}\mathfrak{p}_{t,V}(v \vert Y^{t-1})\,\ud v$. 
Hence $\partial_x\Pi_t^+(x)=-\mathfrak{p}_{t,V}(x\mid Y^{t-1})$, and by L'H\^opital rule
\begin{equation}\label{eq::first_equation}
    \begin{split}
        \lim_{x \rightarrow M} (M-x)\frac{\partial_x \Pi_t(x)}{\Pi_t(x)} &=  \lim_{x \rightarrow M}\frac{-(M-x)\mathfrak{p}_{t,V}(x \vert Y^{t-1})}{\int_{x}^{M}\mathfrak{p}_{t,V}(v \vert Y^{t-1})\,\ud v}\\
        &=-1+\lim_{x \rightarrow M}\frac{(M-x) \mathfrak{p}_{t,V}^{'}(x\vert Y^{t-1})}{\mathfrak{p}_{t,V}(x\vert Y^{t-1})}\\
        &=-1+\lim_{x \rightarrow M}(M-x)[\log(\mathfrak{p}_{t,V}(x\vert Y^{t-1}))]^{'}
    \end{split}
\end{equation}
By using \eqref{eq::beliefsupdate}, we have
\begin{equation*}
    \log(\mathfrak{p}_{t,V}(x\vert Y^{t-1})) = C + \log(\mathfrak{p}_{V}(x))-\frac{(\nu+1)}{2}\sum\limits_{s=1}^{t-1}\log\left[1+\frac{(Y_s-F^{-1}(x,s,Y^{s-1}))^2}{\nu\sigma^2}\right].
\end{equation*}
It follows, 
\begin{equation*}
\begin{split}
    (M-x)[\log(\mathfrak{p}_{t,V}(x\vert Y^{t-1}))]^{'} &= \frac{(M-x)\mathfrak{p}_{V}^{'}(x)}{\mathfrak{p}_{V}(x)}\\
                                             &-(\nu+1)\sum\limits_{s=1}^{t-1}\frac{M-x}{F^{'}(F^{-1}(x,s,Y^{s-1}))}\frac{F^{-1}(x,s,Y^{s-1})-Y_{s}}{\nu\sigma^2 + (Y_s-F^{-1}(x,s,Y^{s-1}))^2}
\end{split}
\end{equation*}
By inserting the previous expression in \eqref{eq::first_equation}, using the assumption on the fundamental asset value, the properties of $F(\cdot,s,Y^{s-1})$ (i.e., $F(\cdot,s,Y^{s-1}) \uparrow M$ as $x\rightarrow + \infty$), and the fact (to be proved below) that $M-F(x,s,Y^{s-1})$ is regularly varying at $+\infty$ of index $\rho_{s}^{+}$
\begin{equation}\label{eq::first_equation_updated}
    \begin{split}
        &\lim_{x \rightarrow M} (M-x)\frac{\partial_x \Pi_t(x)}{\Pi_t(x)}\\
        &=-1+\lim_{x \rightarrow M}\frac{(M-x)\mathfrak{p}_{V}^{'}(x)}{\mathfrak{p}_{V}(x)}-(\nu+1)\sum\limits_{s=1}^{t-1}\left[\lim_{x \rightarrow +\infty}\frac{1}{(x-Y_s)+\frac{\nu\sigma^2}{x-Y_s}}\frac{M-F(x,s,Y^{s-1})}{F^{'}(x,s,Y^{s-1})}\right]\\
        &=-1+L+(\nu+1)\left[\lim_{x \rightarrow +\infty}\frac{1}{(x-Y_s)+\frac{\nu\sigma^2}{x-Y_s}}\frac{(M-F(x,s,Y^{s-1}))}{(M-F(x,s,Y^{s-1}))^{'}}\right]\\
        &=-1+L+(\nu+1)\sum\limits_{s=1}^{t-1}\lim_{x \rightarrow +\infty}\frac{1}{\rho_s^+}\frac{1}{1-\frac{Y_s}{x}+\frac{\nu\sigma^2}{x(x-Y_s)}}\\
        &=-1+L+(\nu+1)\sum\limits_{s=1}^{t-1}\frac{1}{\rho_s^{+}}.
    \end{split}
\end{equation}
{\color{black}
By using \eqref{eq::derivative_rewriting}, we obtain 
\begin{equation*}
     \partial_x\Psi_t^{+}(M) = \frac{1-L-(\nu+1)\sum\limits_{s=1}^{t-1}\frac{1}{\rho_s^{+}}}{2-L-(\nu+1)\sum\limits_{s=1}^{t-1}\frac{1}{\rho_s^{+}}}:=\frac{A(t)}{1+A(t)}.
\end{equation*}
Recall that $\Psi_t^+(x)=\mathbb{E}_t[V \vert V \geq x ]\geq x$ for all $x<M$, and that $\partial_x \Psi_t^+(M)\geq0$ by definition. It implies
\begin{equation*}
    0\leq\partial_x \Psi_t^+(M)=\lim_{x\rightarrow M}\frac{M-\Psi_t^+(x)}{M-x}\leq \lim_{x\rightarrow M}\frac{M-x}{M-x}=1.
\end{equation*}
Since $\partial_x \Psi_t^+(M)=\frac{A(t)}{1+A(t)}\leq 1$, we must have $1+A(t)>0$. Combining this with $\frac{A(t)}{1+A(t)}\geq 0$ yields $A(t)\geq0$. Therefore, $\partial_x \Psi_t^+(M)\in [0,1)$. Moreover, as $t\rightarrow\infty$ we have $A(t)\rightarrow\infty$, and consequently $\lim_{t\rightarrow+\infty}\partial_x\Psi_t^{+}(M)=1$. 
}

We are left to prove that $M-F(x,s,Y^{s-1})$ is regularly varying at $+\infty$ of index $\rho_{s}^{+}$ as in the statement of the theorem, and that $\lim_{t\rightarrow\infty}|\rho_t^{+}|=0$. We write
\begin{equation*}
    \begin{split}
        &\lim_{\alpha\rightarrow+\infty}\frac{M-F(\alpha x, t, Y^{t-1})}{M-F(\alpha, t, Y^{t-1})}\\
        &=\lim_{\alpha\rightarrow+\infty}\int_{-\infty}^{+\infty}\left[\frac{1}{N_t}\mathfrak{q}_{\nu}(x-z; 0, \sigma)+\frac{N_t-1}{N_t}\bar{\mathfrak{q}}_{\nu}(x, z; 0, \sigma)\right]\frac{M-\phi_{F(\cdot, t, Y^{t-1})}(z)}{M-F(\alpha, t, Y^{t-1})}\, \ud z\\
        &=\lim_{\alpha\rightarrow+\infty}\int_{-\infty}^{+\infty}\left[\frac{1}{N_t}\mathfrak{q}_{\nu}(x-z; 0, \frac{\sigma}{\alpha})+\frac{N_t-1}{N_t}\bar{\mathfrak{q}}_{\nu}(x, z; 0, \frac{\sigma}{\alpha})\right]\frac{M-\phi_{F(\cdot, t, Y^{t-1})}(\alpha z)}{M-F(\alpha, t, Y^{t-1})}\, \ud z\\
    \end{split}
\end{equation*}
\noindent Notice that when $z>0$, $\phi_{F(\cdot, t, Y^{t-1})}(\alpha z)$ can be rewritten as 
\begin{equation}\label{eq::trick_proof}
    \begin{split}
        &\phi_{F(\cdot, t, Y^{t-1})}(\alpha z)=\int_{-\infty}^{+\infty}\Psi_t^{+}(F(\alpha y, t, Y^{t-1}))\Lambda_t(\alpha, z, \ud y)\\
        &\text{with}\quad \Lambda_t(\alpha, z, \ud y):=\frac{\Pi_t^{+}(F(\alpha y, t, Y^{t-1}))\mathfrak{q}_{\nu}(z-y;0,\frac{\sigma}{\alpha})}{\int_{-\infty}^{+\infty}\Pi_t^{+}(F(\alpha u, t, Y^{t-1}))\mathfrak{q}_{\nu}(z-u;0,\frac{\sigma}{\alpha})\,\ud u}\,\ud y,
     \end{split}
\end{equation}   
and it converges to $\Psi_t^{+}(F(\alpha z, t, Y^{t-1}))$ because the probability measure that converges to the point mass at $z$ as $\alpha\rightarrow+\infty$. We now justify the localization step in \eqref{eq::trick_proof}. To simplify notation, we write $F_t(x):=F(x,t,Y^{t-1})$, $\Gamma_t^+(x):=M-F_t(x)$, $\chi_t^+:=\partial_x\Psi_t^+(M)$. Then, we fix \(z>0\) and, along a subsequence \(\alpha\to+\infty\), and suppose that $\gamma(t,y):=\lim_{\alpha\to+\infty}
\frac{\Gamma_t^+(\alpha y)}{\Gamma_t^+(\alpha)}$ exists locally uniformly for \(y\in(0,\infty)\). We prove that
\begin{equation}\label{eq::loc_phi_limit}
    \frac{M-\phi_{F_t}^+(\alpha z)}
{\Gamma_t^+(\alpha)}
\rightarrow
\chi_t^+\gamma(t,z)
\end{equation}
To this end, we let $\mathfrak q_{\nu,\alpha}(r):=
\mathfrak q_\nu\left(r;0,\frac{\sigma}{\alpha}\right)$, and define $D_\alpha(z):=\int_{-\infty}^{+\infty}
\Pi_t^+(F_t(\alpha u))\mathfrak q_{\nu,\alpha}(z-u)\,\ud u$.
Then the probability measure in \eqref{eq::trick_proof} is
\begin{equation*}
\Lambda_t(\alpha,z,\ud y)
=
\frac{
\Pi_t^+(F_t(\alpha y))\mathfrak q_{\nu,\alpha}(z-y)
}{
D_\alpha(z)
}\,\ud y.
\end{equation*}
Moreover, $M-\phi_{F_t}^+(\alpha z)=
\int_{-\infty}^{+\infty}
\left(M-\Psi_t^+(F_t(\alpha y))\right)
\Lambda_t(\alpha,z,\ud y).
\label{eq::loc_M_minus_phi}
$. By Assumption~\ref{ass:endpoint-regularity},
\begin{equation}
\Pi_t^+(M-s)\sim C_{t,+}s^{\kappa_t^+},
\qquad
M-\Psi_t^+(M-s)\sim \chi_t^+s,
\qquad s\downarrow0.
\label{eq::loc_endpoint}
\end{equation}
Since \(F_t(\alpha y)\uparrow M\) locally uniformly on compact subsets of \((0,\infty)\),
\eqref{eq::loc_endpoint} implies that, for every compact \(K\subset(0,\infty)\),
\begin{equation}
\sup_{y\in K}
\left|
\frac{\Pi_t^+(F_t(\alpha y))}
{C_{t,+}\Gamma_t^+(\alpha y)^{\kappa_t^+}}
-1
\right|
\to0,
\label{eq::loc_Pi_uniform}
\end{equation}
and
\begin{equation}
\sup_{y\in K}
\left|
\frac{M-\Psi_t^+(F_t(\alpha y))}
{\Gamma_t^+(\alpha y)}
-\chi_t^+
\right|
\to0.
\label{eq::loc_Psi_uniform}
\end{equation}

We first show that
\begin{equation}
\Lambda_t(\alpha,z,\cdot)\Rightarrow \delta_z.
\label{eq::loc_Lambda_delta}
\end{equation}
Let \(U_\delta:=(z-\delta,z+\delta)\), with \(0<\delta<z/2\). Since
\(0\le \Pi_t^+\le1\),
\begin{equation}
\int_{U_\delta^c}
\Pi_t^+(F_t(\alpha y))\mathfrak q_{\nu,\alpha}(z-y)\,\ud y
\le
\int_{|y-z|\ge\delta}\mathfrak q_{\nu,\alpha}(z-y)\,\ud y.
\end{equation}
The last term is a Student-\(t\) tail probability with scale \(\sigma/\alpha\), hence
\begin{equation}
\int_{|y-z|\ge\delta}\mathfrak q_{\nu,\alpha}(z-y)\,\ud y
=
\mathbb P\left(|\mathtt T_\nu(0,\sigma)|\ge \alpha\delta\right)
=
O(\alpha^{-\nu}).
\label{eq::loc_outside_kernel}
\end{equation}

On the other hand, by \eqref{eq::loc_Pi_uniform} and the local uniform convergence of
\(\Gamma_t^+(\alpha y)/\Gamma_t^+(\alpha)\), there exists \(c_\delta>0\) such that, for all
large \(\alpha\) and all \(u\in U_{\delta/2}\), $\Pi_t^+(F_t(\alpha u))
\ge
c_\delta\,\Pi_t^+(F_t(\alpha z))$.
Therefore, $D_\alpha(z) \ge
c_\delta\,\Pi_t^+(F_t(\alpha z))
\int_{U_{\delta/2}}\mathfrak q_{\nu,\alpha}(z-u)\,\ud u$.
The integral on the right converges to one, so
\begin{equation}
D_\alpha(z)\gtrsim \Pi_t^+(F_t(\alpha z));
\label{eq::loc_D_lower}
\end{equation}
in general, $f \gtrsim g \Longleftrightarrow \exists 0<C<+\infty\,\,\text{s.t.}\,\,f \geq C \cdot g$. Since the branch belongs to the tail-controlled class, we have $\Pi_t^+(F_t(\alpha z)) \asymp \Gamma_t^+(\alpha z)^{\kappa_t^+}$. By Lemma~\ref{lem:candidate-exponents}, $\kappa_t^+(-\rho_t^+)<\nu$, and thus 
\begin{equation}
\alpha^{-\nu}
=
o\!\left(\Pi_t^+(F_t(\alpha z))\right).
\label{eq::loc_tail_compare_one}
\end{equation}
Combining \eqref{eq::loc_outside_kernel}, \eqref{eq::loc_D_lower}, and
\eqref{eq::loc_tail_compare_one}, we get $\Lambda_t(\alpha,z,U_\delta^c)\to0$.
Since \(\delta>0\) is arbitrary, \eqref{eq::loc_Lambda_delta} follows.

We now prove \eqref{eq::loc_phi_limit}. From \eqref{eq::loc_M_minus_phi},
\begin{equation}
\frac{M-\phi_{F_t}^+(\alpha z)}
{\Gamma_t^+(\alpha)}
=
\int
\frac{M-\Psi_t^+(F_t(\alpha y))}
{\Gamma_t^+(\alpha)}
\Lambda_t(\alpha,z,\ud y).
\label{eq::loc_norm_integral}
\end{equation}
On \(U_\delta\), \eqref{eq::loc_Psi_uniform} and the local uniform convergence of
\(\Gamma_t^+(\alpha y)/\Gamma_t^+(\alpha)\) imply
\begin{equation}
\sup_{y\in U_\delta}
\left|
\frac{M-\Psi_t^+(F_t(\alpha y))}
{\Gamma_t^+(\alpha)}
-
\chi_t^+\gamma(t,y)
\right|
\to0.
\label{eq::loc_integrand}
\end{equation}
Hence the contribution from \(U_\delta\) converges to $\chi_t^+\gamma(t,z)$, using \eqref{eq::loc_Lambda_delta}.

It remains to show that the contribution from \(U_\delta^c\) is negligible. Let
\begin{equation}
N_\alpha^{out}(z):=
\int_{U_\delta^c}
\left(M-\Psi_t^+(F_t(\alpha y))\right)
\Pi_t^+(F_t(\alpha y))
\mathfrak q_{\nu,\alpha}(z-y)\,\ud y.
\end{equation}
Since \(M-\Psi_t^+(F_t(\alpha y))\le M-m\) and \(\Pi_t^+\le1\),
\begin{equation}
N_\alpha^{out}(z)
=
O(\alpha^{-\nu}).
\label{eq::loc_Nout}
\end{equation}
Moreover, by \eqref{eq::loc_D_lower}, $\Gamma_t^+(\alpha)D_\alpha(z)
\gtrsim
\Gamma_t^+(\alpha)\Pi_t^+(F_t(\alpha z))
\asymp
\Gamma_t^+(\alpha)^{\kappa_t^++1}$. The last comparison uses the local uniform convergence of
\(\Gamma_t^+(\alpha z)/\Gamma_t^+(\alpha)\). By Lemma~\ref{lem:candidate-exponents}, $(\kappa_t^++1)(-\rho_t^+)<\nu$,
so
\begin{equation}
\alpha^{-\nu}
=
o\!\left(\Gamma_t^+(\alpha)^{\kappa_t^++1}\right).
\label{eq::loc_tail_compare_two}
\end{equation}
Equations \eqref{eq::loc_Nout} and \eqref{eq::loc_tail_compare_two} imply
\begin{equation}
\frac{N_\alpha^{out}(z)}
{\Gamma_t^+(\alpha)D_\alpha(z)}
\to0.
\end{equation}
Therefore the outside contribution is negligible, and \eqref{eq::loc_phi_limit} follows. At this point, for any $\alpha>0$, by mean value theorem, there exists $z^{*} \in [F(\alpha z, t, Y^{t-1}), M]$ such that  
\begin{equation*}
    \lim_{\alpha\rightarrow\infty}\frac{M-\Psi_t^{+}(F(\alpha z, t, Y^{t-1}))}{M-F(\alpha , t, Y^{t-1})}=\partial_x \Psi_t^{+}(z^{*})\frac{M-F(\alpha z, t, Y^{t-1})}{M-F(\alpha , t, Y^{t-1})}.
\end{equation*}
By assuming that $\gamma(t,z):=\lim_{\alpha\rightarrow+\infty}\frac{M-F(\alpha z, t, Y^{t-1})}{M-F(\alpha , t, Y^{t-1})}$ exists, since $F(\alpha z, t, Y^{t-1}) \uparrow M$ as $\alpha \rightarrow + \infty$, so $z^{*}\rightarrow M$, and $\partial_x \Psi_t^{+}(\cdot)$ is continuous in a neighbourhood of $M$, we have
\begin{equation*}
    \lim_{\alpha\rightarrow+\infty} \frac{M-\Psi_t^{+}(F(\alpha z, t, Y^{t-1}))}{M-F(\alpha , t, Y^{t-1})}=\partial_x \Psi_t^{+}(M) \cdot \gamma(t,z).
\end{equation*}
So, we need to solve
\begin{equation*}
		\gamma\left(t,x\right)=\frac{\partial_x\Psi_{t}^{+}\left(M\right)}{N_t}\gamma\left(t,x\right)+\frac{N_t-1}{N_t x}\partial_x\Psi_{t}^{+}\left(M\right)\int_{0}^{x}\gamma\left(t,y\right)\,dy.
\end{equation*}
Using \eqref{eq::loc_phi_limit} in the scaled fixed-point equation, every subsequential limit
\(\gamma(t,\cdot)\) satisfies, for \(x>0\),
\[
\gamma(t,x)
=
\frac{\chi_t^+}{N_t}\gamma(t,x)
+
\frac{\chi_t^+(N_t-1)}{N_t x}
\int_0^x\gamma(t,y)\,\ud y,
\label{eq::loc_Volterra}
\]
with normalization \(\gamma(t,1)=1\). The first term comes from the ordinary kernel, whose
scaled version converges to \(\delta_x\); the second term comes from the averaged kernel,
whose scaled limiting action is \(x^{-1}\int_0^x(\cdot)\,\ud y\). We solve \eqref{eq::loc_Volterra}. Rearranging,
\begin{equation*}
\left(1-\frac{\chi_t^+}{N_t}\right)\gamma(t,x)
=
\frac{\chi_t^+(N_t-1)}{N_t x}
\int_0^x\gamma(t,y)\,\ud y.
\end{equation*}
We let $A(x):=\int_0^x\gamma(t,y)\,\ud y$. Then $A'(x)=\gamma(t,x)$, and $xA'(x)=\frac{\chi_t^+(N_t-1)}{N_t-\chi_t^+}A(x)$. Therefore $A(x)=C x^{\frac{\chi_t^+(N_t-1)}{N_t-\chi_t^+}}$, and hence $\gamma(t,x)=x^{\rho_t^+}$, where
\begin{equation*}
\rho_t^+
=
\frac{\chi_t^+-1}{1-\chi_t^+/N_t}
=
\frac{\partial_x\Psi_t^+(M)-1}
{1-\partial_x\Psi_t^+(M)/N_t}.
\end{equation*}
Since every subsequential limit is the same and \(\gamma(t,1)=1\), the full family converges.
Thus
\begin{equation*}
\frac{M-F(\alpha x,t,Y^{t-1})}
{M-F(\alpha,t,Y^{t-1})}
\to
x^{\rho_t^+},
\qquad x>0,
\end{equation*}
which proves that \(M-F(\cdot,t,Y^{t-1})\) is regularly varying at \(+\infty\) with index
\(\rho_t^+\). We thus have
\begin{equation*}
    \rho_t^{+} = \frac{\partial_x \Psi_t^{+}(M) - 1}{1 - \frac{\partial_x \Psi_t^{+}(M)}{N_t} },
\end{equation*}
from which it follows that $\lim_{t \rightarrow +\infty}|\rho_t^{+}|=0$ since $\partial_x \Psi_t^{+}(M) \rightarrow 1$ and the denominator does not vanish under the assumption that $N_t>1$.
\subsection{Proof of Corollary \ref{coroll::volume_varying}}\label{app::volume_varying}
We prove the result for $\Pi_t^{+}(F(x,t,Y^{t-1}))$; the result for $\Pi_t^{-}(F(x,t,Y^{t-1}))$ is analogous. To do this, we leverage Theorem \ref{thm:KaramataDirect}, and show that
\begin{equation*}
    \lim_{x\rightarrow\infty}\frac{\Pi^{+}_t(F(x,t,Y^{t-1}))}{\int_{x}^{\infty}\frac{\Pi^{+}_t(F(u,t,Y^{t-1}))}{u}\ud u} = -\frac{\partial_x\Psi_{t}^{+}(M)}{1-\partial_x\Psi_{t}^{+}(M)}\rho_{t}^{+}.
\end{equation*}
Henceforth, to further simplify the notation, we denote $F(\cdot,t,Y^{t-1})$ by $F_t(\cdot)$. The proof is divided into four steps.

\emph{Step 1} By standard integration by parts, we obtain $\Pi^{+}_t(x)=\frac{\int_{x}^{M}\Pi^{+}_t(y)\ud y}{\Psi^{+}_t(x)-x}$, which further gives us $-\frac{\partial_x\Pi_{t}^{+}(x)}{\Pi^{+}_t(x)}=\frac{\partial_x\Psi_{t}^{+}(x)}{\Psi^{+}_t(x)-x}$. Therefore, 
\begin{align*}
    \lim_{x\rightarrow\infty}\frac{\Pi^{+}_t(F_t(x))}{\int_{x}^{\infty}\frac{\Pi_t^{+}(F_t(u))}{u}\,du}&=\lim_{x\rightarrow\infty}\frac{x\partial_x\Pi_{t}^{+}(F_t(x))F_t^{\prime}(x)}{-\Pi^{+}_t(x)}=\lim_{x\rightarrow\infty}\frac{x\partial_x\Psi_{t}^{+}(F_t(x))F^{\prime}_t(x)}{\Psi^{+}_t(F_t(x))-F_t(x)}.
\end{align*}

\emph{Step 2} We show that $\Psi^{+}_t(F_t)-F_t$ is regularly varying of index $\rho_{t}^{+}$ at $+\infty$. We write
	\begin{align*}
		\lim_{\alpha\rightarrow\infty}\frac{\Psi^{+}_t(F_t(\alpha x))-F_t(\alpha x)}{\Psi^{+}_t(F_t(\alpha))-F_t(\alpha)}=\lim_{\alpha\rightarrow\infty}\frac{\frac{\Psi^{+}_t(F_t(\alpha x))-F_t(\alpha x)}{M-F_t(\alpha x)}}{\frac{\Psi^{+}_t(F_t(\alpha ))-F_t(\alpha )}{M-F_t(\alpha)}}\frac{M-F_t(\alpha x)}{M-F_t(\alpha)}=\lim_{\alpha\rightarrow\infty}\frac{M-F_t(\alpha x)}{M-F_t(\alpha)}=x^{\rho_{t}^{+}},
	\end{align*}
	where we have used the following fact
	\begin{align*}
		\lim_{x\rightarrow\infty}\frac{\Psi^{+}_t(F_t(x))-F_t(x)}{M-F_t(x)}=\lim_{x\rightarrow\infty}\frac{(\partial_x\Psi^{+}_{t}(F_t(x))-1)F^{\prime}_t(x)}{-F^{\prime}_t(x)}=1-\partial_x\Psi_{t}^{+}(M).
	\end{align*}

\emph{Step 3} We apply Theorem \ref{thm:KaramataDirect} with $\sigma=-1$ and we obtain
	\begin{align*}
		\lim_{x\rightarrow\infty}\frac{\Psi^{+}_t(F_t(x))-F_t(x)}{\int_{x}^{\infty}\frac{\Psi^{+}_t(F_t(u))-F_t(u)}{u}\,du}=-\rho_{t}^{+}.
	\end{align*}
Therefore, 
\begin{align*}
		\lim_{x\rightarrow\infty}\frac{x\partial_x\Psi_{t}^{+}(F_t(x))F_t^{\prime}(x)}{\Psi^{+}_t(F_t(x))-F_t(x)}=&\rho_{t}^{+}+\lim_{x\rightarrow\infty}\frac{xF^{\prime}_t(x)}{\Psi^{+}_t(F_t(x))-F_t(x)}\\
		=&\rho_{t}^{+}+\lim_{x\rightarrow\infty}\frac{xF^{\prime}_t(x)}{M-F_t(x)}\frac{M-F_t(x)}{\Psi^{+}_t(F_t(x))-F_t(x)}.
\end{align*}
At this point, notice that $\lim_{x\rightarrow\infty}\frac{M-F_t(x)}{\Psi^{+}_t(F_t(x))-F_t(x)}=\frac{1}{1-\partial_x\Psi^{+}_{t}(M)}$. By using, again Theorem \ref{thm:KaramataDirect}, we have $\lim_{x\rightarrow\infty}\frac{M-F_t(x)}{\int_{x}^{\infty}\frac{M-F_t(u)}{u}\,du}=\lim_{x\rightarrow\infty}\frac{xF^{\prime}_t(x)}{M-F_t(x)}=-\rho_{t}^{+}$.

\emph{Step 4} By combining the previous steps, we obtain
\begin{align*}
		\lim_{x\rightarrow\infty}\frac{\Pi^{+}_t(F_t(x))}{\int_{x}^{\infty}\frac{\Pi^{+}_t(F_t(u))}{u}\,du} = -\frac{\partial_x\Psi_{t}^{+}(M)}{1-\partial_x\Psi^{+}_{t}(M)}\rho_{t}^{+}.
\end{align*}
Therefore, by Theorem \ref{thm:KaramataConverse}, we have $\Pi^{+}_t(F_t)$ is regularly varying with index $ \frac{\partial_x\Psi_{t}^{+}(M)}{1-\partial_x\Psi^{+}_{t}(M)}\rho_{t}^{+}$ at $\infty$. This concludes the proof. 

\subsection{Proof of Proposition \ref{prop:tilted-posterior}}\label{app::tilted-posterior}
We fix a period $t$ and a period-$(t-1)$ history $Y^{t-1}$, and set $X_t^\ast:=F^{-1}(V,t,Y^{t-1})$, $\overline H_t(y):=\mathbb P_t(X_t^\ast\ge y)$, and $\overline S_\nu(y):=\mathbb P(Z_t\ge y)$. We know that $\overline H_t$ is regularly varying at $+\infty$ of index $-\alpha_t^{+}:=-\frac{\partial_x\Psi_{t}^{+}\left(M\right)}{1-\frac{\partial_x\Psi_{t}^{+}\left(M\right)}{N_t}}$ and $\overline S_{\nu}$ is regularly varying at $+\infty$ of index $-\nu$. We now choose $\gamma\in(\alpha_t^+/\nu,1)$ and define $b(y):=y^\gamma$. Then $b(y)\to\infty$, $\frac{b(y)}{y}\to 0$, and $S_\nu(b(y))=o(\overline H_t(y))$. In addition, we have $\overline H_t(y\pm b(y))\sim \overline H_t(y)$. We now show that 
\begin{equation}\label{eq:D5-denominator}
\mathbb P_t(X_t^\ast+Z_t\ge y)\sim \overline H_t(y).
\end{equation}
For the lower bound, $\{X_t^\ast\ge y+b(y),\, Z_t\ge -b(y)\}\subset \{X_t^\ast+Z_t\ge y\}$, hence $\mathbb P_t(X_t^\ast+Z_t\ge y) \ge
\overline H_t(y+b(y))\,\mathbb P(Z_t\ge -b(y))$. Since $\overline H_t(y+b(y))\sim \overline H_t(y)$ and $\mathbb P(Z_t\ge -b(y))\to 1$, we get
\begin{equation*}
\liminf_{y\to\infty}
\frac{\mathbb P_t(X_t^\ast+Z_t\ge y)}{\overline H_t(y)}
\ge 1
\end{equation*}
For the upper bound, $\{X_t^\ast+Z_t\ge y\}\subset \{X_t^\ast\ge y-b(y)\}\cup\{Z_t\ge b(y)\}$, so $\mathbb P_t(X_t^\ast+Z_t\ge y) \le \overline H_t(y-b(y))+\overline S_\nu(b(y)) \sim \overline H_t(y)$. This proves \eqref{eq:D5-denominator}. 

Now we define the tilted posterior $\pi_{y,t}(A):=\mathbb P_t(X_t^\ast\in A\mid X_t^\ast+Z_t\ge y)$. In order to prove the concentration, notice that 
\begin{equation*}
  \{X_t^\ast< y,\ X_t^\ast+Z_t\ge y\}
\subset
\{y-b(y)\le X_t^\ast< y\}
\cup
\{X_t^\ast< y-b(y),\, Z_t\ge b(y)\} .
\end{equation*}
Hence
\begin{equation*}
\mathbb P_t(X_t^\ast< y,\ X_t^\ast+Z_t\ge y)
\le
\overline H_t(y-b(y))-\overline H_t(y)+\overline S_\nu(b(y))
\end{equation*}
Since $\overline H_t(y-b(y))-\overline H_t(y)=o(\overline H_t(y))$ and $\overline S_\nu(b(y))=o(\overline H_t(y))$, we obtain $\mathbb P_t(X_t^\ast< y,\ X_t^\ast+Z_t\ge y)=o(\overline H_t(y))$. Finally, dividing by \eqref{eq:D5-denominator}, $\pi_{y,t}([y,\infty))=1-\mathbb P_t(X_t^\ast< y\mid X_t^\ast+Z_t\ge y) \rightarrow 1$. This proves the proposition.

\subsection{Proof of Lemma \ref{lem:ask-derivative}}\label{app::ask-derivative}
We write $A(y):=\mathbb E_t\!\big[V\,\mathbbm{1}_{\{X_t^\ast+Z_t\ge y\}}\big]$ and $B(y):=\mathbb P_t(X_t^\ast+Z_t\ge y)$, so that $h(y,t,Y^{t-1})=A(y)/B(y)$. Since $Z_t$ is independent of $V$ conditional on $Y^{t-1}$, we have $A(y)=\mathbb E_t\big[V\,S_\nu(y-X_t^\ast)\big]$ and $B(y)=\mathbb E_t \big[S_\nu(y-X_t^\ast)\big]$, where $S_\nu(u):=\mathbb P(Z_t\ge u)$. Differentiating under the expectation yields $A'(y)=-\mathbb E_t \big[V\,\mathfrak{q}_\nu(y-X_t^\ast;0,\sigma)\big]$ and $B'(y)=-\mathbb E_t\big[\mathfrak{q}_\nu(y-X_t^\ast;0,\sigma)\big]$. Using the quotient rule and the identity $\mathfrak{q}_\nu(u;0,\sigma)=S_\nu(u)\,r_\nu(u)$, one obtains the desired equation.

\subsection{Proof of Proposition \ref{prop:tail-gap}}\label{subsec:tail-gap-proof}
We fix a period $t$ and a period-$(t-1)$ history $Y^{t-1}$, and set $F_t(x):=F(x,t,Y^{t-1})$, $\Gamma_t(x):=M-F_t(x)$, $X_t^\ast:=F_t^{-1}(V)$, $\overline H_t(y):=\mathbb P_t(X_t^\ast\ge y)$, $\mathfrak f_t^\ast(y):=\mathfrak p_{t,X^\ast}(y\mid Y^{t-1})$, $S_\nu(u):=\mathbb P(Z_t\ge u)$, $r_\nu(u):=\frac{\mathfrak q_\nu(u;0,\sigma)}{S_\nu(u)}$, $D(y):=\mathbb P_t(X_t^\ast+Z_t\ge y)$, and set $\alpha:=\alpha_t^+$ and $\rho:=\rho_t^+$. We know that $\overline H_t$ is regularly varying at $+\infty$ of index $-\alpha$ and $\Gamma_t$ is regularly varying at $+\infty$ of index $-\rho$, where $\alpha<\nu$ and $\rho\in(-1,0)$. By assumption, \(\mathfrak{f}_t^\ast\) is eventually monotone, and so the monotone density theorem gives
\begin{equation}
\label{eq:tailgap-density}
\mathfrak{f}_t^\ast(y)\sim \alpha\,\frac{\overline H_t(y)}{y}.
\end{equation}
Because \(X_t^\ast\) has tail index \(\alpha<\nu\), it is heavier-tailed than the Student-\(t\)
noise. Hence
\begin{equation}
\label{eq:tailgap-denominator}
D(y)\sim \overline H_t(y).
\end{equation}
Moreover, a standard convolution equivalence then yields
\begin{equation}
\label{eq:tailgap-convolution-density}
\int_{-\infty}^{+\infty}\mathfrak q_\nu(y-x;0,\sigma)\mathfrak f_t^\ast(x)\,\ud x
\sim
\mathfrak f_t^\ast(y),
\end{equation}
and
\begin{equation}
\label{eq:tailgap-convolution-weighted}
\int_{-\infty}^{+\infty}\Gamma_t(x)\mathfrak q_\nu(y-x;0,\sigma)\mathfrak f_t^\ast(x)\,\ud x
\sim
\Gamma_t(y)\mathfrak f_t^\ast(y).
\end{equation}
The second equivalence uses \(\alpha-\rho<\nu\), which follows from the exponent inequalities in Lemma~\ref{lem:candidate-exponents}. Combining
\eqref{eq:tailgap-density}--\eqref{eq:tailgap-convolution-weighted}, we obtain
\begin{equation}
\label{eq:tailgap-basic-ratios}
\frac{\int \mathfrak q_\nu(y-x;0,\sigma)\mathfrak f_t^\ast(x)\,\ud x}{D(y)}
\sim
\frac{\alpha}{y},
\qquad
\frac{\int \Gamma_t(x)\mathfrak q_\nu(y-x;0,\sigma)\mathfrak f_t^\ast(x)\,\ud x}{D(y)}
\sim
\frac{\alpha\Gamma_t(y)}{y}.
\end{equation}
Moreover, $\mathbb E_t[V\mathbf 1_{\{X_t^\ast+Z_t\ge y\}}]=
M D(y)-
\int_{-\infty}^{+\infty}\Gamma_t(x)S_\nu(y-x)\mathfrak f_t^\ast(x)\,\ud x$. By heavy-tail dominance applied to the weighted tail and then by Karamata's theorem,
\begin{equation}
\label{eq:tailgap-weighted-tail}
\int_{-\infty}^{+\infty}\Gamma_t(x)S_\nu(y-x)\mathfrak f_t^\ast(x)\,\ud x
\sim
\int_y^\infty \Gamma_t(x)\mathfrak f_t^\ast(x)\,\ud x
\sim
\frac{\alpha}{\alpha-\rho}\Gamma_t(y)\overline H_t(y).
\end{equation}
Together with \eqref{eq:tailgap-denominator}, this implies
\begin{equation}
\label{eq:tailgap-mean-gap}
M-\mathbb E_{\pi_{y,t}}[V]
\sim
\frac{\alpha}{\alpha-\rho}\Gamma_t(y).
\end{equation}
We now use Lemma~\ref{lem:ask-derivative}. Under \(\pi_{y,t}\),
\begin{equation}
\label{eq:tailgap-hazard-mean}
\mathbb E_{\pi_{y,t}}\!\left[r_\nu(y-X_t^\ast)\right]
=
\frac{\int \mathfrak q_\nu(y-x;0,\sigma)\mathfrak f_t^\ast(x)\,\ud x}{D(y)}
\sim
\frac{\alpha}{y},
\end{equation}
and, since \(V=M-\Gamma_t(X_t^\ast)\),
\begin{equation}
\label{eq:tailgap-weighted-hazard}
\mathbb E_{\pi_{y,t}}\!\left[(M-V)r_\nu(y-X_t^\ast)\right]
\sim
\frac{\alpha\Gamma_t(y)}{y}.
\end{equation}
Therefore
\begin{equation}
    \begin{split}
\label{eq:tailgap-covariance}
\operatorname{Cov}_{\pi_{y,t}}\!\big(V,r_\nu(y-X_t^\ast)\big)
&=
\left(M-\mathbb E_{\pi_{y,t}}[V]\right)
\mathbb E_{\pi_{y,t}}\!\left[r_\nu(y-X_t^\ast)\right]
-
\mathbb E_{\pi_{y,t}}\!\left[(M-V)r_\nu(y-X_t^\ast)\right] \nonumber\\
&\sim
\left(\frac{\alpha}{\alpha-\rho}\Gamma_t(y)\right)\frac{\alpha}{y}
-
\frac{\alpha\Gamma_t(y)}{y} \nonumber\\
&=
\frac{\alpha\rho}{\alpha-\rho}\frac{\Gamma_t(y)}{y}.
    \end{split}
\end{equation}
Since \(\rho<0\),
\begin{equation}
\label{eq:tailgap-negative-covariance}
-\operatorname{Cov}_{\pi_{y,t}}\!\big(V,r_\nu(y-X_t^\ast)\big)
\sim
\frac{\alpha(-\rho)}{\alpha-\rho}
\frac{\Gamma_t(y)}{y}.
\end{equation}
Restoring notation,
\begin{equation}
\label{eq:tailgap-final-asymptotic}
-\operatorname{Cov}_{\pi_{y,t}}\!\big(V,r_\nu(y-X_t^\ast)\big)
\sim
\frac{\alpha_t^+(-\rho_t^+)}
{\alpha_t^+-\rho_t^+}
\frac{M-F(y,t,Y^{t-1})}{y}.
\end{equation}
By Lemma~\ref{lem:ask-derivative}, $\partial_y h(y,t,Y^{t-1})
=
-\operatorname{Cov}_{\pi_{y,t}}\!\big(V,r_\nu(y-X_t^\ast)\big)$.
Thus \(\partial_y h(y,t,Y^{t-1})>0\) for all sufficiently large \(y\). The bid-side proof is
identical, replacing the upper-tail quantities by their lower-tail analogues.

\section{Equilibrium Dynamics of \texorpdfstring{$F^{*}_t$}{F\_t*} and \texorpdfstring{$h^{*}_t$}{h\_t*}: Additional figures}\label{app::numerics_dynamics}
\begin{figure}[htbp]
    \centering
    \includegraphics[width=\linewidth]{Numerics/Fh_plots/Fh_uniform.png}
    \caption{Equilibrium marginal cost $F_t^\star$ (\emph{Top panel}) and limit prices $h_t^\star$ (\emph{Bottom panel}) across the trading periods $t\in \{1,\ldots,5\}$, for the $\mathtt{Unif}
    ([0,1])$  asset distribution with $v_0=0.6$ and $N\in\{2,3,5,10\}$.}
\end{figure}
\begin{figure}[htbp]
    \centering
    \includegraphics[width=\linewidth]{Numerics/Fh_plots/Fh_beta_usym.png}
    \caption{Equilibrium marginal cost $F_t^\star$ (\emph{Top panel}) and limit prices $h_t^\star$ (\emph{Bottom panel}) across the trading periods $t\in \{1,\ldots,5\}$, for the $\mathtt{Beta}
    (2, 2)[0,1]$  asset distribution with $v_0=0.3$ and $N\in\{2,3,5,10\}$.}
\end{figure}
\begin{figure}[htbp]
    \centering
    \includegraphics[width=\linewidth]{Numerics/Fh_plots/Fh_beta_sym.png}
    \caption{Equilibrium marginal cost $F_t^\star$ (\emph{Top panel}) and limit prices $h_t^\star$ (\emph{Bottom panel}) across the trading periods $t\in \{1,\ldots,5\}$, for the $\mathtt{Beta}
    (1/2, 1/2)[0,1]$  asset distribution with $v_0=0.6$ and $N\in\{2,3,5,10\}$.}
\end{figure}
\begin{figure}[htbp]
    \centering
    \includegraphics[width=\linewidth]{Numerics/Fh_plots/Fh_trunc_gaussian.png}
    \caption{Equilibrium marginal cost $F_t^\star$ (\emph{Top panel}) and limit prices $h_t^\star$ (\emph{Bottom panel}) across the trading periods $t\in \{1,\ldots,5\}$, for the $\mathtt{Trunc-Gaussian}(0,1,[-3,3])$  asset distribution with $v_0=0.4$ and $N\in\{2,3,5,10\}$.}
\end{figure}
\begin{figure}[htbp]
    \centering
    \includegraphics[width=\linewidth]{Numerics/Fh_plots/Fh_trunc_gaussian.png}
    \caption{Equilibrium marginal cost $F_t^\star$ (\emph{Top panel}) and limit prices $h_t^\star$ (\emph{Bottom panel}) across the trading periods $t\in \{1,\ldots,5\}$, for the $\mathtt{Student}-t(5,1)$  asset distribution with $v_0=0.4$ and $N\in\{2,3,5,10\}$.}
\end{figure}

\newpage
\section{Asymptotics of price impact: additional figures}\label{app::price_figures}
\begin{figure}[htbp]
    \centering
    \includegraphics[width=1\linewidth]{Numerics/power_law/power_law_loglog_nu_uniform.png}
    \caption{Log-log plot of $\log(M- F_t^\star(x))$ against $\log x$ for the $\mathtt{Unif}([0, 1])$ asset distribution, across trading periods $t=1,\ldots,5$, for $\nu\in\{3,5,10,30\}$ and $N\in\{2,3,10\}$. Solid lines: $\log F_t^\star$ from fixed point solutions; dashed lines with markers: fitted 
    slope $\hat{\gamma}_t$. Fit region $x > 1.5$.}
\end{figure}
\begin{figure}[htbp]
    \centering
    \includegraphics[width=1\linewidth]{Numerics/power_law/power_law_loglog_nu_beta_sym.png}
    \caption{Log-log plot of $\log(M-F_t^\star(x))$ against $\log x$ for the $\mathtt{Beta}(2,2)$ asset distribution, across trading periods $t=1,\ldots,5$, for $\nu\in\{3,5,10,30\}$ and $N\in\{2,3,10\}$. Solid lines: $\log F_t^\star$ from fixed point solutions; dashed lines with markers: fitted 
    slope $\hat{\gamma}_t$. Fit region $x > 1.5$.}
\end{figure}
\begin{figure}[htbp]
    \centering
    \includegraphics[width=1\linewidth]{Numerics/power_law/power_law_loglog_nu_beta_usym.png}
    \caption{Log-log plot of $\log F_t^\star(x)$ against $\log x$ for the $\mathtt{Beta}(1/2,1/2)$ asset distribution, across trading periods $t=1,\ldots,5$, for $\nu\in\{3,5,10,30\}$ and $N\in\{2,3,10\}$. Solid lines: $\log F_t^\star$ from fixed point solutions; dashed lines with markers: fitted 
    slope $\hat{\gamma}_t$. Fit region $x > 1.5$.}
\end{figure}
\begin{figure}[htbp]
    \centering
    \includegraphics[width=1\linewidth]{Numerics/power_law/power_law_loglog_nu_student_t.png}
    \caption{Log-log plot of $\log F_t^\star(x)$ against $\log x$ for the $\mathtt{Student}-t(5,1)$ asset distribution, across trading periods $t=1,\ldots,5$, for $\nu\in\{3,5,10,30\}$ and $N\in\{2,3,10\}$. Solid lines: $\log F_t^\star$ from fixed point solutions; dashed lines with markers: fitted 
    slope $\hat{\gamma}_t$. Fit region $x > 1.5$.}
\end{figure}

\begin{figure}
    \centering
    \includegraphics[width=\linewidth]{Numerics/power_law/power_law_nu_N3.png}
    \caption{Fitted power law exponents $|\hat{\rho}_t^+|$ (bounded priors, from 
    $M - F_t^\star$) and $\hat{\gamma}_t$ (unbounded priors, from $F_t^\star$) 
    across trading periods $t=1,\ldots,5$, for noise trades $\nu\in\{3,5,10,30\}$ and $N=3$.}
\end{figure}

\begin{figure}
    \centering
    \includegraphics[width=\linewidth]{Numerics/power_law/power_law_nu_N10.png}
    \caption{Fitted power law exponents $|\hat{\rho}_t^+|$ (bounded priors, from 
    $M - F_t^\star$) and $\hat{\gamma}_t$ (unbounded priors, from $F_t^\star$) 
    across trading periods $t=1,\ldots,5$, for noise trades $\nu\in\{3,5,10,30\}$ and $N=10$.}
\end{figure}

\newpage
\section{Additional empirical results for INTL and MSFT}\label{app::additional_empirical_results}
We repeat the refined 10-level specification for MSFT over the same 22 trading days, comprising $87,120$ five-second windows. Tail-state separation remains sharp: mean Hill estimates are $1.53$,
$1.83$, and $2.18$ in {\rm HeavyTail}, {\rm Middle}, and {\rm LightTail} blocks, and the corresponding $99^{th}$ percentiles
of normalized absolute signed flow are $20.38$, $17.97$, and $15.62$. Price-discovery evidence is selective but consistent with a far-tail mechanism. The ${\rm Extreme}\times {\rm HeavyTail}$ signed-flow
interaction is $0.390$ at $60$ seconds ($t=2.64$) and $0.868$ at $300$ seconds ($t=1.90$), while the ${\rm ModerateLarge}\times{\rm HeavyTail}$ interaction is $0.802$ at $300$ seconds ($t=2.25$). Spread-resilience effects are weaker than for AAPL: only the $30$-second ${\rm Extreme}\times{\rm HeavyTail}$ interaction is significant
($0.064$ bp, $t=2.04$), and the $300$-second ${\rm ModerateLarge}\times{\rm HeavyTail}$ interaction is marginal ($0.043$
bp, $t=1.88$). The {\rm HeavyTail} state main effect is negative at all horizons, so the positive interactions should be interpreted as incremental size effects rather than evidence that spreads are unconditionally wider in heavy-tail blocks. Overall, MSFT supports tail-regime separation and selective far-tail price discovery, but not a broad spread-persistence effect.

For INTC, the same $22$-day, $10$-level specification yields mean Hill estimates of $1.35$, $1.60$, and $1.92$ in HeavyTail, Middle, and LightTail blocks. Thus, signed flow is heavy-tailed throughout the sample, although the HeavyTail state remains substantially more extreme: its $99^{th}$ percentile of normalized absolute signed flow is $27.73$, compared with $26.31$ and $20.56$ in the Middle and LightTail regimes. The price-impact interactions are generally imprecise. The only suggestive estimate is the $300$-second $\text{Extreme}\times\text{HeavyTail}$ coefficient of $1.717$ ($t=1.83$), and the largest-bin HeavyTail signed-impact estimates are not statistically different from zero. The strongest INTC result is instead a tail-state effect in spreads: conditional future spread changes are $0.044--0.048$ bp larger in HeavyTail blocks from $10$ to $300$ seconds, with $t$-statistics from 3.56 to 5.18. The $\text{ModerateLarge}\times\text{HeavyTail}$ and $\text{Extreme}\times\text{HeavyTail}$ spread interactions are not robust. INTC therefore supports the relevance of liquidity-tail states for conditional liquidity, but provides limited evidence on size-specific crossover or far-tail price discovery.

\end{document}